\newcommand{\ifabs}[2]{#2}
\newtheorem*{fact}{Fact}
\newtheorem{theorem}{Theorem}[section]
\newtheorem{lemma}{Lemma}[section]
\newtheorem{corollary}{Corollary}[section]
\newtheorem{definition}{Definition}[section]
\newtheorem*{proposition*}{Proposition}
\newcommand{\Pin}[3]{\textsc{Pin}_{{#2},{#3}}({#1})}
\newcommand{\mgn}{\mathbb{P}}
\newcommand{\Mgnt}{\widehat{\mathbb{P}}}
\newcommand{\mgnt}[1]{\Mgnt^{({#1})}}
\newcommand{\dli}{\Omega^{e}}
\newcommand{\deriv}[2]{\frac{\operatorname{d}{#1}}{\operatorname{d}{#2}}}
\newcommand{\pderiv}[2]{\frac{\partial{#1}}{\partial{#2}}}
\newcommand{\Hf}[5]{\mathcal{H}({\mathcal{F}}_{#1, #2}^{#3}, {\Lambda}_{#4, #5})}
\begin{document}
\title
{\bf FPTAS for Weighted Fibonacci Gates and Its Applications }

\author{Pinyan Lu\thanks{Microsoft Research. {\tt
      pinyanl@microsoft.com}}
  \and Menghui Wang\thanks{University of Wisconsin-Madison. This work was partially performed when the author was an undergraduate student at Shanghai Jiao Tong Univerisity. {\tt menghui@cs.wisc.edu}}
  \and Chihao Zhang\thanks{Shanghai Jiao Tong Univerisity. {\tt chihao.zhang@gmail.com}}
}

\date{}
\maketitle

\bibliographystyle{plain}

\begin{abstract}
Fibonacci gate problems have severed as computation primitives to solve other problems by holographic algorithm~\cite{FOCS08}
and play an important role in the dichotomy of exact counting for Holant and CSP frameworks~\cite{STOC09}. We  generalize them to weighted cases and allow each vertex function to have different parameters, which is a much boarder family and \#P-hard for exactly counting.
We design a fully polynomial-time approximation scheme (FPTAS) for this generalization by correlation decay technique.
  This is the first deterministic FPTAS for approximate counting in the general Holant framework without a degree bound.
We also formally introduce holographic reduction in the study of approximate counting
  and these weighted Fibonacci gate problems  serve as computation primitives for approximate counting.
  Under holographic reduction, we obtain FPTAS for other Holant problems and spin problems.
  One important application is developing an FPTAS for a large range of ferromagnetic two-state spin systems.
  This is the first deterministic FPTAS in the ferromagnetic range for two-state spin systems without a degree bound.
  Besides these algorithms, we also develop several new tools and techniques to establish the correlation decay property, which are applicable in other problems.
\end{abstract}

  \thispagestyle{empty}
  \newpage
  \setcounter{page}{1}

\section{Introduction}
Holant is a refined framework for counting problems~\cite{FOCS08,STOC09,holant}, which is more expressive than previous frameworks such as counting constraint satisfaction problems (CSP)  in the sense that they can be simulated using Holant instances.
In this paper, we consider a generalization called weighted Holant problems.
A weighted Holant is an extension of a Holant problem where each edge $e$ is assigned an activity $\lambda_e$, and if it is chosen it contributes to the partition function a factor of $\lambda_e$.
Given a graph $G(V,E)$, a family of node functions ${\cal F}=\left\{F_v | v \in V\right\}$, and edge weights $\Lambda=\left\{\lambda_e | e\in E\right\}$, the partition function for a weighted Holant instance $\Omega\left(G,{\cal F},\Lambda\right)$ is the summation of the weights over all configurations $\sigma:E\rightarrow \left\{0,1\right\}$, specifically the value of $$\sum_{\sigma}\left(\prod_{e \in E}\lambda_e\left(\sigma\left(e\right)\right)\prod_{v\in V}F_v\left(\sigma |_{E\left(v\right)}\right)\right).$$
We use Holant(${\mathcal F}, \Lambda$) to denote the class of Holant problems where all functions are taken from ${\mathcal F}$ and all edge weights are taken from $\Lambda$.
For example, consider the {\sc Perfect Matching} problem on $G$.
This problem corresponds to attaching the {\sc Exact-One} function on every vertex of $G$ --- for each 0-1 edge assignment, the product $\prod_{v\in V} F_v(\sigma\mid_{E(v)})$ evaluates to 1 when the assignment is a perfect matching, and 0 otherwise, thereby summing over all 0-1 edge assignments gives us the number of perfect matchings in $G$.
If we use the {\sc At-Most-One} function at each vertex, then we can count all matchings, including those that are not perfect.

A symmetric function $F$ can be expressed by $[f_0,f_1,\ldots,f_k]$, where $f_i$ is the value of $F$ on inputs of hamming weight $i$. The above mentioned {\sc Exact-One} and {\sc At-Most-One} functions are both symmetric and can be expressed as $[0,1,0,0, \ldots]$ and $[1,1,0,0, \ldots]$ respectively. 
A Fibonacci function $F$ is a symmetric function $[f_0,f_1,\ldots,f_k]$, satisfying that $f_i=c f_{i-1}+f_{i-2}$ for some constant $c$. For example, the parity function $[a,b, a,b, \ldots]$ is a special Fibonacci function with $c=0$.
If there are no edge weights (or equivalently all the weights are equal to 1) and all the node functions are Fibonacci functions with a same parameter $c$, we have a polynomial time algorithm to compute
the partition function exactly~\cite{FOCS08}. These problems also form the base for a family of holographic algorithms, where other interesting problems can be reduced to the Fibonacci gate problems~\cite{FOCS08}. Furthermore,
this family of functions is interesting not only because of its tractability, but also because it essentially captures almost all tractable Holant problems with all unary functions available~\cite{STOC09,holant}.

If we allow edges to have non-trivial weights or each functions to have different parameters in Fibonacci gates, then the exact counting problem becomes \#P-hard~\cite{STOC09,holant}. Nevertheless,	it is interesting to study the problem in the approximation setting. We first introduce the solution concepts for approximate counting. A fully polynomial-time approximation scheme (FPTAS) is an algorithm scheme that approximates the answer to a problem within an arbitrarily small relative error in polynomial time. More precisely, an FPTAS is an algorithm scheme such that for any given parameter $\varepsilon>0$, the algorithm produces an output $Z'$ satisfying $(1-\varepsilon)Z<Z'<(1+\varepsilon)Z$,
where $Z$ is the correct answer, and runs in time $poly(n,1/\varepsilon)$.
Its randomized relaxation is called a fully polynomial-time randomized approximation scheme (FPRAS),
 which uses random bits in the algorithm and requires that the final output be within the range $[(1-\epsilon) Z, (1+\epsilon) Z]$ with high probability.
 In contrast to the exact counting setting, the approximability of Holant problem is much less well-understood.  In this paper, we study approximate counting for weighted Fibonacci gate problems.

Another closely related and well-studied model is spin systems.
In this paper, we focus on two-state spin systems.
An instance of a spin system is a graph $G(V,E)$.
A configuration $\sigma: V\rightarrow \{0, 1\}$ assigns every vertex one of the two states.
The contributions of local interactions between adjacent vertices are quantified by a matrix
$A=\begin{bmatrix} A_{0,0} & A_{0,1} \\ A_{1,0} & A_{1,1} \end{bmatrix}=\begin{bmatrix} \beta & 1 \\ 1 & \gamma \end{bmatrix}$, where $\beta, \gamma \geq 0$.
The partition function is defined by
\ifabs{
  $Z_A(G)=\sum_{\sigma \in \{0,1\}^{V}} \prod_{(u,v) \in E} A_{\sigma(u), \sigma(v)}$.
}
{
  \[ Z_A(G)=\sum_{\sigma \in 2^{V}} \prod_{(u,v) \in E} A_{\sigma(u), \sigma(v)}.\]}
There has been a lot of studies on the approximability of the partition function in terms of parameters $\beta$ and $\gamma$.
The problem is exactly solvable in polynomial time if $\beta \gamma=1$.
When $\beta \gamma<1$, the system is called anti-ferromagnetic and we have a complete understanding of its approximability: there is a uniqueness boundary, above which there is an FPTAS~\cite{Weitz06,LLY12,SST,LLY13} and below which it is NP-hard~\cite{Sly10,SS12,galanis2012inapproximability}.

The story is different in ferromagnetic range $\beta\gamma>1$. Jerrum and Sinclair \cite{JS93} gave an FPRAS for Ising model ($\beta=\gamma>1$) based on Markov Chain Monte Carlo (MCMC) method and lately Goldberg et al. extended that to all $\beta\gamma>1$ plane. However, these algorithms are all randomized. Can we design a deterministic FPTAS for it as that for anti-ferromagnetic range?
Indeed,  this is an interesting and important question in general and many effort has been made for derandomizing MCMC based algorithms. For instance, there is an FPRAS for counting matchings~\cite{jerrum1989approximating} but FPTAS is only known for graphs of bounded degree~\cite{BGKNT07}. The situation is similar in computing permanent of nonnegative matrix, although an FPRAS is known \cite{app_JSV04}, the current best deterministic algorithm can only approximate the permanent with an exponential large factor \cite{linial1998deterministic}.
To the best of our knowledge, no deterministic FPTAS was previously known for two-state spin systems in ferromagnetic range. In particular, the correlation decay technique, the main tool to design FPTAS in anti-ferromagnetic range, cannot directly apply.



\subsection{Our Results}
The main results of this paper are a number of FPTAS's for computing the partition function of different Holant problems and spin systems.

\begin{description}
\item[Weighted Fibonacci gates.]  We design an FPTAS for weighted Fibonacci gates when the parameters satisfy certain conditions.
  We have several theorems to cover different ranges.
  In Theorem~\ref{thm:1}, we prove that for any fixed choice of other parameters, we can design an FPTAS as long as the edge weights are close enough to $1$, which corresponds to the unweighted case.
  This result demonstrates a smooth transition from the unweighted case to weighted ones in terms of approximation.

  Another interesting range is that we have an FPTAS for the whole range as long as the Fibonacci parameter $c$ is reasonably large (no less than a constant 1.17) and edge weights are no less than $1$ (which means all the edges prefer to be chosen) (Theorem~\ref{thm:2}).
  It is worth noting that we allow Fibonacci functions on different nodes to have different parameters $c$, which contrasts the exact counting setting where it is crucial for different functions to have the same parameter in order to have a polynomial time algorithm.

\item[Ferromagnetic two-state spin systems.] We design an FPTAS for a large range of ferromagnetic two state spin systems.
  This is the first deterministic FPTAS in the ferromagnetic range for two-state spin systems without a degree bound.
  To describe the tractable range, we present a monotonically increasing function $\Gamma:[1, \infty] \to \mathbb{R}$ with $\Gamma(1)=1$ and $\Gamma(x)\leq x$.
  We have an FPTAS for a ferromagnetic spin system $\begin{bmatrix} \beta & 1 \\ 1 & \gamma \end{bmatrix}$ as long as $\gamma \leq \Gamma(\beta)$ or  $\beta \leq \Gamma(\gamma)$ (Theorem~\ref{thm:spin-holant}).
  The exact formula of $\Gamma$ is complicated and we do not spend much effort to optimize it.
  However, it already enjoys a nice property in that $\lim_{x \to +\infty} \frac{\Gamma(x)}{x}=1$.
  This means that although the range does not cover the Ising model ($\beta=\gamma$), it gets relatively close to that in infinity.
  We also have similar results for two-spin system with external fields.

\item[Other Holant Problems.] We can extend our FPTAS to functions $[f_0,f_1,\ldots,f_d]$ with form
  $f_{i+2}=a f_{i+1} + b f_i$ if the parameters satisfy certain conditions.
  This is a much broader family than Fibonacci gates, since Fibonacci gates corresponds to $b=1$.
\end{description}

\subsection{Our Techniques}
Our main approach for designing FPTAS's is the correlation decay technique as introduced in \cite{BG08} and \cite{Weitz06}.
While the general framework is standard, it is highly non-trivial to design a recursive computational structure and especially to prove the property of exponential correlation decay for a given set of problems.
This is in analog to designing an FPRAS with the Markov Chain Monte Carlo (MCMC) method: though general framework for these algorithms is the same, it is still difficult to design Markov chains for different problems and especially to prove the rapid mixing property~\cite{MC_JA96}.
One powerful technique here is to use a potential function to amortize the decay rate, which has been introduced and used in many problems~\cite{RSTVY11,LLY12,SST,LLY13,LY13} and which we utilize here.
Besides this, to enrich the tool set, we introduce several new techniques to design the recursive computational structure and to prove the correlation decay property.
We believe that these techniques can find applications in other problems.

\begin{description}
\item[Working with dangling edges.] The recursive computational structure for spin problems usually relates a marginal probability of a vertex to that of its neighbors.
  In Holant problems, we are talking about the assignments and marginal probabilities of edges.
  Since an edge has two ends, it has two set of neighbors, which complicates things a lot.
  In this paper, we choose to work on instances with dangling edges, which is a half edge that have neighbors only on one end.
  It is much easier to write recursions on dangling edges.
  This technique works for any Holant problems and we believe it is the right thing to work with in the Holant framework. Indeed, the idea has later been successfully used in~\cite{counting-edge-cover}.

\item[Computation tree with bounded degrees.] Usually, the correlation decay property only implies an FPTAS for systems with bounded degrees.
  One exception is the anti-ferromagnetic two-state spin systems, where a stronger notion of computationally efficient correlation decay is introduced\cite{LLY12}.
  In this paper, we also establish the computationally efficient correlation decay for systems without a degree bound, but via a different approach.
  By making use of the unique property of Fibonacci functions, we can decompose a node into several nodes with constant degrees.
  Thus, at each step of our computation tree, we only involve constant many sub-instances even if the degree of the original system is not bounded.

\item[Bounding range of variables.] After we get a recursion system, the main task is to prove the correlation decay property.
  This is usually achieved by proving that a certain amortized decay rate, which is a function of several variables, is less than one for any choice of these variables in their domain.
  If we can prove that these variables are always within smaller domains, then we only need to prove that the rate is less than one under these smaller domains, which becomes weaker and easier to prove.
  Some naive implementation of this idea already appeared in approximate counting of coloring problems~\cite{GK07,LY13}.
  In this paper, we develop this idea much further.
  We divide sub-instances involved in the computation tree into two classes: deep ones for which we can get a much better estimation of their range and shallow ones for which we can compute their value without error.
  Then we can either compute the exact value or we can safely assume that it is within a smaller domain, which enables us to prove the correlation decay property easier.

\item[Holographic reduction.] We formally introduce holographic reduction in the study of approximate counting.
We use weighted Fibonacci gate problems  as computation primitives for approximate counting and design holographic algorithms for other problems based on them.
 In particular,
  we use the FPTAS for Fibonacci gates to obtain an  for ferromagnetic two-state spin systems.
  It is noteworthy that the correlation decay property does not generally hold for ferromagnetic two-state spin systems.
  So we cannot do a similar argument to get the FPTAS in the spin world directly. 
  Moreover, the idea of holographic reduction can apply to any Holant problems, which extends known counting algorithms (both exact and approximate, both deterministic and randomized) to a broader family of problems.
  Indeed, the other direction of holographic reduction is also used in our algorithm.  We design an exact algorithm for shallow sub-instances of Fibonacci instance by a holographic reduction to the spin world.
\end{description}

\subsection{Related Works}
Most previous studies of the Holant framework are for exact counting, and a number of dichotomy theorems were proved~\cite{holant,HuangL12,CaiGW13}.
Holographic reduction was introduced by Valiant in holographic algorithms~\cite{HA_FOCS,STOC07}, which is later also used to prove hardness result of counting problems~\cite{FOCS08,holant,planar}.

For some special Holant problems such as counting (perfect) matchings, their approximate versions are well studied~\cite{BGKNT07,jerrum1989approximating,app_JSV04}.
In particular, \cite{BGKNT07} gave an FPTAS to count matchings but only for graphs with bounded degrees.
It is relatively less studied in the general Holant framework in terms of approximate counting except for two recent work: \cite{YZ13} studied general Holant problems but only for planar graph instances with a bounded degree; \cite{McQuillan2013} gives an FPRAS for several Holant problems.
Another well-known example is the ``sub-graph world" in \cite{JS93}.
It is indeed a weighted Holant problem with Fibonacci functions of $c=0$, for which an FPRAS was given.
In that paper, holographic reduction was also implicitly used, which extends the FPRAS to the Ising model.

Most previous study for FPTAS via correlation decay is on the spin systems.
It was extremely successful in the anti-ferromagnetic two-spin system~\cite{Weitz06,LLY12,SST,LLY13}.
It is also used in multi-spin systems~\cite{GK07,LY13}.
Many more works focused on randomized approximate counting, such as examples~\cite{JS93,app_JSV04,app_GJ11,app_DJV01,IS_DG00,col_Jerrum95,col_Vigoda99}.

\section{Preliminaries}\label{sec:background}
A weighted Holant instance $\Omega = (G(V,E), \{ F_v | v \in V \}, \{\lambda_e | e \in E\} )$ is a tuple.
$G(V,E)$ is a graph. $F_v$ is a function with arity $d_v$: $\{0,1\}^{d_v}\rightarrow \mathbb{R}^+$, where
$d_v$ is the degree of $v$ and $\mathbb{R}^+$ denotes non-negative real numbers. Edge weight $\lambda_e$ is a mapping $\{0,1\} \rightarrow \mathbb{R}^+ $.
A configuration $\sigma$ is a mapping
$E\rightarrow \{0,1\}$ and gives a weight
 \[w_\Omega(\sigma)=\prod_{e \in E} \lambda_e (\sigma(e)) \prod_{v \in  V}
F_v(\sigma\mid_{E(v)}),\]
 where $E(v)$ denotes the incident edges of
$v$.
 The  counting problem on the instance  $\Omega$
is to compute the partition function:
\[Z(\Omega)=\sum_{\sigma}\left(
\prod_{e \in E} \lambda_e (\sigma(e)) \prod_{v\in V} F_v(\sigma\mid_{E(v)})\right).\]

We can represent each function $F_v$  by a vector
in $(\mathbb{R}^+)^{2^{d_v}}$, or a tensor in $((\mathbb{R}^+)^{2})^{\otimes d_v}$. This
is also called a {\it signature}.
A symmetric function $F$ can be expressed by $[f_0,f_1,\ldots,f_k]$, where $f_j$ is the value
of $F$ on inputs of hamming weight $j$. For example,
the equality function is $[1,0,\ldots,0,1]$.
Edge weight is a unary function, which can be written as $[\lambda_e(0),\lambda_e(1)]$.
Since we do not care about global scale factor, we always normalize that $\lambda_e(0)=1$ and use
the notation  $\lambda_e=\lambda_e(1)$ as a real number.

A Holant problem is parameterized by a set of functions ${\cal F}$ and edge weights $\Lambda$. We denote by  ${\rm Holant}({\cal F}, \Lambda )$ the following computation problem .
\begin{definition}
Given a set of functions ${\cal F}$ and edge weights $\Lambda$, we denote by  ${\rm Holant}({\cal F}, \Lambda )$ the following computation problem.\\
\noindent {\bf Input:} A Holant instance $\Omega = (G(V,E), \{ F_v | v \in V \}, \{\lambda_e | e \in E\} )$, where $F_v \in {\cal F}$ and $\lambda_e \in \Lambda $ ;\\
\noindent {\bf Output:} The partition function $Z(\Omega)$.
\end{definition}

The weights of configurations also give a distribution over all possible configurations:
\[\mgn_\Omega(\sigma) =\frac{w_\Omega (\sigma)}{Z(\Omega)}=\frac{1}{ Z(\Omega)}
\prod_{e \in E} \lambda_e (\sigma(e)) \prod_{v\in V} F_v(\sigma\mid_{E(v)}).\]
This defines the marginal probability of each edge $e_0 \in E$.
 \[\mgn_\Omega(\sigma(e_0)=0) =\frac{\sum_{\sigma:\sigma(e_0)=0}\left(
\prod_{e \in E} \lambda_e (\sigma(e)) \prod_{v\in V} F_v(\sigma\mid_{E(v)})\right)}{ Z(\Omega)}
.\]
Similarly, we can define the marginal probability of a subset of edges. Let $E_0\subset E$ and $e_1, e_2, \ldots, e_{|E_0|}$ be an enumeration of the edges in $E_0$. Then we
can define $\sigma(E_0)=\sigma(e_1)\sigma(e_2)\cdots\sigma(e_{|E_0|})$ as a Boolean string of length $|E_0|$.
Let $\alpha\in \{0,1\}^{|E_0|}$, we define
 \[\mgn_\Omega(\sigma(E_0)=\alpha) =\frac{\sum_{\sigma:\sigma(e_i)=\alpha_i, i=1,2,\ldots, |E_0|}\left(
\prod_{e \in E} \lambda_e (\sigma(e)) \prod_{v\in V} F_v(\sigma\mid_{E(v)})\right)}{ Z(\Omega)}
.\]
We denote the partial summation as
\[Z(\Omega, \sigma(E_0)=\alpha)=\sum_{\sigma:\sigma(e_i)=\alpha_i}\left(
\prod_{e \in E} \lambda_e (\sigma(e)) \prod_{v\in V} F_v(\sigma\mid_{E(v)})\right).\]

We define a dangling instance $\Omega^{D}$ of ${\rm Holant}({\cal F}, \Lambda )$ also as a tuple $(G(V,E\cup D), \{ F_v | v \in V \}, \{\lambda_e | e \in E\} )$, where $G(V,E\cup D)$ is a graph with dangling edges $D$.
A dangling edge can be viewed as a half edge, with one end attached to a regular vertex in $V$ and the
other end dangling  (not considered as a vertex).
A dangling instance $\Omega^{D}$ is the same as a Holant instance except for these dangling edges.
In $G(V,E\cup D)$ each node is assigned a function in ${\cal F}$  (we do not consider ``dangling'' leaf nodes at the end of a
dangling edge among these), each regular edge in $E$ is assigned a weight from $\Lambda$ and we always assume that there is no weight on a dangling edge in this paper.
A dangling instance can be also viewed as a regular instance by attaching a vertex with function $[1,1]$ at the dangling end of each dangling edge. We can define the  probability distribution and   marginal probabilities just as for regular instance.
In particular, we shall use dangling instance $\Omega^e$ with single dangling edge $e$ extensively in this paper. For that, we define
\[R(\Omega^{e})= \frac{\mgn_{\Omega^{e}}(\sigma(e)=1)}{\mgn_{\Omega^{e}}(\sigma(e)=0)}.\]

\ifabs{}
{
\begin{definition}\label{definition-partial-pinning}
Given a Holant instance $\Omega=(G(V,E), \{ F_v | v \in V \}, \{\lambda_e | e \in E\} )$, a vertex $e_0=(u_1,u_2) \in E$ and $\tau \in [0,1]$. We can define a weighted pinning operation $\Pin{\Omega}{e}{\tau}=(G'(V,E-\{e_0\}), \{ F'_v | v \in V \}, \{\lambda_e | e \in E-\{e_0\}\} )$. The graph of $\Pin{\Omega}{e}{\tau}$ is the same as that of $\Omega$ except that $e$ is removed; All the edge weights in the remaining edges are the same in both instances; all the vertex functions are the same except $u_1$ and $u_2$. For $v\in \{u_1, u_2\}$ and $\alpha\in \{0,1\}^{d_v-1}$, $F'_v(\alpha)=(1-\tau) F_v(\alpha 0) +\tau F_v(\alpha 1)$.
\end{definition}

In this definition, we have the coincidence that for $x\in \{0,1\}$,  $\Pin{\Omega}{e}{x}$ is exactly the Holant instance by fixing the edge $e$ to $x$.

\subsection{Holographic Reduction}
Holographic reduction is powerful reduction among counting problems expressible in Holant framework.
We use $Holant({\cal G}|{\cal R})$ to denote all the counting
problems, expressed as unweighted Holant problems on bipartite graphs
$H=(U,V,E)$, where each signature for a vertex in $U$ or $V$ is from
${\cal G}$ or ${\cal R}$, respectively.   Signatures in ${\cal G}$ are
are denoted by column vectors (or
contravariant tensors); signatures in ${\cal R}_q$ are  denoted by row vectors (or covariant
tensors)~\cite{dodson}. One can perform (contravariant and covariant) tensor
transformations on the signatures, which may
produce exponential cancelations in tensor spaces.
We shall define a simple version of  holographic reductions,
which are invertible.
Suppose $Holant({\cal G}|{\cal R})$ and
$Holant({\cal G'}|{\cal R'})$ are two holant problems defined for the same
family of graphs, and
 $T \in {\bf GL}({\mathbb C})$ is a basis.
We say that there is a holographic
reduction from $Holant({\cal G}|{\cal R})$ and
$Holant({\cal G'}|{\cal R'})$ , if
the {\it contravariant} transformation
$G' = T^{\otimes g} G$ and the {\it covariant} transformation
$R=R' T^{\otimes r}$ map $G\in {\cal G}$ to $G'\in {\cal G'}$
and $R\in {\cal R}$  to $R' \in {\cal R'}$,
where $G$ and $R$ have arity $g$ and $r$ respectively.
(Notice the reversal of directions when the
transformation $T^{\otimes n}$ is applied. This is the meaning of
{\it contravariance} and {\it covariance}.)

\begin{theorem}[Holant Theorem~\cite{HA_J}]\label{thm:holant}
Suppose there is  a holographic reduction from $Holant({\cal G}|{\cal R})$ to
$Holant({\cal G'}|{\cal R'})$  mapping instance $\Omega$
to $\Omega'$, then  $Z(\Omega) = Z(\Omega')$.
\end{theorem}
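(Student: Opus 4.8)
The plan is to write both partition functions as tensor contractions over the bipartite graph $H=(U,V,E)$ and to show they coincide by inserting the identity $T^{-1}T=I$ along every edge; this is the standard argument behind the Holant theorem, and essentially all the work is index bookkeeping together with associativity of tensor contraction.

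First I would rewrite $Z(\Omega)$ in tensor-network form. Since $H$ is bipartite with ${\cal G}$-signatures on $U$ (column vectors) and ${\cal R}$-signatures on $V$ (row vectors), each edge $e=(u,v)$ carries a single Boolean index shared by the signature at $u$ and the signature at $v$, so
$$Z(\Omega)=\sum_{\sigma:E\to\{0,1\}}\ \prod_{u\in U}G_u\bigl(\sigma|_{E(u)}\bigr)\ \prod_{v\in V}R_v\bigl(\sigma|_{E(v)}\bigr),$$
and likewise $Z(\Omega')$ is the same expression with $G'_u=T^{\otimes d_u}G_u$ placed on each $u\in U$ and $R'_v=R_v(T^{-1})^{\otimes d_v}$ placed on each $v\in V$, the latter being the relation $R=R'T^{\otimes r}$ solved for $R'$.

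Next I would substitute these into $Z(\Omega')$ and expand. Writing $G'_u(\sigma'|_{E(u)})=\sum_{\mu_u}\bigl(\prod_{e\in E(u)}T_{\sigma'(e),\mu_u(e)}\bigr)G_u(\mu_u)$ and $R'_v(\sigma'|_{E(v)})=\sum_{\nu_v}\bigl(\prod_{e\in E(v)}(T^{-1})_{\nu_v(e),\sigma'(e)}\bigr)R_v(\nu_v)$, and using that $H$ is bipartite so that each edge $e=(u,v)$ lies in exactly one incidence set $E(u)$ and one incidence set $E(v)$, I would collect for each edge the two factors in which the summation variable $\sigma'(e)$ occurs. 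Summing over $\sigma'(e)\in\{0,1\}$ collapses them:
$$\sum_{\sigma'(e)\in\{0,1\}}(T^{-1})_{\nu_v(e),\sigma'(e)}\,T_{\sigma'(e),\mu_u(e)}=(T^{-1}T)_{\nu_v(e),\mu_u(e)}=\delta_{\nu_v(e),\mu_u(e)}.$$

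Finally, performing this reduction on all edges simultaneously forces $\mu_u(e)=\nu_v(e)$ for every $e=(u,v)$, so the family of local configurations $\{\mu_u\}\cup\{\nu_v\}$ glues into a single global configuration $\sigma:E\to\{0,1\}$ with residual weight exactly $\prod_{u}G_u(\sigma|_{E(u)})\prod_{v}R_v(\sigma|_{E(v)})$; summing over all such $\sigma$ recovers $Z(\Omega)$, giving $Z(\Omega')=Z(\Omega)$. The phenomenon that the transformation $T$ can induce exponential cancellation among the entries of the transformed signatures — the source of power in holographic algorithms — is irrelevant to correctness here. The only step I expect to require care is making the "expand each transformed signature, regroup the resulting scalars edge by edge, contract against $T^{-1}T$" computation fully precise for symmetric and asymmetric signatures of arbitrary arity, i.e.\ checking that the bookkeeping of which index of which local tensor meets which matrix entry is consistent across the whole graph.
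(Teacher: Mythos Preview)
Your proposal is correct and is exactly the standard tensor-contraction argument the paper alludes to; the paper itself does not spell out a proof but simply remarks that the theorem ``follows from general principles of contravariant and covariant tensors'' and cites a reference. Your explicit insertion of $T^{-1}T=I$ along each edge is precisely the content of that remark made concrete, so there is nothing to add.
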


The proof of this theorem  follows from general
principles of contravariant and covariant tensors~\cite{dodson}.}

\ifabs{
\section{Statement of Main Results}
}
{
\section{Results and Applications}

We first list our FPTAS for various ranges of Fibonacci gates and show their applications in other Holant problems and spin systems. The proof of these theorems shall be given in later sections.

\subsection{Fibonacci Signature}}
A symmetrical function $[f_0,f_1,\ldots,f_d]$ is called a (generalized) Fibonacci function if there exists a constant $c$ such that
\[f_{i+2}=c f_{i+1} + f_i, \ \  \mbox{where }\ \  i=0, 1, \cdots, d-2.\]
We denote this family of function as ${\cal F}_c$, the Fibonacci functions with parameter $c$.
\ifabs{}
{Another useful way to parameterize Fibonacci functions is
\[f_i = A \rho^i + B (-\rho)^{-i}, \]
where $A,B$ are two constants and $\rho$ is the positive root of $t^2= c t +1$.
Thus, there is a one to one correspondence between parameter $c$ and $\rho$. In this paper, when one of them is defined in
a context, we assume that the other one is also defined automatically and accordingly. We shall use the other one directly and freely.

}
We use ${\cal F}_c^{p,q}$ to denote a subfamily of ${\cal F}_c$
 such that $f_{i+1} \geq p f_{i}$ and $f_{i+1} \leq q f_{i}$  for all $i=0, 1, \cdots, d-1 $.
When the upper bound $q$ is not given, we simply write ${\cal F}_c^{p}$.
We use ${\cal F}_{c_1, c_2}^{p,q}$ to denote $\bigcup_{c_1\leq c \leq c_2} {\cal F}_c^{p,q}$.
We use $\Lambda_{\lambda_1,\lambda_2}$ to denote the set of edge weights $\lambda_e$ such that
$\lambda_1\leq \lambda_e \leq \lambda_2$.

Here is a list of FPTAS's we get:

\begin{theorem}
\label{thm:1}
For any $c>0$ and $p>0$, there exists $\lambda_1(p,c)<1$ and $\lambda_2(p,c)>1$ such that
there is an FPTAS for  ${\rm Holant}({\cal F}_c^p, \Lambda_{\lambda_1(p,c),\lambda_2(p,c)} )$.
\end{theorem}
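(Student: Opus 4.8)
The plan is to establish computationally efficient correlation decay for the ratios $R(\Omega^e)$ and convert it into an FPTAS by the usual telescoping. First I would fix an enumeration $e_1,\dots,e_m$ of the edges, set $\Omega_0=\Omega$, and let $\Omega_j=\Pin{\Omega_{j-1}}{e_j}{b_j}$ where $b_j\in\{0,1\}$ is chosen so that $\mgn_{\Omega_{j-1}}(\sigma(e_j)=b_j)\ge 1/2$. Then $Z(\Omega)=Z(\Omega_m)\prod_{j=1}^m \mgn_{\Omega_{j-1}}(\sigma(e_j)=b_j)^{-1}$, where $Z(\Omega_m)$ is an explicit product of scalars and each marginal $\mgn_{\Omega_{j-1}}(\sigma(e_j)=b_j)$ is read off from a bounded number of ratios $R(\cdot)$ on single-dangling-edge instances obtained by cutting $e_j$ and conditioning on its two endpoints in turn. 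This is legitimate because ${\cal F}_c^p$ and $\Lambda_{\lambda_1,\lambda_2}$ are closed under the weighted pinning of Definition~\ref{definition-partial-pinning} and under edge-cutting: the combination $F'_v(\alpha)=(1-\tau)F_v(\alpha 0)+\tau F_v(\alpha 1)$ again satisfies $f'_{i+2}=cf'_{i+1}+f'_i$ and preserves $f'_{i+1}\ge p f'_i$. So it suffices to approximate $R(\Omega^e)$ to within a factor $1\pm\varepsilon/\mathrm{poly}(n)$ for single-dangling-edge instances of ${\rm Holant}({\cal F}_c^p,\Lambda_{\lambda_1,\lambda_2})$.

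\textbf{The recursion.} Let $v$ be the non-dangling endpoint of $e$, of degree $d$, with Fibonacci signature $f_i=A\rho^i+B(-\rho)^{-i}$ where $\rho$ is the positive root of $t^2=ct+1$. Deleting $v$ makes $v$'s other $d-1$ edges dangling in an instance $\Omega_0$; using $f_{b+|\alpha|}=A\rho^b\rho^{|\alpha|}+B(-\rho)^{-b}(-\rho)^{-|\alpha|}$ one finds that $R(\Omega^e)$ is a fixed fractional-linear function of the single scalar $w:=N/P$, where $P$ and $N$ are the partition functions obtained from $\Omega_0$ by attaching the unary $[1,\rho\lambda_{e_i}]$, respectively $[1,-\lambda_{e_i}/\rho]$, to each dangling edge $e_i$. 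Moreover $w=\prod_i w_i$ over $v$'s $d-1$ sub-branches, with each $w_i$ depending only on the ratio $R$ of sub-branch $i$; since this dependence factors through a single scalar built by iterated multiplication, the computation of $R(\Omega^e)$ can be organized as a computation tree in which every node has at most two children — equivalently, a degree-$d$ Fibonacci gate behaves like a chain of bounded-arity Fibonacci gates — so large degree does not blow up the tree. Truncating it at depth $L$ yields an estimate $\widehat R_L$ computable in time $2^{O(L)}\mathrm{poly}(n)$.

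\textbf{Contraction near the unweighted point.} The special feature of Fibonacci functions is an eigen-collapse: attaching $[1,\rho]$ or $[1,-\rho^{-1}]$ to an edge of \emph{any} ${\cal F}_c$ signature turns it into a rank-one ``power'' signature, which in turn disconnects and collapses its neighbours; hence when all edge weights equal $1$ the quantities $P,N$ are computed in a single sweep with no recursion, so one step of the recursion does not depend on deeper sub-instances at all. For $\lambda_e$ near $1$ the effective caps $[1,\rho\lambda_{e_i}]$ and $[1,-\lambda_{e_i}/\rho]$ are small perturbations of eigen-unaries, so one step is the trivial (constant) collapse up to an $O(\max_e|\lambda_e-1|)$ correction; once the ratios are confined to a compact interval $I\subset(0,\infty)$ depending only on $p,c$ — which they are because $f_{i+1}\ge p f_i$ bounds the signatures, degenerate sub-instances being treated separately — this makes one step a strict contraction with rate $\le\alpha_0<1$ whenever $\max_e|\lambda_e-1|$ is small enough. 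Quantify this through a potential function $\Phi$: the $\Phi$-amortized one-step decay rate is continuous in the edge weights and vanishes at the unweighted point, hence stays $\le\alpha_0<1$ for all edge weights in some interval $[\lambda_1(p,c),\lambda_2(p,c)]$ with $\lambda_1<1<\lambda_2$. The standard amortization bound then gives $|\widehat R_L-R(\Omega^e)|\le\mathrm{poly}(n)\,\alpha_0^L$, so $L=O(\log(n/\varepsilon))$ makes every ratio accurate to a factor $1\pm\varepsilon/\mathrm{poly}(n)$ in time $2^{O(L)}\mathrm{poly}(n)=\mathrm{poly}(n,1/\varepsilon)$, and multiplying the $\mathrm{poly}(n)$ ratios yields the FPTAS.

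\textbf{Main obstacle.} The soft parts — telescoping, closure under pinning, and the continuity/compactness step that promotes ``rate $<1$ at all weights $=1$'' to a whole neighbourhood — are routine. The real work is twofold: (i) turning the factorization of $w$ into an actual computation tree of bounded branching and polynomial size even without a degree bound (the ``decompose a Fibonacci gate into constant-arity gates'' idea), which needs care to keep membership in the class with a still-positive ratio bound; and (ii) choosing the potential $\Phi$ and making the perturbation estimate precise over the entire range $I$, i.e. pinning down how close to $1$ the edge weights must be. Step (ii) is where the structure of Fibonacci functions has to be used quantitatively and is the bottleneck; note also that the eigen-collapse no longer propagates once different vertices carry different parameters $c$, which is why the wider-range theorems need a genuinely different argument for bounding the variables.
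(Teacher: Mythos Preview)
Your high-level plan---dangling-edge recursion, bounded branching via Fibonacci structure, contraction near $\lambda=1$, telescoping---matches the paper's, but two load-bearing claims in your sketch fail as written.

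First, the factorization $w=\prod_i w_i$ is false once the sub-branches share vertices: deleting $v$ leaves $\Omega_0$ with $d-1$ dangling edges that in general land in the same connected component, so $N/P$ does not split as a product over branches. Your $N$ side also attaches $[1,-\lambda_{e_i}/\rho]$, which produces signatures with negative entries; the resulting ratios can be negative or undefined, and correlation decay on such quantities is not available. The paper handles bounded branching differently: using only the identity $f_{i+2}=cf_{i+1}+f_i$, it splits a degree-$(d_1+d_2)$ vertex into a degree-$(d_1{+}1)$ and a degree-$(d_2{+}1)$ vertex joined by a new weight-$1$ edge, preserving $Z$ exactly. Iterating peels off one neighbour at a time and yields the three-variable recursion $R(\Omega^e)=g\bigl(R(\Omega^{e'}),R(\Omega^{e_1}),R(\widetilde\Omega^{e_1})\bigr)$, with connected sub-branches handled by pinning $e'$ to $0$ or $1$ rather than by any factorization. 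Everything stays nonnegative and in ${\cal F}_c^p$.

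Second, the assertion that the amortized decay rate \emph{vanishes} at the unweighted point conflates the exact algorithm with the recursion's Jacobian. For the paper's recursion one checks at $x=y=z=\rho$, $\lambda=1$ that $\bigl|\partial_x g\bigr|+\bigl|\partial_y g\bigr|+\bigl|\partial_z g\bigr|=2/(1+\rho^2)$, strictly below $1$ but not zero; for your multiplicative $w$-recursion the log-space rate is $1$ at $\lambda=1$, so continuity gives nothing. The paper's actual mechanism is: (a) an iterative argument showing that when $k=\max_e|\lambda_e-1|$ is small and the longest simple path through $e$ exceeds some $L$, one has $R(\Omega^e)\in[\rho-\Delta,\rho+\Delta]$ with $\Delta\to 0$ as $k\to 0$ and $L\to\infty$---this is a genuine two-sided bound, not the one-sided $R\ge p$ your compactness step invokes; (b) with the trivial potential $\Phi\equiv 1$, explicit derivative estimates give $\alpha_1,\ldots,\alpha_4<1$ uniformly on $[\rho-\Delta,\rho+\Delta]$; (c) instances with short longest-path are computed exactly via holographic reduction to the spin world. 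Your proposal supplies neither (a) nor a recursion on which (b) can be carried out.
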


\begin{theorem}
\label{thm:2}
Let $p>0$.  Then there is an FPTAS for  ${\rm Holant}({\cal F}_{1.17, +\infty}^p, \Lambda_{1,+\infty} )$.
\end{theorem}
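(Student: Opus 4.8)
The plan is to establish computationally efficient correlation decay for the recursive structure on dangling edges, and then convert it to an FPTAS in the standard way. I would work entirely with dangling instances $\Omega^e$ carrying a single dangling edge $e$, tracking the ratio $R(\Omega^e)$. The first step is to set up the recursion: given a dangling instance, pick the vertex $v$ to which the dangling edge $e$ is attached, with Fibonacci function $F_v\in{\cal F}_c^p$ of some arity $d$. Removing $v$ leaves $d-1$ dangling sub-instances $\Omega_1^{e_1},\dots,\Omega_{d-1}^{e_{d-1}}$ (reached through the other incident edges of $v$), and $R(\Omega^e)$ is a rational function of the $R_j:=R(\Omega_j^{e_j})$ and the edge weights $\lambda_{e_j}\in[1,+\infty)$, determined by how the Fibonacci function aggregates. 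Because $F_v$ is Fibonacci, the key structural fact — this is the ``decompose a node into nodes of constant degree'' idea flagged in the techniques section — is that a high-arity Fibonacci gate can be simulated by a bounded-degree gadget of Fibonacci gates (using the two-term linear recurrence $f_{i+2}=cf_{i+1}+f_i$, equivalently the closed form $f_i=A\rho^i+B(-\rho)^{-i}$). So without loss of generality I may assume every recursion step branches into a constant number of sub-instances, which is what makes the computation tree have bounded degree and keeps the running time polynomial even though the original graph is unbounded-degree.

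The second step is the amortized decay analysis. I would introduce a potential function $\Phi$ mapping the ratio domain to a transformed coordinate $y=\Phi(R)$, and show that in the transformed coordinates the one-step recursion is a contraction: the amortized decay rate
\[
\alpha \;=\; \sup \;\Bigl|\Phi'(R)\Bigr|\cdot\sum_{j}\Bigl|\pderiv{R}{R_j}\Bigr|\cdot\frac{1}{|\Phi'(R_j)|}
\]
is bounded by a constant strictly less than $1$, where the supremum is over all admissible $R_j$ (i.e.\ ratios achievable by instances in ${\rm Holant}({\cal F}_{1.17,+\infty}^p,\Lambda_{1,+\infty})$) and all $\lambda_{e_j}\ge 1$, $c\ge 1.17$, $p>0$. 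Here the ``bounding range of variables'' technique enters: I would first prove a priori bounds showing that any ratio $R$ arising from an instance in this class lies in some interval $[r_{\min}(p),r_{\max}]$ — using $f_{i+1}\ge p f_i$, the recurrence, and $\lambda\ge 1$ to bound the ratio at a single Fibonacci gate, then propagating — and also separate the sub-instances into shallow ones (small, whose ratio I compute exactly, e.g.\ by the holographic reduction to a spin system alluded to in the techniques) and deep ones (for which the a priori bound is in force). Restricting to the smaller domain is exactly what makes $\alpha<1$ provable when $c\ge 1.17$.

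The third step is the routine wrap-up: iterate the recursion to depth $O(\log(n/\varepsilon))$; the correlation decay bound with rate $\alpha<1$ and bounded branching implies the truncated computation tree estimates each edge marginal to within relative error $\varepsilon/\mathrm{poly}(n)$ in time $\mathrm{poly}(n,1/\varepsilon)$; then telescope edge-by-edge, writing $Z(\Omega)$ as a product of conditional marginals (using the pinning operation to fix edges one at a time), to get an FPTAS for $Z(\Omega)$ itself. I expect the main obstacle to be step two: designing the right potential function $\Phi$ and verifying $\alpha<1$ uniformly over the full parameter ranges $c\in[1.17,+\infty)$, $\lambda\ge 1$, $p>0$, and over all arities — in particular handling the interaction between large $c$ (which helps) and large $\lambda$ and large arity (which hurt), and pinning down the precise constant $1.17$. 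The reduction of unbounded arity to bounded-degree gadgets, and the a priori ratio bounds that shrink the variable domain, are the levers that make this verification tractable, but the calculus optimization at the heart of it is where essentially all the work lies.
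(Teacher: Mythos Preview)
Your proposal is correct and follows essentially the same approach as the paper: set up the dangling-edge recursion, use the Fibonacci identity to decompose high-arity vertices into bounded-degree pieces, establish an a priori lower bound on the ratio for deep instances, and then verify amortized correlation decay with a potential function before telescoping to the partition function. The paper's specific choices are the potential $\Phi(x)=x$, a one-sided bound $R(\Omega^e)\ge c$ (rather than a two-sided interval) for instances with long enough simple paths, and a slightly different pinning convention for the three-variable recursion $g$---but these are exactly the implementation details you flagged as the remaining work in step two.
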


\begin{theorem}
\label{thm:spin}
Let $\lambda>0$ and $c\geq 2.57$.
There is an FPTAS for ${\rm Holant}({\cal F}_{c}^{c/2,c+2/c}, \Lambda_{\lambda,+\infty} )$.
\end{theorem}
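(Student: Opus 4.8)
The plan is to run the correlation-decay method on dangling instances with a single dangling edge. First, by the usual self-reduction it suffices to approximate, within a multiplicative $(1\pm\varepsilon/\mathrm{poly}(n))$ factor, the ratio $R(\Omega^e)$ for each dangling instance $\Omega^e$ arising along the way; a telescoping product over the edges then assembles a $(1\pm\varepsilon)$-approximation of $Z(\Omega)$. The recursion for $R(\Omega^e)$ rests on one structural fact about Fibonacci signatures: if $F=[f_0,\dots,f_d]\in{\cal F}_c$ and we merge a unary $[1,u]$ with $u\ge 0$ into one argument, the resulting arity-$(d-1)$ signature $[f_0+uf_1,\dots,f_{d-1}+uf_d]$ again lies in ${\cal F}_c$, since $f_{i+2}=cf_{i+1}+f_i$ is preserved. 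Hence, if $e$ is incident to a vertex $v$ with $F_v\in{\cal F}_c^{c/2,\,c+2/c}$ whose other $d-1$ edges carry weights $\lambda_1,\dots,\lambda_{d-1}$ and lead to sub-instances with ratios $x_1,\dots,x_{d-1}$, then $R(\Omega^e)=r_{d-1}$ where $r_0=f_1/f_0$ and
\[
r_j \;=\; h(r_{j-1},u_j)\;:=\;\frac{r_{j-1}+u_j(c\,r_{j-1}+1)}{1+u_j\,r_{j-1}}\,,\qquad u_j=\lambda_j x_j .
\]
Writing $h(r,u)$ as a convex combination of $r=f_1/f_0$ and $c+1/r=f_2/f_1$ shows that $I_c:=[\,c/2,\;c+2/c\,]$ is invariant under $r\mapsto h(r,u)$ for every $u\ge 0$; together with $r_0\in I_c$ (which is exactly what $F_v\in{\cal F}_c^{c/2,c+2/c}$ supplies) and induction on the recursive structure, every ratio occurring in the computation lies in $I_c$, so every $u_j=\lambda_jx_j$ lies in $[\lambda c/2,\infty)$ because $\lambda_e\ge\lambda>0$.

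Next I would remove the dependence on maximum degree via the gadget decomposition sketched in the introduction: realise a degree-$d$ Fibonacci node as a chain of degree-$3$ Fibonacci gadgets, so the computation tree is binary and its depth-$L$ truncation has at most $2^L$ nodes. Long chains are harmless because $\rho$, the positive root of $t^2=ct+1$, is the unique fixed point of $r\mapsto h(r,u)$ for every $u>0$, and it is attracting at a rate uniform over $I_c\times[\lambda c/2,\infty)$: one computes $\partial_r h(r,u)=\dfrac{1+uc-u^2}{(1+ur)^2}$, and for $c\ge 2.57>2$ and $r\in I_c$ this is bounded in absolute value by some $\kappa_0<1$ uniformly in $u$ (the only borderline regime is $u\to\infty$, where $|\partial_r h|\to 1/r^2\le(2/c)^2<1$). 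Consequently $|r_j-\rho|\le\kappa_0^{\,j}|r_0-\rho|$, so once a chain has been followed $K=\Theta(\log(n/\varepsilon))$ steps the current ratio is within $\varepsilon/\mathrm{poly}(n)$ of $\rho$, and since $\rho$ is the common fixed point it stays that close under all further absorptions; hence the algorithm may output $\rho$ at such a point, and likewise whenever the computation tree reaches depth $L=\Theta(\log(n/\varepsilon))$. This is the ``deep sub-instance'' case (the ratio is pinned near $\rho\in I_c$); the complementary ``shallow'' sub-instances are small enough to be evaluated exactly, for instance through a holographic reduction to a spin system.

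The substance of the proof is the correlation-decay estimate for a single chain step, done by amortization. Fix an increasing potential $\Phi$ (depending only on $c$ and $\lambda$), pass to $\tilde r=\Phi(r)$, and bound the amortized decay rate of $r_j=h(r_{j-1},u_j)$ with $u_j=\lambda_j x_j$, namely
\[
\alpha\;=\;\bigl|\partial_r h(r_{j-1},u_j)\bigr|\,\frac{\Phi'(r_j)}{\Phi'(r_{j-1})}\;+\;\bigl|\partial_u h(r_{j-1},u_j)\bigr|\,\lambda_j\,\frac{\Phi'(r_j)}{\Phi'(x_j)}\,,
\]
where $\partial_u h(r,u)=\dfrac{cr+1-r^2}{(1+ur)^2}$. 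The goal is to choose $\Phi$ so that $\alpha\le\kappa<1$ for all $r_{j-1},x_j\in I_c$ and all $\lambda_j\ge\lambda$; the confinement to $I_c$ coming from $F_v\in{\cal F}_c^{c/2,c+2/c}$ is precisely what shrinks the domain enough for such a bound to be attainable. Granting this, a standard induction over the binary computation tree bounds the error at the root of the depth-$L$ truncation by $\kappa^{\Omega(L)}$ times a constant (the diameter of $I_c$), so $L=\Theta(\log(n/\varepsilon))$ makes the error $\le\varepsilon/\mathrm{poly}(n)$ while keeping the tree of size $2^{\Theta(\log(n/\varepsilon))}=\mathrm{poly}(n,1/\varepsilon)$; the telescoping product then yields the FPTAS.

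I expect the choice of $\Phi$ and the uniform verification of $\alpha\le\kappa<1$ over the two-parameter region $I_c\times[\lambda c/2,\infty)$ to be the main obstacle, especially near the endpoints $c/2$ and $c+2/c$ of $I_c$ (where $|\partial_u h|$ is largest) and as $\lambda_j\to\infty$ (where $|\partial_u h|\lambda_j\to 0$ but $\partial_r h\to 1/r^2$); it is exactly for this inequality that one needs $c\ge 2.57$ rather than merely $c>2$, and the constant is whatever falls out of an acceptable $\Phi$, not optimised here. A secondary, more routine difficulty is making the chain-truncation argument rigorous — that no admissible sequence of further absorptions can move a ratio already $\varepsilon/\mathrm{poly}(n)$-close to $\rho$ out of that ball — and checking that the exact evaluation of shallow sub-instances runs in polynomial time.
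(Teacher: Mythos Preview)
Your two-variable recursion $r_j=h(r_{j-1},\lambda_j x_j)$ is exactly the paper's $\hat g(x,y)=\dfrac{\lambda y+x+c\lambda xy}{1+\lambda xy}$, which is only valid when, after splitting the vertex into a degree-$3$ node $v''$ (incident to the dangling edge $e$, the new internal edge $e'$, and one original edge $e_1$), the two children $e'$ and $e_1$ lie in different connected components of the resulting double-dangling instance. On a general graph they typically do not: pinning $e'$ to $0$ versus $1$ yields two \emph{different} sub-instances for $e_1$, with ratios $y=R(\Omega^{e_1})$ and $z=R(\widetilde\Omega^{e_1})$. The paper therefore works with the genuinely three-variable recursion
\[
g(x,y,z)=\frac{\lambda y+x+c\lambda xz}{1+\lambda xz},
\]
and the heart of the proof is bounding the three-term amortized rate
\[
\alpha_2(x,y,z)=\frac{|\partial_x g|\,\Phi(x)+|\partial_y g|\,\Phi(y)+|\partial_z g|\,\Phi(z)}{\Phi(g(x,y,z))}
\]
below a constant $<1$ with $\Phi(x)=x$ over the box $[c/2,2c]^3$. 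Your two-term $\alpha$ is not the right object on non-tree inputs, and this is not a technicality you can patch later: the threshold $c\ge 2.57=4\sqrt{\sqrt2-1}$ arises in the paper precisely from the inequality $c^4+32c^2-256\ge 0$ that shows up when forcing $\alpha_2<1$. Your own analysis of $|\partial_r h|$ only needs $c>2$ (as you note), so your scheme has no mechanism to produce $2.57$; that is a sign the hard case is absent.

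A secondary issue: the ``long chains are harmless'' paragraph analyzes the deterministic contraction $r_j\to\rho$ assuming the inputs $x_j$ are exact, and then asserts that a ratio once near $\rho$ ``stays that close under all further absorptions''. In the algorithm the $x_j$ carry their own truncation errors, injected through $\partial_u h$, so that assertion is false as stated. The correct bookkeeping is a single amortized decay rate per computation-tree step (which is what your $\alpha$ is meant to be, and what the paper's Lemma~\ref{lem:algo} formalizes); once that rate is uniformly $<1$ over $[R_1,R_2]$, no separate chain argument is needed or wanted.
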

%

\ifabs{
As an important application of the above theorems, we get the following FPTAS for ferromagnetic two-state spin system .
\begin{theorem}
\label{thm:spin-holant}
There is a continuous curve $\Gamma(\beta)$ defined on $[1,+\infty)$ such that (1) $\Gamma(1)=1$;
(2) $1<\Gamma(\beta)<\beta$ for all $\beta>1$; and (3) $\lim_{\beta \to +\infty} \frac{\Gamma(\beta)}{\beta}=1$.
There is an FPTAS for the  two-state spin system with local interaction matrix
    $\begin{bmatrix}
        \beta & 1 \\
        1 & \gamma
    \end{bmatrix}$
and external field $\mu \leq 1$ if $\beta\gamma>1$ and $\gamma \leq \Gamma(\beta)$.
\end{theorem}
}
{}

\ifabs{}
{
  \subsection{Beyond Fibonacci}
We use ${\cal L}_{a,b}$ to denote the set of all symmetric functions  $[f_0,f_1,\ldots,f_d]$ which satisfies that
\[f_{i+2}=a f_{i+1} + b f_i, \ \  \mbox{where }\ \  i=0, 1, \cdots, d-2.\]
And we use ${\cal L}$ to denote all these functions for different $a$ and $b$. We shall show that an instance of ${\rm Holant}({\cal L}, \Lambda)$
can be transformed to an instance of Fibonacci gates. Given an instance $\Omega = (G(V,E), \{ F_v | v \in V \}, \{\lambda_e | e \in E\} )$ of ${\rm Holant}({\cal L}, \Lambda)$, we can modify a function  $F_v=[f_0,f_1,\ldots,f_d]\in {\cal L}_{a,b}$ to
\[ [g_0, g_1, \ldots, g_d]=[f_0,\frac{f_1}{\sqrt{b}},\ldots,\frac{f_d}{b^{d/2}}]. \]
Then these $[g_0, g_1, \ldots, g_d]$ satisfies that
\[g_{i+2}=\frac{a}{\sqrt{b}} g_{i+1} +  g_i, \ \  \mbox{where }\ \  i=0, 1, \cdots, d-2,\]
which is a Fibonacci function.
At the same time, we modify the edge weight of each neighbor of $v$ from $\lambda$ to $\lambda \sqrt{b}$. By the definition of partition function, it is easy to verify that the partition function remains the same after these simultaneous modification of vertex function and edge weighs. We can do this for all the vertex functions and edge weights.
This is indeed a holographic reduction under the basis $\begin{bmatrix} 1 & 0 \\ 0 & \sqrt{b} \end{bmatrix}$.
Finally we can get an instance of Fibonacci gate. So all our FPTAS results for Fibonacci gates can be translated to an FPTAS results of a subfamily of  ${\rm Holant}({\cal L}, \Lambda)$.

  \subsection{Holographic reduction and spin world}\label{sec:spin}
Weighted Holant problem can also be interpreted as an (unweighed) Holant problem defined on bipartite graphs. For any Holant instance on a general graph, we can make it
bipartite by adding an additional vertex on each edge, and for the
new vertex on a edge with weight $\lambda$, the function on it is
$[1, 0, \lambda]$. The new bipartite graph is unweighed (no edge weights).
It is clear that this modification does not change the partition function of the instance.
For this bipartite Holant, we can apply a holographic reduction under base $\begin{bmatrix} 1 & t \\ \rho & -\frac{t}{\rho}\end{bmatrix}$ to get the following lemma.

\begin{lemma}
\label{lem:holographic}
Let $\lambda>0$, $\rho\geq 1$, $t(1-\lambda)>0$, and $|t|\leq 1$.
Let $\beta=\frac{1+\lambda \rho^2}{t(1-\lambda)}$ and $\gamma=\frac{t(1+\lambda \rho^{-2})}{1-\lambda}$.
The two spin problem with edge function $\begin{bmatrix} \beta & 1 \\ 1 & \gamma \end{bmatrix}$ and external field $\mu$
is equivalent to ${\rm Holant}( {\cal F}_{\rho-\frac{1}{\rho}}, \Lambda_{\lambda,\lambda} )$, where ${\cal F}_{\rho-\frac{1}{\rho}}$ is a set of Fibonacci functions
with parameter $c=\rho-\frac{1}{\rho}$ and the one of arity $n$ has form
\begin{equation}
    \label{eq:spin-fibo}
    f_k=\rho^k+\mu t^n (-\rho)^{-k}.
\end{equation}
\end{lemma}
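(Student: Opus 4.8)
The plan is to realize the two‑state spin system as an unweighted bipartite Holant problem and then apply the holographic reduction of Theorem~\ref{thm:holant} with the base $T=\begin{bmatrix}1 & t\\ \rho & -t/\rho\end{bmatrix}$, checking that the contravariant and covariant actions of $T$ land exactly on the advertised signatures. Concretely, given a spin instance on $G(V,E)$, replace each vertex $v$ of degree $n$ by a Holant node carrying the ``equality‑with‑external‑field'' signature $[1,0,\dots,0,\mu]$ of arity $n$ (nonzero only on the all‑$0$ and all‑$1$ inputs, with values $1$ and $\mu$), replace each edge by a binary node carrying $[\beta,1,\gamma]$, and join each edge‑node to its two endpoint‑nodes; the resulting bipartite graph has Holant value exactly $Z$ of the spin system with field $\mu$. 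Thus the spin side is $\mathrm{Holant}(\mathcal G\mid\mathcal R)$ with $\mathcal G$ the equality‑with‑field signatures (contravariant) and $\mathcal R=\{[\beta,1,\gamma]\}$ (covariant). On the other side, an instance of $\mathrm{Holant}(\mathcal F_{\rho-1/\rho},\Lambda_{\lambda,\lambda})$ has, via the bipartization recalled just above, the form $\mathrm{Holant}(\{f\}\mid\{[1,0,\lambda]\})$.

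Then I would carry out the two computations forced by $T$. For the contravariant (equality) side, write $[1,0,\dots,0,\mu]=e_0^{\otimes n}+\mu\,e_1^{\otimes n}$ and use $Te_0=(1,\rho)^{\mathsf T}$, $Te_1=(t,-t/\rho)^{\mathsf T}$ to get
\[
T^{\otimes n}\big(e_0^{\otimes n}+\mu e_1^{\otimes n}\big)=(e_0+\rho e_1)^{\otimes n}+\mu t^{\,n}\big(e_0-\tfrac{1}{\rho} e_1\big)^{\otimes n},
\]
whose Hamming‑weight‑$k$ entry is $\rho^k+\mu t^n(-\rho)^{-k}$; since any sequence $A\rho^k+B(-1/\rho)^k$ obeys $f_{k+2}=(\rho-\tfrac{1}{\rho})f_{k+1}+f_k$, this is a Fibonacci signature with $c=\rho-1/\rho$, matching \eqref{eq:spin-fibo}. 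For the covariant (edge) side, the covariant action of $T$ on the matrix $[1,0,\lambda]=\operatorname{diag}(1,\lambda)$ is $T^{\mathsf T}\operatorname{diag}(1,\lambda)\,T$, and a direct $2\times2$ product gives $\begin{bmatrix}1+\lambda\rho^2 & t(1-\lambda)\\ t(1-\lambda) & t^2(1+\lambda\rho^{-2})\end{bmatrix}$; dividing by the (positive, by hypothesis) off‑diagonal entry $t(1-\lambda)$ yields precisely $\begin{bmatrix}\beta & 1\\ 1 & \gamma\end{bmatrix}$ for the $\beta,\gamma$ in the statement. Hence equality‑with‑field transforms contravariantly into $f$, and $[1,0,\lambda]$ transforms covariantly into a positive scalar multiple of the spin matrix.

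Finally I would assemble: by Theorem~\ref{thm:holant} the holographic reduction with base $T$ turns the bipartite Holant form of the spin instance into the bipartite Holant form of a weighted Fibonacci instance, all edge weights $\lambda$ and vertex signatures $f$ of the form \eqref{eq:spin-fibo}, the only discrepancy being the global factor $(t(1-\lambda))^{|E|}$ from normalizing the edge signature — computable in polynomial time, so the two partition‑function problems are equivalent (an $\varepsilon$‑relative approximation of one gives an $\varepsilon$‑relative approximation of the other). The hypotheses are exactly what legitimizes this: $\rho\ge1$ gives a genuine Fibonacci parameter $c=\rho-1/\rho\ge0$; $t(1-\lambda)>0$ keeps the rescaling factor and $\gamma$ positive; $|t|\le1$ keeps the signatures in the intended ranges.

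\textbf{Main obstacle.} There is no deep step here; the whole content is (i) choosing the right base $T$ and (ii) the two bookkeeping computations above. The one thing that must be gotten right is the convention — which family is contravariant versus covariant, and in which direction the base acts — so that equality‑with‑field goes to $f$ \emph{while} $[1,0,\lambda]$ simultaneously goes to $[\beta,1,\gamma]$; this is precisely what pins down the columns of $T$ (the vectors $(1,\rho)$ and $(1,-1/\rho)$ that diagonalize the Fibonacci recurrence) together with the free scalar $t$ (tuned so the edge gadget lands on the normalized spin matrix). Beyond that it is a $2\times2$ matrix product and a one‑line recurrence check.
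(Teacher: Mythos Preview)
Your proposal is correct and follows exactly the approach the paper itself uses: the paper simply says to apply a holographic reduction under the base $T=\begin{bmatrix}1 & t\\ \rho & -t/\rho\end{bmatrix}$ and states the lemma without further detail, and your write-up supplies precisely the two bookkeeping computations (the contravariant action on $[1,0,\dots,0,\mu]$ giving $f_k=\rho^k+\mu t^n(-\rho)^{-k}$, and the covariant action on $[1,0,\lambda]$ giving $T^{\mathsf T}\operatorname{diag}(1,\lambda)T$ scaled to $[\beta,1,\gamma]$) that make this verification explicit.
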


Through this reduction, Theorem~\ref{thm:1}-\ref{thm:spin} give a region on the $\beta\mbox{-}\gamma$ plane in which the ferromagnetic two-state spin system problem admits an FPTAS. The explicit range is complicated and not very informative. We use a function
$\Gamma(\beta)$ to denote the combined range of the above three theorems and have the following FPTAS for ferromagnetic two spin system.
%
%
%

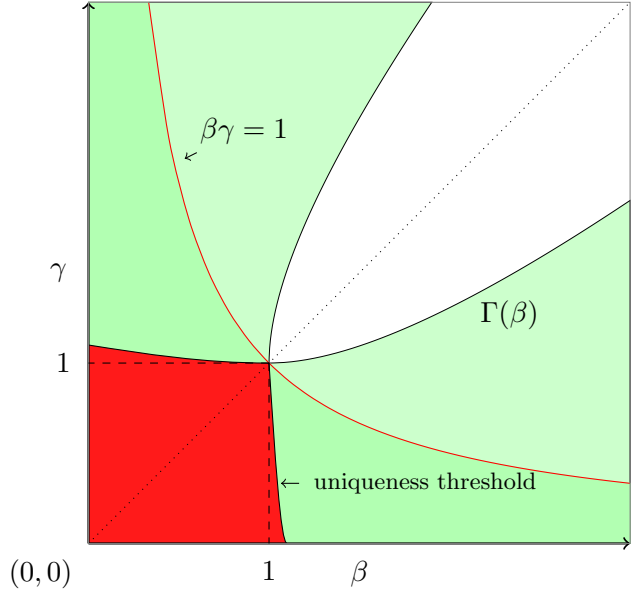
\begin{figure}[ht]
    \centering
    \begin{tikzpicture}[scale=0.8]

  \filldraw [domain=3:9,smooth,fill=green!20, very thin] plot(\x,{3*(\x/3-ln(\x/3))})--(9,0) -- (3,0) -- cycle;
  \filldraw [domain=3:9,smooth, variable=\t,fill=green!20, very thin] plot({3*(\t/3-ln(\t/3))},{\t}) -- (0,9) -- (0,3) -- cycle;

  \filldraw [domain=1:9,smooth,color=green!30] plot(\x,9/\x) -- (9,0) -- (0,0) -- (0,9) -- cycle;
  \draw [domain=1:9,smooth,color=red] plot(\x,9/\x);
  \filldraw [fill=red!90] (0,3.3) sin (3,3) sin (3.3,0) -- (0,0) -- cycle; 
  \draw [dotted] (0,0) -- (9,9);

  \draw [thick,->] (0,0) -- (9,0);
  \draw [thick,->] (0,0) -- (0,9);

  \draw [very thin, color=black!50] (0,0) rectangle (9,9);
  \draw [->] (1.8,6.5)-- (1.6,6.4) node[above right=0.1cm]{$\beta\gamma=1$};
  \draw [->] (3.5,1) -- (3.2,1) node[right=0.3cm]{\footnotesize uniqueness threshold};
  \draw (7,3.8) node{$\Gamma(\beta)$};
  \draw (4.5, -0.5) node{$\beta$};
  \draw (-0.5,4.5) node{$\gamma$};
  \draw (0,0) node[below left=0.1cm]{$(0,0)$};
  \draw [thin, dashed] (3,3) -- (3,0);
  \draw [thin, dashed] (3,3) -- (0,3);
  \draw (0,3) node[left=0.1cm]{$1$};
  \draw (3,0) node[below=0.1cm]{$1$};

\end{tikzpicture}
    \caption{This figure illustrates the rough shape of $\Gamma(\cdot)$ when there is no external field. It also includes anti-ferromagnetic range. Parameters $(\beta, \gamma)$ admit FPTAS in green region and hard to approximate in red region. }
\end{figure}

\begin{theorem}
\label{thm:spin-holant}
There is a continuous curve $\Gamma(\beta)$ defined on $[1,+\infty)$ such that (1) $\Gamma(1)=1$;
(2) $1<\Gamma(\beta)<\beta$ for all $\beta>1$; and (3) $\lim_{\beta \to +\infty} \frac{\Gamma(\beta)}{\beta}=1$.
There is an FPTAS for the  two-state spin system with local interaction matrix
    $\begin{bmatrix}
        \beta & 1 \\
        1 & \gamma
    \end{bmatrix}$
and external field $\mu \leq 1$ if $\beta\gamma>1$ and $\gamma \leq \Gamma(\beta)$.
\end{theorem}

\begin{proof}
The main idea is to make use of the holographic reduction as stated in Lemma \ref{lem:holographic} to transform 
 FPTAS for the Fibonacci function $f_k=\rho^k+\mu t^n (-\rho)^{-k}$ with edge weight $\lambda$ to a FPTAS for spin system with 
 parameters $\beta=\frac{1+\lambda \rho^2}{t(1-\lambda)}$,  $\gamma=\frac{t(1+\lambda \rho^{-2})}{1-\lambda}$ and external field $\mu$. In the following, we first choose some parameters $\rho$, $\lambda$ and $|t|=1$ in the tractable range of Theorem~\ref{thm:spin}, Theorem~\ref{thm:2}, and Theorem~\ref{thm:1} to define the boundary $\Gamma(\beta)$ by the holographic reduction. Then we cover the below area by choosing some suitable $|t|<1$.

  We first specify the boundary curve 
$\Gamma(\beta)=\max\left\{\Gamma_1(\beta),\Gamma_2(\beta),\Gamma_3(\beta)\right\}$ where $\Gamma_1(\beta),\Gamma_2(\beta)$ are curves parameterized by $\lambda$ and $\Gamma_3(\beta)$ is a curve parameterized by $\rho$ defined as follows.
    \begin{align*}
      \Gamma_1&=\left(\beta=\frac{1+2.92^2\lambda}{1-\lambda},\gamma=\frac{1+2.92^{-2}\lambda}{1-\lambda}\right),\quad \lambda\in(0,1);\\
      \Gamma_2&=\left(\beta=\frac{1+1.75^2\lambda}{\lambda-1},\gamma=\frac{1+1.75^{-2}\lambda}{\lambda-1}\right),\quad \lambda\in(1,\infty);\\
      \Gamma_3&=\left(\beta=\frac{1+\rho^2\lambda_2(\rho)}{\lambda_2(\rho)-1},\gamma=\frac{1+\rho^2\lambda_2(\rho)}{\lambda_2(\rho)-1}\right),\quad \rho\in(1,\infty)
    \end{align*}
    where $\lambda_2(\cdot)$ is the one in Theorem~\ref{thm:1}.

    $\Gamma_1$ is obtained from Lemma~\ref{lem:holographic} combined with Theorem~\ref{thm:spin} by taking $t=1$ and $\rho=2.92$ (equivalently $c=2.57$) as it is easy to verify that the condition $c/2\le\frac{f_{i+1}}{f_i}\le c+2/c$ hold in this case. $\Gamma_2$ is obtained from Lemma~\ref{lem:holographic} combined with Theorem~\ref{thm:2} by taking $t=-1$ and $\rho=1.75$ (equivalently $c=1.17$). $\Gamma_3$ is obtained from Lemma~\ref{lem:holographic} combined with Theorem~\ref{thm:1} by taking $t=-1$ and $\lambda= \lambda_2(\rho)$. Note that although in the statement of Theorem~\ref{thm:1}, $\lambda_2(\cdot)$ is a function of $p$ and $\rho$, $p$ is also a function of $\rho$ for fixed $t$ and $\mu$ in our case. Thus $\lambda_2(\cdot)$ is a function of $\rho$.

    Now we can discuss the shape of $\Gamma(\beta)$ on $\beta\mbox{-}\gamma$ plane. The maximum in the definition of $\Gamma(\beta)$ is achieved by $\Gamma_1,\Gamma_2,\Gamma_3$ consecutively for $\beta$ from $1$ to $\infty$.
    \begin{itemize}
    \item When $\beta$ is relatively small, $\Gamma(\beta)=\Gamma_1(\beta)$ , which starts from the point $(1,1)$. 
    \item As $\beta$ grows, $\Gamma(\beta)=\Gamma_2(\beta)$ as the slope $\frac{\Gamma_1(\beta)}{\beta}$ approaches $
    \frac{1+2.92^{-2}}{1+2.92^{2}}\approx 0.117$ while $\frac{\Gamma_2(\beta)}{\beta}$ approaches $
    \frac{1+1.75^{-2}}{1+1.75^{2}}\approx 0.3265$.  
    \item We $\beta$ is large enough, we have $\Gamma(\beta)=\Gamma_3(\beta)$ with the slope approaches $1$: $\lim_{\beta\to+\infty}\frac{\Gamma(\beta)}{\beta}=\lim_{\lambda\to 1^+,\rho\to 1^+}\frac{1+\lambda\rho^{-2}}{1+\lambda\rho^{2}}=1$.
    \end{itemize}

    It remains to prove that an FPTAS exists for $\gamma<\Gamma(\beta)$. It is easy to verify that for fixed choice of $\rho$ and $\lambda$ as above, if we choose a $t$ with the same sign but smaller absolute value, it remains in the tractable range of  Theorem~\ref{thm:spin}, Theorem~\ref{thm:2}, and Theorem~\ref{thm:1}. For any pair $(\beta, \gamma)$ with $\frac{1}{\beta} < \gamma < \Gamma(\beta)$, there exist a pair $(\beta^*, \gamma^*)$ in the curve $\Gamma$ such that $\beta \gamma =\beta^*\gamma^*$. By the definition of $\Gamma$, we know that $\beta^*=\frac{1+\lambda\rho^2}{t^*(1-\lambda)}$ and $\gamma^*=\frac{t^*(1+\lambda\rho^{-2})}{1-\lambda}$ for some $\rho,\lambda$ and $t^*=1$ or $-1$, for which the Fibonacci gates $f_k=\rho^k+\mu (t^*)^n(-\rho)^{-k}$ has an FPTAS. By our observation, we still have FPTAS if we replace $t^*$ by a $t$ with the same sign but smaller absolute value. In particular, if we choose $t=\frac{t^* \beta^*}{\beta}$, we get 
$\beta=\frac{1+\lambda\rho^2}{t(1-\lambda)}$ and $\gamma=\frac{t(1+\lambda\rho^{-2})}{1-\lambda}$. So $(\beta,\gamma)$ also admits an FPTAS by holographic reduction.

\end{proof}

}

\section{Computation Tree Recursion}
\label{sec:recursion}
In the exact polynomial time algorithm for Fibonacci gates without edge weights, one crucial property of a set of Fibonacci functions with a fixed parameter is
that it is closed when two nodes are connected together~\cite{FOCS08}. This is no longer true if we have non-trivial edge weights or when
different Fibonacci function have different parameters. However, we can still use the special property of a Fibonacci function
to decompose a vertex, which is the key property for all FPTAS algorithms in our paper.

Let $\Omega = (G(V,E), \{ F_v | v \in V \}, \{\lambda_e | e \in E\} )$ be an instance of ${\rm Holant}({\cal F}_{c_1,c_2}^{p,q}, \Lambda_{\lambda_1,\lambda_2} )$, $v\in V$ be a vertex of the instance with degree $d_1+d_2$ ($d_1, d_2\geq 1$) and $e_1, e_2, \ldots, e_{d_1+d_2}$ be its incident edges. We can construct a new Holant instance $\Omega'$:  $\Omega'$ is the same as $\Omega$
except that $v$ is decomposed into two vertices $v', v''$. $e_1, e_2, \ldots, e_{d_1}$ are connected to $v'$ and $e_{d_1+1}, e_{d_1+2}, \ldots, e_{d_1+d_2}$ are connected to $v''$. There is a new edge $e$ connecting $v'$ and $v''$.  If the function on
the original $v$ is $[f_0, f_1, \ldots, f_{d_1+d_2}]$, a Fibonacci function with parameter $c$, then the function on $v'$ is
$[f_0, f_1, \ldots, f_{d_1}]$ and the function on $v''$ is $[1,0, 1, c \ldots]$, also a Fibonacci function with parameter $c$.
The edge weight on the new edge $e$ is $1$. The functions on all other nodes and edge weights on all other edges (except the new $e$) remain the same as that in $\Omega$. We use the following notation to denote this decomposition operation $$\Omega'=D(\Omega, v, \{e_1, e_2, \ldots, e_{d_1}\}, \{e_{d_1+1}, e_{d_1+2}, \ldots, e_{d_1+d_2}\}).$$
\ifabs{Using the special property of Fibonacci function, we have the following lemma. }
{
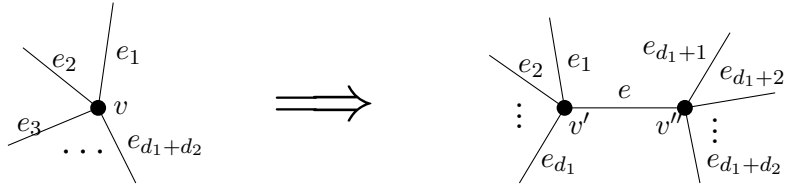
\begin{figure}[ht]
  \centering
  \begin{tikzpicture}[scale=2]
\coordinate (m) at (-1.5, 0);
\fill [black] (m) circle (1.5pt);
\draw (m) -- (-1.4,.7) node [right,midway] {$e_1$};
\draw (m) -- (-2,.4) node [right,near end] {$e_2$};
\draw (m) -- (-2.1,-.25) node [left,midway] {$e_3$};
\draw (m) -- (-1.25,-.5) node [right,midway] {$e_{d_1+d_2}$};
\node[font=\Large] at (-1.6,-.3) {$\cdots$};
\node[right=2pt of m] {$v$};

\coordinate (l) at (1.6, 0);
\coordinate (r) at (2.4, 0);
\node at (1.7,-.1) {$v'$};
\node at (2.3,-.1) {$v''$};
\fill [black] (l) circle (1.5pt);
\fill [black] (r) circle (1.5pt);
\draw (l) -- (1.5,.6) node [right,midway] {$e_1$};
\draw (l) -- (1.1,.35) node [right,near end] {$e_2$};
\draw (l) -- (1.3,-.5) node [right,near end] {$e_{d_1}$};
\node[font=\Large] at (1.3,0) {$\vdots$};
\draw (l) -- (r) node [above,midway] {$e$};
\draw (r) -- (2.7,.5) node [left,near end] {$e_{d_1+1}$};
\draw (r) -- (3,.1) node [above,near end] {$e_{d_1+2}$};
\draw (r) -- (2.5,-.5) node [right,near end] {$e_{d_1+d_2}$};
\node[font=\Large] at (2.6,-.1) {$\vdots$};

\node[font=\Huge] at (0,0) {$\Longrightarrow$};
\end{tikzpicture}
  \caption{Vertex decomposition}
\end{figure}}

\begin{lemma}\label{lem:decomposition}
  Let  $\Omega'=D(\Omega, v, E_1, E_2)$. Then $Z(\Omega)=Z(\Omega')$ and for all $e\in E$, $\mgn_\Omega(\sigma(e)=0)=\mgn_{\Omega'}(\sigma(e)=0)$.
\end{lemma}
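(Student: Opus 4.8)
The plan is to prove the identity $Z(\Omega) = Z(\Omega')$ by establishing a bijection between configurations of $\Omega'$ and configurations of $\Omega$ that preserves (a suitably grouped version of) the weight, and then observe that the marginal statement follows immediately because the bijection is the identity on all the old edges $E$. The only new edge in $\Omega'$ is $e$, so a configuration $\sigma'$ of $\Omega'$ is a configuration $\sigma$ of $\Omega$ together with a choice $\sigma'(e) \in \{0,1\}$. Fixing $\sigma$ on $E$, I want to show that summing $w_{\Omega'}(\sigma')$ over $\sigma'(e)\in\{0,1\}$ gives exactly $w_\Omega(\sigma)$. Since every edge weight, and every vertex function except at $v$ (resp.\ $v',v''$), is literally unchanged and the new edge $e$ carries weight $1$, all those factors are common and it suffices to prove the purely local identity
\[
F_v(x_1,\dots,x_{d_1+d_2}) \;=\; \sum_{y\in\{0,1\}} F_{v'}(x_1,\dots,x_{d_1},y)\cdot F_{v''}(y,x_{d_1+1},\dots,x_{d_1+d_2}),
\]
where $F_{v'}=[f_0,\dots,f_{d_1}]$ and $F_{v''}=[1,0,1,c,\dots]$ is the Fibonacci function of parameter $c$ of arity $d_2+1$.

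First I would reduce this local identity to a statement about the symmetric coefficients. Write $i = x_1+\cdots+x_{d_1}$ (Hamming weight on the $v'$-side) and $j = x_{d_1+1}+\cdots+x_{d_1+d_2}$ (Hamming weight on the $v''$-side), so the left side is $f_{i+j}$. On the right side, the $y=0$ term is $f_i \cdot g_j$ and the $y=1$ term is $f_{i+1}\cdot g_{j+1}$, where $g$ denotes the symmetric coefficient vector of $F_{v''}$, i.e.\ $g_0=1$, $g_1=0$, and $g_{k+2}=c\,g_{k+1}+g_k$. So the identity to prove is
\[
f_{i+j} \;=\; f_i\, g_j + f_{i+1}\, g_{j+1}, \qquad 0\le i\le d_1,\ 0\le j\le d_2.
\]
I would prove this by induction on $j$. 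The base cases $j=0$ (giving $f_i = f_i\cdot 1 + f_{i+1}\cdot 0$) and $j=1$ (giving $f_{i+1} = f_i\cdot 0 + f_{i+1}\cdot 1$) are immediate from $g_0=1,g_1=0,g_2=c$. For the inductive step, assuming it for $j$ and $j-1$, use the Fibonacci recurrences $f_{i+j+1}=c f_{i+j}+f_{i+j-1}$ and $g_{j+1}=c g_j + g_{j-1}$ (and likewise for $g_{j+2}$), and collect terms; this is a short linear computation. Note one must be careful that all indices invoked stay in range $\{0,\dots,d_1+d_2\}$ for $f$ and $\{0,\dots,d_2+1\}$ for $g$, which they do since $i\le d_1$, $j\le d_2$.

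With the local identity in hand, summing $w_{\Omega'}$ over $\sigma'(e)$ for each fixed extension $\sigma$ to $E$ yields $w_\Omega(\sigma)$, and summing over all $\sigma$ gives $Z(\Omega')=Z(\Omega)$. More is true: for any fixed $\alpha\in\{0,1\}$ and any target edge $e_\star\in E$, the same pointwise identity shows $Z(\Omega',\sigma(e_\star)=\alpha) = Z(\Omega,\sigma(e_\star)=\alpha)$, since the decomposition touches neither $e_\star$ nor its incident weight and we are just resumming over $\sigma'(e)$ within each class. Dividing the $\alpha=0$ partial sum by the full partition function gives $\mgn_{\Omega'}(\sigma(e_\star)=0)=\mgn_\Omega(\sigma(e_\star)=0)$, as claimed. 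The only genuine content is the symmetric-coefficient identity $f_{i+j}=f_i g_j + f_{i+1} g_{j+1}$, and that is where I expect the (mild) work to be; everything else is bookkeeping about which factors are common to $\Omega$ and $\Omega'$.
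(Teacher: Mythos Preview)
Your proposal is correct and follows essentially the same approach as the paper: reduce to the local identity $f_{i+j} = f_i\,g_j + f_{i+1}\,g_{j+1}$ and verify it from the Fibonacci recurrence (the paper simply asserts this can be ``verified by the definition of $f$ and $g$,'' while you spell out the induction on $j$). One tiny slip: you wrote $g_2=c$, but in fact $g_2 = c\,g_1 + g_0 = 1$, which is exactly the value you correctly used in the $j=1$ base case.
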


\ifabs{}
{
\begin{proof}
 There is a natural one-to-two correspondence of configuration $\sigma$ of $\Omega$ to $\sigma'_0$ and $\sigma'_1$ of $\Omega'$:
  $\sigma'_0$ and $\sigma'_1$ are identical to $\sigma$ on $E$ while $\sigma'_0(e)=0$ and $\sigma'_1(e)=1$ for the additional edge $e$ in $\Omega'$. Then our conclusion follows from the fact that
  \[w_\Omega(\sigma)=w_{\Omega'}(\sigma'_0)+w_{\Omega'}(\sigma'_1).\]
  We verify this in the following.
  The contribution of all the other vertex function and edges weights are the same in both sides. So, we only need to verify that \[F_v(\sigma(E_1+E_2))= F_{v'}(\sigma(E_1)0) F_{v''}(\sigma(E_2)0) +F_{v'}(\sigma(E_1)0) F_{v'}(\sigma(E_1)0).\]
  or
  \[f_{|\sigma(E_1+E_2)|}=f_{|\sigma(E_1)|} g_{|\sigma(E_2)|} + f_{|\sigma(E_1)|+1} g_{|\sigma(E_2)|+1}, \]
  where $\{g_i\}$ in the Fibonacci function of $v''$. Then the above identity can be verified by the definition of $f$ and $g$.
\end{proof}}

Let $\dli$ be a dangling instance of ${\rm Holant}({\cal F}_{c_1,c_2}^p, \Lambda_{\lambda_1,\lambda_2} )$. Let $v$ be the attaching vertex of the dangling edge $e$ and $e_1, e_2, \ldots, e_d$ be other incident edges of $v$.
We compute $R(\dli)$ by smaller instances depending on $d$. If $d=0$, then $R(\dli)$ can be computed directly.
If $d=1$, we construct a smaller dangling instance $\Omega^{e_1}$ by removing $e_0$ and $v$ from $G$ and make $e_1$ be the new dangling edge and remove its weight.
\begin{equation}\label{eq:rec_single}
  R(\dli)=\frac{f_1+ \lambda_{e_1} f_2 R(\Omega^{e_1})}{f_0+ \lambda_{e_1} f_1 R(\Omega^{e_1})}.
\end{equation}
\ifabs{}
{
We define
\[h(x)=\frac{f_1+ \lambda_{e_1} f_2 x}{f_0+ \lambda_{e_1} f_1 x} \]
}

If $d\geq 2$, we use the above lemma to decompose the vertex $v$ into $v'$ and $v''$ and let $e$ and $e_1$ connect to $v''$
and the remaining edges connect to $v'$. We use $e'$ to denote the edge between $v'$ and $v''$.
By removing $e$ and $v''$ from $\Omega'$ , we get a dangling instance $\Omega^{e', e_1}$ with two dangling edges $e',e_1$.
\ifabs{}
{
\begin{figure}[ht]
  \centering
  \begin{tikzpicture}[scale=2]
\coordinate (ltop) at (-1.5, 1);
\coordinate [label=left:$v$] (lv) at (-1.5, 0);
\coordinate (l1) at (-2.1, -.8);
\coordinate (l2) at (-1.6, -.8);
\coordinate (l3) at (-.9, -.8);
\fill [black] (lv) circle (1.5pt);
\draw[thick] (ltop) -- (lv) node [right,midway] {$e$};
\draw[thick] (lv) -- (l1) node [left,midway] {$e_3$};
\draw[thick] (lv) -- (l2) node [right,midway] {$e_2$};
\draw[thick] (lv) -- (l3) node [right,midway] {$e_1$};
\draw[thick,dashed,white,gray,rounded corners=10] (-2.3,-.8) rectangle (-.7,-1.5);
\node at (-1.5, -1.7) {$\Omega^e$};

\node [font=\Huge] at (0,-.6) {$\Longrightarrow$};

\coordinate (rtop) at (1.5, 1);
\coordinate (rv) at (1.5, 0);
\coordinate [label=left:$v'$] (rt) at (1.2, -.4);
\coordinate (r1) at (.9, -.8);
\coordinate (r2) at (1.5, -.8);
\coordinate (r3) at (2.1, -.8);
\fill [black] (rv) circle (1.5pt);
\fill [black] (rt) circle (1.5pt);
\draw[thick] (rtop) -- (rv) node [right,midway] {$e$};
\draw[thick] (rv) -- (rt) node [left,midway] {$e'$};
\draw[thick] (rt) -- (r1) node [left,midway] {$e_3$};
\draw[thick] (rt) -- (r2) node [right,midway] {$e_2$};
\draw[thick] (rv) -- (r3) node [right,midway] {$e_1$};
\draw[thick,dashed,white,gray,rounded corners=10] (.7,-.8) rectangle (2.3,-1.5);
\node [right=0 of rv] {$v''$};

\node [font=\Huge] at (3,-.6) {$\Longrightarrow$};

\coordinate (rrtopl) at (4.2, .4);
\coordinate (rrv) at (4.2, -.2);
\coordinate (rr1) at (3.9, -.8);
\coordinate (rr2) at (4.5, -.8);
\coordinate (rrtopr) at (5.1, 0);
\coordinate (rr3) at (5.1, -.8);
\fill [black] (rrv) circle (1.5pt);
\draw[thick] (rrtopl) -- (rrv) node [left,midway] {$e'$};
\draw[thick] (rrv) -- (rr1) node [left,midway] {$e_3$};
\draw[thick] (rrv) -- (rr2) node [right,midway] {$e_2$};
\draw[thick] (rrtopr) -- (rr3) node [left,midway] {$e_1$};
\node [right=0 of rrv] {$v'$};
\draw[thick,dashed,white,gray,rounded corners=10] (3.7,-.8) rectangle (5.3,-1.5);
\node at (4.5, -1.7) {$\Omega^{e',e_1}$};

\end{tikzpicture}
  \caption{Vertex decomposition ($d=3$)}
\end{figure}
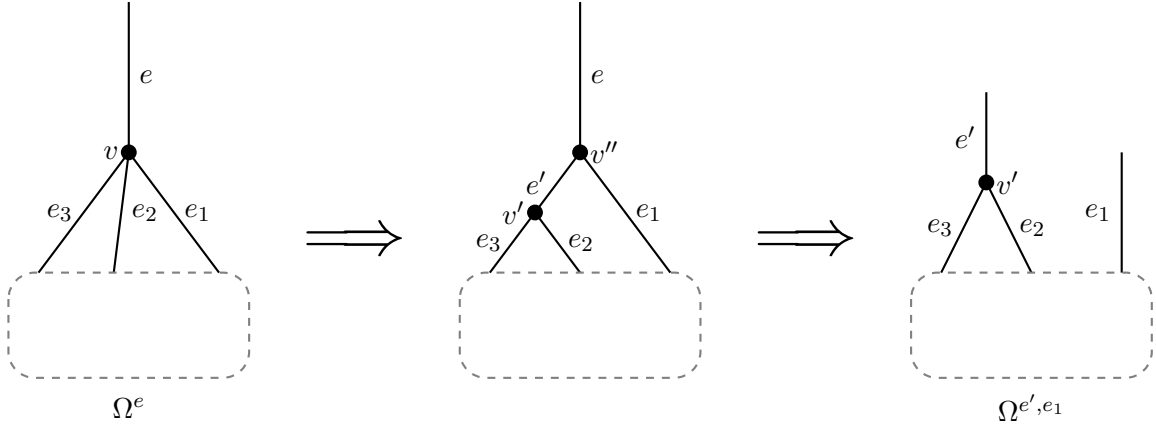
}
\begin{align*}
R(\dli)
  &=\frac{Z(\dli, \sigma(e)=1)} {Z(\dli, \sigma(e)=0)}\\
  &=\frac{\lambda_{e_1} Z(\Omega^{e', e_1}, \sigma(e' e_1)=01)+Z(\Omega^{e', e_1}, \sigma(e' e_1)=10)+c \lambda_{e_1} Z(\Omega^{e', e_1}, \sigma(e' e_1)=11)} {Z(\Omega^{e', e_1}, \sigma(e' e_1)=00)+\lambda_{e_1} Z(\Omega^{e', e_1}, \sigma(e' e_1)=11)}\\
  &=\frac{\lambda_{e_1} \mgn_{\Omega^{e', e_1}}(\sigma(e' e_1)=01)+\mgn_{\Omega^{e', e_1}}(\sigma(e' e_1)=10)+c \lambda_{e_1} \mgn_{\Omega^{e', e_1}}(\sigma(e' e_1)=11)} {\mgn_{\Omega^{e', e_1}}(\sigma(e' e_1)=00)+\lambda_{e_1} \mgn_{\Omega^{e', e_1}}(\sigma(e' e_1)=11)}.
\end{align*}

In the above recursion, the marginal probability of the original instance is written as that of smaller instances but with two dangling edges.
In order to continue the recursive process, we need to convert them into instances with single dangling edge.
This can be done by pinning one of the two dangling edges, or just leaving one of the edges free (in which case the dangling end of the free edge can be treated as a regular vertex with signature $[1,1]$).
\ifabs{We use $\Pin{\Omega}{e}{x}$ to denote the new instance obtained by pinning the edge $e$ of the instance $\Omega$ to $x$. }
{}
There are many choices in deciding which edge to pin, and to what state the edge is pinned to.
Each choice leads to different recursions and consequently have an impact on the following analysis.
Here we give an example which is used in the proof of Theorem \ref{thm:1} and Theorem \ref{thm:spin}.
In the proof of Theorem \ref{thm:2}, we use a different one.

Set $\Omega^{e'}=\Pin{\Omega^{e', e_1}}{e_1}{0}$, $\Omega^{e_1}=\Pin{\Omega^{e', e_1}}{e'}{0}$ and $\widetilde{\Omega}^{e_1}=\Pin{\Omega^{e', e_1}}{e'}{1}$. By the definitions, we have
\ifabs{$\mgn_{\Omega^{e'}}(\sigma(e')=0)=\mgn_{\Omega^{e', e_1}}(\sigma(e')=0|\sigma(e_1)=0)$,
$\mgn_{\Omega^{e_1}}(\sigma(e_1)=0)=\mgn_{\Omega^{e', e_1}}(\sigma(e_1)=0|\sigma(e')=0)$,
and $\mgn_{\widetilde{\Omega}^{e_1}}(\sigma(e_1)=0)=\mgn_{\Omega^{e', e_1}}(\sigma(e_1)=0|\sigma(e')=1)$.}
{$$\mgn_{\Omega^{e'}}(\sigma(e')=0)=\mgn_{\Omega^{e', e_1}}(\sigma(e')=0|\sigma(e_1)=0),$$
$$\mgn_{\Omega^{e_1}}(\sigma(e_1)=0)=\mgn_{\Omega^{e', e_1}}(\sigma(e_1)=0|\sigma(e')=0),$$
$$\mgn_{\widetilde{\Omega}^{e_1}}(\sigma(e_1)=0)=\mgn_{\Omega^{e', e_1}}(\sigma(e_1)=0|\sigma(e')=1).$$}
Given these relation and the fact that
\[\mgn_{\Omega^{e', e_1}}(\sigma(e' e_1)=00)+\mgn_{\Omega^{e', e_1}}(\sigma(e' e_1)=01)+\mgn_{\Omega^{e', e_1}}(\sigma(e' e_1)=10)+\mgn_{\Omega^{e', e_1}}(\sigma(e' e_1)=11)=1.\]
\ifabs{We can solve these marginal probabilities and substitute these into the above recursion to get }
{
We can solve these marginal probabilities and get
\[\mgn_{\Omega^{e', e_1}}(\sigma(e' e_1)=00)=\frac{1}{1+R(\Omega^{e'})+R(\Omega^{e_1})+R(\Omega^{e'}) R(\widetilde{\Omega}^{e_1})}.\]
\[\mgn_{\Omega^{e', e_1}}(\sigma(e' e_1)=01)=\frac{R(\Omega^{e_1})}{1+R(\Omega^{e'})+R(\Omega^{e_1})+R(\Omega^{e'}) R(\widetilde{\Omega}^{e_1})}.\]
\[\mgn_{\Omega^{e', e_1}}(\sigma(e' e_1)=10)=\frac{R(\Omega^{e'})}{1+R(\Omega^{e'})+R(\Omega^{e_1})+R(\Omega^{e'}) R(\widetilde{\Omega}^{e_1})}.\]
\[\mgn_{\Omega^{e', e_1}}(\sigma(e' e_1)=11)=\frac{R(\Omega^{e'}) R(\widetilde{\Omega}^{e_1})}{1+R(\Omega^{e'})+R(\Omega^{e_1})+R(\Omega^{e'}) R(\widetilde{\Omega}^{e_1})}.\]
Substituting these into the above recursion, we get}
\begin{equation}\label{eq:rec}
  R(\dli) =\frac{\lambda_{e_1} R(\Omega^{e_1}) +R(\Omega^{e'}) +c \lambda_{e_1} R(\Omega^{e'}) R(\widetilde{\Omega}^{e_1})} {1+\lambda_{e_1}R(\Omega^{e'}) R(\widetilde{\Omega}^{e_1})}
\end{equation}
\ifabs{}
{
We define
\[g(x,y,z)=\frac{\lambda_{e_1} y +x +c \lambda_{e_1} x z} {1+\lambda_{e_1}x z}.\]
}

If $e'$ and $e_1$ are in different connected components of  $\Omega^{e', e_1}$, then the marginal probability of $e_1$ is independent of $e'$ and as a result  $R(\widetilde{\Omega}^{e_1})=R({\Omega}^{e_1})$. So in this case, we have

\begin{equation}\label{eq:rec_separate}
  R(\dli) =\frac{\lambda_{e_1} R(\Omega^{e_1}) +R(\Omega^{e'}) +c \lambda_{e_1} R(\Omega^{e'}) R({\Omega}^{e_1})} {1+\lambda_{e_1}R(\Omega^{e'}) R({\Omega}^{e_1})}
\end{equation}
\ifabs{}
{
We define
\[\hat{g}(x,y)=\frac{\lambda_{e_1} y +x +c \lambda_{e_1} x y} {1+\lambda_{e_1}x y}.\]
}

Starting from an dangling instance $\dli$, we can compute $R(\dli) $ by one of (\ref{eq:rec_single}), (\ref{eq:rec}) and (\ref{eq:rec_separate}) recursively. We note that if $\dli \in {\rm Holant}({\cal F}_{c_1,c_2}^{p,q}, \Lambda_{\lambda_1,\lambda_2} )$, the instances involved in the recursion are also in the same family.
\ifabs{We define three functions according to these three recursions:
\[ h(x)=\frac{f_1+ \lambda_{e_1} f_2 x}{f_0+ \lambda_{e_1} f_1 x} ,\ \ \ \ \  g(x,y,z)=\frac{\lambda_{e_1} y +x +c \lambda_{e_1} x z} {1+\lambda_{e_1}x z},\ \ \ \mbox{ and }\ \ \  \hat{g}(x,y)=\frac{\lambda_{e_1} y +x +c \lambda_{e_1} x y} {1+\lambda_{e_1}x y}.\]
}
{} By expanding this recursion, we get a computation tree recursion to compute  $R(\dli)$.
We need one more step to compute the marginal probability of an edge in a regular instance.
\ifabs{
This can be done similarly and we have the following lemma.}
{Let $e=(u,v)$ be an edge in a regular instance $\Omega$. We can use Lemma \ref{lem:decomposition} to decompose vertices $u$ and $v$ in two smaller ones if their degrees are larger than three. Therefore, we can assume that the degrees of $u$ and $v$ are both less than four. In the following, we assume $d(u)=d(v)=3$. Other cases are similar and simpler. We denote the other two incident edges
of $u$ as $e_1$ and $e_2$, the other two incident edges of $v$ as $e_3$ and $e_4$.
The function on $u$ is $F_u$ and the function on $v$ is $F_v$.
We use $\Omega^D=\Omega^{e_1, e_2, e_3, e_4}$ to denote the dangling instance by removing $u$, $v$ and the edge $e=(u,v)$ from $\Omega$.
Then it follows from the definition that
\begin{align*}
  &\ \ \mgn_\Omega(\sigma(e)=0)\\
  &=\frac{\sum_{x_1,x_2, x_3, x_4\in \{0,1\}}  Z( \Omega^D, \sigma(e_1 e_2 e_3 e_4) = x_1 x_2 x_3 x_4) F_u(x_1 x_2 0) F_v(x_3 x_4 0)}{\sum_{x_1,x_2, x_3, x_4\in \{0,1\}} \left( Z( \Omega^D, \sigma(e_1 e_2 e_3 e_4) = x_1 x_2 x_3 x_4) (F_u(x_1 x_2 0) F_v(x_3 x_4 0)+ F_u(x_1 x_2 1) F_v(x_3 x_4 1) ) \right)}\\
  &=\frac{\sum_{x_1,x_2, x_3, x_4\in \{0,1\}}  \mgn_{\Omega^D}(\sigma(e_1 e_2 e_3 e_4) = x_1 x_2 x_3 x_4) F_u(x_1 x_2 0) F_v(x_3 x_4 0)}{\sum_{x_1,x_2, x_3, x_4\in \{0,1\}} \left( \mgn_{\Omega^D}(\sigma(e_1 e_2 e_3 e_4) = x_1 x_2 x_3 x_4) (F_u(x_1 x_2 0) F_v(x_3 x_4 0)+ F_u(x_1 x_2 1) F_v(x_3 x_4 1) ) \right)},\\
\end{align*}
where $\mgn_{\Omega^D}(\sigma(e_1 e_2 e_3 e_4) = x_1 x_2 x_3 x_4) $ can be further written as a product of four probability for dangling instances with one single dangling edge each.
\[\mgn_{\Omega^D}(\sigma(e_1 e_2 e_3 e_4) = x_1 x_2 x_3 x_4)=\prod_{k=1,2,3,4} \mgn_{\Omega^{D_k}}(\sigma(e_k)=x_k), \]
where $\Omega^{e_k}$ is obtained by pinning $\Omega^D$: $e_1, e_2, \ldots e_{k-1}$ are pinned to $x_1, x_2, \ldots x_{k-1}$ respectively;  $e_{k+1}, e_{k+2}, \ldots e_4$ are all pinned with weight $\frac{1}{2}$ (see them free).
Thus if we can estimate the marginal probabilities of dangling instances in sufficient precision, we can use the above relation to compute  $\mgn_\Omega(\sigma(e)=0)$. Since this recursion only involves constant many sub-instance and their derivatives are all bounded, we conclude the following lemma.
}
\begin{lemma}
  \label{lem:dangling-to-reg}
  If we can $\epsilon$ approximate $R(\dli)$ for any dangling instance $\dli$ of ${\rm Holant}({\cal F}_{c_1,c_2}^{p,q}, \Lambda_{\lambda_1,\lambda_2} )$
  in time $poly(n, \frac{1}{\epsilon})$, we can also  $\epsilon$ approximate the marginal probability of any edge of a regular instance of ${\rm Holant}({\cal F}_{c_1,c_2}^{p,q}, \Lambda_{\lambda_1,\lambda_2} )$ in time $poly(n, \frac{1}{\epsilon})$.
\end{lemma}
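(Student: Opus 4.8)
The plan is to reduce the computation of an edge marginal in a regular instance to a constant-size arithmetic combination of quantities of the form $R(\Omega^e)$ for single-dangling-edge sub-instances, and then to control how the $\epsilon$-errors in those $R$-values propagate through that combination. First I would invoke the decomposition operation $D(\cdot)$ of Lemma~\ref{lem:decomposition}: given an edge $e=(u,v)$ whose marginal we want, repeatedly decompose $u$ and $v$ until both have degree at most three. This is legitimate because the vertex functions are Fibonacci and each decomposition step preserves both the partition function and every edge marginal; it blows up the instance only by a linear factor, so it is harmless for the running time. Hence it suffices to treat $\deg(u)=\deg(v)=3$, the lower-degree cases being analogous and simpler. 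Writing $e_1,e_2$ for the other edges at $u$ and $e_3,e_4$ for the other edges at $v$, and $\Omega^D=\Omega^{e_1,e_2,e_3,e_4}$ for the dangling instance obtained by deleting $u$, $v$ and $e$, expanding over the $2^4$ assignments $x=(x_1,x_2,x_3,x_4)$ of the four dangling edges gives
\[
\mgn_\Omega(\sigma(e)=0)=\frac{\sum_{x} \mgn_{\Omega^D}(\sigma(e_1e_2e_3e_4)=x)\,F_u(x_1x_20)F_v(x_3x_40)}{\sum_{x} \mgn_{\Omega^D}(\sigma(e_1e_2e_3e_4)=x)\,\bigl(F_u(x_1x_20)F_v(x_3x_40)+F_u(x_1x_21)F_v(x_3x_41)\bigr)}.
\]

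Next I would factor each joint marginal $\mgn_{\Omega^D}(\sigma(e_1e_2e_3e_4)=x)$ as the telescoping product $\prod_{k=1}^{4}\mgn_{\Omega^{D_k}}(\sigma(e_k)=x_k)$, where $\Omega^{D_k}$ arises from $\Omega^D$ by pinning $e_1,\dots,e_{k-1}$ to $x_1,\dots,x_{k-1}$ and leaving $e_{k+1},\dots,e_4$ free (pinned with weight $\tfrac12$). Each $\Omega^{D_k}$ is again a single-dangling-edge instance of ${\rm Holant}({\cal F}_{c_1,c_2}^{p,q},\Lambda_{\lambda_1,\lambda_2})$, and $\mgn_{\Omega^{D_k}}(\sigma(e_k)=0)=\tfrac{1}{1+R(\Omega^{D_k})}$, so every factor is obtained from one $R$-value. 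Thus $\mgn_\Omega(\sigma(e)=0)$ is an explicit rational function of constantly many (at most a fixed constant of) quantities $R(\Omega^{D_k})$, with coefficients assembled from the bounded-arity Fibonacci signatures $F_u$ and $F_v$.

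The main obstacle — and the only place the parameter restrictions are used — is the error analysis: I must show that replacing each $R(\Omega^{D_k})$ by an $\epsilon$-approximation perturbs $\mgn_\Omega(\sigma(e)=0)$ by at most $O(\epsilon)$. For this I would argue that on the family ${\rm Holant}({\cal F}_{c_1,c_2}^{p,q},\Lambda_{\lambda_1,\lambda_2})$ every $R(\Omega^e)$ lies in a fixed interval bounded away from $0$ and $\infty$ (using $pf_i\le f_{i+1}\le qf_i$ together with $\lambda_1\le\lambda_e\le\lambda_2$), that consecutive entries of the Fibonacci signatures appearing as coefficients are within fixed ratios, and consequently that the numerator and denominator of the displayed expression are positive and bounded away from $0$. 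On this bounded domain the rational function has all partial derivatives bounded, hence is Lipschitz, so a routine composition-of-errors argument turns $\epsilon$-approximations of the finitely many inputs into an $O(\epsilon)$-approximation of the output; rescaling $\epsilon$ by the (constant) Lipschitz bound gives an $\epsilon$-approximation, and since each $\Omega^{D_k}$ is produced in polynomial time and these boundedness facts also make additive and multiplicative errors equivalent up to constants, the whole procedure runs in $poly(n,1/\epsilon)$, which proves the lemma.
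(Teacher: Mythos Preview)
Your proposal is correct and follows essentially the same route as the paper: decompose $u$ and $v$ down to degree at most three via Lemma~\ref{lem:decomposition}, expand $\mgn_\Omega(\sigma(e)=0)$ over the $2^4$ assignments of the four remaining neighbor edges, factor each joint marginal as the telescoping product $\prod_k \mgn_{\Omega^{D_k}}(\sigma(e_k)=x_k)$ with exactly the same pinning scheme (earlier edges fixed, later edges free), and then appeal to the fact that only constantly many sub-instances appear and the resulting rational function has bounded partial derivatives. Your write-up is in fact somewhat more explicit about the Lipschitz/error-propagation step than the paper, which simply asserts that the derivatives are bounded and concludes.
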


\section{Algorithm}\label{sec:algo}
\ifabs{The procedure from  marginal probabilities to
partition function is rather standard and we have the following lemma.}
{
The general framework of the algorithm is standard. We use the marginal probabilities to
compute the partition function and use the computation tree recursion to estimate the
marginal probabilities.}

\begin{lemma}\label{lemma:prob-to-FPTAS}
If for any $\epsilon>0$ and any $\dli$ of ${\rm Holant}\left({\cal F}_{c_1,c_2}^{p,q}, \Lambda_{\lambda_1,\lambda_2} \right)$, we have a deterministic algorithm to get $\widehat{P}$ in time $poly\left(n, \frac{1}{\epsilon}\right)$  such that $|\widehat{P}-\mgn_{\dli} (\sigma(e)=0) |\leq \epsilon $, we have an FPTAS for ${\rm Holant}({\cal F}_{c_1,c_2}^{p,q}, \Lambda_{\lambda_1,\lambda_2})$.
\end{lemma}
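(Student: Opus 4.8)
The plan is a routine self-reducibility argument: peel the edges off the instance one at a time, each step dividing the partition function by a conditional marginal probability that we estimate with the given oracle, and pinning each edge to the value that keeps the relevant marginal bounded away from $0$. Before that, I would upgrade the hypothesis: it only supplies estimates of $\mgn_{\dli}(\sigma(e)=0)$ for dangling instances $\dli$ with dangling edge $e$, but Lemma~\ref{lem:dangling-to-reg} (whose proof, as written, expresses the marginal $\mgn_\Omega(\sigma(e)=0)$ of a regular edge $e$ as a fixed-size rational combination of marginal probabilities of single-dangling-edge instances, i.e.\ exactly the quantities the hypothesis estimates) then yields a $poly(n,1/\epsilon)$-time deterministic algorithm producing $\widehat P$ with $|\widehat P-\mgn_\Omega(\sigma(e)=0)|\le\epsilon$ for every edge $e$ of every regular instance $\Omega$ of ${\rm Holant}({\cal F}_{c_1,c_2}^{p,q},\Lambda_{\lambda_1,\lambda_2})$. (The only place in that step needing the denominator bounded below is the trivial case excluded at the end.)

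Now fix a regular instance $\Omega$ with edges $e_1,\dots,e_m$ and target accuracy $\varepsilon$, set $\Omega_0=\Omega$, and build $\Omega_1,\dots,\Omega_m$ inductively. Given $\Omega_{i-1}$ — which will always lie in ${\rm Holant}({\cal F}_{c_1,c_2}^{p,q},\Lambda_{\lambda_1,\lambda_2})$ — estimate $\mgn_{\Omega_{i-1}}(\sigma(e_i)=0)$ to additive accuracy $\eta$ (fixed below), let $x_i\in\{0,1\}$ be the value whose estimated marginal is at least $\tfrac12$, and set $\Omega_i=\Pin{\Omega_{i-1}}{e_i}{x_i}$, i.e.\ the instance with $e_i$ fixed to $x_i$; by Definition~\ref{definition-partial-pinning} this just removes $e_i$ and replaces each of its (at most two distinct) incident vertex functions by its restriction. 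Restricting one coordinate of a symmetric Fibonacci function $[f_0,\dots,f_d]$ to $0$ gives the prefix $[f_0,\dots,f_{d-1}]$ and to $1$ gives the suffix $[f_1,\dots,f_d]$ — still Fibonacci with the same $c$, and still obeying $p\le f_{j+1}/f_j\le q$ — so $\Omega_i$ is of the required form and has one fewer edge (a self-loop is handled the same way with two coordinates restricted). After $m$ steps $\Omega_m$ has no edges, so $Z(\Omega_m)$ is just the product of the scalar values of the arity-$0$ vertex functions and is computed exactly. The identity $Z(\Omega_{i-1})=\lambda_{e_i}(x_i)\,Z(\Omega_i)\big/\mgn_{\Omega_{i-1}}(\sigma(e_i)=x_i)$ (using $\lambda_{e_i}(0)=1$) telescopes to $Z(\Omega)=Z(\Omega_m)\prod_{i=1}^m\lambda_{e_i}(x_i)\big/\mgn_{\Omega_{i-1}}(\sigma(e_i)=x_i)$, and the algorithm outputs the same product with each $\mgn_{\Omega_{i-1}}(\sigma(e_i)=x_i)$ replaced by its estimate ($\widehat P$ or $1-\widehat P$).

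For the error bound, note that the two marginals of $e_i$ sum to $1$, so the value we pinned to has true marginal at least $\tfrac12-\eta\ge\tfrac14$ once $\eta\le\tfrac14$; in particular, if $Z(\Omega)>0$ then $Z(\Omega_i)>0$ for all $i$ by induction, every division above is legitimate, and each factor is estimated to within relative error $\eta/\tfrac14=4\eta$. The product of $m$ such factors is therefore within relative error $(1+4\eta)^m-1$, so $\eta=\Theta(\varepsilon/m)$ gives $(1-\varepsilon)Z(\Omega)<\widehat Z<(1+\varepsilon)Z(\Omega)$ in total time $m\cdot poly(n,m/\varepsilon)=poly(n,1/\varepsilon)$. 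The case $Z(\Omega)=0$ is dealt with separately: a Fibonacci function in ${\cal F}_c^{p,q}$ that is not identically zero has $f_0>0$ and hence all entries positive, so $Z(\Omega)=0$ exactly when some vertex carries the zero function (otherwise the all-zero configuration already has positive weight); this is checked directly, and we output $0$. The main obstacle — and the reason for the ``pin to the majority value'' rule — is precisely keeping the marginal probabilities in the denominators bounded away from $0$: a naive telescoping that always conditioned on $\sigma(e_i)=0$ could divide by an exponentially small quantity, so that only an exponentially accurate marginal oracle would suffice; the $\ge\tfrac12$ choice is what makes a $poly(1/\varepsilon)$-accurate oracle enough and the whole scheme numerically stable.
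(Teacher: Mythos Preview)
Your proof is correct and follows essentially the same self-reducibility route as the paper: invoke Lemma~\ref{lem:dangling-to-reg} to pass to regular-edge marginals, telescope by pinning each edge to its estimated majority value so the denominator stays bounded away from $0$, and bound the product of relative errors. You are actually more careful than the paper in two places --- you explicitly verify that pinning an edge keeps the incident vertex functions in ${\cal F}_{c_1,c_2}^{p,q}$ (prefix/suffix of a Fibonacci sequence with the same $c$ and ratio bounds), and you handle the degenerate case $Z(\Omega)=0$ --- both of which the paper's proof leaves implicit.
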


\ifabs{}
{
\begin{proof}
By Lemma~\ref{lem:dangling-to-reg}, if we can compute an $\epsilon$ additive approximation of the marginal probability of a dangling instance in time $poly\left(n, \frac{1}{\epsilon}\right)$ , we can also compute a $\epsilon$ additive approximation of the marginal probability of an edge in a regular instance in $poly\left(n, \frac{1}{\epsilon}\right)$, and further compute a $\frac{\epsilon}{6m}$ additive approximation in $poly\left(n, \frac{1}{\epsilon}\right)$.

The partition function can be approximated from estimations of marginal probabilities by the following standard procedure.
Let $e_1,e_2,\ldots,e_m$ be an enumeration of the edges $E$.
\begin{enumerate}
\item
Let $\Omega_1=\Omega$.
For $k=1,2,\ldots,m$, assuming that 
the $\Omega_k$ is well-defined, use the algorithm to  compute $\hat{\mgn}_{\Omega_k}(\sigma(e_k)=0)$.
If $\hat{\mgn}_{\Omega_k}(\sigma(e_k)=0)\geq \frac{1}{2}$, set $x_k=0$; otherwise set $x_k=1$.
Construct $\Omega_{k+1}$ by pinning $e_k$ of $\Omega_k$ to $x_k$.
\item
Compute $\widehat{Z}(\Omega)=\frac{w_\Omega(x_1 x_2 \cdots x_m)}{\prod_{k=1}^m\hat{\mgn}_{\Omega_k}(\sigma(e_k)=x_k)}$ and return $\widehat{Z}(\Omega)$.
\end{enumerate}
It is clear that the running time is in $poly\left(n, \frac{1}{\epsilon}\right)$.
By the construction, we have that $\hat{\mgn}_{\Omega_k}(\sigma(e_k)=x_k)\geq \frac{1}{2}$. Since it is a $\frac{\epsilon}{6m}$ additive approximation of  $\mgn_{\Omega_k}(\sigma(e_k)=x_k)$, we have that
$\mgn_{\Omega_k}(\sigma(e_k)=x_k)> \frac{1}{3}$. Thus
\[ \frac{\mgnt{t}_{\Omega_k}(\sigma(e_k)=x_k)}{\mgn_{\Omega_k}(\sigma(e_k)=x_k)} \in [1-\frac{\epsilon}{2m}, 1+\frac{\epsilon}{2m}].  \]
By definition we have $\mgn_{\Omega}(x_1 x_2 \cdots x_m)=\frac{w_\Omega(x_1 x_2 \cdots x_m)}{Z(\Omega)}$, thus
$Z(\Omega)
=
\frac{w_\Omega(x_1 x_2 \cdots x_m)}{\prod_{k=1}^m\hat{\mgn}_{\Omega_k}(\sigma(e_k)=x_k)}$.
Therefore, we have
\[
1-\epsilon
\le
\left(1-\frac{\epsilon}{2m}\right)^m
\le
\frac{Z(\Omega)}{\widehat{Z}(\Omega)}
=
\prod_{k=1}^m\frac{\mgnt{t}_{\Omega_k}(X_{v_k}=x_{v_k})}{\mgn_{\Omega_k}(X_{v_k}=x_{v_k})}
\le
\left(1+\frac{\epsilon}{2m}\right)^m
\le
1+\epsilon,
\]
which is simplified as that $1-\epsilon\le\frac{\widehat{Z}(\Omega)}{Z(\Omega)}\le 1+\epsilon$.
This completes the proof.
\end{proof}}

Before we use the computation tree recursion to compute the marginal probability, we need the following lemma
to handle shallow instances separately.   We denote by $SP(\dli)$ the longest simple path containing $e$ in $G$.

\begin{lemma}\label{lemma:bounded-simple-path}
Let $L$ be a constant.  We have a polynomial time algorithm to compute  $R(\dli)$ for all $\dli$ of ${\rm Holant}({\cal F}_{c_1,c_2}^p, \Lambda_{\lambda_1,\lambda_2} )$ with  $SP(\dli)\leq L$.
\end{lemma}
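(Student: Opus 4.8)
The plan is to show that an instance with $SP(\dli) \le L$ has a tree-like structure of bounded ``width'' once we separate out the path, and that this lets us evaluate $R(\dli)$ by a polynomial-size dynamic program. First I would observe that since the longest simple path through $e$ has length at most $L$, the $2$-connected block structure of the component containing $e$ is severely constrained: each block has diameter $O(L)$, and in fact the block-cut tree restricted to blocks reachable from $e$ via short paths is a tree whose relevant part has depth $O(L)$. The cleanest route, though, is probably the following: repeatedly apply the vertex-decomposition operation $D(\cdot)$ of Lemma~\ref{lem:decomposition} to reduce every vertex to degree $3$ (this preserves both $Z$ and all edge marginals, hence preserves $R(\dli)$, and can only change simple-path lengths in a controlled way since each original degree-$d$ vertex becomes a path of $d-2$ new edges of weight $1$). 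Wait — this blows up the path length, so I would instead keep the decomposition local and argue directly on the original instance.

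So the key steps, in order. (1) Reduce to the connected component $C$ of $\dli$ containing the dangling edge $e$; everything outside contributes a global scalar that cancels in $R(\dli)$, so it may be ignored. (2) Bound the structure of $C$: since every simple path in $C$ through $e$ has length $\le L$, every vertex of $C$ is within distance $\le L$ of $e$ (else concatenate a shortest path with an edge), so $C$ has radius $\le L$ from $v$; more importantly, using the fact that two internally disjoint $v$-to-$w$ paths would combine into a long simple path, one shows the block-cut tree of $C$ has only $O(1)$ blocks on any root-to-leaf chain and each block is small — the upshot being that $|E(C)|$ is bounded by a function of $L$ only, \emph{provided} $C$ itself is bounded. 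If $C$ can be large (e.g.\ a long ``caterpillar'' whose spine is short but which has many pendant subtrees), I would instead process it as a genuine tree recursion: (3) root the computation at $e$ and use the recursions (\ref{eq:rec_single}), (\ref{eq:rec}), (\ref{eq:rec_separate}) from Section~\ref{sec:recursion}; whenever we hit the case that the two dangling edges $e'$ and $e_1$ lie in different components after deletion (equation (\ref{eq:rec_separate})), the recursion branches without interaction, and when they lie in the same component (equation (\ref{eq:rec})) that common component together with the current path segment forms part of a simple path through $e$, so this ``entangled'' case can occur at most $L$ times along any branch. (4) Each step of the recursion either shortens the relevant simple path by at least one or strictly decreases the number of edges, so the computation tree has depth $O(L + \log|E|)$ and, crucially, branching only a constant amount per entangled step; bounding the total size by $\mathrm{poly}(n)$ then follows because each leaf corresponds to a subinstance obtained by a sequence of pinnings, and distinct leaves use distinct edge subsets of a path of length $\le L$.

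The main obstacle I anticipate is step (2)/(3): precisely controlling how often the ``entangled'' recursion (\ref{eq:rec}) with $R(\widetilde{\Omega}^{e_1}) \ne R(\Omega^{e_1})$ can be invoked along a root-to-leaf path of the computation tree. The naive bound on computation-tree size is exponential in its depth, which is fatal unless one exploits $SP(\dli) \le L$ to argue that entanglement — which is exactly what creates a cycle through $e$ and hence a long simple path — happens only $O(L)$ times per branch, with ordinary (non-branching or tree-like) recursion in between. Making this count rigorous, and checking that the intermediate instances produced by pinning still satisfy $SP \le L$ (or at least $SP \le L + O(1)$) so the induction goes through, is the delicate part; everything else is bookkeeping and the constant-degree reduction via Lemma~\ref{lem:decomposition}.
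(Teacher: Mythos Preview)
Your plan is genuinely different from the paper's, and the step you flag as delicate is in fact a real gap. The paper does not use the Fibonacci recursions (\ref{eq:rec_single})--(\ref{eq:rec_separate}) here at all: it applies a holographic reduction to convert the instance into an (extended) two-state spin system on essentially the same graph, and then invokes Weitz's self-avoiding-walk tree. The SAW tree has depth equal to the longest self-avoiding walk from the root, which coincides with $SP(\dli)\le L$, so the tree has at most $n^{O(L)}$ nodes and is evaluated exactly in polynomial time for constant $L$.

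Your direct-recursion route fails already on a theta graph: two vertices $a,b$ joined by $k$ internally disjoint paths of length $\ell\approx L/2$, with the dangling edge at $a$. Here $SP(\dli)\le L$, yet every time the recursion peels an arm off $a$ (and later off $b$) the two dangling edges $e',e_1$ land in the same component, so the entangled branch (\ref{eq:rec}) fires; the resulting computation tree has size $3^{\Theta(k)}=3^{\Theta(n/L)}$, exponential in $n$. The reason your bound ``entangled at most $L$ times per branch'' fails is that passing to $\Omega^{e_1}$ deletes only the \emph{edge} $(v,u_1)$, not the vertex $v$: the renamed $v'$ remains and can be revisited, so successive steps do not trace a simple path in the original instance. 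For the same reason $SP$ of a sub-instance is not controlled --- with $v$ adjacent to $u_1,u_2,u_3$, a path of length $L-2$ from $u_1$ to $u_2$, and a pendant path of length $L-2$ from $u_3$, one gets $SP(\Omega^e)=L$ but $SP(\Omega^{e_1})=2L-1$, refuting your ``$SP\le L+O(1)$'' as well. The SAW-tree argument works precisely because it deletes each visited vertex, forcing the walk to stay self-avoiding; that is the mechanism missing from the Fibonacci recursion.
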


The proof of the above Lemma uses holographic reduction to spin world and makes use of the self-avoiding walk tree~\cite{Weitz06} for two-state spin systems. The length of the longest simple path is the same as the depth of the  self-avoiding walk tree.
\ifabs{See the full version for more details.}
{In order to make the argument through,
we define an extended two state spin system to be a two state spin system where the vertex weight could be any real number and the edge function could be any (not necessary symmetric) real function. In this system, we can also define partition function as usually. By that, we can algebraically define formal marginal probability which can be any real number. Under these definitions, the technique of self-avoiding  walk tree is still valid and can be used to compute the partition function of extended two state spin systems. This conclude the following lemma.

\begin{lemma}
The partition function of extended two state spin system with bounded simple path can be computed in polynomial time.
\end{lemma}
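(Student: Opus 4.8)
The plan is to make precise the claim that Weitz's self-avoiding walk (SAW) tree construction~\cite{Weitz06} survives the passage to extended two-state spin systems, and that the hypothesis ``bounded simple path'' is exactly a bound on the \emph{depth} of the SAW tree; bounded depth together with the trivial bound $n$ on the branching factor then forces the SAW tree to have only polynomially many nodes, so the whole computation runs in polynomial time. Throughout, let $L$ be the constant bounding the length of every simple path of the instance graph $G$.

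First I would reduce computing the partition function to computing marginal ratios. Fixing a vertex $v$ and writing $R_v=Z(G;\sigma(v)=1)/Z(G;\sigma(v)=0)$, we have $Z(G)=Z(G;\sigma(v)=0)\,(1+R_v)$, where $Z(G;\sigma(v)=0)$ is again the partition function of an extended instance --- on $G-v$, with $v$'s weight and edge contributions folded into the weights of its former neighbors. Since deleting a vertex cannot lengthen a simple path, iterating this pinning over all $n$ vertices writes $Z(G)$ as a product of $O(n)$ factors of the form $1+R$ (times a trivially computed final scalar), so it suffices to compute each such ratio in polynomial time. To compute $R_v$ I would build the SAW tree $T$ rooted at $v$ in the usual way: its nodes are the non-backtracking walks of $G$ starting at $v$, a walk that first revisits an already-visited vertex $u$ becomes a leaf pinned to $0$ or $1$ by Weitz's rule for the cyclic edge order at $u$, and since every root-to-node path of $T$ projects to a simple path of $G$ through $v$ we get $\operatorname{depth}(T)\le L$, hence $|T|\le n^{O(L)}=\mathrm{poly}(n)$, constructible in polynomial time. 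The identity $R_v=R_{\mathrm{root}}(T)$ is obtained exactly as in~\cite{Weitz06}, by unfolding the one-step recursion that expresses the ratio at a vertex through the ratios at its neighbors in the graphs where the earlier neighbors have been pinned; that recursion, $R_u=w_u\prod_i \frac{(A_{e_i})_{1,0}+(A_{e_i})_{1,1}R_i}{(A_{e_i})_{0,0}+(A_{e_i})_{0,1}R_i}$, is a purely algebraic identity that remains valid for arbitrary real vertex weights $w_u$ and arbitrary, not necessarily symmetric, $2\times 2$ edge functions $A_{e_i}$, provided no denominator vanishes.

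To carry out the computation on $T$ I would track, bottom up, the pair $\bigl(Z(T_u;\sigma_u=0),Z(T_u;\sigma_u=1)\bigr)$ at every subtree $T_u$, obtained from the children's pairs by the standard tree recursion --- a product, over the edges to the children, of the matrix--vector products of $A_{e_i}$ with the child's pair, with $w_u$ applied to the $\sigma_u=1$ coordinate --- and the prescribed values at the pinned leaves; this uses $O(|T|)$ arithmetic operations on rationals of bit length $\mathrm{poly}(n)$ for constant $L$, and then $R_v$ is one division. The main obstacle is exactly these divisions, and the twin possibility that $Z(G;\sigma(v)=0)$ itself vanishes: in an extended instance the $Z$'s are not genuine probabilities and can be zero, so a priori neither the tree recursion nor the telescoping is well defined. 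I would resolve this by a symbolic perturbation: replace every vertex weight and every edge-matrix entry by a generic perturbation in $\mathbb{R}(\varepsilon)$, run the entire algorithm over $\mathbb{R}(\varepsilon)$ --- where every relevant denominator becomes a nonzero polynomial in $\varepsilon$ of degree $\mathrm{poly}(n)$ --- and recover $Z(G)$, which is itself a polynomial in the weights and edge entries, by evaluating the resulting rational function at $\varepsilon=0$; concretely, one evaluates at $\mathrm{poly}(n)$ numeric choices of $\varepsilon$ avoiding the finitely many bad ones and interpolates. (Alternatively, in the one place this lemma is used the extended instance arises from a holographic reduction of a Fibonacci-gate Holant instance all of whose signatures and edge weights are strictly positive, so by the Holant Theorem~\ref{thm:holant} the partition function of $G$ and of each of its pinned descendants is strictly positive, and the plain, unperturbed computation already goes through.)
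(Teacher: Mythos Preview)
Your proposal is correct and follows the same approach the paper has in mind: use Weitz's self-avoiding walk tree, observe that bounded simple path length bounds the tree depth so the tree has polynomial size, and read off the partition function via marginal ratios. The paper's own argument is essentially a one-sentence assertion that ``the technique of self-avoiding walk tree is still valid'' for extended systems, so you have supplied considerably more detail than the paper does --- in particular, your treatment of the division-by-zero issue (via symbolic perturbation, or via the positivity inherited through the holographic reduction in the one place the lemma is invoked) addresses a genuine subtlety that the paper leaves implicit when it speaks of ``formal marginal probability which can be any real number.''
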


Any instance of ${\rm Holant}\left({\cal F}_{c_1,c_2}^{p,q}, \Lambda_{\lambda_1,\lambda_2} \right)$ can be transform to an instance of extended two spin system with same partition function under holographic reduction. If we can compute the partition function, we can also compute marginal probabilities.  This proves Lemma \ref{lemma:bounded-simple-path}.
}

%

%

Now we give out formal procedure to estimate $\mgn_{\Omega^e}(\sigma(e)=0)$.
Since there is a one to one relation between $\mgn_{\Omega^e}(\sigma(e)=0)$ and  $R(\dli)$, we can define our recursion on $R(\dli)$, and at the final step we convert $R(\dli)$ back to $\mgn_{\Omega}(\sigma(e)=0)$.
Let bounds $R_1, R_2$ and depth $L$ be obtained for the family of dangling instance in the sense that
for any dangling instance with $SP(\dli)\geq L$, we have $R(\dli)\in [R_1, R_2]$.
Formally, for $t\ge 0$, the quantity $R^{t}(\dli)$ is recursively defined as follows:
\begin{itemize}
\item
If $SP(\dli)\leq 2L$, we compute $R^{t}(\dli)=R(\dli)$ by Lemma \ref{lemma:bounded-simple-path}.
\item
Else If $t=0$, let $R^{0}(\dli)=R_1$.
\item
Else If $t>0$, use one of the recursion to get  $\tilde{R}^t(\dli) = g(R^{t-1}(\Omega^{e'}), R^{t-1}(\Omega^{e_1}),R^{t-1}(\widetilde{\Omega}^{e_1})$,
$\tilde{R}^t(\dli)=h(R^{t-1}(\Omega^{e_1}))$ or  $\tilde{R}^t(\dli) = \hat{g}(R^{t-1}(\Omega^{e'}), R^{t-1}(\Omega^{e_1}))$. Return the median of $R_1, \tilde{R}^{t}(\dli), R_2$:
  $R^{t}(\dli)=Med(R_1, \tilde{R}^{t}(\dli), R_2)$.
\end{itemize}
There are three possible recursions and we define four amortized decay rates:
\ifabs{
\[\alpha_1(x) = \frac{\Phi(x) \left|\deriv{h}{x}\right|}{\Phi(h(x))},\ \ \ \alpha_3(x,y) =\frac{\left|\pderiv{\hat{g}}{x}\right|\Phi(x)}{\Phi(\hat{g}(x,y))},\ \ \ \
\alpha_4(x,y) =\frac{\left|\pderiv{\hat{g}}{y}\right|\Phi(y)}{\Phi(\hat{g}(x,y))},\]
\[ \alpha_2(x,y,z) =\frac{1}{\Phi(g(x,y,z))}\left(\left|\pderiv{g}{x}\right|\Phi(x)+ \left|\pderiv{g}{y}\right|\Phi(y)+\left|\pderiv{g}{z}\right|\Phi(z)\right),\]
}
{
\begin{align*}
\alpha_1(x) &= \frac{\Phi(x) \left|\deriv{h}{x}\right|}{\Phi(h(x))},\\
\alpha_2(x,y,z) &=\frac{1}{\Phi(g(x,y,z))}\left(\left|\pderiv{g}{x}\right|\Phi(x)+ \left|\pderiv{g}{y}\right|\Phi(y)+\left|\pderiv{g}{z}\right|\Phi(z)\right),\\
\alpha_3(x,y) &=\frac{\left|\pderiv{\hat{g}}{x}\right|\Phi(x)}{\Phi(\hat{g}(x,y))},\\
\alpha_4(x,y) &=\frac{\left|\pderiv{\hat{g}}{y}\right|\Phi(y)}{\Phi(\hat{g}(x,y))},\\
\end{align*}}
where $\Phi(\cdot)$ is a potential function.

\begin{definition}
We call a function $\Phi: (0, +\infty)\rightarrow (0, +\infty)$ \emph{nice} if there is some function $f: [1,+\infty) \to (0,+\infty)$ such that for any $c\geq 1$ and  $x,y>0$ with $\frac{x}{c}\leq y \leq c x$, we have $\frac{\Phi(x)}{\Phi(y)}\leq f(c)$.
\end{definition}
\ifabs{}
{
For any fixed constant $d$, $\Phi(x)=x^d$ is a nice function while $\Phi(x)=2^x$ is not.}
\begin{lemma}
\label{lem:algo}
Let bounds $R_1, R_2$ and depth $L$ be obtained for dangling instances of ${\rm Holant}({\cal F}_{c_1,c_2}^{p,q}, \Lambda_{\lambda_1,\lambda_2} )$ such that for any dangling instance with $SP(\dli)\geq L$, we have $R(\dli)\in [R_1, R_2]$.
If there exist a nice function $\Phi(\cdot)$  and a constant $\alpha<1$ such that
$\alpha_1(x) \leq \alpha$ for all $x \in [R_1, R_2]$, $\alpha_2(x,y,z) \leq \alpha$ for all $x,y,z \in [R_1, R_2]$,
$\alpha_3(x,y)\leq \alpha$ for all $x\in [R_1, R_2]$, and $\alpha_4(x,y)\leq \alpha$ for all $y\in [R_1, R_2]$.
Then there is an FPTAS for ${\rm Holant}({\cal F}_{c_1,c_2}^{p,q}, \Lambda_{\lambda_1,\lambda_2} )$.
\end{lemma}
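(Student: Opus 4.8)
The plan is to show that the truncated computation tree $R^{t}(\dli)$ defined just before the statement, run to a depth $t_0 = \Theta(\log(n/\epsilon))$, produces an $\epsilon$-additive estimate of $R(\dli)$ --- hence of $\mgn_{\dli}(\sigma(e)=0)=1/(1+R(\dli))$ --- for every dangling instance $\dli$ of ${\rm Holant}({\cal F}_{c_1,c_2}^{p,q},\Lambda_{\lambda_1,\lambda_2})$, in time $poly(n,1/\epsilon)$; the lemma then follows immediately from Lemma~\ref{lemma:prob-to-FPTAS}. So there are two things to establish: the computation is cheap, and it is accurate.

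\emph{Cost.} One step of the recursion ($h$, $g$, or $\hat g$) replaces $\dli$ by at most three sub-instances, each again a dangling instance of the same family (as noted in Section~\ref{sec:recursion}) with strictly fewer vertices; whenever we meet an instance with $SP(\cdot)\le 2L$ we stop and evaluate $R(\cdot)$ exactly in $poly(n)$ time by Lemma~\ref{lemma:bounded-simple-path} (applied with the constant $2L$). Hence the computation tree truncated at depth $t_0$ has at most $3^{t_0}$ nodes, each of cost $poly(n)$ (identifying the splitting vertex, forming the sub-instances, and testing $SP\le 2L$ for constant $L$ are all polynomial), so for $t_0=\Theta(\log(n/\epsilon))$ the running time is $poly(n,1/\epsilon)$.

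\emph{Accuracy.} Here I would run the standard potential-function amortization. Fix $r_0\in[R_1,R_2]$, put $\Psi(x)=\int_{r_0}^{x}\Phi(s)^{-1}\,ds$ and $\widetilde d(x,y)=|\Psi(x)-\Psi(y)|$; since $\Phi>0$, $\Psi$ is a strictly increasing $C^1$ bijection. I claim $\widetilde d(R^{t}(\dli),R(\dli))\le \alpha^{t}D$ for all $\dli$ and all $t\ge 0$, where $D=|\Psi(R_1)-\Psi(R_2)|$. For the base cases: if $SP(\dli)\le 2L$ then $R^{t}(\dli)=R(\dli)$ and the left side is $0$; otherwise $SP(\dli)\ge L$ so $R(\dli)\in[R_1,R_2]$, and $R^{0}(\dli)=R_1\in[R_1,R_2]$ gives $\widetilde d\le D$. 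For the inductive step assume $SP(\dli)>2L$ and $t\ge 1$. First observe that by choosing the distinguished edge $e_1$ so as to avoid a longest simple path through the dangling edge (possible since the splitting vertex has degree $\ge 2$ in the $g$/$\hat g$ cases), each sub-instance we actually recurse on has $SP$ at least $SP(\dli)-O(1)\ge L$; we may assume $L$ is larger than this absolute constant, since the hypothesis on $L$ persists when $L$ is enlarged. Thus for every recursed sub-instance both its true value and its median-clamped estimate lie in $[R_1,R_2]$, while any sub-instance with $SP\le 2L$ is computed exactly and carries no error. Now write the relevant recursion, e.g.\ $R(\dli)=g(x,y,z)$ with $x=R(\Omega^{e'})$, $y=R(\Omega^{e_1})$, $z=R(\widetilde{\Omega}^{e_1})$ and $\tilde R^{t}(\dli)=g(\hat x,\hat y,\hat z)$ with the corresponding $(t{-}1)$-estimates; moving from $(\hat x,\hat y,\hat z)$ to $(x,y,z)$ one coordinate at a time and applying the mean value theorem to $\Psi\circ g$ in that coordinate, the change in $\Psi\circ g$ is bounded by $\big(\sup \Phi(\cdot)\,|\partial g/\partial(\cdot)|/\Phi(g)\big)$ times the $\widetilde d$-distance of the corresponding child. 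Every evaluation point encountered has all three coordinates in $[R_1,R_2]$ (the moving coordinate lies between two points of $[R_1,R_2]$, the other two are in $[R_1,R_2]$), so each such supremum is bounded by $\alpha_2\le\alpha$; crucially $\alpha_2$ is itself the sum of the three per-coordinate rates, so summing the three contributions still yields $\widetilde d(\tilde R^{t}(\dli),R(\dli))\le \alpha\cdot\max_i \widetilde d(R^{t-1}(\cdot_i),R(\cdot_i))$ --- the ternary branching is absorbed. The $h$ case (rate $\alpha_1$) and the $\hat g$ case (rates $\alpha_3,\alpha_4$, which by design only constrain the relevant coordinate to $[R_1,R_2]$; and when both children of a $\hat g$-step carry error one can instead fall back on the always-valid $g$-recursion, keeping the factor at $\alpha$) are handled identically. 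Finally, $R^{t}(\dli)=\mathrm{Med}(R_1,\tilde R^{t}(\dli),R_2)$ only decreases $\widetilde d(\cdot,R(\dli))$ because $R(\dli)\in[R_1,R_2]$ and $\Psi$ is monotone. Unrolling over $t_0$ levels proves the claim.

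\emph{Conclusion and obstacle.} With $t_0=\Theta(\log(n/\epsilon))$ we make $\widetilde d(R^{t_0}(\dli),R(\dli))\le\alpha^{t_0}D$ as small as needed. Since $\Phi$ is nice and continuous it is bounded on the compact interval $[R_1,R_2]$ --- where $R^{t_0}(\dli)$ and $R(\dli)$ both lie when $\dli$ is deep, the shallow case being exact --- so $\Psi^{-1}$ is Lipschitz there and $|R^{t_0}(\dli)-R(\dli)|$ is correspondingly small; then $|1/(1+R^{t_0}(\dli))-1/(1+R(\dli))|\le|R^{t_0}(\dli)-R(\dli)|$ since $r\mapsto 1/(1+r)$ is $1$-Lipschitz on $[0,\infty)$. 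Hence for a suitable $t_0=\Theta(\log(n/\epsilon))$ we obtain an $\epsilon$-estimate of $\mgn_{\dli}(\sigma(e)=0)$ in time $poly(n,1/\epsilon)$, and Lemma~\ref{lemma:prob-to-FPTAS} delivers the FPTAS. The hard part is the inductive contraction step, and it has two delicate points: (a) getting the per-level factor down to exactly $\alpha$ rather than $\alpha$ times the branching number --- this is precisely why $\alpha_2$ is defined as the sum of the three child rates and why $\alpha_3,\alpha_4$ constrain only a single coordinate --- and (b) the deep/shallow bookkeeping (the $L$-versus-$2L$ gap together with Lemma~\ref{lemma:bounded-simple-path}) ensuring that every decay-rate evaluation arising in the tree falls in the range $[R_1,R_2]$ on which the hypotheses $\alpha_i\le\alpha$ are assumed.
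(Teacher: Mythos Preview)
Your overall strategy matches the paper's: compute $R^{t}(\dli)$ by the truncated recursion, prove contraction in the potential metric $\phi = \int \Phi^{-1}$ by induction on $t$, then convert back to an additive estimate of the marginal and invoke Lemma~\ref{lemma:prob-to-FPTAS}. The cost analysis and the $h$ and $\hat g$ cases are essentially right.

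The gap is in your handling of the $g$-recursion. You try to guarantee that ``each sub-instance we actually recurse on has $SP \geq SP(\dli) - O(1)$'' by choosing $e_1$ away from a longest path, but this does not give the dichotomy the argument actually needs. What is needed in the $g$ case is that \emph{either all three} sub-instances $\Omega^{e'},\Omega^{e_1},\widetilde{\Omega}^{e_1}$ are deep (so all three true values lie in $[R_1,R_2]$ and the hypothesis $\alpha_2\le\alpha$ applies at the mean-value point), \emph{or all three} are shallow (so all three are computed exactly and carry no error). Your scheme permits a mixed case --- say $\Omega^{e'}$ deep but $\Omega^{e_1}$ shallow with $R(\Omega^{e_1})\notin[R_1,R_2]$ --- and then the mean-value point has its $y,z$ coordinates outside $[R_1,R_2]$, where you have no bound on the amortized $x$-derivative of $g$ (the lemma only assumes such single-coordinate bounds $\alpha_3,\alpha_4$ for $\hat g$, not for $g$). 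The paper closes this by observing that in the $g$ case $e'$ and $e_1$ lie in the \emph{same} connected component of $\Omega^{e',e_1}$, which forces
\[
\tfrac{1}{2}\,SP(\Omega^{e'}) \;\le\; SP(\Omega^{e_1}) = SP(\widetilde{\Omega}^{e_1}) \;\le\; 2\,SP(\Omega^{e'});
\]
hence if the minimum of the three $SP$ values exceeds $L$ they all do, and otherwise the maximum is at most $2L$ so all three are computed exactly by Lemma~\ref{lemma:bounded-simple-path}. This is precisely why the exact-computation threshold is $2L$ rather than $L$, and it is the point your ``choose $e_1$'' heuristic does not supply.

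A smaller technical point: moving ``one coordinate at a time'' gives three per-coordinate rates evaluated at \emph{different} points, and their sum is not bounded by $\alpha$ merely because $\alpha_2\le\alpha$ holds pointwise. The standard fix (implicit in the paper's ``similar argument'') is to parametrize the path so that $\phi$ of each coordinate moves linearly and all coordinates move simultaneously; then a single application of the mean value theorem produces exactly $\alpha_2$ at one point with all coordinates in $[R_1,R_2]$.
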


\begin{proof}
By Lemma \ref{lemma:prob-to-FPTAS}, it is enough to give a $poly\left(n, \frac{1}{\epsilon}\right)$ algorithm to get $\widehat{P}$ such that $|\widehat{P}-\mgn_{\dli} (\sigma(e)=0) |\leq \epsilon $. We shall use the above recursive algorithm to compute an estimation of $R(\dli)$ and then to compute $\widehat{P}$.

Given any $\dli$ and constant $L$, we can test if $SP(\dli)<2 L$ in polynomial time.
Let $\phi=\int \frac{1}{\Phi(x)} dx$ be a monotonously increasing function.
We prove by induction that $$|\phi(R^t(\dli))-\phi(R(\dli))| \leq \alpha^t |\phi(R_1)-\phi(R(\dli))| .$$

For the base case $t=0$, if $SP(\dli)\leq 2L$, then it is trivially true since $R^0(\dli)=R(\dli)$.
Otherwise, it is also trivial since we set $R^0(\dli)=R_1$.

Now we assume that the inequality is true for $t-1$ and prove it for $t$.
If $SP(\dli)\leq 2L$, then this is trivially true since $R^t(\dli)=R(\dli)$.  Now we assume that $R^t(\dli)>2L$ and as a result it is computed by a recursion. It is enough to prove for the case that $R^{t}(\dli)=Med(R_1, \tilde{R}^{t}(\dli), R_2)=\tilde{R}^{t}(\dli)$. In other cases,
$R^{t}(\dli)$ is even closer to  $R(\dli)$ since  $R(\dli)\in [R_1, R_2]$.
There are three cases to consider:
\begin{enumerate}
  \item $R^t(\dli)=h(R^{t-1}(\Omega^{e_1}))$. If $SP(\Omega^{e_1})\leq 2L$, then by the calculation  $R^{t-1}(\Omega^{e_1})=R(\Omega^{e_1})$ and as a result $R^t(\dli)=h(R(\Omega^{e_1}))=R(\dli)$. Otherwise, we have
      that $R^{t-1}(\Omega^{e_1}), R(\Omega^{e_1})\in [R_1, R_2]$.
      \[ | \phi(R^t(\dli))-\phi(R(\dli))| = |\phi(h(R^{t-1}(\Omega^{e_1})))-\phi(h(R(\Omega^{e_1})))|
     =\frac{\Phi(x) |\deriv{h}{x}|}{\Phi(h(x))} |\phi(R^{t-1}(\Omega^{e_1})-\phi(R(\Omega^{e_1}))|,\]
     by mean value theorem, where $x$ is between $R^{t-1}(\Omega^{e_1})$ and $R(\Omega^{e_1})$ and as a result $x\in [R_1, R_2]$. By the fact that
     $\alpha_1(x) = \frac{\Phi(x) \left|\deriv{h}{x}\right|}{\Phi(h(x))}\leq \alpha$ for  $x\in [R_1, R_2]$, we get
       \[
     | \phi(R^t(\dli))-\phi(R(\dli))| \leq \alpha|\phi(R^{t-1}(\Omega^{e_1})-\phi(R(\Omega^{e_1}))| \leq \alpha^t |\phi(R_1)-\phi(R(\dli))|,\]
      where the last inequality uses induction hypothesis.
  \item $R^t(\dli) = g(R^{t-1}(\Omega^{e'}), R^{t-1}(\Omega^{e_1}), R^{t-1}(\widetilde{\Omega}^{e_1})$. In this case, we know that $e_1$ and $e'$ are connected and thus $$\frac{SP(\Omega^{e'})}{2} \leq SP(\Omega^{e_1})= SP(\widetilde{\Omega}^{e_1}) \leq 2 SP(\Omega^{e'}). $$
      If $\min\{SP(\Omega^{e_1}),SP(\widetilde{\Omega}^{e_1}), SP(\Omega^{e'})\}>L$, we know that $ R(\Omega^{e_1}),R(\widetilde{\Omega}^{e_1}), R(\Omega^{e'})\in [R_1, R_2]$ and by a similar argument as above we get that the conclusion by the fact that $\alpha_2(x,y,z) \leq \alpha,  \forall x,y,z \in [R_1, R_2]$.
       Otherwise, we have that $\max\{SP(\Omega^{e_1}),SP(\widetilde{\Omega}^{e_1}), SP(\Omega^{e'})\}\leq 2L$ and we have
       $R^{t-1}(\Omega^{e'})=R(\Omega^{e'}), R^{t-1}(\Omega^{e_1})=R(\Omega^{e_1})$, and $ R^{t-1}(\widetilde{\Omega}^{e_1})=R(\widetilde{\Omega}^{e_1})$. Therefore, we have $R^t(\dli) =R(\dli)$.
  \item  $R^t(\dli) = \hat{g}(R^{t-1}(\Omega^{e'}), R^{t-1}(\Omega^{e_1}))$. In this case,
      if $\max\{SP(\Omega^{e_1}), SP(\Omega^{e'})\}\leq 2L$, we have
       $R^{t-1}(\Omega^{e'})=R(\Omega^{e'})$ and $ R^{t-1}(\Omega^{e_1})=R(\Omega^{e_1})$.
       If $\min\{SP(\Omega^{e_1}), SP(\Omega^{e'})\}> 2L$, we know that both $R(\Omega^{e'})$ and $R(\Omega^{e_1})$ are in $[R_1,R_2]$. Then it is a weaker version of the above recursion of $g$ and we get the result.
       The remaining case is that $\min\{SP(\Omega^{e_1}), SP(\Omega^{e'})\}\leq  2L$ and $\max\{SP(\Omega^{e_1}), SP(\Omega^{e'})\}> 2L$. For that, one of  $R(\Omega^{e'})$ and $R(\Omega^{e_1})$
       is in $[R_1,R_2]$ and the other one is equal to the correct value without error. We get our conclusion by the fact that $\alpha_3(x,y)\leq \alpha, \forall x\in [R_1, R_2]$ or $\alpha_4(x,y)\leq \alpha, \forall y\in [R_1, R_2]$ respectively.
\end{enumerate}
This completes the induction proof for
\ifabs{$|\phi(R^t(\dli))-\phi(R(\dli))| \leq \alpha^t |\phi(R_1)-\phi(R(\dli))|$ .}
{
$$|\phi(R^t(\dli))-\phi(R(\dli))| \leq \alpha^t |\phi(R_1)-\phi(R(\dli))| .$$}
Since
\[|\phi(R^t(\dli))-\phi(R(\dli))|= \frac{1}{\Phi(x)}  |R^t(\dli)-R(\dli)|, \mbox{ and }\ |\phi(R_1)-\phi(R(\dli))|= \frac{1}{\Phi(y)}  |R_1-R(\dli)|,\]
for some $x,y\in [R_1, R_2]$ by the Mean Value Theorem. Given the fact that $\Phi(\cdot)$ is nice and $\frac{R_2}{R_1}$ is bounded by a constant, we conclude that there is a constant $C$ such that
$$|R^t(\dli)-R(\dli)| \leq C \alpha^t |R_1-R(\dli)| .$$
Let $\widehat{P}=\frac{1}{1+R^t(\dli)}$ then we have that
\begin{align*}
|\widehat{P} -\mgn_{\dli} (\sigma(e)=0) | &= |\frac{1}{1+R^t(\dli)} - \frac{1}{1+R(\dli)}|\\
&= \frac{|R^t(\dli)-R(\dli)|}{(1+R^t(\dli)) (1+R(\dli)) } \\
& \leq  \frac{C \alpha^t |R_1-R(\dli)|}{(1+R^t(\dli)) (1+R(\dli)) } \\
&\leq C \alpha^t .
\end{align*}
Thus by an appropriate choice of $t=O\left(\log \frac{1}{\epsilon} \right)$ , we have
$|\widehat{P} -\mgn_{\dli} (\sigma(e)=0) | \leq \epsilon $.
\end{proof}

\ifabs
{
  \input{outline}
}
{
  \section{Bounds}
In this section, we shall prove various upper and lower bounds for $R(\dli)$. These bounds are crucial to obtain
the correlation decay property and hence FPTAS. We start with the following straightforward bounds which work
for any dangling Holant instance.

\begin{lemma}
  Let $\Omega^e$ be a dangling Holant instance, $v$ be the vertex attaching $e$ and the function on $v$ be $F_v=[f_0,f_1,\dots,f_{d+1}]$. Then
  \[\min_{k=0,1,\ldots, d} \frac{f_{k+1}}{f_k} \leq R(\dli) \leq \max_{k=0,1,\ldots, d}  \frac{f_{k+1}}{f_k} .\]
\end{lemma}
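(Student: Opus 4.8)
The plan is to write $R(\Omega^e)$ explicitly as a weighted mediant of the consecutive ratios $f_{k+1}/f_k$ and then invoke the elementary fact that such a mediant (equivalently, a convex combination of those ratios) lies between their minimum and maximum.

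First I would fix notation. Let $e, e_1, \ldots, e_d$ be the edges incident to $v$, so $\deg{v}=d+1$, and let $\Omega^{e_1,\ldots,e_d}$ be the dangling instance obtained from $\Omega^e$ by deleting the vertex $v$ together with the half-edge $e$, turning $e_1,\ldots,e_d$ into new dangling edges. For $x\in\{0,1\}^d$ put $W(x)=Z\!\left(\Omega^{e_1,\ldots,e_d},\sigma(e_1\cdots e_d)=x\right)\ge 0$. Splitting every configuration of $\Omega^e$ according to the joint state $x$ of $e_1,\ldots,e_d$ and the state $b$ of $e$, and using that $F_v$ is symmetric (so $F_v$ evaluated on such an input equals $f_{|x|+b}$), we obtain for each $b\in\{0,1\}$
\[
Z(\Omega^e,\sigma(e)=b)=\sum_{x\in\{0,1\}^d} W(x)\,f_{|x|+b}=\sum_{j=0}^{d} W_j\, f_{j+b},\qquad W_j:=\sum_{|x|=j} W(x)\ge 0.
\]

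Next, by the definition $R(\Omega^e)=Z(\Omega^e,\sigma(e)=1)/Z(\Omega^e,\sigma(e)=0)$, this gives
\[
R(\Omega^e)=\frac{\sum_{j=0}^{d} W_j f_{j+1}}{\sum_{j=0}^{d} W_j f_j}=\frac{\sum_{j=0}^{d} (W_j f_j)\cdot\frac{f_{j+1}}{f_j}}{\sum_{j=0}^{d} W_j f_j},
\]
which exhibits $R(\Omega^e)$ as a convex combination of the $d+1$ numbers $f_{j+1}/f_j$ with nonnegative coefficients $W_j f_j$ (here we use $f_j>0$, which holds throughout the function families under consideration). Since a convex combination of real numbers lies between their extreme values, we conclude $\min_{0\le k\le d} f_{k+1}/f_k \le R(\Omega^e)\le \max_{0\le k\le d} f_{k+1}/f_k$. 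The degenerate case $d=0$ is immediate, since then $R(\Omega^e)=f_1/f_0$.

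There is no genuine obstacle in this argument; the only points meriting a word of care are that $R(\Omega^e)$ be well defined, i.e.\ $Z(\Omega^e,\sigma(e)=0)=\sum_j W_j f_j>0$ (equivalently at least one $W_j$ is strictly positive), and that the $f_j$ be strictly positive so the rewriting of the quotient as a convex combination is legitimate. Both hold in the settings where this lemma is used, so the proof is complete.
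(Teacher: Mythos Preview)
Your proof is correct and follows the same approach as the paper's: both express $R(\Omega^e)$ as a weighted average (mediant) of the consecutive ratios $f_{k+1}/f_k$ by conditioning on the configuration of the other incident edges, and your write-up is in fact more explicit than the paper's. One small slip: when you delete $v$ and make $e_1,\ldots,e_d$ dangling, the paper's convention drops their edge weights, so the identity should read $Z(\Omega^e,\sigma(e)=b)=\sum_x W(x)\bigl(\prod_i \lambda_{e_i}(x_i)\bigr) f_{|x|+b}$; absorbing the nonnegative factor $\prod_i \lambda_{e_i}(x_i)$ into $W(x)$ leaves the convex-combination argument unchanged.
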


\begin{proof}
  Let $D=\{e_1,e_2,\dots,e_d\}$ be other incident edges of $v$. For any fixed configuration $\pi \in \{0,1\}^D$,
  the  $R(\dli_\pi)=\frac{f_{|\pi|+1}}{f_{|\pi|}}$. Average over all the possible configurations $\pi \in \{0,1\}^D$,
  we know that that $R(\dli)$ is sandwiched between two extreme configurations.
\end{proof}

In the above argument, we used the worst configuration for the edges  $e_1,e_2,\dots,e_d$. If we already establish that
the marginal probabilities of these edges are within certain range, we can get a more accurate estimation of $R(\dli)$.
Recursively using this idea, we can get better and better bounds. This is the main approach to get better bounds in this section.

\begin{lemma}\label{lemma:rec-bound}
  If $R(\dli) \in [R_1, R_2] $ for any dangling instance $\dli$ from a family  ${\rm Holant}({\cal F}_{c_1, c_2}^{p_0}, \Lambda_{\lambda_1,\lambda_2} )$ with $SP(\dli)\geq L$. Then for a dangling instance $\dli$ of  ${\rm Holant}({\cal F}_{c_1, c_2}^p, \Lambda_{\lambda_1,\lambda_2} )$ with $SP(\dli)\geq L+1$, we have
  \[  \min_{p\geq p_0, c\in [c_1, c_2] , \lambda \in [\lambda_1, \lambda_2]  , x\in [R_1, R_2]  }  \frac{ p +(1+c p )\lambda x}{1+\lambda x p }   \leq R(\dli) \leq  \max_{p\geq p_0, c\in [c_1, c_2] , \lambda \in [\lambda_1, \lambda_2]  , x\in [R_1, R_2]  }  \frac{ p +(1+c p )\lambda x}{1+\lambda x p } . \]
\end{lemma}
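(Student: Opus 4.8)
The plan is to mimic the averaging argument used in the previous lemma, but applied one level deeper in the computation tree. Fix a dangling instance $\dli$ of ${\rm Holant}({\cal F}_{c_1,c_2}^p,\Lambda_{\lambda_1,\lambda_2})$ with $SP(\dli)\ge L+1$, let $v$ be the vertex attaching $e$, let $F_v=[f_0,f_1,\dots]$ be its (Fibonacci with some parameter $c\in[c_1,c_2]$) signature, and let $e_1,\dots,e_d$ be the other incident edges of $v$. First I would dispose of the degenerate cases $d=0$ and $d=1$: when $d=0$ we have $R(\dli)=f_1/f_0\ge p$, which lies in the claimed interval (take $x$ at a boundary and note $\frac{p+(1+cp)\lambda x}{1+\lambda x p}\to p$ as $\lambda x\to 0$, so $p$ is in the closure of the range — or more simply argue the one-step expression below already covers it); when $d=1$, equation~(\ref{eq:rec_single}) gives $R(\dli)=h(R(\Omega^{e_1}))$ with $SP(\Omega^{e_1})\ge L$, and the key point is to rewrite $h$ in the normalized form appearing in the statement.

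The main step is the following reduction of the one-step recursion to the two-variable expression $\frac{p+(1+cp)\lambda x}{1+\lambda xp}$. For $d\ge 1$, condition on the configuration $\pi\in\{0,1\}^{\{e_2,\dots,e_d\}}$ of all incident edges of $v$ other than $e$ and $e_1$. Conditioned on $\pi$, the vertex $v$ effectively carries the ternary (in fact binary, once $\pi$ is fixed) signature $[f_{|\pi|},f_{|\pi|+1},f_{|\pi|+2}]$, and by the Fibonacci relation $f_{|\pi|+2}=c f_{|\pi|+1}+f_{|\pi|}$. Writing $p'=f_{|\pi|+1}/f_{|\pi|}\ge p$ (the lower bound uses $F_v\in{\cal F}_c^p$), a direct computation gives
\[
R(\dli_\pi)=\frac{f_{|\pi|+1}+\lambda_{e_1}f_{|\pi|+2}\,R(\Omega^{e_1}_\pi)}{f_{|\pi|}+\lambda_{e_1}f_{|\pi|+1}\,R(\Omega^{e_1}_\pi)}
=\frac{p'+(1+cp')\lambda_{e_1}\,R(\Omega^{e_1}_\pi)}{1+\lambda_{e_1}p'\,R(\Omega^{e_1}_\pi)},
\]
where $\Omega^{e_1}_\pi$ is the dangling instance obtained by pinning the $e_i$, $i\ge 2$, according to $\pi$ and making $e_1$ the dangling edge. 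Since pinning edges does not decrease the longest simple path through $e_1$ by more than what is needed — more carefully, one should pick $\pi$ and the order of operations so that $\Omega^{e_1}_\pi$ still satisfies $SP(\Omega^{e_1}_\pi)\ge L$; this holds because $SP(\dli)\ge L+1$ forces a simple path of length $\ge L$ through $e_1$ to survive — we get $R(\Omega^{e_1}_\pi)\in[R_1,R_2]$ by hypothesis. Hence each $R(\dli_\pi)$ lies between the min and the max of $\frac{p'+(1+cp')\lambda x}{1+\lambda xp}$ over the stated ranges of the parameters. Then, exactly as in the previous lemma, $R(\dli)$ is a weighted average (over $\pi$, with weights proportional to the partial partition functions) of the values $R(\dli_\pi)$, so it is sandwiched between the same min and max.

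The step I expect to require the most care is the bookkeeping on simple paths: ensuring that after pinning the edges $e_2,\dots,e_d$ the residual instance $\Omega^{e_1}_\pi$ genuinely has $SP\ge L$, so that the inductive hypothesis $R\in[R_1,R_2]$ applies. The clean way to handle this is to observe that $SP(\dli)\ge L+1$ means there is a simple path $P$ of length $\ge L+1$ starting at the dangling end of $e$; its second edge is one of the $e_i$, and by symmetry of the argument (the recursion does not privilege $e_1$ over the other $e_i$) we may relabel so that it is $e_1$; pinning the \emph{other} edges $e_2,\dots,e_d$ does not touch $P\setminus\{e\}$, which is a simple path of length $\ge L$ through $e_1$ in $\Omega^{e_1}_\pi$. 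Everything else is the same averaging-over-configurations argument as in the preceding lemma together with the elementary algebraic identity rewriting $h$ in normalized form, and the observation that $\frac{p'+(1+cp')\lambda x}{1+\lambda xp}$ with $p'\ge p$ stays within the range obtained by letting $p$ itself range over $[p_0,\infty)$.
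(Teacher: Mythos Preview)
Your proposal is correct and follows essentially the same argument as the paper: single out an edge $e_1$ lying on a longest simple path through $e$, condition on the configuration $\pi$ of the remaining incident edges, rewrite the conditional ratio via the Fibonacci relation as $\frac{p'+(1+cp')\lambda_{e_1}R(\Omega^{e_1}_\pi)}{1+\lambda_{e_1}p'R(\Omega^{e_1}_\pi)}$, invoke the hypothesis on $R(\Omega^{e_1}_\pi)$, and then use that a ratio of positive sums lies between the min and max of the termwise ratios. One small cleanup: in a couple of places you wrote $p$ in the denominator where you meant $p'$; and the $d=0$ digression is unnecessary, since $SP(\dli)\ge L+1\ge 2$ already forces $d\ge 1$ in any nontrivial application.
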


\begin{proof}
  Formally, let $\Omega^D$ be the dangling instance obtained from $\Omega^e$ by removing $v$ and thus $D=\{e_1,e_2,\dots,e_d\}$ consists of $d$ dangling edges. $v_i$ is the vertex in $\Omega^D$ that attaches $e_i$ for all $1\le i\le d$.
  Without loss of generality, we assume that in one longest simple path, $e$ is followed by $e_1$.
  We define $D'=\{e_2,e_3,\dots,e_d\}$ and assume $F_v=[f_0,f_1,\dots,f_d]$.

  Then we have
  \begin{align*}
    R(\Omega^e)
    &=\frac{\mgn_{\Omega^e}(\sigma(e)=1)}{\mgn_{\Omega^e}(\sigma(e)=0)} = \frac{Z(\Omega^e,\sigma(e)=1)}{Z(\Omega^e,\sigma(e)=0)}\\
    &=\frac{\sum_{\pi\in\{0,1\}^D}\left(Z(\Omega^D,\sigma(D)=\pi)\cdot\prod_{i=1}^d\lambda_{(v,v_i)}^{\pi(e_i)}\cdot f_{\|\pi\|+1}\right)}{\sum_{\pi\in\{0,1\}^D}\left(Z(\Omega^D,\sigma(D)=\pi)\cdot\prod_{i=1}^d\lambda_{(v,v_i)}^{\pi(e_i)}\cdot f_{\|\pi\|}\right)}\\
    &=\frac{\sum_{\pi\in\{0,1\}^{D'}}\prod_{i=2}^d\lambda_{(v,v_i)}^{\pi(e_i)}\left(Z(\Omega^D,\sigma(D)=0\pi)\cdot f_{\|\pi\|+1}+\lambda_{(v,v_1)}\cdot Z(\Omega^D,\sigma(D)=1\pi)\cdot f_{\|\pi\|+2}\right)}{\sum_{\pi\in\{0,1\}^{D'}}\prod_{i=2}^d\lambda_{(v,v_i)}^{\pi(e_i)}\left(Z(\Omega^D,\sigma(D)=0\pi)\cdot f_{\|\pi\|}+\lambda_{(v,v_1)}\cdot Z(\Omega^D,\sigma(D)=1\pi)\cdot f_{\|\pi\|+1}\right)}.\\
    &=\frac{\sum_{\pi\in\{0,1\}^{D'}}\prod_{i=2}^d\lambda_{(v,v_i)}^{\pi(e_i)}\left(\mgn_{\Omega^D}(\sigma(D)=0\pi)\cdot f_{\|\pi\|+1}+\lambda_{(v,v_1)}\cdot \mgn_{\Omega^D}(\sigma(D)=1\pi)\cdot f_{\|\pi\|+2}\right)}{\sum_{\pi\in\{0,1\}^{D'}}\prod_{i=2}^d\lambda_{(v,v_i)}^{\pi(e_i)}\left(\mgn_{\Omega^D}(\sigma(D)=0\pi)\cdot f_{\|\pi\|}+\lambda_{(v,v_1)}\cdot \mgn_{\Omega^D}(\sigma(D)=1\pi)\cdot f_{\|\pi\|+1}\right)}.
  \end{align*}

  Thus
  \begin{align}
    R(\Omega^e)&\le\max_{\pi\in\{0,1\}^{D'}}\frac{\mgn_{\Omega^D}(\sigma(D)=0\pi)\cdot f_{\|\pi\|+1}+\lambda_{(v,v_1)}\cdot \mgn_{\Omega^D}(\sigma(D)=1\pi)\cdot f_{\|\pi\|+2}}{\mgn_{\Omega^D}(\sigma(D)=0\pi)\cdot f_{\|\pi\|}+\lambda_{(v,v_1)}\cdot \mgn_{\Omega^D}(\sigma(D)=1\pi)\cdot f_{\|\pi\|+1}}.\label{eq:Rmax}\\
    R(\Omega^e)&\ge\min_{\pi\in\{0,1\}^{D'}}\frac{\mgn_{\Omega^D}(\sigma(D)=0\pi)\cdot f_{\|\pi\|+1}+\lambda_{(v,v_1)}\cdot \mgn_{\Omega^D}(\sigma(D)=1\pi)\cdot f_{\|\pi\|+2}}{\mgn_{\Omega^D}(\sigma(D)=0\pi)\cdot f_{\|\pi\|}+\lambda_{(v,v_1)}\cdot \mgn_{\Omega^D}(\sigma(D)=1\pi)\cdot f_{\|\pi\|+1}}\label{eq:Rmin}.
  \end{align}

  For a fixed $\pi\in\{0,1\}^{D'}$, we can define a new dangling instance $\Omega_\pi^{e_1}$ with dangling edge $e_1$ by
  pinning the configurations of $D'$ to $\pi$. Then we have
  \[R(\Omega_\pi^{e_1})= \frac{\mgn_{\Omega^D}(\sigma(D)=1\pi)}{\mgn_{\Omega^D}(\sigma(D)=0\pi)}.\]
  By our choice of $e_1$, we have that $SP(\Omega_\pi^{e_1})\geq L$.  As a result, $R(\Omega_\pi^{e_1}) \in [R_1, R_2]$

  By the definition of Fibonacci function, we have  $f_{\|\pi\|+2}= c f_{\|\pi\|+1} +f_{\|\pi\|}$. Let $p=\frac{f_{\|\pi\|+1}}{f_{\|\pi\|}}$, we get that claimed bounds.

\end{proof}

We denote by
\[h^c_{\lambda, p}(x) = \frac{ p +(1+c p )\lambda x}{1+\lambda x p }.\]
We use $\Hf{c_1}{c_2}{p}{\lambda_1}{\lambda_2}$ to denote the family $\{h^c_{\lambda, p }\mid c_1\leq c \leq c_2, \lambda_1\le\lambda\le\lambda_2,u\ge u_0\}$.
By recursive using Lemma \ref{lemma:rec-bound}, we can get the following bound.

\begin{lemma}\label{lemma:bound-funtion-to-instance}
  If for any $x\geq 0$ and any $h_1, h_2, \dots, h_L \in \Hf{c_1}{c_2}{p}{\lambda_1}{\lambda_2}$, we have $h_1 h_2 \cdots h_L(x) \in [R_1, R_2]$. Then
  for any dangling instance $\dli$ of  ${\rm Holant}({\cal F}_{c_1, c_2}^p, \Lambda_{\lambda_1,\lambda_2} )$ with $SP(\dli)\geq L$, we have $R(\dli)\in [R_1, R_2]$.
\end{lemma}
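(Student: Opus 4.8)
The plan is to obtain the bound by iterating the one-step recursion of Lemma~\ref{lemma:rec-bound} exactly $L$ times, starting from the trivial bound that holds with no depth assumption; the real content is just to identify what the resulting nested intervals are. Write $\mathcal{H}:=\Hf{c_1}{c_2}{p}{\lambda_1}{\lambda_2}$, and for $j\ge 0$ let $\mathcal{I}_j$ be the smallest closed interval containing $\{\,h_1h_2\cdots h_j(x)\mid h_1,\dots,h_j\in\mathcal{H},\ x\ge 0\,\}$, with the convention $\mathcal{I}_0=[0,+\infty)$. I will prove by induction on $j\in\{0,1,\dots,L\}$ that every dangling instance $\dli$ of ${\rm Holant}({\cal F}_{c_1,c_2}^p,\Lambda_{\lambda_1,\lambda_2})$ with $SP(\dli)\ge j$ satisfies $R(\dli)\in\mathcal{I}_j$. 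Granting this, the lemma follows immediately: its hypothesis is precisely $\mathcal{I}_L\subseteq[R_1,R_2]$, so any $\dli$ with $SP(\dli)\ge L$ has $R(\dli)\in\mathcal{I}_L\subseteq[R_1,R_2]$.

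The base case $j=0$ is vacuous since $R(\dli)\ge 0$ always. For the inductive step fix $j\ge 1$, assume the claim for $j-1$, and let $SP(\dli)\ge j$; denote by $v$ the vertex attaching $e$, with signature $F_v=[f_0,\dots,f_d]$ (a Fibonacci function of some parameter $c\in[c_1,c_2]$). If $d=0$ then $SP(\dli)\le 1$, so $j=1$ and $R(\dli)=f_1/f_0=h^{c}_{\lambda,f_1/f_0}(0)$ for any admissible $\lambda$; since $f_1/f_0\ge p$ we have $h^{c}_{\lambda,f_1/f_0}\in\mathcal{H}$, hence $R(\dli)\in\mathcal{I}_1$. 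If $d\ge 1$, run the computation from the proof of Lemma~\ref{lemma:rec-bound}: remove $v$ to get $\Omega^D$, let $e_1$ be the edge following $e$ on a longest simple path through $e$, put $D'=D\setminus\{e_1\}$, and write $\Omega_\pi^{e_1}$ for the dangling instance obtained by pinning $D'$ to $\pi\in\{0,1\}^{D'}$. With $p_\pi=f_{\|\pi\|+1}/f_{\|\pi\|}$ that computation gives
\[ R(\dli)=\sum_{\pi\in\{0,1\}^{D'}} w_\pi\cdot h^{c}_{\lambda_{(v,v_1)},p_\pi}(R(\Omega_\pi^{e_1})), \qquad w_\pi\ge 0,\ \ \textstyle\sum_\pi w_\pi=1. \]
Here $p_\pi\ge p$ (because $F_v\in{\cal F}_{c_1,c_2}^p$) and $\lambda_{(v,v_1)}\in[\lambda_1,\lambda_2]$, so each $h^{c}_{\lambda_{(v,v_1)},p_\pi}\in\mathcal{H}$; moreover each $\Omega_\pi^{e_1}$ lies in ${\rm Holant}({\cal F}_{c_1,c_2}^p,\Lambda_{\lambda_1,\lambda_2})$, and since $e_1$ is the second edge of a longest simple path of $\dli$ and the pinned edges $D'$ are all incident to $v$ (hence disjoint from the tail of that path), $SP(\Omega_\pi^{e_1})\ge j-1$. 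By the induction hypothesis, $R(\Omega_\pi^{e_1})\in\mathcal{I}_{j-1}$.

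It remains to check that $h(\mathcal{I}_{j-1})\subseteq\mathcal{I}_j$ for every $h\in\mathcal{H}$. Each $h=h^{c}_{\lambda,p}$ is a linear-fractional map whose only pole, at $x=-1/(\lambda p)<0$, lies outside $[0,+\infty)\supseteq\mathcal{I}_{j-1}$, so $h$ is continuous and monotone there; hence $h(\mathcal{I}_{j-1})$ is an interval whose endpoints are $h(y)$ for $y$ an endpoint of $\mathcal{I}_{j-1}$, i.e.\ $y$ a limit of compositions $h_2\cdots h_j(x)$, so that $h(y)$ is a limit of compositions $h\,h_2\cdots h_j(x)$ and therefore lies in $\mathcal{I}_j$. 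Consequently every summand $h^{c}_{\lambda_{(v,v_1)},p_\pi}(R(\Omega_\pi^{e_1}))$ lies in $\mathcal{I}_j$, and so does the convex combination $R(\dli)$, which completes the induction (for $j=1$ this uses $\mathcal{I}_0=[0,+\infty)$; the inclusion $\mathcal{I}_j\supseteq(\text{interval hull of }\bigcup_{h\in\mathcal{H}}h(\mathcal{I}_{j-1}))$ is precisely the ``recursive use of Lemma~\ref{lemma:rec-bound}'').

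The step I expect to need the most care is purely structural rather than computational: rederiving from the proof of Lemma~\ref{lemma:rec-bound} that $R(\dli)$ is a genuine convex combination of values $h(R(\Omega_\pi^{e_1}))$ with every $h$ certified to lie in $\mathcal{H}$ (this is exactly the displayed identity there, so nothing new is required), together with the depth bookkeeping $SP(\Omega_\pi^{e_1})\ge SP(\dli)-1$, which hinges on choosing $e_1$ as the second edge of a longest simple path and on the pinned edges $D'$ being located only at $v$. Monotonicity of $h^{c}_{\lambda,p}$ on $[0,+\infty)$ is needed only so that the image of an interval is again an interval with endpoints tracking those of the source; mere continuity (connectedness of the image) would suffice if one prefers to argue throughout with $\inf$ and $\sup$ in place of endpoints.
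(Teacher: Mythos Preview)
Your proposal is correct and follows exactly the approach the paper indicates: the paper merely states that the lemma is obtained ``by recursive using Lemma~\ref{lemma:rec-bound}'' and gives no further details, and your argument is precisely a clean induction on depth that unwinds this recursion, with the convex-combination reformulation of the mediant bound in Lemma~\ref{lemma:rec-bound} and the bookkeeping $SP(\Omega_\pi^{e_1})\ge SP(\dli)-1$ made explicit. The only point worth flagging is that your claim that $\Omega_\pi^{e_1}$ stays in ${\rm Holant}({\cal F}_{c_1,c_2}^p,\Lambda_{\lambda_1,\lambda_2})$ relies on the (true) fact that pinning one input of a Fibonacci signature in ${\cal F}_c^p$ yields another signature in ${\cal F}_c^p$; you use this implicitly, and it would be worth one line to say so.
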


\subsection{For Theorem~\ref{thm:2}}

\begin{lemma}\label{lem:warm1}
  Let $c_0>0, p_0>0,  L\geq c_0^2 + \frac{c_0}{p_0}$ and $h_1, h_2, \ldots, h_L \in \{h^c_{\lambda, p} | c\geq c_0, \lambda\geq 1, p \geq p_0\}$. Then for any $x\geq 0$, we have
  \[
  h_L h_{L-1} \dots h_1(x) \geq c_0.
  \]
\end{lemma}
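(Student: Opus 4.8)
The plan is to follow the sequence $x_0 := x$, $x_j := h_j(x_{j-1})$ for $j=1,\dots,L$, so that $h_L h_{L-1}\cdots h_1(x)=x_L$, and to establish two facts about a single application of an arbitrary admissible map $h=h^c_{\lambda,p}$ (i.e.\ one with $c\ge c_0$, $\lambda\ge 1$, $p\ge p_0$): an \emph{absorbing property}, namely that $y\ge c_0$ implies $h(y)\ge c_0$; and a \emph{step estimate}, namely that $0\le y<c_0$ implies $h(y)\ge\min(y+\delta,\,c_0)$, where $\delta:=p_0/(1+p_0c_0)$. Granting both, $x_j\ge\min(x_{j-1}+\delta,c_0)$ holds for every $j$ and every value of $x_{j-1}\ge 0$, so a routine induction gives $x_L\ge\min(x_0+L\delta,c_0)\ge\min(L\delta,c_0)$; since the hypothesis $L\ge c_0^2+c_0/p_0$ is exactly the inequality $L\delta\ge c_0$, we conclude $x_L\ge c_0$. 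So everything reduces to the two single-step claims.

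The absorbing property is immediate: for $y\ge c_0$ one writes
\[
h^c_{\lambda,p}(y)-c_0=\frac{(p-c_0)+\lambda y\bigl(1+p(c-c_0)\bigr)}{1+\lambda p y},
\]
and since $\lambda\ge 1$, $y\ge c_0$ and $1+p(c-c_0)\ge 1$, the numerator is at least $(p-c_0)+c_0=p\ge p_0>0$.

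For the step estimate I would first record how $h^c_{\lambda,p}(x)=\frac{p+(1+cp)\lambda x}{1+\lambda p x}$ depends on its parameters. With $t:=\lambda x$ it is a Möbius map whose $t$-derivative has the sign of $1+cp-p^2$, so that sign also governs monotonicity in $x$ and in $\lambda$, while $\partial_c h^c_{\lambda,p}(x)=\frac{\lambda p x}{1+\lambda p x}\ge 0$ always. If $1+cp-p^2\le 0$ then $h^c_{\lambda,p}$ is non-increasing in $t$, so for $x>0$ it is at least $\lim_{t\to\infty}\frac{p+(1+cp)t}{1+pt}=c+\frac{1}{p}>c\ge c_0$, and for $x=0$ it equals $p$, which in this regime satisfies $p\ge\frac{1+cp}{p}=c+\frac{1}{p}>c_0$; hence $h(y)\ge c_0$ and the step estimate is trivially true. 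If $1+cp-p^2>0$ then $h^c_{\lambda,p}$ is increasing in both $\lambda$ and $c$, so $h^c_{\lambda,p}(y)\ge h^{c_0}_{1,p}(y)=\frac{p+(1+c_0p)y}{1+py}$, and for $0\le y<c_0$,
\[
h^{c_0}_{1,p}(y)-y=\frac{p\,(1+c_0y-y^2)}{1+py};
\]
here $1+c_0y-y^2=1+y(c_0-y)\ge 1$, while $\frac{p}{1+py}$ is increasing in $p\ge p_0$ and decreasing in $y\le c_0$, so the increment is at least $p_0/(1+p_0c_0)=\delta$, i.e.\ $h(y)\ge y+\delta$.

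I expect the delicate point to be the step estimate with the sharp constant $\delta=p_0/(1+p_0c_0)$: one must check that the worst admissible configuration is $\lambda=1$, $c=c_0$, $p=p_0$ with $y\uparrow c_0$, and recognize that the crude bound $1+c_0y-y^2\ge 1$ is exactly what makes the per-step gain shrink to $\delta$, so that the budget $L\ge c_0^2+c_0/p_0$ is consumed with nothing left over. The remaining pieces --- the monotonicity computations, the absorbing identity, and the final induction --- are mechanical.
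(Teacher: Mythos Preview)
Your proposal is correct and follows essentially the same strategy as the paper: establish that the interval $[c_0,\infty)$ is absorbing, and that below $c_0$ each application of an admissible $h$ increases the value by at least $\delta=p_0/(1+p_0c_0)$, so that $L\ge c_0/\delta=c_0^2+c_0/p_0$ steps suffice.

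The only difference worth noting is in how the step increment is obtained. The paper works directly with the assumption $\lambda_i x_{i-1}<c_0$ (rather than $x_{i-1}<c_0$) and expands $h_i(x_{i-1})-x_{i-1}$ algebraically, dropping nonnegative terms to reach $\frac{p_i}{1+\lambda_i x_{i-1} p_i}\ge\frac{p_0}{1+c_0 p_0}$ in one line; this avoids your case split on the sign of $1+cp-p^2$ and the parameter-monotonicity reduction to $h^{c_0}_{1,p}$. Your route is slightly longer but perfectly valid---just be careful in Case~2 to reduce $\lambda$ to $1$ \emph{before} reducing $c$ to $c_0$, since the monotonicity in $\lambda$ is only guaranteed for the original $c$ (where $1+cp-p^2>0$), not necessarily for $c_0$.
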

\begin{proof}
  We denote $x_i= h_i h_{i-1} \dots h_1 (x)$ and $x_i =h_i(x_{i-1}) =\frac{ p_i +(1+c_i p_i )\lambda_i x_{i-1}}{1+\lambda_i x_{i-1} p_i }$.

  If there exists some $x_{i-1} \lambda_i \geq c_0$, then
  \begin{align*}
    h_i(x_{i-1}) &=\frac{ p_i +(1+c_i p_i )\lambda_i x_{i-1}}{1+\lambda_i x_{i-1} p_i }\\
    & \geq \frac{ p_i +(1+c_0 p_i )\lambda_i x_{i-1}}{1+\lambda_i x_{i-1} p_i } \\
    &= c_0 + \frac{ p_i +\lambda_i x_{i-1} -c_0}{1+\lambda_i x_{i-1} p_i } \\
    &\geq c_0 + \frac{ p_i +c_0 -c_0}{1+\lambda_i x_{i-1} p_i } \\
    &> c_0.
  \end{align*}
  Then it remains above  $c_0$.

  Now we assume $x_{i-1} \lambda_i <c_0$ for all $i=1, 2, \ldots, L$. In this case, we have
  \begin{align*}
    x_{i}-x_{i-1} &= h_i(x_{i-1})  -x_{i-1} \\
    &=\frac{ p_i +(1+c_i p_i )\lambda_i x_{i-1}}{1+\lambda_i x_{i-1} p_i }-x_{i-1}\\
    & \geq \frac{ p_i +(1+c_0 p_i )\lambda_i x_{i-1} -x_{i-1} -\lambda_i x_{i-1}^2 p_i}{1+\lambda_i x_{i-1} p_i } \\
    & \geq \frac{ p_i +(\lambda_i  -1) x_{i-1} }{1+\lambda_i x_{i-1} p_i } \\
    & \geq \frac{ p_i }{1+\lambda_i x_{i-1} p_i } \\
    & \geq \frac{ p_0 }{1+c_0 p_0 } .
  \end{align*}
  So at every step, it is increased by at least $\frac{ p_0 }{1+c_0 p_0 }$. So if $L\geq \frac{c_0 (1+c_0 p_0)} {p_0}$, we can conclude that $x_L \geq c_0$.
\end{proof}

By Lemma \ref{lemma:bound-funtion-to-instance} and the above bound, we have the following bound which is used in the
proof of Theorem \ref{thm:2}.

\begin{corollary}
  Let $c_0>0, p_0>0,  L\geq c_0^2 + \frac{c_0}{p_0}$ and $\dli$ be an instance of  ${\rm Holant}({\cal F}_{c_0, \infty}^{p_0}, \Lambda_{1,\infty} )$ with $SP(\dli)\geq L$. Then $R(\dli)\geq c_0$.
\end{corollary}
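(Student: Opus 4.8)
The plan is to obtain this purely by composing the two preceding results: Lemma~\ref{lem:warm1} gives a uniform lower bound on the value of any length-$L$ composition of the one-variable maps $h^c_{\lambda,p}$, and Lemma~\ref{lemma:bound-funtion-to-instance} converts such a bound on compositions into the corresponding bound on $R(\dli)$ for instances of large simple-path depth. So there is essentially nothing to do beyond matching up parameters.

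First I would observe that, with $c_1=c_0$, $c_2=\infty$, $p=p_0$, $\lambda_1=1$, $\lambda_2=\infty$, the family $\Hf{c_0}{\infty}{p_0}{1}{\infty}$ is exactly $\{h^c_{\lambda,p}\mid c\ge c_0,\ \lambda\ge 1,\ p\ge p_0\}$, which is precisely the family appearing in the statement of Lemma~\ref{lem:warm1}. Next, since $L\ge c_0^2+\frac{c_0}{p_0}=\frac{c_0(1+c_0 p_0)}{p_0}$, Lemma~\ref{lem:warm1} applies and yields $h_L h_{L-1}\cdots h_1(x)\ge c_0$ for every $x\ge 0$ and every choice of $h_1,\dots,h_L$ from this family. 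Because this holds for \emph{every} ordered tuple of maps from the family, the mismatch in composition order between Lemma~\ref{lem:warm1} (which composes as $h_L\cdots h_1$) and Lemma~\ref{lemma:bound-funtion-to-instance} (which composes as $h_1\cdots h_L$) is harmless: both quantify over all length-$L$ compositions, so in particular $h_1 h_2\cdots h_L(x)\ge c_0$ for all $x\ge 0$ and all $h_1,\dots,h_L\in\Hf{c_0}{\infty}{p_0}{1}{\infty}$.

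Finally I would invoke Lemma~\ref{lemma:bound-funtion-to-instance} with $R_1=c_0$ and $R_2=+\infty$, i.e.\ with $[R_1,R_2]=[c_0,+\infty)$: the hypothesis of that lemma is exactly the displayed inequality $h_1\cdots h_L(x)\ge c_0$ just established, so its conclusion gives $R(\dli)\in[c_0,+\infty)$, that is $R(\dli)\ge c_0$, for every dangling instance $\dli$ of ${\rm Holant}({\cal F}_{c_0,\infty}^{p_0},\Lambda_{1,\infty})$ with $SP(\dli)\ge L$. This is the claim. I do not anticipate any real obstacle here; the only points needing care are the bookkeeping check that the parameter ranges line up, the arithmetic identity $c_0^2+\frac{c_0}{p_0}=\frac{c_0(1+c_0 p_0)}{p_0}$ so that Lemma~\ref{lem:warm1} is indeed applicable, and the remark about composition order, none of which is substantive.
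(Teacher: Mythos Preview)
Your proposal is correct and matches the paper's approach exactly: the paper simply states that the corollary follows from Lemma~\ref{lemma:bound-funtion-to-instance} combined with the bound in Lemma~\ref{lem:warm1}, and you have carried out precisely that combination, with the additional care of checking the arithmetic $c_0^2+\frac{c_0}{p_0}=\frac{c_0(1+c_0p_0)}{p_0}$ and noting that the composition-order discrepancy is immaterial since both lemmas quantify over all tuples.
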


\subsection{For Theorem~\ref{thm:1}}


\begin{lemma}\label{lem:warm}
  Let $h_{\lambda_1,\mu_1}, h_{\lambda_2,\mu_2}\in\Hf{c}{c}{\mu}{\lambda}{+\infty}$ be two functions, then for any $x\ge 0$,
  \[
   \min\left\{\frac{\lambda\mu^2}{1+\lambda\mu^2}\cdot c,x^*\right\}\le h_{\lambda_1,\mu_1}h_{\lambda_2,\mu_2}(x)%
   \le\max\left\{\frac{\mu+(1+c\mu)\lambda c}{\mu\lambda c},c+\frac{1}{\mu},x^*\right\}.
  \]
  where $x^*$ is the larger fixpoint of $h_{\lambda_1,\mu_1}$.
\end{lemma}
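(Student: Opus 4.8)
The plan is to reduce everything to elementary monotonicity facts about the single Möbius‑type map $h^c_{\lambda,p}$. First I would record its shape: differentiating gives $(h^c_{\lambda,p})'(x)=\frac{\lambda(1+cp-p^2)}{(1+\lambda px)^2}$, so on $[0,+\infty)$ the map is increasing when $p<\rho$, decreasing when $p>\rho$, and constant when $p=\rho$, where $\rho$ is the positive root of $t^2=ct+1$; moreover $h^c_{\lambda,p}(0)=p$, $\lim_{x\to+\infty}h^c_{\lambda,p}(x)=c+\frac1p$, and one has the convenient rewriting $h^c_{\lambda,p}(x)=c+\frac{p-c+\lambda x}{1+\lambda px}$. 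From these I would extract two pointwise estimates valid for all $x\ge 0$: the lower estimate $h^c_{\lambda,p}(x)>c\cdot\frac{\lambda px}{1+\lambda px}$ (the numerator of $h^c_{\lambda,p}(x)$ exceeds $c\lambda px$ by $p+\lambda x>0$), and, when $p>c$, the upper estimate $h^c_{\lambda,p}(x)<c+\frac1{\lambda x}+\frac1p$ (bound the positive fraction $\frac{p-c+\lambda x}{1+\lambda px}$ by $\frac{p+\lambda x}{\lambda px}$). Finally, writing the fixpoint equation of $h_{\lambda_1,\mu_1}$ as $\lambda_1\mu_1 x^2+(1-\lambda_1-c\lambda_1\mu_1)x-\mu_1=0$, the product of its two roots is $-\frac1{\lambda_1}<0$, so there is exactly one positive fixpoint $x^*$, and when $h_{\lambda_1,\mu_1}$ is decreasing one has $h_{\lambda_1,\mu_1}(y)\le x^*\iff y\ge x^*$.

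Next I would analyze the inner value $y:=h_{\lambda_2,\mu_2}(x)$. Since $h_{\lambda_2,\mu_2}$ is monotone on $[0,+\infty)$ with endpoint values $\mu_2\ (\ge\mu)$ and $c+\frac1{\mu_2}\ (>c)$, the value $y$ lies strictly between $\mu_2$ and $c+\frac1{\mu_2}$; in particular $y>\min\{\mu,c\}$, if $\mu_2\le\rho$ then $y\ge\mu_2\ge\mu$, and if $\mu_2>\rho$ then (using $\rho>c$) $\mu_2>c$ and hence $y>c$. I would also note that the composition $h_{\lambda_1,\mu_1}\circ h_{\lambda_2,\mu_2}$ is itself monotone on $[0,+\infty)$, so its value at $x$ always lies between $h_{\lambda_1,\mu_1}(\mu_2)$ and $h_{\lambda_1,\mu_1}(c+\frac1{\mu_2})$.

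The proof then splits according to the monotonicity of the outer map $h_{\lambda_1,\mu_1}$. If $h_{\lambda_1,\mu_1}$ is increasing or constant ($\mu_1\le\rho$), then $h_{\lambda_1,\mu_1}(y)<c+\frac1{\mu_1}\le c+\frac1\mu$, which settles the upper bound; for the lower bound, either $y\ge\mu$, so the lower estimate gives $h_{\lambda_1,\mu_1}(y)>c\cdot\frac{\lambda_1 y\mu_1}{1+\lambda_1 y\mu_1}\ge c\cdot\frac{\lambda\mu^2}{1+\lambda\mu^2}$ (since $\lambda_1 y\mu_1\ge\lambda\mu^2$ and $t\mapsto\frac t{1+t}$ is increasing), or else $y<\mu$, which forces $c<\mu$ and then $h_{\lambda_1,\mu_1}(y)>h_{\lambda_1,\mu_1}(0)=\mu_1\ge\mu>c\cdot\frac{\lambda\mu^2}{1+\lambda\mu^2}$. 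If instead $h_{\lambda_1,\mu_1}$ is decreasing ($\mu_1>\rho$), the lower bound is immediate since $h_{\lambda_1,\mu_1}(y)>\lim_{x\to\infty}h_{\lambda_1,\mu_1}(x)=c+\frac1{\mu_1}>c>c\cdot\frac{\lambda\mu^2}{1+\lambda\mu^2}$. For the upper bound here I would split on $x^*$: if $y\ge x^*$ then $h_{\lambda_1,\mu_1}(y)\le h_{\lambda_1,\mu_1}(x^*)=x^*$; if $y<x^*$, I use monotonicity to push the argument down to the a priori lower bound for $y$ and apply the upper estimate there — since $\mu_1>\rho>c$, at the point $c$ we get $h_{\lambda_1,\mu_1}(c)<c+\frac1{\lambda_1 c}+\frac1{\mu_1}\le c+\frac1{\lambda c}+\frac1\mu=\frac{\mu+(1+c\mu)\lambda c}{\mu\lambda c}$, the remaining term of the maximum.

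The main obstacle is precisely this last sub‑case: a decreasing $h_{\lambda_1,\mu_1}$ amplifies small arguments, so the bound only survives because the inner value $y$ is kept bounded below (via the rewriting $h_{\lambda_2,\mu_2}(x)=c+\frac{\mu_2-c+\lambda_2 x}{1+\lambda_2\mu_2 x}$, which traps $y$ above $c$ in the regime that matters), and, in the complementary regime, because $y\ge x^*$ forces $h_{\lambda_1,\mu_1}(y)\le x^*$. Getting the exact constant $\frac{\mu+(1+c\mu)\lambda c}{\mu\lambda c}=c+\frac1\mu+\frac1{\lambda c}$ rather than a messier expression is the delicate point, and is where the hypotheses $\lambda_1\ge\lambda$, $\mu_1,\mu_2\ge\mu$ and $\mu_1>\rho>c$ must be used together; the separate terms $c+\frac1\mu$ and $x^*$ in the maximum are there exactly to cover the increasing‑outer regime and the regime $y\ge x^*$, respectively.
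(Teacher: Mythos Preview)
Your lower-bound argument is correct and is essentially a fleshed-out version of the paper's case split on $\mu_1$ and then $\mu_2$ versus $\rho$; the paper only writes out the lower bound and declares the upper bound ``analogous''.

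For the upper bound there is a real gap in your decreasing-outer sub-case $\mu_1>\rho$, $y<x^*$. You push the argument down to ``the a priori lower bound for $y$'' and evaluate at the point $c$, obtaining $h_{\lambda_1,\mu_1}(c)<c+\frac{1}{\lambda_1 c}+\frac{1}{\mu_1}\le c+\frac{1}{\lambda c}+\frac{1}{\mu}$. But your own preliminary analysis only gives $y>\min\{\mu,c\}$: when $\mu_2\le\rho$ you have $y\ge\mu_2\ge\mu$, and nothing forces $y\ge c$ once $\mu<c$. For $y\in[\mu,c)$ the decreasing outer map gives $h_{\lambda_1,\mu_1}(y)\le h_{\lambda_1,\mu_1}(\mu)$, and your upper estimate at $\mu$ only yields $c+\frac{1}{\lambda_1\mu}+\frac{1}{\mu_1}$, which can exceed the target $c+\frac{1}{\lambda c}+\frac{1}{\mu}$ when $\mu<c$ and $\lambda$ is small. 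In fact the stated upper bound fails in that regime: with $c=2$, $\mu=\mu_2=\lambda=\lambda_1=\lambda_2=\tfrac12$, $\mu_1=100$, $x=0$ one computes $h_{\lambda_1,\mu_1}\!\bigl(h_{\lambda_2,\mu_2}(0)\bigr)=h_{\lambda_1,\mu_1}\!\bigl(\tfrac12\bigr)=\tfrac{150.25}{26}\approx 5.78$, whereas the asserted maximum is $\max\{5,4,x^*\}=5$ (here $x^*\approx 2.72$). So your closing remark that the rewriting ``traps $y$ above $c$ in the regime that matters'' is exactly the unjustified step, and no argument can close it without an extra hypothesis. The paper only invokes this lemma en route to Theorem~\ref{thm:1}, where edge weights are constrained near $1$ and the problematic regime is excluded, but that restriction is not recorded in the lemma's hypotheses.
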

\begin{proof}
  We only prove the lower bound, the proof of the upper bound is analogous.

  If $\mu_1\ge\rho$ then the lemma obviously holds since $h_{\lambda_1,\mu_1}(x)\ge c$ for any $x\ge 0$. Thus we assume $\mu_1<\rho$, then we distinguish between two cases:
  \begin{itemize}
  \item [(1)] $\mu_2\ge\rho$, then the lemma follows from the fact that $h_{\lambda_2,\mu_2}(x)\ge c$ for any $x$ and $h_{\lambda_1,\mu_1}(x)>x$ when $x<x^*$.
  \item [(2)] $\mu_2<\rho$, then we have
    \[
    h_{\lambda_1,\mu_1}h_{\lambda_2,\mu_2}(x)\ge h_{\lambda_1,\mu_1}(\mu),
    \]
    and thus
    \[
    h_{\lambda_1,\mu_1}(\mu)=\frac{\mu_1+(1+c\mu_1)\lambda_1\mu}{1+\lambda_1\mu_1\mu}\ge\frac{\lambda\mu^2}{1+\lambda\mu^2}\cdot c.
    \]
  \end{itemize}
\end{proof}

In the following, we say a number $x$ is \emph{warm} if $\frac{\lambda\mu^2}{1+\lambda\mu^2}\cdot c\le x\le \max\left\{\frac{\mu+(1+c\mu)\lambda c}{\mu\lambda c},c+\frac{1}{\mu}\right\}$ when we work with functions in $\Hf{c}{c}{\mu}{\lambda}{+\infty}$.

\begin{lemma}\label{lem:fpbound}
  Let $\mu,\lambda,c>0$ be three numbers, let $x^*$ be the larger fixpoint of $h_{\lambda,\mu}^c$, then
  \[
  \left|x^*-\rho\right|\le \frac{4\rho\left|\lambda-1\right|\left|\mu-\rho\right|}{(\rho^2+1)\lambda\mu}
  \]
\end{lemma}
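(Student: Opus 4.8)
The plan is to reduce the fixpoint equation to a quadratic and exploit the relation $\rho^2=c\rho+1$ defining $\rho$. Writing $h_{\lambda,\mu}^c(x)=x$ with $h_{\lambda,\mu}^c(x)=\frac{\mu+(1+c\mu)\lambda x}{1+\lambda\mu x}$ and clearing the denominator, $x^*$ is a root of
\[
P(x):=\lambda\mu x^2+\bigl(1-\lambda(1+c\mu)\bigr)x-\mu .
\]
Since $P(0)=-\mu<0$ while $P(x)\to+\infty$ as $x\to\pm\infty$ (equivalently, by Vieta the product of the two roots is $-1/\lambda<0$), the roots of $P$ are real and straddle $0$: the larger one is $x^*>0$ (a genuine fixpoint, since $x^*>0$ differs from the pole $-1/(\lambda\mu)$ of $h_{\lambda,\mu}^c$), and I denote the other (negative) root by $x^{**}$.

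The key step is to evaluate $P$ at $\rho$. Substituting $\rho^2=c\rho+1$, the two terms carrying the factor $c$ cancel, leaving the clean identity
\[
P(\rho)=(\lambda-1)(\mu-\rho).
\]
This is also a sanity check: it forces $x^*=\rho$ exactly when $\lambda=1$ or $\mu=\rho$, matching the claimed bound, which vanishes there. Next I would factor $P(x)=\lambda\mu(x-x^*)(x-x^{**})$ and put $x=\rho$, getting $(\lambda-1)(\mu-\rho)=\lambda\mu\,(\rho-x^*)(\rho-x^{**})$, hence
\[
x^*-\rho=\frac{(\lambda-1)(\mu-\rho)}{\lambda\mu\,(\rho-x^{**})}.
\]

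It then remains only to bound $|\rho-x^{**}|$ from below. Because $x^{**}<0$, we have $|\rho-x^{**}|=\rho+|x^{**}|\ge\rho$, and since $c>0$ the root $\rho=\tfrac12\bigl(c+\sqrt{c^2+4}\bigr)$ exceeds $1$, so $\rho\ge\frac{\rho^2+1}{4\rho}$ (this last inequality is merely $3\rho^2\ge1$). Plugging $|\rho-x^{**}|\ge\frac{\rho^2+1}{4\rho}$ into the previous display and taking absolute values yields exactly
\[
|x^*-\rho|\le\frac{4\rho\,|\lambda-1|\,|\mu-\rho|}{(\rho^2+1)\,\lambda\mu}.
\]
There is no serious obstacle here: the only points needing care are the sign analysis that places $x^{**}<0$ (which is what makes $\rho-x^{**}$ safely bounded away from $0$) and the elementary estimate $\rho\ge(\rho^2+1)/(4\rho)$, which is where the rather generous constant $4$ enters; one could tighten it by bounding $x^{**}=-1/(\lambda x^*)$ more carefully, but $4$ is all that the later applications require.
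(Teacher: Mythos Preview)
Your proof is correct and takes a genuinely cleaner route than the paper's. The paper solves the quadratic explicitly, writes $x^*-\rho=(\sqrt{B}-A)/(\lambda\rho\mu)$ for certain polynomials $A,B$ in $\lambda,\mu,\rho$, then computes $B-A^2=4(\lambda-1)\lambda\rho^2\mu(\rho-\mu)$ and bounds $\sqrt{B}+A\ge\lambda(\rho^2+1)\mu$, finishing with a four-case sign analysis on $(\lambda-1,\mu-\rho)$. Your argument replaces all of this with two observations: the identity $P(\rho)=(\lambda-1)(\mu-\rho)$ (coming directly from $\rho^2=c\rho+1$) and the fact that the other root $x^{**}$ is negative (Vieta gives product $-1/\lambda$), so $|\rho-x^{**}|\ge\rho\ge(\rho^2+1)/(4\rho)$. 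This avoids the explicit radicals entirely and makes the structure of the bound transparent: the numerator $|\lambda-1||\mu-\rho|$ is exactly $|P(\rho)|$, and the denominator is a crude lower bound on the distance from $\rho$ to the irrelevant root.

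One minor slip: your displayed formula $x^*-\rho=\frac{(\lambda-1)(\mu-\rho)}{\lambda\mu(\rho-x^{**})}$ has the wrong sign (the factorization gives $\rho-x^*$ on the left, not $x^*-\rho$). It is harmless since you immediately take absolute values and $\rho-x^{**}>0$, but you should flip it when writing up.
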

\begin{proof}
  Solving the equation $h_{\lambda,\mu}^c(x^*)=x^*$ and taking the larger root, we obtain
  \[
  x^*=\rho+\frac{(\lambda-1)\rho-\lambda(\rho^2+1)\mu+\sqrt{\left(\lambda-1\right)^2\rho^2+\left(\lambda^2\rho^4-2\lambda(\lambda-2)\rho^2+\lambda^2\right)\mu^2+2\left(\lambda(\lambda-1)\rho(\rho^2-1)\right)\mu}}{\lambda\rho\mu}
  \]
  Take
  \begin{align*}
    A&=\lambda(\rho^2+1)\mu-(\lambda-1)\rho,\\
    B&=\left(\lambda-1\right)^2\rho^2+\left(\lambda^2\rho^4-2\lambda(\lambda-2)\rho^2+\lambda^2\right)\mu^2+2\left(\lambda(\lambda-1)\rho(\rho^2-1)\right)\mu
  \end{align*}
  Then $x^*-\rho=\frac{\sqrt{B}-A}{\lambda\rho\mu}$ and it holds that
  \begin{align*}
    B-A^2
    &=B-\left(\lambda^2(\rho^2+1)^2\mu^2-2\lambda(\lambda-1)(\rho^2+1)\mu\rho+(\lambda-1)^2\rho^2\right)\\
    &=4(\lambda-1)\lambda\rho^2\mu(\rho-\mu),\\
    \sqrt{B}+A&\ge\lambda(\rho^2+1)\mu
  \end{align*}

  Notice that if $\lambda=1$ or $\mu=\rho$, then $x^*=\rho$. We need to distinguish between four cases
  \begin{enumerate}[(1)]
  \item $\lambda>1$ and $\mu>\rho$;
  \item $\lambda>1$ and $\mu<\rho$;
  \item $\lambda<1$ and $\mu>\rho$;
  \item $\lambda<1$ and $\mu<\rho$.
  \end{enumerate}
  We only prove (1), the other cases are analogous.
  If $\lambda>1$ and $\mu>\rho$, then $x^*<\rho$ and we have
  \[
    \rho-x^*
    =\frac{A-\sqrt{B}}{\lambda\rho\mu}
    =\frac{A^2-B}{(\sqrt{B}+A)\lambda\rho\mu}
    \le\frac{4(\lambda-1)\rho(\mu-\rho)}{(\rho^2+1)\lambda\mu}
    \]
\end{proof}

\begin{lemma}\label{lem:smallmu}
  Let $h_{\lambda_0,\mu_0}\in \Hf{c}{c}{\mu}{\lambda}{+\infty}$ and $\mu_0\le\rho$. Let $k$ be a number such that $k(1+(1-k)\mu^2)<c(1-k^2)\mu^3$. Then for every warm $x$, if $\max\{|\lambda-1|,|\lambda_0-1|\}\le k$, then
  \[
  |h_{\lambda_0,\mu_0}(x)-\rho|\le \alpha_1|x-\rho|+\delta_1
  \]
  for $\alpha_1=\frac{1+k}{1+\frac{c(1-k^2)\mu^3}{1+(1-k)\mu^2}}<1$ and $\delta_1=k\rho$.
\end{lemma}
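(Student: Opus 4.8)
The plan is to reduce the inequality to one clean algebraic identity for $h^c_{\lambda_0,\mu_0}(x)-\rho$ and then peel off the ``error'' caused by $\lambda_0$ being only approximately $1$. Recall $\rho$ is the positive root of $t^2=ct+1$, so $c\rho=\rho^2-1$, i.e.\ $c=\rho-\tfrac1\rho$. The first step is to compute, using this relation,
\[
h^c_{\lambda_0,\mu_0}(x)-\rho=\frac{\mu_0+(1+c\mu_0)\lambda_0 x-\rho-\rho\lambda_0\mu_0 x}{1+\lambda_0\mu_0 x}=\frac{(\rho-\mu_0)(\lambda_0 x-\rho)}{\rho(1+\lambda_0\mu_0 x)},
\]
where the last step uses $1+c\mu_0-\rho\mu_0=(\rho-\mu_0)/\rho$. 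Since we assume $\mu_0\le\rho$, the factor $\rho-\mu_0$ is nonnegative, so $\bigl|h^c_{\lambda_0,\mu_0}(x)-\rho\bigr|=\dfrac{(\rho-\mu_0)\,|\lambda_0 x-\rho|}{\rho(1+\lambda_0\mu_0 x)}$.

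Next I would write $\lambda_0 x-\rho=(x-\rho)+(\lambda_0-1)x$ and use the triangle inequality to get
\[
\bigl|h^c_{\lambda_0,\mu_0}(x)-\rho\bigr|\le\frac{\rho-\mu_0}{\rho(1+\lambda_0\mu_0 x)}\,|x-\rho|+\frac{(\rho-\mu_0)\,|\lambda_0-1|\,x}{\rho(1+\lambda_0\mu_0 x)},
\]
so it suffices to bound the coefficient of $|x-\rho|$ by $\alpha_1$ and the second term by $\delta_1=k\rho$. For the coefficient, observe that $\dfrac{\rho-\mu_0}{\rho(1+\lambda_0\mu_0 x)}$ is decreasing in each of $\mu_0,\lambda_0,x$; since $h^c_{\lambda_0,\mu_0}\in\Hf{c}{c}{\mu}{\lambda}{+\infty}$ we have $\mu_0\ge\mu$ and $\lambda_0\ge\lambda\ge 1-k$, and since $x$ is warm we have $x\ge\frac{\lambda\mu^2c}{1+\lambda\mu^2}\ge\frac{(1-k)\mu^2c}{1+(1-k)\mu^2}$. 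Substituting the worst values $\mu_0=\mu$, $\lambda_0=1-k$, and this lower bound for $x$ (and bounding $\rho-\mu_0\le\rho$) produces a bound of exactly the form of $\alpha_1$. Finally I would note that the hypothesis $k(1+(1-k)\mu^2)<c(1-k^2)\mu^3$ is, after clearing denominators, literally the statement $1+k<1+\frac{c(1-k^2)\mu^3}{1+(1-k)\mu^2}$, i.e.\ $\alpha_1<1$; this is precisely where that hypothesis enters.

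For the additive term I would use $|\lambda_0-1|\le k$ and $\rho-\mu_0<\rho$ to reduce it to $k\cdot\dfrac{x}{1+\lambda_0\mu_0 x}$, and then show $\dfrac{x}{1+\lambda_0\mu_0 x}\le\rho$: this is immediate when $\lambda_0\mu_0\rho\ge1$, and when $\lambda_0\mu_0\rho<1$ it is equivalent to $x(1-\lambda_0\mu_0\rho)\le\rho$, which I would deduce from the warm upper bound (which equals $\tfrac1\mu+c+\tfrac1{\lambda c}$) together with $\mu_0\ge\mu$, $\lambda_0\ge\lambda$. I expect the main obstacle to be exactly this last case analysis together with verifying that the worst-case parameter substitutions really are the claimed ones: squeezing the additive term down to the clean value $k\rho$ forces one to combine the warm \emph{upper} bound with the family constraints in the regime where $\mu$ is small, and one must check the monotonicity claims carefully (the sign of the $\mu_0$-derivative of the coefficient, in particular, hinges on $-1-\rho\lambda_0 x<0$). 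Once the identity of the first step is in hand, the rest is routine estimation.
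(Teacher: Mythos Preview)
Your opening identity
\[
h^c_{\lambda_0,\mu_0}(x)-\rho=\frac{(\rho-\mu_0)(\lambda_0 x-\rho)}{\rho(1+\lambda_0\mu_0 x)}
\]
is correct and is exactly what the paper uses. The gap is in how you split $\lambda_0 x-\rho$. You write $\lambda_0 x-\rho=(x-\rho)+(\lambda_0-1)x$; the paper instead writes $\lambda_0 x-\rho=\lambda_0(x-\rho)+(\lambda_0-1)\rho$. This seemingly minor choice is the whole point.

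With the paper's split the coefficient of $|x-\rho|$ picks up an extra factor $\lambda_0\le 1+k$, and after bounding $\rho-\mu_0\le\rho$, $\mu_0\ge\mu$, and $x$ by the warm lower bound one gets precisely $\dfrac{1+k}{1+(1+k)\cdot\frac{(1-k)\mu^3 c}{1+(1-k)\mu^2}}=\alpha_1$ (note $(1+k)(1-k)=1-k^2$). Your split gives instead $\dfrac{1}{1+(1-k)^2\mu^3 c/(1+(1-k)\mu^2)}$, which is not $\alpha_1$ and is not always $\le\alpha_1$; so you do not recover the stated constant.

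More seriously, your additive term $\dfrac{(\rho-\mu_0)\,|\lambda_0-1|\,x}{\rho(1+\lambda_0\mu_0 x)}$ is \emph{not} bounded by $k\rho$ in general. The case analysis you propose fails in Case~2: take $c=\tfrac32$ (so $\rho=2$), $\mu=\mu_0=0.1$, $\lambda=\lambda_0=1$, and $x$ equal to the warm upper bound $\tfrac1\mu+c+\tfrac{1}{\lambda c}\approx 12.17$. Then $\tfrac{x}{1+\lambda_0\mu_0 x}\approx 5.49>2=\rho$, so your additive term is about $2.7\,k\rho$, not $k\rho$. The warm upper bound scales like $1/\mu$, which is exactly what makes the ``small $\mu$'' regime you flagged fatal for this split. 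With the paper's split the additive piece is $\dfrac{(\rho-\mu_0)|\lambda_0-1|}{1+\lambda_0\mu_0 x}\le \rho\cdot k=\delta_1$ with no case analysis at all, since the troublesome $x$ has been replaced by the constant $\rho$.

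So: keep your identity, but split off $\lambda_0(x-\rho)$ rather than $(x-\rho)$. Then both $\alpha_1$ and $\delta_1$ drop out in one line, and the hypothesis on $k$ is used exactly as you say, to certify $\alpha_1<1$.
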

\begin{proof}
  \begin{align*}
    |h_{\lambda_0,\mu_0} - \rho|
    &=\left|\frac{\rho-\mu_0}{\rho}\cdot\frac{\lambda_0}{1+x\mu_0\lambda_0}\left(x-\rho\right)+(\lambda_0-1)\frac{\rho - \mu_0}{1+x\lambda_0\mu_0}\right|\\
    &\le\frac{1+k}{1+\frac{c(1-k^2)\mu^3}{1+(1-k)\mu^2}}\left|x-\rho\right| + k\rho.
  \end{align*}
\end{proof}

\begin{lemma}\label{lem:bigmu}
  Let $k<\rho^2-1$ be a number and $h_{\lambda_1,\mu_1},h_{\lambda_2,\mu_2}\in \Hf{c}{c}{\mu}{\lambda}{+\infty}$ where $\mu\ge\rho$. Assume $\max\{|\lambda_1-1|,|\lambda_2-1|,|\lambda-1|\}\le k$, then for every warm $x$,
  \[
  |h_{\lambda_1,\mu_1}h_{\lambda_2,\mu_2}(x)-\rho|\le \alpha_2|x-\rho|+\delta_2
  \]
  for $\alpha_2=\frac{1+k}{\rho^2} < 1$ and $\delta_2=\left(\frac{1+(1-k)\mu^2}{(1-k)^2\mu^2c}+\frac{1}{\rho}\right)k$.
\end{lemma}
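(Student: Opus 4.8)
The plan is to reduce everything to one algebraic identity and then apply it twice. Since $c=\rho-\frac{1}{\rho}$, a one-line computation gives $1+c\mu-\rho\mu=\frac{\rho-\mu}{\rho}$ and hence
\[
h^{c}_{\lambda,\mu}(x)-\rho=\frac{(\mu-\rho)(\rho-\lambda x)}{\rho\,(1+\lambda\mu x)};
\]
in particular $h^{c}_{1,\rho}\equiv\rho$. This is precisely the identity already used in the proof of Lemma~\ref{lem:smallmu}. I will apply it to $y:=h_{\lambda_2,\mu_2}(x)$ and then to $z:=h_{\lambda_1,\mu_1}(y)$, each time splitting the numerator as $\rho-\lambda_i w=\rho(1-\lambda_i)-\lambda_i(w-\rho)$. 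Writing $y-\rho$ as an affine function of $(x-\rho)$ and substituting into the formula for $z-\rho$ yields
\[
z-\rho \;=\; T_3\,(x-\rho)\;+\;T_1\;-\;T_2 ,
\]
where $T_3=\frac{\lambda_1\lambda_2(\mu_1-\rho)(\mu_2-\rho)}{\rho^{2}(1+\lambda_1\mu_1 y)(1+\lambda_2\mu_2 x)}$ is the amortized decay factor, $T_1=\frac{(\mu_1-\rho)(1-\lambda_1)}{1+\lambda_1\mu_1 y}$, and $T_2=\frac{\lambda_1(\mu_1-\rho)(\mu_2-\rho)(1-\lambda_2)}{\rho\,(1+\lambda_1\mu_1 y)(1+\lambda_2\mu_2 x)}$. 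Since $\mu_i-\rho\ge 0$ here, the triangle inequality reduces the lemma to the two bounds $T_3\le\alpha_2$ and $|T_1|+|T_2|\le\delta_2$.

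The crucial step — and the reason one iteration does not suffice while two do — is controlling the product of the two denominators. For $\mu\ge\rho$ the map $h^{c}_{\lambda,\mu}$ is non-increasing (its derivative has the sign of $1+c\mu-\mu^{2}\le 0$, since $\rho$ is the positive root of $t^{2}-ct-1$), so a single iterate need not contract; but the recursion itself couples the two denominators. Concretely $y\,(1+\lambda_2\mu_2 x)=\mu_2+(1+c\mu_2)\lambda_2 x\ge\mu_2$, so
\[
(1+\lambda_1\mu_1 y)(1+\lambda_2\mu_2 x)\;\ge\;\lambda_1\mu_1\,y\,(1+\lambda_2\mu_2 x)\;\ge\;\lambda_1\mu_1\mu_2 .
\]
Using $\mu_i-\rho<\mu_i$ this gives $T_3\le\frac{1}{\rho^{2}}\cdot\frac{\lambda_1\mu_1\lambda_2\mu_2}{\lambda_1\mu_1\mu_2}=\frac{\lambda_2}{\rho^{2}}\le\frac{1+k}{\rho^{2}}=\alpha_2$, which is strictly below $1$ exactly because $k<\rho^{2}-1$. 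The very same denominator estimate, together with $|1-\lambda_2|\le k$, immediately yields $|T_2|\le\frac{\lambda_1\mu_1\mu_2\,k}{\rho\,\lambda_1\mu_1\mu_2}=\frac{k}{\rho}$, the second term of $\delta_2$.

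For $T_1$ I would use $\frac{\lambda_1\mu_1}{1+\lambda_1\mu_1 y}<\frac{1}{y}$ together with a uniform lower bound on $y$. Since $h_{\lambda_2,\mu_2}$ is non-increasing with $\lim_{x\to\infty}h_{\lambda_2,\mu_2}(x)=c+\frac{1}{\mu_2}>\frac{(1-k)\mu^{2}}{1+(1-k)\mu^{2}}c$, we get $y\ge\frac{(1-k)\mu^{2}}{1+(1-k)\mu^{2}}c$ for every warm $x$ (this lower bound is also exactly what warmness guarantees and is preserved under the step). With $\lambda_1\ge 1-k$ this gives $|T_1|\le\frac{k}{1-k}\cdot\frac{1+(1-k)\mu^{2}}{(1-k)\mu^{2}c}=\frac{k\,(1+(1-k)\mu^{2})}{(1-k)^{2}\mu^{2}c}$, the first term of $\delta_2$, and adding the $|T_2|$ estimate completes the proof. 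The one place I expect to need care is the denominator bound $y\,(1+\lambda_2\mu_2 x)=\mu_2+(1+c\mu_2)\lambda_2 x$: estimating $x$ and $y$ in isolation is too lossy, and it is exactly this coupling coming from the Fibonacci recursion that collapses the product of the two ``transfer factors'' into the clean quantity $\lambda_2\le 1+k$.
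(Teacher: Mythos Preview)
Your proof is correct and follows essentially the same route as the paper's: the same one-step identity $h^{c}_{\lambda,\mu}(x)-\rho=\frac{(\mu-\rho)(\rho-\lambda x)}{\rho(1+\lambda\mu x)}$ is applied twice and split the same way, your $T_3,T_1,T_2$ coincide with the paper's $A,\,B(\lambda_1-1),\,C(\lambda_2-1)$, and the three estimates match. Your handling of the key denominator via the recursion relation $y\,(1+\lambda_2\mu_2 x)=\mu_2+(1+c\mu_2)\lambda_2 x\ge\mu_2$ is a slightly cleaner packaging of what the paper obtains by expanding $(1+\lambda_1\mu_1 y)(1+\lambda_2\mu_2 x)$ and dropping the $x$-terms, but the content is identical.
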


\begin{proof}
  \begin{align}
    &\left|h_{\lambda_1,\mu_1}h_{\lambda_2,\mu_2}(x) - \rho\right|\notag\\
    =&\left|\frac{\rho-\mu_1}{\rho}\cdot\frac{\lambda_1}{1+h_{\lambda_2,\mu_2}(x)\mu_1\lambda_1}\left(h_{\lambda_2,\mu_2}(x)-\rho\right)+(\lambda_1-1)\frac{\rho - \mu_1}{1+h_{\lambda_2,\mu_2}(x)\lambda_1\mu_1}\right|\notag\\
    =&\left|\frac{\rho-\mu_1}{\rho}\cdot\frac{\lambda_1}{1+h_{\lambda_2,\mu_2}(x)\mu_1\lambda_1}\left(
        \frac{\rho-\mu_2}{\rho}\cdot\frac{\lambda_2}{1+x\mu_2\lambda_2}\left(x-\rho\right)+(\lambda_2-1)\frac{\rho - \mu_2}{1+x\lambda_2\mu_2}\right) + \right.\notag\\
    &\quad\;\left.(\lambda_1-1)\frac{\rho - \mu_1}{1+h_{\lambda_2,\mu_2}(x)\lambda_1\mu_1}\right|\label{eqn:twice}
  \end{align}
  Take
  \begin{align*}
    A& = \left|\frac{\lambda_1\lambda_2}{\rho^2}\cdot\frac{\mu_1-\rho}{1+h_{\lambda_2,\mu_2}(x)\mu_1\lambda_1}\cdot\frac{\mu_2-\rho}{1+x\mu_2\lambda_2}\right|,\\
    B& = \left|\frac{\rho-\mu_1}{1+h_{\lambda_2,\mu_2}(x)\mu_1\lambda_1}\right|,\\
    C& = \left|\frac{\mu_2-\rho}{1+x\mu_2\lambda_2}\cdot\frac{\mu_1-\rho}{1+h_{\lambda_2,\mu_2}(x)\mu_1\lambda_1}\cdot\frac{\lambda_1}{\rho}\right|.
  \end{align*}
  It holds that
  \begin{align*}
    A
    &\le \frac{\lambda_1\lambda_2\mu_1\mu_2}{\rho^2}\frac{\rho}{\lambda_1\rho\mu_1\mu_2+\left(\lambda_1\lambda_2\rho\mu_1+\left(\left(\lambda_1\lambda_2\left(\rho^2-1\right)\right)\mu_1+\lambda_2\rho\right)\mu_2\right)x+\rho}\\
    &\le \frac{\lambda_1\lambda_2\mu_1\mu_2}{\rho^2}\frac{1}{\lambda_1\mu_1\mu_2+1}\\
    &\le\frac{\lambda_2}{\rho^2}\le\frac{1+k}{\rho^2},\\
    B&\le\left|\frac{\mu_1}{1+h_{\lambda_2,\mu_2}(x)\mu_1\lambda_1}\right|\le\frac{1}{h_{\lambda_2,\mu_2}(x)(1-k)}\le\frac{1+(1-k)\mu^2}{(1-k)^2\mu^2c},\\
    C&=A\cdot\frac{\rho}{\lambda_2}\le\frac{1}{\rho}.
  \end{align*}
  Then
  \begin{align*}
    (\ref{eqn:twice})
    &= \left|A\cdot|x-\rho| + B\cdot(\lambda_1-1) + C\cdot(\lambda_2-1)\right|\\
    &\le |A|\cdot|x-\rho|+(|B|+|C|)\cdot k\\
    &\le\frac{1+k}{\rho^2}|x-\rho|+\left(\frac{1+(1-k)\mu^2}{(1-k)^2\mu^2c}+\frac{1}{\rho}\right)k.
  \end{align*}
  Then $\alpha_2 = \frac{1+k}{\rho^2} < 1$ and $\delta_2 = \left(\frac{1+(1-k)\mu^2}{(1-k)^2\mu^2c}+\frac{1}{\rho}\right)k$.
\end{proof}

\begin{lemma}\label{lem:criterion}
  Consider functions in $\Hf{c}{c}{\mu}{\lambda}{\lambda'}$ and define $k = \max\{|\lambda-1|,|\lambda'-1|\}$. We assume that $k$ satisfies $k<\rho^2-1$ and $k(1+(1-k)\mu^2)<c(1-k^2)\mu^3$. There exist constants $M,\delta,\alpha<1$ such that for any sequence of $d>0$ functions $h_1,h_2,\dots,h_d\in\Hf{c}{c}{\mu}{\lambda}{\lambda'}$ and any warm $x$, if the sequence satisfies one of following three criterions:
  \begin{itemize}
  \item [(1)] $d=1$ and $h_1$ has its corresponding $\mu\le\rho$;
  \item [(2)] $d\le M$ and exact $h_1$ and $h_d$ in the sequence have their corresponding $\mu>\rho$;
  \item [(3)] $d=M$ and exact $h_d$ has its corresponding $\mu>\rho$,
  \end{itemize}
  then
  \[
  \left|h_dh_{d-1}\cdots h_1(x)-\rho\right|\le\alpha\left|x-\rho\right|+\delta.
  \]
\end{lemma}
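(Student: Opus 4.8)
The claim is a packaging of the single-step estimates in Lemmas~\ref{lem:smallmu} and~\ref{lem:bigmu}: one chains them over a window of bounded length, using Lemma~\ref{lem:warm} to keep every intermediate value inside the warm interval so that those lemmas apply, and Lemma~\ref{lem:fpbound} to guarantee that the fixpoints appearing in Lemma~\ref{lem:warm} are themselves warm. Concretely, I would first fix $y_{\min}=\frac{\lambda\mu^{2}}{1+\lambda\mu^{2}}c$ and $y_{\max}$ for the endpoints of the warm interval, check by a one-line computation that $\rho$ lies strictly inside $[y_{\min},y_{\max}]$ (using $\mu\le\rho$, which holds for the relevant families), and note that by Lemma~\ref{lem:fpbound} every fixpoint $x^{*}$ of a function in $\Hf{c}{c}{\mu}{\lambda}{\lambda'}$ satisfies $|x^{*}-\rho|\le\frac{4\rho k|\mu-\rho|}{(\rho^{2}+1)\lambda\mu}$, hence is warm for $k$ as small as assumed. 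Given this, Lemma~\ref{lem:warm} shows $h_{j}h_{j-1}\cdots h_{1}(x)$ is warm for all $j\ge 2$ and all $x\ge 0$, and a short monotonicity argument (the map $h^{c}_{\lambda_{0},\mu_{0}}$ is increasing, constant, or decreasing according as $\mu_{0}<\rho$, $=\rho$, or $>\rho$, and $\lambda_{0}\ge\lambda$) handles $j=1$ on warm inputs. So every invocation of Lemma~\ref{lem:smallmu} or~\ref{lem:bigmu} below will have its warmness hypothesis met.

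Next I would record a single-step bound for a function whose parameter exceeds $\rho$. From the exact identity underlying Lemmas~\ref{lem:smallmu} and~\ref{lem:bigmu},
\[
h^{c}_{\lambda_{0},\mu_{0}}(y)-\rho=\frac{\rho-\mu_{0}}{\rho}\cdot\frac{\lambda_{0}}{1+y\mu_{0}\lambda_{0}}\,(y-\rho)+(\lambda_{0}-1)\,\frac{\rho-\mu_{0}}{1+y\lambda_{0}\mu_{0}},
\]
together with $y\ge y_{\min}$ from warmness, one gets for every $\mu_{0}>\rho$ and every warm $y$
\[
|h^{c}_{\lambda_{0},\mu_{0}}(y)-\rho|\le L_{0}\,|y-\rho|+\delta_{0},\qquad L_{0}=\frac{1}{\rho\,y_{\min}},\quad\delta_{0}=\frac{k}{\lambda\,y_{\min}},
\]
the crucial feature being that the additive part is $O(k)$; $L_{0}$ is a fixed constant, and is already $<1$ when $\rho y_{\min}>1$.

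Now the chaining. Criterion (1) is literally Lemma~\ref{lem:smallmu}, giving $\alpha=\alpha_{1}$, $\delta=\delta_{1}$. For criterion (3), apply Lemma~\ref{lem:smallmu} to $h_{1},\dots,h_{d-1}$ (parameters $\le\rho$) and the single-step bound above to $h_{d}$: the multiplicative factor accumulates to $L_{0}\alpha_{1}^{M-1}$, which is $<1$ once $M$ is a large enough constant, and the additive part is a geometric sum of $O(k)$ terms, hence $O(k)$. For criterion (2) with $d=2$, the two functions are a consecutive pair with parameters $>\rho$, so Lemma~\ref{lem:bigmu} applies directly and yields the factor $\alpha_{2}=\frac{1+k}{\rho^{2}}$, which is $<1$ precisely because $k<\rho^{2}-1$, together with $\delta_{2}=O(k)$. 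For criterion (2) with $3\le d\le M$, apply the single-step $\mu>\rho$ bound to $h_{1}$, Lemma~\ref{lem:smallmu} to the middle block $h_{2},\dots,h_{d-1}$, and the single-step $\mu>\rho$ bound to $h_{d}$; since $d\le M$ is bounded, the accumulated additive part is again a bounded sum of $O(k)$ terms. Finally, set $\alpha$ to be the largest of the finitely many multiplicative factors that can occur and $\delta$ the largest of the corresponding $O(k)$ drift bounds.

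The one genuinely delicate point is certifying that each multiplicative factor produced above is strictly below $1$, in particular controlling the isolated $\mu>\rho$ steps at the ends of type-(2) and type-(3) windows, which in general do not contract by themselves. This is exactly where the two hypotheses on $k$ enter ($k<\rho^{2}-1$ makes $\alpha_{2}<1$, and $k(1+(1-k)\mu^{2})<c(1-k^{2})\mu^{3}$ is the inequality $\alpha_{1}<1$), and it is why the window length $M$ in criteria (2) and (3) must be a large constant: in the long windows the many $\alpha_{1}$-contractions from the $\mu\le\rho$ functions must overpower the finitely many constant-size factors $L_{0}$, and in the short type-(2) windows one must check that the one-step $\mu>\rho$ estimates compose to something below $1$ under the given parameter constraints. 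Tracking the additive error is by contrast routine: since $\delta_{1},\delta_{2},\delta_{0}$ are all $O(k)$ and there is genuine contraction at each step, the per-window drift turns into a convergent geometric series, so the total error stays $O(k)$ uniformly in the length of the composition, which is what the downstream argument needs.
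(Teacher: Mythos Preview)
Your treatment of criteria (1) and (3) is essentially the paper's argument: (1) is Lemma~\ref{lem:smallmu} verbatim, and for (3) you chain $M-1$ contractions by $\alpha_1$ against one possibly-expanding step $L_0$, choosing $M$ large enough that $L_0\alpha_1^{M-1}<1$. That part is fine.

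The gap is in criterion (2) for $3\le d\le M$. Your chaining gives a multiplicative factor $L_0\cdot\alpha_1^{d-2}\cdot L_0=L_0^{2}\alpha_1^{d-2}$, and for small $d$ (already $d=3$) this is $L_0^{2}\alpha_1$, which is not guaranteed to be below $1$: $L_0=1/(\rho\,y_{\min})$ can be large when $\mu$ is small, and nothing in the hypotheses forces $\alpha_1<1/L_0^{2}$. Choosing $M$ large does not help, because criterion (2) must cover every $d$ from $2$ up to $M$. You flag this in your last paragraph (``in the short type-(2) windows one must check\ldots'') but do not resolve it, and with the pure chaining approach it cannot be resolved.

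The paper's fix is structural rather than quantitative. The middle functions $h_2,\dots,h_{d-1}$ all have parameter $\le\rho$, hence are increasing with (unique positive) fixpoint within $\gamma$ of $\rho$ by Lemma~\ref{lem:fpbound}. An increasing map with fixpoint $x^*$ never moves a point past $x^*$, so iterating the middle block on $y_0=h_1(x)$ either keeps the value between $y_0$ and $\rho$ (case~(a)) or lands it in $[\rho-\gamma,\rho+\gamma]$ (case~(b)). In case~(a), since $h_d$ has parameter $>\rho$ and is therefore decreasing, $h_d$ applied to anything between $h_1(x)$ and $\rho$ is sandwiched between $h_d(\rho)$ and $h_dh_1(x)$; the former is $O(k)$ from $\rho$ and the latter is controlled by Lemma~\ref{lem:bigmu} applied to the pair $h_d,h_1$ (both with $\mu>\rho$), giving the factor $\alpha_2=(1+k)/\rho^2<1$. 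In case~(b) one bounds $|h_d(\rho\pm\gamma)-\rho|$ explicitly. Either way the factor is $\alpha_2$, independent of $d$, and the additive term is $O(k)$. This monotonicity/fixpoint sandwich is the missing idea; it replaces the product of per-step Lipschitz constants, which is what makes your approach blow up on short type-(2) windows.
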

\begin{proof}
  Assume $h_i=h_{\lambda_i,\mu_i}$ for every $1\le i\le d$. We consider three criterions respectively:
  \begin{itemize}
  \item [(1)] We can take $\alpha=\alpha_1$ and $\delta=\delta_1$.
  \item [(2)] For every $1\le i\le d$, define $\gamma_i=\frac{4\rho\left|\lambda_i-1\right|\left|\mu_i-\rho\right|}{(\rho^2+1)\lambda_i\mu}$. For every $2\le i\le d-1$, $h_i$ is an increasing function, then due to Lemma \ref{lem:fpbound}, for any $x\ge 0$,
    \begin{itemize}
    \item If $x\le\rho$, then
      \[
      \min\{x,\rho-\gamma_i\}\le h_i(x)\le\rho+\gamma_i.
      \]
    \item If $x\ge\rho$, then
      \[
      \rho-\gamma_i\le h_i(x)\le\max\{x,\rho+\gamma_i\}.
      \]
    \end{itemize}
    Let $\gamma=\max_{1\le i\le d}\gamma_i$, then for any $x\ge 0$, one of following two must be true:
    \begin{itemize}
    \item [(a)]
      \[
      \left|h_dh_{d-1}\dots h_1(x)-\rho\right|\le\left|h_dh_1(\rho)\right|.
      \]
    \item [(b)]
      \[
      \left|h_dh_{d-1}\dots h_1(x)-\rho\right|\le\max\left\{\left|h_d(\rho+\gamma)-\rho\right|,\left|h_d(\rho-\gamma)-\rho\right|\right\}.
      \]
    \end{itemize}
    Notice that $h_d(x)=h_{\lambda_d,\mu_d}(x)$ is monotone on $\mu_d$ for fixed $x$, thus
    \begin{align*}
      \min\left\{%
        c+\frac{1}{\lambda_d(\rho+\gamma)},%
        \frac{\mu+(1+c\mu)\lambda_d(\rho+\gamma)}{1+\mu\lambda_d(\rho+\gamma)}%
      \right\}&\le h_d(\rho+\gamma)\\
      &\le\max\left\{
        c+\frac{1}{\lambda_d(\rho+\gamma)},%
        \frac{\mu+(1+c\mu)\lambda_d(\rho+\gamma)}{1+\mu\lambda_d(\rho+\gamma)}%
      \right\},\\
      \min\left\{%
        c+\frac{1}{\lambda_d(\rho-\gamma)},%
        \frac{\mu+(1+c\mu)\lambda_d(\rho-\gamma)}{1+\mu\lambda_d(\rho-\gamma)}%
      \right\}&\le h_d(\rho-\gamma)\\
      &\le\max\left\{%
        c+\frac{1}{\lambda_d(\rho-\gamma)},%
        \frac{\mu+(1+c\mu)\lambda_d(\rho-\gamma)}{1+\mu\lambda_d(\rho-\gamma)}%
      \right\}.
    \end{align*}
    Therefore we can take $\alpha=\alpha_2$ and
    \begin{align*}
      \delta&=\max\left\{%
        \delta_2,%
        \left|\frac{1}{\lambda_d(\rho+\gamma)}-\frac{1}{\rho}\right|,%
        \left|\frac{\mu+(1+c\mu)\lambda_d(\rho+\gamma)}{1+\mu\lambda_d(\rho+\gamma)}-\rho\right|,\right.\\%
      &\left.\quad\quad\quad\quad\quad%
        \left|\frac{1}{\lambda_d(\rho-\gamma)}-\frac{1}{\rho}\right|,%
        \left|\frac{\mu+(1+c\mu)\lambda_d(\rho-\gamma)}{1+\mu\lambda_d(\rho-\gamma)}-\rho\right|
      \right\}.
    \end{align*}

  \item [(3)] Assume $h_1=h_{\lambda_1,\mu_1}$, then
    \begin{align*}
      \left|h_1(x)-\rho\right|
      &\le \left|\frac{\rho-\mu}{\rho}\cdot\frac{\lambda_1}{1+x\mu\lambda_1}\right|\left|x-\rho\right|+\left|\lambda_1-1\right|\left|\frac{\rho - \mu}{1+x\lambda_1\mu}\right|\\
      &\le \frac{1}{\rho x}\left|x-\rho\right|+\frac{k}{\lambda_1 x}.
    \end{align*}
    Let $\alpha' = \frac{1}{\rho x}$ and $M$ be the number such that $\alpha'\alpha_1^M<\alpha_1<1$, then we can take $\alpha = \alpha'\alpha^M$ and take $\delta = \alpha'\cdot\frac{\delta_1}{1-\alpha_1}+\frac{k}{\lambda_1 x}$.
  \end{itemize}
\end{proof}

Let $h_1,h_2,\dots,h_d\in\Hf{c_1}{c_2}{p}{\lambda_1}{\lambda_2}$ be a sequence of functions. If for every function $h_i$ and every $x\ge 0$, we have $|h_i(x)-\rho|\le \alpha |x_i-\rho|+\delta$ holds for some $\alpha<1$ and $\delta$, then for every $x\ge 0$,
\[
|h_dh_{d-1}\dots h_1(x)-\rho|<\alpha^d|x-\rho|+\frac{\delta}{1-\alpha}
\]
holds.

Consdier functions in $\Hf{c}{c}{p}{\lambda_1}{\lambda_2}$ and define $k=\max\left\{|\lambda_1-1|,|\lambda_2-1|\right\}$. Assume $k<1/2$, then for a sequence of functions $f_1,\dots,f_d\in\Hf{c}{c}{p}{\lambda_1}{\lambda_2}$ that satisfies one of three criterions in Lemma~\ref{lem:criterion}, it holds that for every warm $x$,
\[
|h_dh_{d-1}\dots h_1(x)-\rho|\le\alpha|x-\rho|+\delta
\]
where
\begin{align*}
  \delta
  &\le\max\left\{\Delta_1,\Delta_2,\Delta_3,\Delta_4,\Delta_5,\Delta_6\right\}
\end{align*}
and
\[
\alpha\le\max\{\alpha_1,\alpha_2\},
\]
for
\begin{align*}
  \Delta_1&=\delta_2=\left(\frac{1+(1-k)p^2}{(1-k)^2p^2c}+\frac{1}{\rho}\right)k,\\
  \Delta_2&=\frac{1}{\rho x}\cdot\frac{\delta_1}{1-\alpha_1}+\frac{k}{(1-k)x},\\
  \Delta_3&=\max_{\lambda\in[\lambda_1,\lambda_2]}\left\{\left|\frac{1}{\lambda(\rho+\gamma)}-\frac{1}{\rho}\right|\right\},\\
  \Delta_4&=\max_{\lambda\in[\lambda_1,\lambda_2]}\left\{\left|\frac{p+(1+cp)\lambda(\rho+\gamma)}{1+p\lambda(\rho+\gamma)}-\rho\right|\right\},\\
  \Delta_5&=\max_{\lambda\in[\lambda_1,\lambda_2]}\left\{\left|\frac{1}{\lambda(\rho-\gamma)}-\frac{1}{\rho}\right|\right\},\\
  \Delta_6&=\max_{\lambda\in[\lambda_1,\lambda_2]}\left\{\left|\frac{p+(1+cp)\lambda(\rho-\gamma)}{1+p\lambda(\rho-\gamma)}-\rho\right|\right\},\\
  \alpha_1&=\frac{1+k}{1+\frac{c(1-k^2)p^3}{1+(1-k)p^2}},\quad\delta_1= k\rho,\\
  \alpha_2&= \frac{1+k}{\rho^2},\\
  \gamma&\le\max_{\lambda\in[\lambda_1,\lambda_2],\mu\in[p,+\infty]}\left\{\frac{4\rho|\lambda-1||\mu-\rho|}{(\rho^2+1)\lambda\mu}\right\}\le\frac{4\rho^2k}{(1+\rho^2)(1-k)p}
\end{align*}

In the following, we shall bound $\frac{1}{1-\alpha_1}$, $\frac{1}{1-\alpha_2}$ and each $\Delta_i$ respectively. Since
\[
\alpha_1
=\frac{1+k}{1+\frac{c(1-k^2)p^3}{1+(1-k)p^2}}
\le\frac{1+k}{1+\frac{3}{4}\cdot\frac{cp^3}{1+p^2}}
=\frac{4(1+k)(1+p^2)}{4+4p^2+3cp^3},
\]
we have
\[
\frac{1}{1-\alpha_1}
\le\frac{4+4p^2+3cp^3}{3cp^4-4(1+p^2)k}.
\]
If we require that $k<\min\{\frac{cp^4}{2(1+p^2)},\frac{c^2}{2}\}$, then
\[
\frac{1}{1-\alpha_1}<\frac{4+4p^2+3cp^3}{cp^4}<\frac{11(1+p^3)(1+c)}{p^4c}.
\]
Using the fact that $\rho^2\ge c^2+1$, we have
\begin{align*}
  \frac{1}{1-\alpha_2}
  &=\frac{\rho^2}{\rho^2-k-1}\\
  &\le\frac{c^2+1}{c^2-k}\\
  &\le\frac{2(c^2+1)}{c^2}
\end{align*}

\begin{align*}
  \Delta_1
  &=\left(\frac{1+(1-k)p^2}{(1-k)^2p^2c}+\frac{1}{\rho}\right)k\\
  &\le\left(\frac{1+(1-k)p^2}{(1-k)^2p^2c}+\frac{1}{c}\right)k\\
  &=\frac{1+(1-k)p^2+(1-k)^2p^2}{(1-k)^2p^2c}\cdot k\\
  &\le\frac{1+p^2+p^2}{(1-k)^2p^2c}\cdot k\\
  &\le\frac{8(1+p^2)k}{p^2c}
\end{align*}

\begin{align*}
  \Delta_2
  &=\frac{1}{\rho x}\cdot\frac{\delta_1}{1-\alpha_1}+\frac{k}{(1-k)x}\\
  &\le\frac{k}{x(1-\alpha_1)}+\frac{k}{(1-k)x}\\
  &=\frac{k}{x}\left(\frac{1}{1-\alpha_1}+\frac{1}{1-k}\right)\\
  &\le\frac{\left(1+(1-k)p^2\right)k}{(1-k)p^2c}\left( \frac{11(1+p^3)(1+c)}{p^4c} +\frac{1}{1-k}\right)\\
  &\le\frac{2\left(1+p^2\right)k}{p^2c}\left( \frac{11(1+p^3)(1+c)}{p^4c}+2\right)\\
  &\le\frac{70(1+p^2)(1+p^4)(1+c)k}{p^6c^2}\\
  &\le\frac{210(1+p^6)(1+c)k}{p^6c^2}
\end{align*}

It follows from monotonicity that
\begin{align*}
  &\max\left\{\Delta_3,\Delta_4,\Delta_5,\Delta_6\right\}\\
  \le&\max\left\{%
    \frac{p+(1+cp)(1+k)(\rho+\gamma)}{1+p(1+k)(\rho+\gamma)}-\rho,
    \rho-\frac{p+(1+cp)(1-k)(\rho-\gamma)}{1+p(1-k)(\rho-\gamma)},\right.\\
  &\quad\quad\quad\left.
    \frac{1}{(1-k)(\rho-\gamma)}-\frac{1}{\rho},
    \frac{1}{\rho}-\frac{1}{(1+k)(\rho+\gamma)}
  \right\}.
\end{align*}
If we require that $k<\frac{p(1+\rho^2)}{16\rho}$, then we have
\begin{align*}
  \frac{p+(1+cp)(1+k)(\rho+\gamma)}{1+p(1+k)(\rho+\gamma)}-\rho
  &=\frac{p-\rho+(1+k)(\rho+\gamma)(1+cp-\rho p)}{1+p(1+k)(\rho+\gamma)}\\
  &\le\frac{k\rho+\gamma(1+k)}{p(\rho+\gamma)}\\
  &\le\frac{k\left(1+\frac{12\rho}{(1+\rho^2)p}\right)}{p}\\
  &\le\frac{12(1+p)k}{p^2}
\end{align*}

\begin{align*}
  \rho-\frac{p+(1+cp)(1-k)(\rho-\gamma)}{1+p(1-k)(\rho-\gamma)}
  &=\frac{\rho-p+(1-k)(\rho-\gamma)(\rho p-c\mu-1)}{1+p(1-k)(\rho-\gamma)}\\
  &\le\frac{2(k\rho+\gamma)}{p(\rho-\gamma)}\\
  &\le\frac{32(1+p)k}{p^2}
\end{align*}

\begin{align*}
  \frac{1}{(1-k)(\rho-\gamma)}-\frac{1}{\rho}
  &=\frac{\rho-(1-k)(\rho-\gamma)}{(1-k)\rho(\rho-\gamma)}\\
  &\le\frac{2(\gamma+k\rho)}{\rho(\rho-\gamma)}\\
  &\le\frac{32(1+p)k}{cp}
\end{align*}
\begin{align*}
  \frac{1}{\rho}-\frac{1}{(1+k)(\rho+\gamma)}
  &=\frac{(1+k)(\rho+\gamma)-\rho}{\rho(1+k)(\rho+\gamma)}\\
  &\le\frac{k\rho+\gamma(1+k)}{\rho(\rho+\gamma)}\\
  &\le\frac{12(1+p)k}{cp}
\end{align*}

Take all bounds into account, we have
\begin{align*}
  \delta\le&\max\left\{\frac{11(1+p^3)(1+c)}{p^4c}, \frac{2(c^2+1)}{c^2}\right\}\cdot\\%
  &\max\left\{%
    \frac{8(1+p^2)}{cp^2},
    \frac{210(1+p^6)(1+c)}{p^6c^2},
    \frac{32(1+p)}{p^2},
    \frac{32(1+p)}{cp}
  \right\}\cdot k%
\end{align*}

\begin{lemma}
  Let $c,p,\lambda_1,\lambda_2,L>0$ and $h_1,h_2,\dots,h_L\in\Hf{c}{c}{p}{\lambda_1}{\lambda_2}$ and define $k=\max\left\{|\lambda_1-1|,|\lambda_2-1|\right\}$. If $k<\min\left\{1/2, \frac{cp^4}{2(1+p^2)},\frac{c^2}{2}, \frac{p(1+\rho^2)}{16\rho}\right\}$, then for any warm $x$,
  \[
  h_Lh_{L-1}\dots h_1(x)\in[R_1,R_2]
  \]
  for $R_1=\rho-\Delta, R_2=\rho+\Delta$ where
  \begin{align*}
  \Delta
  &=\max\left\{\frac{11(1+p^3)(1+c)}{p^4c}, \frac{2(c^2+1)}{c^2},\right\}\cdot\max\left\{%
    \frac{8(1+p^2)}{cp^2},
    \frac{210(1+p^6)(1+c)}{p^6c^2},
    \frac{32(1+p)}{p^2},
    \frac{32(1+p)}{cp}
  \right\}\cdot k\\%
  &\quad\quad+\left(\max\{\frac{4(1+k)(1+p^2)}{4+4p^2+3cp^3},\frac{1+k^2}{1+c^2}\}\right)^{g(L)}\cdot|x-\rho|
  \end{align*}
  and $g:\mathbb{N}\to\mathbb{N}$ is a non-decreasing and unbounded function.
\end{lemma}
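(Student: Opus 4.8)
The plan is to reduce the statement to the per-block contraction estimate obtained after Lemma~\ref{lem:criterion}: cut the composition $h_L\circ h_{L-1}\circ\cdots\circ h_1$ into consecutive blocks, each matching one of the three criterions of Lemma~\ref{lem:criterion}, apply the uniform contraction bound to each block, and telescope.

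The first and central step is combinatorial: partition the sequence $h_1,\dots,h_L$ into consecutive blocks $B_1,\dots,B_m$ so that each $B_i$, read as a subsequence, satisfies criterion (1), (2) or (3) of Lemma~\ref{lem:criterion}. Call $h_i$ \emph{good} if its $\mu$-parameter is at most $\rho$ and \emph{bad} otherwise. The recipe: good functions not adjacent (in the relevant sense) to a bad one become criterion-(1) singletons; a bad function within distance $M$ of the next bad function is absorbed, together with the good functions between them, into a criterion-(2) block of length at most $M$; and any remaining bad function --- which then has at least $M-1$ still-unclaimed good functions immediately before it --- is absorbed with those into a criterion-(3) block of length exactly $M$. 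A run of good functions longer than $M$ merely produces extra criterion-(1) singletons. In all cases $|B_i|\le M$, so $m\ge\lceil L/M\rceil$; set $g(L)=\lceil L/M\rceil$, which is non-decreasing and unbounded.

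Next, apply the uniform per-block estimate: by Lemma~\ref{lem:criterion} and the estimates derived after it, for every block $B_i$ and every \emph{warm} input $y$ one has $\lvert B_i(y)-\rho\rvert\le\alpha\lvert y-\rho\rvert+\delta$, with $\alpha\le\max\{\alpha_1,\alpha_2\}$ and $\delta\le\max\{\Delta_1,\dots,\Delta_6\}$ independent of $i$. To iterate this I need warmness preserved by each $B_i$; starting from the warm $x$ in the hypothesis, this follows from the two-function bounds of Lemma~\ref{lem:warm} (applied to the ``trapping'' good functions and to the bad functions inside a block) together with the fixpoints $x^{*}$ lying inside the warm interval, which is where the smallness of $k$ enters. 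Telescoping the per-block inequalities along $B_1,\dots,B_m$ then gives
\[
\lvert h_L\cdots h_1(x)-\rho\rvert\;\le\;\alpha^{m}\lvert x-\rho\rvert+\delta\sum_{j=0}^{m-1}\alpha^{j}\;\le\;\alpha^{g(L)}\lvert x-\rho\rvert+\frac{\delta}{1-\alpha},
\]
using $\alpha<1$ and $m\ge g(L)$. Finally, substitute the explicit bounds already computed before the statement --- the bounds on $\alpha_1,\alpha_2$ (hence on the base $\alpha$ of the power $g(L)$), the bounds $\tfrac{1}{1-\alpha_1}\le\tfrac{11(1+p^3)(1+c)}{p^4c}$ and $\tfrac{1}{1-\alpha_2}\le\tfrac{2(c^2+1)}{c^2}$, and the bound on $\delta$ by $k$ times the second maximum in the product defining $\Delta$ --- to read off exactly the $\Delta$ in the statement, with $R_1=\rho-\Delta$, $R_2=\rho+\Delta$.

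The step I expect to be the main obstacle is the block decomposition: showing that such a tiling exists for every pattern of good and bad functions, the awkward case being a bad function far from every other bad one (which cannot be paired via criterion (2), so one is forced into a length-$M$ criterion-(3) block), together with handling the two ends of the sequence without breaking the tiling. The secondary technical point is verifying that warmness is an invariant under each block, so that the per-block estimate applies at every stage; this is precisely where the hypothesis $k<\min\{1/2,\tfrac{cp^4}{2(1+p^2)},\tfrac{c^2}{2},\tfrac{p(1+\rho^2)}{16\rho}\}$ and the side conditions of Lemma~\ref{lem:criterion} on $k$ get consumed.
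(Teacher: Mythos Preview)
Your proposal is correct and follows essentially the same approach as the paper: group the sequence into consecutive blocks matching the three criteria of Lemma~\ref{lem:criterion}, apply the uniform per-block contraction, telescope, and plug in the explicit bounds already derived. The paper's own proof is in fact just two sentences asserting that such a grouping exists and defining $g(L)$ as the minimum number of groups over all length-$L$ sequences; your explicit choice $g(L)=\lceil L/M\rceil$ is a valid (slightly weaker) alternative, and your discussion of the block decomposition and of warmness preservation fills in details the paper leaves implicit.
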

\begin{proof}
  The lemma follows from previous discussion and the fact that any sequence of $L$ functions can be consecutively grouped such that each group satisfies one of three criterions in Lemma~\ref{lem:criterion}. Thus
  \[
  g(L)=\min_{\mbox{a sequence of $L$ functions in $\Hf{c}{c}{p}{\lambda_1}{\lambda_2}$}}\{\mbox{number of groups in $f_L,f_{L-1},\dots,f_1$}\}.
  \]
\end{proof}

\begin{corollary}\label{cor:conclusion}
  Let $c,p,\lambda_1,\lambda_2,L>0$ and $\dli$ be an instance of  ${\rm Holant}({\cal F}_{c, c}^{p}, \Lambda_{\lambda_1,\lambda_2} )$ with $SP(\dli)\geq L+2$. Define $k=\max\left\{|\lambda_1-1|,|\lambda_2-1|\right\}$. If $k<\min\left\{1/2, \frac{cp^4}{2(1+p^2)},\frac{c^2}{2}, \frac{p(1+\rho^2)}{16\rho}\right\}$. Then $R(\dli)\in[R_1,R_2]$ for $R_1=\rho-\Delta, R_2=\rho+\Delta$ where
  \begin{align*}
  \Delta
  &=\max\left\{\frac{11(1+p^3)(1+c)}{p^4c}, \frac{2(c^2+1)}{c^2},\right\}\cdot\max\left\{%
    \frac{8(1+p^2)}{cp^2},
    \frac{210(1+p^6)(1+c)}{p^6c^2},
    \frac{32(1+p)}{p^2},
    \frac{32(1+p)}{cp}
  \right\}\cdot k\\%
  &\quad\quad+\left(\max\{\frac{4(1+k)(1+p^2)}{4+4p^2+3cp^3},\frac{1+k^2}{1+c^2}\}\right)^{g(L)}\cdot\max\left\{\left|\frac{\lambda\mu^2}{1+\lambda\mu^2}\cdot c-\rho\right|,\left|\frac{\mu+(1+c\mu)\lambda c}{\mu\lambda c}-\rho\right|,\left|\frac{1}{\mu}-\frac{1}{\rho}\right|\right\}
  \end{align*}
  and $g:\mathbb{N}\to\mathbb{N}$ is a non-decreasing and unbounded function. Moreover, it holds that $\lim_{L\to\infty,k\to 0}\Delta=0$.
\end{corollary}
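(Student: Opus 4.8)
The plan is to obtain Corollary~\ref{cor:conclusion} by combining three ingredients that are already in place: Lemma~\ref{lemma:bound-funtion-to-instance} (which turns a bound on compositions of the functions $h^c_{\lambda,\mu}$ into a bound on $R(\Omega^e)$ for instances of bounded simple path), Lemma~\ref{lem:warm} (which says two compositions push any nonnegative real into a fixed interval around $\rho$), and the immediately preceding lemma (which bounds $L$ further compositions applied to any \emph{warm} point). All the hard analytic estimates live in those lemmas; the remaining work is bookkeeping plus one geometric check.

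Concretely, I would first apply Lemma~\ref{lemma:bound-funtion-to-instance} with depth parameter $L+2$: since $SP(\Omega^e)\ge L+2$, it suffices to prove that for every $x\ge 0$ and every tuple $h_1,\dots,h_{L+2}\in\Hf{c}{c}{p}{\lambda_1}{\lambda_2}$ the composition $h_1h_2\cdots h_{L+2}(x)$ lies in $[R_1,R_2]$. I would split this composition as $h_1\cdots h_L$ applied to $z:=h_{L+1}h_{L+2}(x)$. Since $h_{L+1},h_{L+2}\in\Hf{c}{c}{p}{\lambda_1}{+\infty}$, Lemma~\ref{lem:warm} places $z$ between $\min\{\frac{\lambda\mu^2}{1+\lambda\mu^2}c,x^*\}$ and $\max\{\frac{\mu+(1+c\mu)\lambda c}{\mu\lambda c},c+\frac{1}{\mu},x^*\}$, where $x^*$ is the relevant larger fixpoint and $(\lambda,\mu)$ here denotes $(\lambda_1,p)$.

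The one step that needs care---and the step I expect to be the main, though routine, obstacle---is checking that $z$ is actually \emph{warm}, i.e.\ that the above interval is contained in the warm interval $\big[\frac{\lambda\mu^2}{1+\lambda\mu^2}c,\ \max\{\frac{\mu+(1+c\mu)\lambda c}{\mu\lambda c},c+\frac{1}{\mu}\}\big]$; comparing endpoints, this reduces to showing $x^*$ itself is warm. For this I would use the identity $\rho=c+\frac{1}{\rho}$ together with $\frac{\mu+(1+c\mu)\lambda c}{\mu\lambda c}=c+\frac{1}{\mu}+\frac{1}{\lambda c}$: these show $\rho$ sits strictly inside the warm interval with a margin bounded below by a positive quantity depending only on $c$ and $p$ (roughly $\min\{\frac{1}{\rho},\frac{1}{c\rho^2}\}$ once $\lambda$ is near $1$). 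By Lemma~\ref{lem:fpbound} we have $|x^*-\rho|=O(k)$, so the quantitative hypotheses on $k$ (in particular $k<\frac{p(1+\rho^2)}{16\rho}$) guarantee this perturbation cannot escape the warm interval; hence $z$ is warm.

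Finally, with $z$ warm I would apply the preceding lemma to $h_1,\dots,h_L$ at $z$, obtaining $h_1\cdots h_L(z)\in[\rho-\Delta',\rho+\Delta']$ with $\Delta'$ equal to its stated $k$-linear term plus $\big(\max\{\frac{4(1+k)(1+p^2)}{4+4p^2+3cp^3},\frac{1+k^2}{1+c^2}\}\big)^{g(L)}\,|z-\rho|$; substituting $|z-\rho|\le\max\{|\frac{\lambda\mu^2}{1+\lambda\mu^2}c-\rho|,|\frac{\mu+(1+c\mu)\lambda c}{\mu\lambda c}-\rho|,|x^*-\rho|\}$ from Lemma~\ref{lem:warm} and rewriting $|c+\frac{1}{\mu}-\rho|=|\frac{1}{\mu}-\frac{1}{\rho}|$ (the $|x^*-\rho|$ term being $O(k)$ and absorbed into the $k$-linear part) yields exactly the displayed $\Delta$, so $R(\Omega^e)\in[R_1,R_2]$. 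For the ``moreover'' claim, the first summand of $\Delta$ is a fixed rational expression in $c,p,\rho$ times $k$, hence tends to $0$ as $k\to 0$; the second summand is $\alpha^{g(L)}$ times a quantity bounded for $k$ in the allowed range, and since the hypotheses on $k$ force $\alpha=\max\{\dots\}<1$ while $g$ is non-decreasing and unbounded, $\alpha^{g(L)}\to 0$ as $L\to\infty$. Therefore $\lim_{L\to\infty,\,k\to 0}\Delta=0$.
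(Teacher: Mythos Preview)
Your approach matches the paper's: apply Lemma~\ref{lemma:bound-funtion-to-instance} at depth $L+2$, use Lemma~\ref{lem:warm} on the two innermost compositions to land in a controlled interval near $\rho$, then invoke the preceding lemma on the remaining $L$ compositions. The paper's one-line proof phrases the output of the two-step warm-up as ``either warm or in $[\rho-\gamma,\rho+\gamma]$'' (with $\gamma$ the fixpoint bound from Lemma~\ref{lem:fpbound}) rather than arguing that $x^*$ itself lies in the warm interval, but this is the same mechanism and leads to the same $\Delta$.
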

\begin{proof}
  For any two functions $h_1,h_2\in\Hf{c}{c}{p}{\lambda_1}{\lambda_2}$ and any $x\ge 0$, it follows from Lemma~\ref{lem:warm} that $h_1h_2(x)$ is either warm or lies in the range $[\rho-\gamma,\rho+\gamma]$ for $\gamma=\max_{\lambda\in[\lambda_1,\lambda_2],\mu\in[p,+\infty]}\left\{\frac{4\rho|\lambda-1||\mu-\rho|}{(\rho^2+1)\lambda\mu}\right\}$.
\end{proof}

  \section{Correlation Decay}

In this section, we are going to prove Theorem~\ref{thm:1}, Theorem~\ref{thm:2} and Theorem~\ref{thm:spin} by analyzing the correlation decay property stated in Lemma~\ref{lem:algo}. To this end, we shall study the recursions discussed in Section~\ref{sec:recursion}.

\subsection{Proof of Theorem~\ref{thm:1}}

It follows from Corollary~\ref{cor:conclusion} that for every $\eta>0$, there exists $\beta(\eta)>0$ such that $k=\max\{|\lambda_1-1|,|\lambda_2-1|\}\le\beta(\eta)$ implies $\Delta=\max\{|\rho-R_1|,|\rho-R_2|\}<\eta$ by choosing $L$ sufficiently large.

We use the trivial potential function $\Phi(x)=1$ and as discussed in Section~\ref{sec:algo}, it is sufficient to bound
\[
  \alpha_1(x) = \left|\deriv{h}{x}\right|;\quad
  \alpha_2(x,y,z) =\left|\pderiv{g}{x}\right|+ \left|\pderiv{g}{y}\right|+\left|\pderiv{g}{z}\right|;\quad
  \alpha_3(x,y) =\left|\pderiv{\hat{g}}{x}\right|;\quad
  \alpha_4(x,y) =\left|\pderiv{\hat{g}}{y}\right|.
\]
where
\begin{align*}
  h(x) &=\frac{\mu+(c\mu+1)\lambda x}{1+\lambda\mu x}\\
  g(x,y,z) &=\frac{\lambda cxz+\lambda y+x}{\lambda xz+1}\\
  \hat{g}(x,y) &=\frac{\lambda c xy+\lambda y+x}{\lambda xy+1}\\
\end{align*}
for some $\mu\ge p$ and $\lambda>0$.

We shall frequently use the following equality:

\begin{fact}
  Assume $a,b,A,B,x$ are all positive numbers. If $b-Bx>0$, then $\frac{a+Ax}{b-Bx}=\frac{a}{b}+\frac{Ab+aB}{b(b-Bx)}x$.
\end{fact}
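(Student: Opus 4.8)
The plan is to verify the claimed identity by a direct computation, combining the two summands on the right-hand side over a common denominator. First I would observe that the hypotheses $b>0$, $B>0$, $x>0$ together with $b-Bx>0$ guarantee that both $b$ and $b-Bx$ are strictly positive, so every denominator appearing on either side is nonzero and both expressions are well-defined; this is the only role the positivity assumptions play.

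Next I would bring $\frac{a}{b}+\frac{Ab+aB}{b(b-Bx)}x$ to the common denominator $b(b-Bx)$, obtaining
\[
\frac{a(b-Bx)+(Ab+aB)x}{b(b-Bx)}.
\]
Expanding the numerator gives $ab-aBx+Abx+aBx=ab+Abx=b(a+Ax)$, so the fraction simplifies to $\frac{b(a+Ax)}{b(b-Bx)}=\frac{a+Ax}{b-Bx}$, which is exactly the left-hand side. Equivalently, one could cross-multiply and check $(a+Ax)\,b = \bigl(\tfrac{a}{b}(b-Bx)+\tfrac{Ab+aB}{b}x\bigr)b(b-Bx)\big/(b-Bx)$, but combining fractions is the cleanest route.

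There is no real obstacle here: the statement is an elementary algebraic identity, and the only point requiring any care is checking that the denominators do not vanish, which the hypothesis $b-Bx>0$ (together with $b>0$) supplies directly. The fact is isolated as a separate statement only because this rewriting of a ratio as ``a constant term plus a single monotone correction term linear in $x$'' will be invoked repeatedly when differentiating the recursion functions $h$, $g$, and $\hat{g}$ in the correlation-decay analysis that follows, where it makes the signs and magnitudes of the partial derivatives transparent.
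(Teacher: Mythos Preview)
Your proof is correct; the paper states this fact without proof, and your direct common-denominator verification is exactly the natural one-line check.
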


\begin{lemma}\label{lem:dx}
Let $\frac{1}{2}<\lambda<2$ and $\varepsilon<\frac{1}{4}$. If $\rho-\varepsilon<x,y,z<\rho+\varepsilon$, then $\pderiv{g}{x}\leq\frac{|\lambda-1|}{\lambda \rho^2+1}+15\varepsilon$.
\end{lemma}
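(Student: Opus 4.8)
By the quotient rule — after the cancellation of the terms $\lambda^2 c x z^2$ and $\lambda x z$ in the numerator — one gets
\[
\pderiv{g}{x}=\frac{\lambda c z+1-\lambda^2 y z}{(\lambda x z+1)^2}.
\]
The plan is to evaluate this at the common fixpoint $x=y=z=\rho$ using the Fibonacci relation $\rho^2=c\rho+1$, and then control the effect of the perturbations $x,y,z\in(\rho-\varepsilon,\rho+\varepsilon)$ term by term. At $x=y=z=\rho$ the numerator becomes $\lambda(\rho^2-1)+1-\lambda^2\rho^2=(1-\lambda)(\lambda\rho^2+1)$ and the denominator $(\lambda\rho^2+1)^2$, so the value there is exactly $\frac{1-\lambda}{\lambda\rho^2+1}$. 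Since $\frac{1-\lambda}{\lambda\rho^2+1}\le\frac{|\lambda-1|}{\lambda\rho^2+1}$ always (with equality when $\lambda\le 1$), it is enough to prove the two-sided estimate $\bigl|\pderiv{g}{x}-\frac{1-\lambda}{\lambda\rho^2+1}\bigr|\le 15\varepsilon$, which in particular covers the case $\lambda>1$ with slack to spare.

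For this I would expand: writing $x=\rho+\delta_x$, $y=\rho+\delta_y$, $z=\rho+\delta_z$ with $|\delta_\bullet|<\varepsilon$ and using $c\rho=\rho^2-1$ once more, the numerator of $\pderiv{g}{x}$ equals $(1-\lambda)(\lambda\rho^2+1)+N'$ with $N'=\lambda(1-\lambda)\rho\delta_z-\frac{\lambda}{\rho}\delta_z-\lambda^2\rho\delta_y-\lambda^2\delta_y\delta_z$, while the denominator equals $(\lambda\rho^2+1)^2(1+\eta)^2$ with $\eta=\frac{\lambda\rho(\delta_x+\delta_z)+\lambda\delta_x\delta_z}{\lambda\rho^2+1}$. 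Putting $P=\frac{N'}{(\lambda\rho^2+1)^2}$, a short computation gives
\[
\pderiv{g}{x}-\frac{1-\lambda}{\lambda\rho^2+1}=P+\Bigl(\tfrac{1-\lambda}{\lambda\rho^2+1}+P\Bigr)\Bigl(\tfrac{1}{(1+\eta)^2}-1\Bigr),
\]
so the task reduces to bounding $|P|$, $|\eta|$, and $\bigl|\tfrac{1-\lambda}{\lambda\rho^2+1}+P\bigr|$. The Fact stated just above this lemma is convenient for peeling off the linear-in-$\delta$ parts in these computations.

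The estimates are uniform in $\rho$ because every power of $\rho$ occurring in $N'$ or in $\eta$ is matched by a factor $\lambda\rho^2+1$ in the denominator. Using $\lambda\rho^2+1\ge 2\sqrt{\lambda}\,\rho$ (AM–GM) and elementary one-variable calculus one checks that, for $\rho\ge 1$ and $\lambda\in(1/2,2)$, the quantities $\frac{\lambda\rho}{(\lambda\rho^2+1)^2}$, $\frac{1}{(\lambda\rho^2+1)^2}$, $\frac{\lambda\rho}{\lambda\rho^2+1}$ and $\frac{|1-\lambda|}{\lambda\rho^2+1}$ are all bounded by small absolute constants; moreover $\varepsilon<\frac14$ gives $\varepsilon^2\le\varepsilon/4$, so the quadratic-in-$\delta$ remainders are $O(\varepsilon)$ as well. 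This yields $|P|\le\frac54\varepsilon$, $|\eta|<\frac{19}{12}\varepsilon<0.4$, hence $\bigl|\tfrac{1}{(1+\eta)^2}-1\bigr|\le\frac{2|\eta|}{(1-|\eta|)^2}<9\varepsilon$ and $\bigl|\tfrac{1-\lambda}{\lambda\rho^2+1}+P\bigr|<1$; combining, $\bigl|\pderiv{g}{x}-\frac{1-\lambda}{\lambda\rho^2+1}\bigr|\le|P|+9\varepsilon<15\varepsilon$, which is the claim.

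The difficulty here is not conceptual but a matter of quantitative bookkeeping: one has to make the cancellation of the $\rho$-dependence explicit, keep the auxiliary inequalities tight enough that the constants sum to less than $15$ (a lazy triangle inequality overshoots), and verify that $\varepsilon$ as large as $\frac14$ still keeps $|\eta|$ safely below $1$ so that the factor $1/(1+\eta)^2$ does not blow up. Everything else is routine estimation of rational functions on $\lambda\in(1/2,2)$ and $\rho\ge 1$.
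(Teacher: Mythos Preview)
Your argument is correct and reaches the same conclusion, but the mechanics differ from the paper's. The paper simply plugs the extremal values $z\to\rho+\varepsilon$, $y,z\to\rho-\varepsilon$, $x,z\to\rho-\varepsilon$ into numerator and denominator to obtain a fraction of the shape $\frac{a+A\varepsilon}{b-B\varepsilon}$, then invokes the algebraic Fact $\frac{a+Ax}{b-Bx}=\frac{a}{b}+\frac{Ab+aB}{b(b-Bx)}x$ once to peel off the $\varepsilon$-term, and finishes with a short chain of elementary inequalities on the resulting coefficient. Your route instead linearises around $x=y=z=\rho$ via $\delta$-perturbations and a multiplicative $(1+\eta)^{-2}$ factor, which makes the identity $\pderiv{g}{x}\big|_{x=y=z=\rho}=\frac{1-\lambda}{\lambda\rho^2+1}$ explicit and explains the shape of the bound, at the cost of tracking more pieces. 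Both exploit $c\rho=\rho^2-1$ in the same way.

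One small slip: the inequality $\bigl|\tfrac{1}{(1+\eta)^2}-1\bigr|\le\frac{2|\eta|}{(1-|\eta|)^2}$ is not quite right, since $|1-(1+\eta)^2|=|\eta|\,|2+\eta|\le|\eta|(2+|\eta|)$ rather than $2|\eta|$. With $|\eta|<\tfrac{3}{2}\varepsilon<\tfrac{3}{8}$ this gives a coefficient slightly above $9$ rather than below, but the slack in $15\varepsilon$ absorbs it and the final bound still holds comfortably.
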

\begin{proof}
  \begin{align*}
    \pderiv{g}{x}
    &= \frac{\lambda cz+1-\lambda^2yz}{(\lambda xz+1)^2} \\
    &\le \frac{\lambda(\rho-\frac{1}{\rho})(\rho+\varepsilon)+1-\lambda^2(\rho-\varepsilon)^2}
    {(\lambda(\rho-\varepsilon)^2+1)^2} \\
    &\leq \frac{(\lambda\rho^2+1)|\lambda-1|+(\lambda\rho+2\lambda^2\rho)\varepsilon}{(\lambda\rho^2+1)^2-4\rho(\lambda\rho^2+1)\lambda\varepsilon}\\
    &= \frac{|\lambda-1|}{\lambda\rho^2+1}+\frac{4\rho|\lambda-1|\lambda+(2\lambda^2\rho+\lambda\rho)}{(\lambda\rho^2+1)^2-4\rho(\lambda\rho^2+1)\lambda\varepsilon}\varepsilon \\
    &\le \frac{|\lambda-1|}{\lambda\rho^2+1}+\frac{8\rho+8\rho+2\rho}{(\lambda\rho^2+1)\rho(\lambda\rho+\frac{1}{\rho}-4\lambda\varepsilon)}\varepsilon \\
    &\le \frac{|\lambda-1|}{\lambda\rho^2+1}+\frac{18\rho}{(\lambda\rho^2+1)\rho(2\sqrt{\lambda}-\lambda)}\varepsilon \\
    &\le \frac{|\lambda-1|}{\lambda\rho^2+1}+\frac{18}{\frac{4}{5}(\frac{1}{2}+1)}\varepsilon<\frac{|\lambda-1|}{\lambda\rho^2+1}+15\varepsilon.
    \end{align*}
\end{proof}

\begin{lemma}\label{lem:mdx}
Let $\frac{1}{2}<\lambda<2$ and $\varepsilon<\frac{1}{4}$. If $\rho-\varepsilon<x,y,z<\rho+\varepsilon$, then $-\pderiv{g}{x}\leq\frac{|\lambda-1|}{\lambda \rho^2+1}+19\varepsilon$.
\end{lemma}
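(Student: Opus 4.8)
The plan is to mirror the proof of Lemma~\ref{lem:dx} with the opposite sign throughout. First I would record the derivative computed there, namely
\[
  -\pderiv{g}{x} \;=\; \frac{\lambda^2 yz-\lambda cz-1}{(\lambda xz+1)^2},
\]
and observe that we may assume $-\pderiv{g}{x}>0$, since otherwise the claimed inequality is trivial (its right-hand side is positive). So it suffices to upper bound the right-hand side under the assumption that its numerator is positive, which in particular lets us freely pass to upper bounds on numerator and lower bounds on denominator.

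Next I would estimate numerator and denominator separately over the box $\rho-\varepsilon<x,y,z<\rho+\varepsilon$, each time taking the endpoint that maximizes (resp.\ minimizes) the monomial in question. In the numerator, $yz$ has positive coefficient $\lambda^2$, so I use $y,z<\rho+\varepsilon$; the term $-\lambda cz$ has nonpositive coefficient because $c=\rho-\tfrac1\rho\ge 0$ (equivalently $\rho\ge 1$), so I use $z>\rho-\varepsilon$. Substituting $c=\rho-\tfrac1\rho$ and expanding, the numerator is at most
\[
  \lambda^2(\rho+\varepsilon)^2-\lambda\Bigl(\rho-\tfrac1\rho\Bigr)(\rho-\varepsilon)-1
  \;=\;(\lambda-1)(\lambda\rho^2+1)+\Bigl(2\lambda^2\rho+\lambda\rho-\tfrac\lambda\rho+\lambda^2\varepsilon\Bigr)\varepsilon,
\]
where the key algebraic identity is $\lambda^2\rho^2-\lambda\rho^2+\lambda-1=(\lambda-1)(\lambda\rho^2+1)$, and then $(\lambda-1)(\lambda\rho^2+1)\le|\lambda-1|(\lambda\rho^2+1)$ irrespective of the sign of $\lambda-1$. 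For the denominator, $x,z>\rho-\varepsilon>0$ gives
\[
  (\lambda xz+1)^2>\bigl(\lambda(\rho-\varepsilon)^2+1\bigr)^2\ge(\lambda\rho^2+1)^2-4\lambda\rho(\lambda\rho^2+1)\varepsilon .
\]

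Then I would apply the elementary identity $\frac{a+Ax}{b-Bx}=\frac ab+\frac{Ab+aB}{b(b-Bx)}x$ (the Fact above) with $a=|\lambda-1|(\lambda\rho^2+1)$, $b=(\lambda\rho^2+1)^2$, the $\varepsilon$-coefficient $A=2\lambda^2\rho+\lambda\rho-\tfrac\lambda\rho+\lambda^2\varepsilon$, $B=4\lambda\rho(\lambda\rho^2+1)$ and variable $\varepsilon$; this splits off the leading term $\tfrac ab=\tfrac{|\lambda-1|}{\lambda\rho^2+1}$ exactly, and after cancelling a factor $(\lambda\rho^2+1)^2$ the remainder is
\[
  \frac{2\lambda^2\rho+\lambda\rho-\tfrac\lambda\rho+\lambda^2\varepsilon+4\lambda\rho|\lambda-1|}
       {(\lambda\rho^2+1)\bigl[(\lambda\rho^2+1)-4\lambda\rho\varepsilon\bigr]}\,\varepsilon .
\]
The only remaining point to check is that this coefficient of $\varepsilon$ is at most $19$. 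Using $\tfrac12<\lambda<2$, $|\lambda-1|<1$, $\varepsilon<\tfrac14$ and $\rho\ge 1$, the numerator is at most $8\rho+2\rho+1+8\rho=18\rho+1<19\rho$, while the denominator exceeds $\rho$ because $(\lambda\rho^2+1)-4\lambda\rho\varepsilon>\lambda\rho(\rho-1)+1\ge 1$ and $\lambda\rho^2+1>\tfrac12\rho^2+1>\rho$; hence the coefficient is $<\tfrac{18\rho+1}{\rho}=18+\tfrac1\rho<19$, which completes the argument. I expect the only real obstacle to be the sign bookkeeping — choosing the correct endpoint for each occurrence of $x,y,z$ and confirming $c\ge 0$ so that the monotonicity of $-\lambda cz$ in $z$ points the right way; the rest is the same routine estimate as in Lemma~\ref{lem:dx}, with the coarser constant $19$ in place of $15$ absorbing the extra slack.
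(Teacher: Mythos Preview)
Your proposal is correct and follows essentially the same approach as the paper: bound numerator and denominator separately at the appropriate endpoints, use the algebraic identity $\lambda^2\rho^2-\lambda\rho^2+\lambda-1=(\lambda-1)(\lambda\rho^2+1)$ to extract the leading term $\tfrac{|\lambda-1|}{\lambda\rho^2+1}$ via the stated Fact, and then crudely bound the residual $\varepsilon$-coefficient. The only differences from the paper are cosmetic (you bound $\lambda^2\varepsilon<1$ where the paper bounds $\lambda^2\le 4$, and your final denominator estimate $(\lambda\rho^2+1)\bigl[(\lambda\rho^2+1)-4\lambda\rho\varepsilon\bigr]>\rho$ is slightly cleaner than the paper's AM--GM route), but the argument and its reliance on $\rho\ge 1$ are the same.
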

\begin{proof}
    \begin{align*}
        -\pderiv{g}{x}
        &= \frac{\lambda^2yz-\lambda cz-1}{(\lambda xz+1)^2} \\
        &\le \frac{\lambda^2 (\rho+\varepsilon)^2-\lambda(\rho-\frac{1}{\rho})(\rho-\varepsilon)-1}
        {(\lambda(\rho-\varepsilon)^2+1)^2} \\
        &\le \frac{(\lambda\rho^2+1)|\lambda-1|+(2\lambda^2\rho+\lambda^2+\lambda\rho)\varepsilon}{(\lambda\rho^2+1)^2-4\rho(\lambda\rho^2+1)\lambda\varepsilon}\\
        &= \frac{|\lambda-1|}{\lambda\rho^2+1}+\frac{4\rho|\lambda-1|\lambda+(2\lambda^2\rho+\lambda^2+\lambda\rho)}{(\lambda\rho^2+1)^2-4\rho(\lambda\rho^2+1)\lambda\varepsilon}\varepsilon \\
        &\le \frac{|\lambda-1|}{\lambda\rho^2+1}+\frac{8\rho+8\rho+4+2\rho}{(\lambda\rho^2+1)\rho(\lambda\rho+\frac{1}{\rho}-4\lambda\varepsilon)}\varepsilon \\
        &\le \frac{|\lambda-1|}{\lambda\rho^2+1}+\frac{18\rho+4}{(\lambda\rho^2+1)\rho(2\sqrt{\lambda}-\lambda)}\varepsilon \\
        &\le \frac{|\lambda-1|}{\lambda\rho^2+1}+\frac{22}{\frac{4}{5}(\frac{1}{2}+1)}\varepsilon<\frac{|\lambda-1|}{\lambda\rho^2+1}+19\varepsilon
      \end{align*}
\end{proof}

\begin{lemma}\label{lem:dy}
Let $\lambda<2$ and $\varepsilon<\frac{1}{2}$. If $\rho-\varepsilon<x,y,z<\rho+\varepsilon$, then $\left|\pderiv{g}{y}\right|\leq\frac{\lambda}{\lambda \rho^2+1}+3\varepsilon$.
\end{lemma}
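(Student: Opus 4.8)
The plan is to follow the same template as the proofs of Lemmas~\ref{lem:dx} and~\ref{lem:mdx}: compute the derivative explicitly, substitute the worst case of the variables that actually occur in it, peel off the main term $\frac{\lambda}{\lambda\rho^2+1}$ via the displayed Fact, and bound the leftover coefficient of $\varepsilon$ by a constant.

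First, since $g(x,y,z)=\frac{\lambda cxz+\lambda y+x}{\lambda xz+1}$ is affine in $y$, one has $\pderiv{g}{y}=\frac{\lambda}{\lambda xz+1}>0$, so $\left|\pderiv{g}{y}\right|=\frac{\lambda}{\lambda xz+1}$. This quantity is decreasing in $xz$; the hypotheses $x,z>\rho-\varepsilon$ together with $\rho\ge 1$ and $\varepsilon<\tfrac12$ (so that $\rho-\varepsilon>0$) give $xz>(\rho-\varepsilon)^2\ge\rho^2-2\rho\varepsilon$, whence
\[
\left|\pderiv{g}{y}\right|\le\frac{\lambda}{\lambda(\rho^2-2\rho\varepsilon)+1}=\frac{\lambda}{(\lambda\rho^2+1)-(2\lambda\rho)\varepsilon}.
\]
Here I would also record the lower bound $(\lambda\rho^2+1)-2\lambda\rho\varepsilon>\lambda\rho^2-\lambda\rho+1=\lambda\rho(\rho-1)+1\ge 1$, which in particular shows the denominator is positive, so all the following manipulations (and the Fact) are legitimate.

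Next I would apply the Fact (with $A=0$, which the identity plainly allows) to split off the main term:
\[
\frac{\lambda}{(\lambda\rho^2+1)-(2\lambda\rho)\varepsilon}=\frac{\lambda}{\lambda\rho^2+1}+\frac{2\lambda^2\rho}{(\lambda\rho^2+1)\bigl((\lambda\rho^2+1)-2\lambda\rho\varepsilon\bigr)}\,\varepsilon,
\]
so it only remains to bound the coefficient of $\varepsilon$ by $3$. This is the one place that is not purely mechanical: $\rho$ appears in the numerator and is not bounded above, so I would absorb it by AM--GM, $\lambda\rho^2+1\ge 2\sqrt{\lambda}\,\rho$, which yields $\frac{\rho}{\lambda\rho^2+1}\le\frac{1}{2\sqrt\lambda}$. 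Combined with $(\lambda\rho^2+1)-2\lambda\rho\varepsilon\ge 1$ from the previous step, the coefficient is at most $\frac{2\lambda^2}{2\sqrt\lambda}=\lambda^{3/2}<2^{3/2}<3$, and the lemma follows. The whole argument is a routine one-variable estimate, and I do not expect any genuine obstacle beyond keeping track of which of the hypotheses $\rho\ge1$, $\varepsilon<\tfrac12$, $\lambda<2$ (and AM--GM) is invoked at each step.
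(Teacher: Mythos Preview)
Your proof is correct and essentially identical to the paper's: both compute $\pderiv{g}{y}=\frac{\lambda}{\lambda xz+1}$, bound $xz\ge(\rho-\varepsilon)^2\ge\rho^2-2\rho\varepsilon$, split off the main term via the Fact, lower-bound the residual denominator $\lambda\rho^2+1-2\lambda\rho\varepsilon\ge1$, and finish with the same AM--GM step $\lambda\rho^2+1\ge2\sqrt{\lambda}\,\rho$ (the paper writes it equivalently as $\lambda\rho+\tfrac1\rho\ge2\sqrt{\lambda}$) to get $\lambda^{3/2}<3$. The only cosmetic difference is that you make the use of $\rho\ge1$ explicit, whereas the paper leaves it implicit.
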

\begin{proof}
    \begin{align*}
      \left|\pderiv{g}{y}\right|
      &= \frac{\lambda}{\lambda xz+1} \\
      &\le \frac{\lambda}{\lambda(\rho-\varepsilon)^2+1} \\
      &\le \frac{\lambda}{\lambda\rho^2+1-2\lambda\rho\varepsilon}\\
      &= \frac{\lambda}{\lambda\rho^2+1}+\frac{2\lambda^2\rho}{(\lambda\rho^2+1)(\lambda\rho^2+1-2\lambda\rho\varepsilon)}\varepsilon \\
      &\le \frac{\lambda}{\lambda\rho^2+1}+\frac{2\lambda^2\rho}{\lambda\rho^2+1}\varepsilon \\
      &= \frac{\lambda}{\lambda\rho^2+1}+\frac{2\lambda^2}{\lambda\rho+\frac{1}{\rho}}\varepsilon \\
      &\le \frac{\lambda}{\lambda\rho^2+1}+\frac{2\lambda^2}{2\sqrt{\lambda}}\varepsilon \\
      &\le \frac{\lambda}{\lambda\rho^2+1}+3\varepsilon.
    \end{align*}
\end{proof}

\begin{lemma}\label{lem:dz}
Let $\frac{1}{2}<\lambda<2$ and $\varepsilon<\frac{1}{4}$. If $\rho-\varepsilon<x,y,z<\rho+\varepsilon$, then $|\pderiv{g}{z}|\leq\frac{\lambda}{\lambda \rho^2+1}+30\varepsilon$.
\end{lemma}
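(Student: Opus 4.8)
The plan is to follow the template of Lemmas~\ref{lem:dx}--\ref{lem:dy} essentially verbatim. First I would compute $\pderiv{g}{z}$ by the quotient rule: with $g=\frac{\lambda cxz+\lambda y+x}{\lambda xz+1}$, the numerator of $\pderiv{g}{z}$ is $\lambda cx(\lambda xz+1)-(\lambda cxz+\lambda y+x)\lambda x=\lambda x(c-\lambda y-x)$, so
\[
\pderiv{g}{z}=\frac{\lambda x\,(c-\lambda y-x)}{(\lambda xz+1)^2}.
\]
Since $c=\rho-\tfrac1\rho$ and on the domain $\lambda>\tfrac12$, $\varepsilon<\tfrac14$, $\rho-\varepsilon<x,y<\rho+\varepsilon$ we have $\lambda y+x>\tfrac32(\rho-\varepsilon)>\tfrac32\rho-\tfrac38\ge\rho-\tfrac1\rho=c$ (the last step because $\tfrac12\rho+\tfrac1\rho\ge\sqrt2>\tfrac38$ for all $\rho>1$), the factor $c-\lambda y-x$ is strictly negative throughout the box; hence $\left|\pderiv{g}{z}\right|=\frac{\lambda x(\lambda y+x-c)}{(\lambda xz+1)^2}$ and no case distinction is needed.

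Next I would bound this quotient by maximizing the numerator and minimizing the denominator over the box independently. The numerator $\lambda x(\lambda y+x-c)$ is increasing in both $x$ and $y$ on the box: its $x$-partial is $\lambda\bigl((\lambda y+x-c)+x\bigr)>0$ and its $y$-partial is $\lambda^2x>0$ there. So it is at most $\lambda(\rho+\varepsilon)\bigl((\lambda+1)(\rho+\varepsilon)-c\bigr)=\lambda(\rho+\varepsilon)\bigl(\lambda\rho+\tfrac1\rho+(\lambda+1)\varepsilon\bigr)$; expanding and absorbing the $\varepsilon^2$ term into the linear one via $\varepsilon<\tfrac14$ gives
\[
\lambda x(\lambda y+x-c)\le\lambda(\lambda\rho^2+1)+\Bigl(2\lambda^2\rho+\lambda\rho+\tfrac\lambda\rho+\tfrac{\lambda^2+\lambda}{4}\Bigr)\varepsilon,
\]
whose constant term is exactly $\lambda(\lambda\rho^2+1)$. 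For the denominator I would use, exactly as in Lemma~\ref{lem:dx}, $(\lambda(\rho-\varepsilon)^2+1)^2\ge(\lambda\rho^2+1)^2-4\lambda\rho(\lambda\rho^2+1)\varepsilon$.

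Then I would invoke the displayed Fact with $a=\lambda(\lambda\rho^2+1)$, $b=(\lambda\rho^2+1)^2$, $B=4\lambda\rho(\lambda\rho^2+1)$: the constant part is $\tfrac ab=\tfrac{\lambda}{\lambda\rho^2+1}$, the asserted main term, and after cancelling a factor $(\lambda\rho^2+1)^2$ the coefficient of $\varepsilon$ becomes
\[
\frac{6\lambda^2\rho+\lambda\rho+\tfrac\lambda\rho+\tfrac{\lambda^2+\lambda}{4}}{(\lambda\rho^2+1)\bigl(\lambda\rho^2+1-4\lambda\rho\varepsilon\bigr)}.
\]
Using $\varepsilon<\tfrac14$, the AM--GM inequality $\lambda\rho+\tfrac1\rho\ge2\sqrt\lambda$ together with $2\sqrt\lambda-\lambda\ge\tfrac45$ for $\tfrac12<\lambda<2$, and $\lambda\rho^2+1\ge\lambda+1\ge\tfrac32$ (since $\rho>1$), the denominator is at least $\tfrac32\cdot\tfrac45\rho=\tfrac65\rho$; meanwhile the numerator is at most $26\rho+\tfrac72<30\rho$ for $\tfrac12<\lambda<2$, $\rho>1$. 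Hence the coefficient of $\varepsilon$ is at most $\tfrac{30\rho}{(6/5)\rho}=25<30$, which yields $\left|\pderiv{g}{z}\right|\le\tfrac{\lambda}{\lambda\rho^2+1}+30\varepsilon$.

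The only point requiring any care is pinning down the sign of $c-\lambda y-x$ at the outset (so that the absolute value unfolds without cases) and tracking the lower-order $\tfrac\lambda\rho$ and $\varepsilon^2$ contributions tightly enough that the final constant stays below $30$; both are entirely routine and parallel to the three preceding lemmas, so I do not anticipate a genuine obstacle.
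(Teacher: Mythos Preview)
Your proposal is correct and follows essentially the same route as the paper: you compute $\pderiv{g}{z}=\frac{\lambda x(c-\lambda y-x)}{(\lambda xz+1)^2}$, verify $\lambda y+x>c$ on the box to drop the absolute value, bound the numerator above and the denominator below via $(\lambda(\rho-\varepsilon)^2+1)^2\ge(\lambda\rho^2+1)^2-4\lambda\rho(\lambda\rho^2+1)\varepsilon$, apply the Fact, and finish with the same $\lambda\rho+\tfrac1\rho\ge2\sqrt\lambda$, $2\sqrt\lambda-\lambda\ge\tfrac45$ combination. The only difference is cosmetic: the paper bounds the numerator term by term (using $x\le\rho+\varepsilon$ in the positive terms and $x\ge\rho-\varepsilon$ in $-\lambda cx$ separately), which is slightly looser and leads to a coefficient of exactly $30$, whereas your monotonicity argument on the full expression $\lambda x(\lambda y+x-c)$ yields $25$. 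Either way the claimed bound holds.
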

\begin{proof}
    Since $\left|\pderiv{g}{z}\right|=\frac{|\lambda^2 xy+\lambda x^2-\lambda cx|}{(\lambda xz+1)^2}$ and
    \begin{align*}
      \lambda^2 xy+\lambda x^2-\lambda cx
      &=\lambda x(\lambda y+x-c)\\
      &\ge \lambda x(\frac{1}{2}(\rho-\varepsilon)+(\rho-\varepsilon)-\rho+\frac{1}{\rho}))\\
      &=\lambda x(\frac{\rho}{2}+\frac{1}{\rho}-\frac{3}{2}\varepsilon)>0,
    \end{align*}
    we have
    \begin{align*}
      \left|\pderiv{g}{z}\right| &= \frac{\lambda^2 xy+\lambda x^2-\lambda cx}{(\lambda xz+1)^2} \\
      &\le \frac{(\lambda^2+\lambda)(\rho+\varepsilon)^2-\lambda(\rho-\frac{1}{\rho})(\rho-\varepsilon)}
      {(\lambda(\rho-\varepsilon)^2+1)^2} \\
      &\le \frac{\lambda(\lambda\rho^2+1)+((\lambda^2+\lambda)(2\rho+\varepsilon)+\lambda\rho)\varepsilon}
      {(\lambda\rho^2+1)^2-4\rho(\lambda\rho^2+1)\lambda\varepsilon}\\
      &= \frac{\lambda}{\lambda\rho^2+1}+\frac{4\rho\lambda^2+(\lambda^2+\lambda)(2\rho+\varepsilon)+\lambda\rho}{(\lambda\rho^2+1)^2-4\rho(\lambda\rho^2+1)\lambda\varepsilon}\varepsilon \\
      &\le \frac{\lambda}{\lambda\rho^2+1}+\frac{16\rho+6(2\rho+1)+2\rho}{(\lambda\rho^2+1)\rho(\lambda\rho+\frac{1}{\rho}-4\lambda\varepsilon)}\varepsilon \\
      &\le \frac{\lambda}{\lambda\rho^2+1}+\frac{16\rho+6(2\rho+1)+2\rho}{(\lambda\rho^2+1)\rho(2\sqrt{\lambda}-\lambda)}\varepsilon \\
      &\le \frac{\lambda}{\lambda\rho^2+1}+\frac{30+\frac{6}{\rho}}{\frac{4}{5}(\frac{1}{2}\rho^2+1)}\varepsilon \\
      &\le \frac{\lambda}{\lambda\rho^2+1}+\frac{36}{\frac{4}{5}(\frac{1}{2}+1)}=\frac{\lambda}{\lambda\rho^2+1}+30\varepsilon.
    \end{align*}
\end{proof}

\begin{lemma}
\label{lem:gbound}
For $\frac{1}{2}<\lambda<2$ and $\varepsilon<\frac{1}{4}$, if $\rho-\varepsilon<x,y,z<\rho+\varepsilon$, then $\left|\pderiv{g}{x}\right|+\left|\pderiv{g}{y}\right|+\left|\pderiv{g}{z}\right| \leq \frac{|\lambda-1|+2\lambda}{\lambda\rho^2+1}+52\varepsilon.$
\end{lemma}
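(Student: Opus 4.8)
The plan is simply to assemble this bound from the four preceding lemmas, which already control the three partial derivatives of $g$ individually on the box $\rho-\varepsilon<x,y,z<\rho+\varepsilon$ under the hypotheses $\frac12<\lambda<2$ and $\varepsilon<\frac14$. First I would observe that $\left|\pderiv{g}{x}\right|=\max\left\{\pderiv{g}{x},\,-\pderiv{g}{x}\right\}$, so that combining Lemma~\ref{lem:dx} and Lemma~\ref{lem:mdx} (and taking the larger of the two constants $15$ and $19$) gives
\[
\left|\pderiv{g}{x}\right|\le \frac{|\lambda-1|}{\lambda\rho^2+1}+19\varepsilon
\]
on the relevant box. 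Then Lemma~\ref{lem:dy} gives $\left|\pderiv{g}{y}\right|\le\frac{\lambda}{\lambda\rho^2+1}+3\varepsilon$ and Lemma~\ref{lem:dz} gives $\left|\pderiv{g}{z}\right|\le\frac{\lambda}{\lambda\rho^2+1}+30\varepsilon$.

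Next I would just add these three inequalities. The constant-term contributions combine as $\frac{|\lambda-1|}{\lambda\rho^2+1}+\frac{\lambda}{\lambda\rho^2+1}+\frac{\lambda}{\lambda\rho^2+1}=\frac{|\lambda-1|+2\lambda}{\lambda\rho^2+1}$, and the error terms combine as $(19+3+30)\varepsilon=52\varepsilon$, yielding exactly
\[
\left|\pderiv{g}{x}\right|+\left|\pderiv{g}{y}\right|+\left|\pderiv{g}{z}\right|\le\frac{|\lambda-1|+2\lambda}{\lambda\rho^2+1}+52\varepsilon,
\]
which is the claim. One small bookkeeping point to check is that the hypotheses of all four cited lemmas are implied by $\frac12<\lambda<2$ and $\varepsilon<\frac14$ (Lemma~\ref{lem:dy} only needs $\lambda<2$ and $\varepsilon<\frac12$, so it is covered a fortiori), which is immediate.

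Since every ingredient is already established, there is essentially no obstacle here; the only thing to be careful about is the triangle-inequality/maximum step for $\pderiv{g}{x}$ (using both the upper bound on $\pderiv{g}{x}$ and on its negative to bound the absolute value) and tracking the additive constants so that they sum to $52$. I expect the write-up to be two or three lines.
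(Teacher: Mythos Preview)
Your proposal is correct and matches the paper's own proof essentially verbatim: the paper also writes $\left|\pderiv{g}{x}\right|=\max\{\pderiv{g}{x},-\pderiv{g}{x}\}$, invokes the four preceding lemmas, and sums the bounds to get $\frac{|\lambda-1|+2\lambda}{\lambda\rho^2+1}+52\varepsilon$.
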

\begin{proof}
  It holds from previous lemmas that
  \[
  \left|\pderiv{g}{x}\right|+\left|\pderiv{g}{y}\right|+\left|\pderiv{g}{z}\right| = \max\left\{\pderiv{g}{x},-\pderiv{g}{x}\right\}+\left|\pderiv{g}{y}\right|+\left|\pderiv{g}{z}\right| \leq \frac{|\lambda-1|+2\lambda}{\lambda\rho^2+1}+52\varepsilon.
  \]
\end{proof}

\begin{lemma}
\label{lem:thm1g1}
Assume $\rho<2$.
If $\max\left\{\frac{1}{2},\frac{1}{2\rho-\rho^2+2},1-\beta\left(\frac{\rho-1}{208}\right)\right\}<\lambda<\min\left\{\frac{5-\rho}{6+\rho^3-3\rho^2},1+\beta\left(\frac{\rho-1}{208}\right)\right\}$, then $\left|\pderiv{g}{x}\right|+\left|\pderiv{g}{y}\right|+\left|\pderiv{g}{z}\right|\leq\frac{5-\rho}{4}<1$.
\end{lemma}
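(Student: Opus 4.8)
The plan is to feed the range estimate of Corollary~\ref{cor:conclusion} into the pointwise bound of Lemma~\ref{lem:gbound}. Recall that in the recursion $g(x,y,z)$ is evaluated only at arguments of the form $R(\dli)$ for deep dangling sub-instances, and by Corollary~\ref{cor:conclusion} (together with the observation opening this section) there is, for every $\eta>0$, a threshold $\beta(\eta)>0$ so that $\max\{|\lambda_1-1|,|\lambda_2-1|\}\le\beta(\eta)$ forces $R(\dli)\in[R_1,R_2]=[\rho-\Delta,\rho+\Delta]$ with $\Delta<\eta$, once $L$ is chosen large enough. So I would first set $\eta=\frac{\rho-1}{208}$ and $\varepsilon:=\Delta$. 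The hypothesis $1-\beta\!\left(\frac{\rho-1}{208}\right)<\lambda<1+\beta\!\left(\frac{\rho-1}{208}\right)$ is exactly what guarantees $\varepsilon<\frac{\rho-1}{208}$, and since $\rho<2$ this is $<\frac14$; the hypothesis also gives $\frac12<\lambda$, and one checks that $\frac{5-\rho}{6+\rho^3-3\rho^2}<2$ on $(1,2)$, so $\lambda<2$ as well. Hence the hypotheses of Lemma~\ref{lem:gbound} hold for all $x,y,z\in[R_1,R_2]$.

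Lemma~\ref{lem:gbound} then yields
\[
\left|\pderiv{g}{x}\right|+\left|\pderiv{g}{y}\right|+\left|\pderiv{g}{z}\right|\le\frac{|\lambda-1|+2\lambda}{\lambda\rho^2+1}+52\varepsilon<\frac{|\lambda-1|+2\lambda}{\lambda\rho^2+1}+\frac{\rho-1}{4},
\]
so it remains only to establish the static inequality $\frac{|\lambda-1|+2\lambda}{\lambda\rho^2+1}\le\frac{3-\rho}{2}$; once this is in hand the bound becomes $\le\frac{3-\rho}{2}+\frac{\rho-1}{4}=\frac{5-\rho}{4}$, which is $<1$ because $\rho>1$.

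For the static inequality I would split on the sign of $\lambda-1$. When $\lambda\ge1$, clearing the positive denominator turns it into $\lambda\,(6-3\rho^2+\rho^3)\le 5-\rho$; the polynomial $6-3\rho^2+\rho^3$ is decreasing on $(0,2)$ with value $2$ at $\rho=2$, hence positive, so the inequality is exactly $\lambda\le\frac{5-\rho}{6+\rho^3-3\rho^2}$ --- the upper bound assumed in the lemma. When $\lambda<1$ it becomes $\lambda\,(\rho^3-3\rho^2+2)\le 1-\rho$; here $\rho^3-3\rho^2+2=(\rho-1)(\rho^2-2\rho-2)<0$ for $1<\rho<2$, so dividing reverses the inequality and gives $\lambda\ge\frac{1-\rho}{\rho^3-3\rho^2+2}=\frac{1}{2\rho-\rho^2+2}$ --- the lower bound assumed in the lemma. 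Either way the static inequality holds, which finishes the proof.

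The content here is essentially bookkeeping; the only things that require a moment's care are that the five endpoints in the hypothesis each do a specific job --- the two $\beta$-terms control $\varepsilon$ through Corollary~\ref{cor:conclusion}, the $\frac12$ lower endpoint and the upper rational endpoint supply the $\frac12<\lambda<2$ needed by Lemma~\ref{lem:gbound}, and the two rational endpoints are precisely the two branches of the static inequality --- and that the routine side conditions $\varepsilon<\frac14$ and $\lambda<2$ genuinely follow from the stated constraints. There is no real analytic obstacle: the substantive estimates were already done in Lemmas~\ref{lem:dx}--\ref{lem:gbound}.
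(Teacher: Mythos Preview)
Your proposal is correct and follows essentially the same route as the paper: choose $\varepsilon$ via the $\beta$-threshold so that $52\varepsilon<\frac{\rho-1}{4}$, invoke Lemma~\ref{lem:gbound}, then split on the sign of $\lambda-1$ and verify that the rational endpoints in the hypothesis are exactly what force $\frac{|\lambda-1|+2\lambda}{\lambda\rho^2+1}\le\frac{3-\rho}{2}$. The paper presents the last step by monotonicity in $\lambda$ and plugging in the endpoint, whereas you clear denominators and solve for $\lambda$; these are the same computation.
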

\begin{proof}
    Choose $\varepsilon$ in Lemma~\ref{lem:gbound} such that $52\varepsilon<\frac{\rho-1}{4}$.
    According to Lemma~\ref{cor:conclusion}, this can be done by setting $|\lambda-1|<\beta\left(\frac{\rho-1}{208}\right)$.
\begin{enumerate}[(1)]
    \item When $\lambda>1$, according to Lemma~\ref{lem:gbound}, we have
        \begin{align*}
            \left|\pderiv{g}{x}\right|+\left|\pderiv{g}{y}\right|+\left|\pderiv{g}{z}\right| &\leq
                \frac{3\lambda-1}{\lambda\rho^2+1}+\frac{\rho-1}{4} \\
                &\leq \frac{3\frac{5-\rho}{6+\rho^3-3\rho^2}-1}{\frac{5-\rho}{6+\rho^3-3\rho^2}\rho^2+1}+\frac{\rho-1}{4} \\
                &= \frac{5-\rho}{4}.
        \end{align*}
    \item When $\lambda\leq 1$, according to Lemma~\ref{lem:gbound}, we have
        \begin{align*}
            \left|\pderiv{g}{x}\right|+\left|\pderiv{g}{y}\right|+\left|\pderiv{g}{z}\right| &\leq
                \frac{\lambda+1}{\lambda\rho^2+1}+\frac{\rho-1}{4} \\
                &\leq \frac{\frac{1}{2\rho-\rho^2+2}+1}{\frac{1}{2\rho-\rho^2+1}\rho^2+1}+\frac{\rho-1}{4} \\
                &= \frac{5-\rho}{4}.
        \end{align*}
\end{enumerate}
\end{proof}

\begin{lemma}
\label{lem:thm1g2}
Assume $\rho\geq 2$.
If $\max\left\{\frac{1}{2},1-\beta\left(\frac{1}{416}\right)\right\}<\lambda<\min\left\{2,1+\beta\left(\frac{1}{416}\right)\right\}$, then $\left|\pderiv{g}{x}\right|+\left|\pderiv{g}{y}\right|+\left|\pderiv{g}{z}\right|\leq\frac{7}{8}$.
\end{lemma}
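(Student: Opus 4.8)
The plan is to run the same two–case argument as in Lemma~\ref{lem:thm1g1}, but now the slack needed to push the bound below $1$ comes from the hypothesis $\rho\ge 2$ (equivalently $\rho^2\ge 4$) rather than from forcing $\lambda$ to be extremely close to $1$. First I would invoke Lemma~\ref{lem:gbound}. Its hypotheses $\tfrac12<\lambda<2$ are part of the statement, and the requirement $\varepsilon<\tfrac14$ follows from our choice of $\varepsilon$ below; by Corollary~\ref{cor:conclusion} together with the discussion after it, the condition $|\lambda-1|<\beta\!\left(\tfrac1{416}\right)$ forces $R(\dli)\in[\rho-\varepsilon,\rho+\varepsilon]$ for some $\varepsilon$ with $52\varepsilon<\tfrac18$ (note $52\cdot\tfrac1{416}=\tfrac18$), so all three recursion arguments $x,y,z$ lie in $(\rho-\varepsilon,\rho+\varepsilon)$ and Lemma~\ref{lem:gbound} applies, giving
\[
\left|\pderiv{g}{x}\right|+\left|\pderiv{g}{y}\right|+\left|\pderiv{g}{z}\right|\le\frac{|\lambda-1|+2\lambda}{\lambda\rho^2+1}+52\varepsilon<\frac{|\lambda-1|+2\lambda}{\lambda\rho^2+1}+\frac18.
\]

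It then remains to bound the rational function $\dfrac{|\lambda-1|+2\lambda}{\lambda\rho^2+1}$ by $\tfrac34$ using only $\rho^2\ge 4$ and $\tfrac12<\lambda<2$. Since $\rho^2\ge 4$ the denominator is at least $4\lambda+1>0$, so I would split into the two cases of Lemma~\ref{lem:thm1g1}. For $\lambda>1$ the numerator equals $3\lambda-1$, and $\tfrac{3\lambda-1}{4\lambda+1}$ is increasing in $\lambda$ (the numerator of its derivative is $7>0$), hence its supremum over $\lambda<2$ is $\tfrac59$. For $\lambda\le 1$ the numerator equals $\lambda+1$, and $\tfrac{\lambda+1}{4\lambda+1}$ is decreasing in $\lambda$ (the numerator of its derivative is $-3<0$), hence its supremum over $\lambda>\tfrac12$ is $\tfrac{3/2}{3}=\tfrac12$. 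In either case the main term is at most $\tfrac59<\tfrac34$, so the full sum is strictly less than $\tfrac34+\tfrac18=\tfrac78$, which is the claim.

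The one point demanding a little care is the bookkeeping of constants: verifying $52\cdot\tfrac1{416}=\tfrac18$ so that the $\varepsilon$ supplied by Corollary~\ref{cor:conclusion} genuinely satisfies both $52\varepsilon<\tfrac18$ and $\varepsilon<\tfrac14$, and checking that the interval for $\lambda$ in the statement is non-empty and contained in $(\tfrac12,2)$ (it is the intersection of $(\tfrac12,2)$ with the non-empty symmetric interval $\bigl(1-\beta(\tfrac1{416}),\,1+\beta(\tfrac1{416})\bigr)$). Beyond that, no new estimate is needed: the entire content is that for $\rho\ge 2$ the denominator $\lambda\rho^2+1$ is already large enough that the clean bound of Lemma~\ref{lem:gbound} closes, without shrinking $\lambda$ toward $1$ any further than what is anyway required to make $\varepsilon$ small. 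I do not expect a genuine obstacle here; the only minor subtlety is ensuring the hypotheses of Lemma~\ref{lem:gbound} are simultaneously in force under the stated conditions, which is immediate from the choice of $\beta$.
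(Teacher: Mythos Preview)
Your proposal is correct and follows essentially the same argument as the paper: invoke Lemma~\ref{lem:gbound} with $52\varepsilon<\tfrac18$ (guaranteed by $|\lambda-1|<\beta(\tfrac1{416})$), replace $\lambda\rho^2+1$ by $4\lambda+1$ using $\rho\ge 2$, and split into the cases $\lambda>1$ and $\lambda\le 1$. The only cosmetic difference is that the paper bounds $\tfrac{3\lambda-1}{4\lambda+1}$ directly by $\tfrac34$ (which holds for all $\lambda$), whereas you compute the sharper supremum $\tfrac59$ using $\lambda<2$; both close the argument.
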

\begin{proof}
    Choose $\varepsilon$ in Lemma~\ref{lem:gbound} such that $52\varepsilon<\frac{1}{8}$.
    According to Lemma~\ref{cor:conclusion}, this can be done by setting $|\lambda-1|<\beta\left(\frac{1}{416}\right)$.
\begin{enumerate}[(1)]
    \item When $\lambda>1$, according to Lemma~\ref{lem:gbound}, we have
        \begin{align*}
            \left|\pderiv{g}{x}\right|+\left|\pderiv{g}{y}\right|+\left|\pderiv{g}{z}\right| &\leq
                \frac{3\lambda-1}{4\lambda+1}+\frac{1}{8}
                \leq \frac{7}{8}.
        \end{align*}
    \item When $\lambda<1$, according to Lemma~\ref{lem:gbound}, we have
        \begin{align*}
            \left|\pderiv{g}{x}\right|+\left|\pderiv{g}{y}\right|+\left|\pderiv{g}{z}\right| &\leq
                \frac{\lambda+1}{4\lambda+1}+\frac{1}{8}
                \leq \frac{\frac{1}{2}+1}{4\frac{1}{2}+1}+\frac{1}{8}<\frac{7}{8}.
        \end{align*}
\end{enumerate}
\end{proof}

\begin{lemma}\label{lem:dh}
  Let $\max\left\{\frac{1}{2},\frac{1}{\rho}\right\}<\lambda<2$ and $\varepsilon<\frac{1}{4}$. If $\rho-\varepsilon<x<\rho+\varepsilon$, then $\left|\deriv{h}{x}\right|\leq\frac{\lambda\mu\rho+\lambda+\lambda\mu^2}{1+2\lambda\mu\rho+\lambda\mu^2\rho}+64\varepsilon$.
\end{lemma}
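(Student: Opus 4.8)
The plan is to differentiate $h$ explicitly, observe that the numerator of $h'$ does not depend on $x$, and then push all of the $x$-dependence into the denominator $(1+\lambda\mu x)^2$, which we control using the hypothesis $x>\rho-\varepsilon$. Throughout we use $\rho\ge 1$ and $c=\rho-\tfrac1\rho$.

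First I would compute, by the quotient rule,
\[
\deriv{h}{x}=\frac{(c\mu+1)\lambda(1+\lambda\mu x)-\lambda\mu\bigl(\mu+(c\mu+1)\lambda x\bigr)}{(1+\lambda\mu x)^2}=\frac{\lambda\,(c\mu+1-\mu^2)}{(1+\lambda\mu x)^2},
\]
so the numerator is a constant. Since $c=\rho-\tfrac1\rho<\rho$ and both $c\mu+1$ and $\mu^2$ are positive, $\bigl|c\mu+1-\mu^2\bigr|\le\max\{c\mu+1,\mu^2\}\le c\mu+1+\mu^2\le\rho\mu+1+\mu^2$, hence $\left|\deriv{h}{x}\right|\le\dfrac{\lambda(\mu\rho+1+\mu^2)}{(1+\lambda\mu x)^2}$. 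Next, using $x>\rho-\varepsilon$ I would lower-bound the denominator by $(1+\lambda\mu(\rho-\varepsilon))^2=\bigl((1+\lambda\mu\rho)-\lambda\mu\varepsilon\bigr)^2\ge(1+\lambda\mu\rho)\bigl((1+\lambda\mu\rho)-2\lambda\mu\varepsilon\bigr)$, dropping the nonnegative square term. Writing $b=1+\lambda\mu\rho$ and applying the Fact to $\tfrac1{b-2\lambda\mu\varepsilon}$,
\[
\left|\deriv{h}{x}\right|\le\frac{\lambda(\mu\rho+1+\mu^2)}{b^2}+\frac{2\lambda^2\mu(\mu\rho+1+\mu^2)}{b^2\,(b-2\lambda\mu\varepsilon)}\,\varepsilon .
\]

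For the first term the numerator is exactly $\lambda\mu\rho+\lambda+\lambda\mu^2$, and $b^2=1+2\lambda\mu\rho+\lambda^2\mu^2\rho^2\ge 1+2\lambda\mu\rho+\lambda\mu^2\rho$ because $\lambda\rho>1$; this reproduces the claimed main term. For the error term, $\varepsilon<\tfrac14$ gives $b-2\lambda\mu\varepsilon\ge b-\tfrac{\lambda\mu}2=1+\lambda\mu(\rho-\tfrac12)\ge\tfrac b2$ (the last step is $1+\lambda\mu(\rho-1)\ge0$, true since $\rho\ge1$), so $b^2(b-2\lambda\mu\varepsilon)\ge\tfrac12 b^3$ and the error term is at most $\dfrac{4\lambda^2\mu(\mu\rho+1+\mu^2)}{b^3}\,\varepsilon$. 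Expanding $\lambda^2\mu(\mu\rho+1+\mu^2)=\lambda^2\mu^2\rho+\lambda^2\mu+\lambda^2\mu^3$ and comparing each summand with the matching monomial of $b^3=(1+\lambda\mu\rho)^3$ — using $b^3\ge3\lambda^2\mu^2\rho^2$, $b^3\ge\max\{1,3\lambda\mu\rho\}$, and $b^3\ge\lambda^3\mu^3\rho^3$ respectively, together with $\rho\ge1$ and $\tfrac12<\lambda<2$ — the three ratios are at most $\tfrac13$, $2$, and $2$, so $\dfrac{\lambda^2\mu(\mu\rho+1+\mu^2)}{b^3}<5$ and the error term is below $20\varepsilon<64\varepsilon$, which finishes the plan.

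The delicate point is that $\mu$ is unbounded (only $\mu\ge p$ is assumed), so every estimate must be uniform in $\mu$; this works because the numerator of $h'$ is $O(\mu^2)$ while $b^2$ and $b^3$ grow like $\mu^2$ and $\mu^3$, keeping all the relevant ratios bounded. The constant $64$ is comfortably loose — the computation above in fact yields roughly $20$ — so there is ample slack for a less optimized bookkeeping of the $\varepsilon$-coefficient.
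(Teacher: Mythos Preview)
Your proof is correct and follows essentially the same route as the paper: compute $h'(x)=\lambda(c\mu+1-\mu^2)/(1+\lambda\mu x)^2$, bound the numerator by $\lambda(\rho\mu+1+\mu^2)$ using $c<\rho$, lower-bound the denominator via $x>\rho-\varepsilon$, and then split into the main term plus an $\varepsilon$-error using the Fact and $\lambda\rho>1$. The only difference is bookkeeping on the error term --- the paper massages the denominator into $(1+\lambda\mu\rho)(1+\tfrac12\lambda^2\mu^2\rho)$ and then substitutes the numerical bounds on $\lambda$, whereas you more directly use $b-2\lambda\mu\varepsilon\ge b/2$ and compare monomials against pieces of $b^3$; your version is a little cleaner and even yields a smaller constant.
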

\begin{proof}
  \begin{align*}
    \left|\deriv{h}{x}\right|
    &= \left|\frac{\lambda(c\mu+1)-\lambda\mu^2}{(1+\lambda\mu x)^2}\right| \\
    &\le \frac{\lambda\rho\mu+\lambda+\lambda\mu^2}{(1+\lambda\mu(\rho-\varepsilon))^2} \\
    &\le \frac{\lambda\mu\rho+\lambda+\lambda\mu^2}{(1+\lambda\mu\rho)^2-2\lambda\mu(1+\lambda\mu\rho)\varepsilon}\\
    &\le \frac{\lambda\mu\rho+\lambda+\lambda\mu^2}{(1+\lambda\mu\rho)^2}+\frac{(\mu\rho+1+\mu^2)2\lambda^2\mu}{(1+\lambda\mu\rho)\left(\left(1+\lambda\mu\rho\right)^2-2\lambda\mu\left(1+\lambda\mu\rho\right)\varepsilon\right)}\varepsilon \\
    &\le \frac{\lambda\mu\rho+\lambda+\lambda\mu^2}{1+2\lambda\mu\rho+\lambda\mu^2\rho}+\frac{(\mu\rho+1+\mu^2)2\lambda^2\mu}{(1+\lambda\mu\rho)(1+2\lambda\mu\rho+\lambda^2\mu^2\rho^2-\frac{1}{2}\lambda\mu-\frac{1}{2}\lambda^2\mu^2\rho)}\varepsilon \\
    &\le \frac{\lambda\mu\rho+\lambda+\lambda\mu^2}{1+2\lambda\mu\rho+\lambda\mu^2\rho}+\frac{(\mu\rho+1+\mu^2)2\lambda^2\mu}{(1+\lambda\mu\rho)\left(1+\frac{1}{2}\lambda^2\mu^2\rho\right)}\varepsilon \\
    &= \frac{\lambda\mu\rho+\lambda+\lambda\mu^2}{1+2\lambda\mu\rho+\lambda\mu^2\rho}+\frac{2\lambda^2\mu^2\rho+2\lambda^2\mu+2\lambda^2\mu^3}{\frac{1}{2}\lambda^3\mu^3\rho^2+\frac{1}{2}\lambda^2\mu^2\rho+\lambda\mu\rho+1}\varepsilon \\
    &\le \frac{\lambda\mu\rho+\lambda+\lambda\mu^2}{1+2\lambda\mu\rho+\lambda\mu^2\rho}+\frac{8\mu^2\rho+8\mu+8\mu^3}{\frac{1}{8}\mu^2\rho+\mu+\frac{1}{4}\mu^3}\varepsilon \\
    &\le \frac{\lambda\mu\rho+\lambda+\lambda\mu^2}{1+2\lambda\mu\rho+\lambda\mu^2\rho}+64\varepsilon
    \le \max\left\{\frac{1}{2},\frac{\lambda\mu+\lambda+\lambda\mu^2}{1+2\lambda\mu+\lambda\mu^2}\right\}+64\varepsilon
  \end{align*}
\end{proof}

\begin{lemma}\label{lem:dht1}
Assume $0<\mu<1$.%
If $\max\left\{\frac{1}{2},\frac{1}{\rho},1-\beta\left(\frac{\mu}{1280}\right)\right\}<\lambda<\min\left\{2,\frac{10-\mu}{\mu^3+2\mu^2-10\mu+10},1+\beta\left(\frac{\mu}{1280}\right)\right\}$, then $\left|\deriv{h}{x}\right|\leq 1-\frac{\mu}{20}<1$.
\end{lemma}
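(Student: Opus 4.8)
The plan is to follow the template of Lemmas~\ref{lem:thm1g1} and~\ref{lem:thm1g2}, now applied to $h$ via Lemma~\ref{lem:dh} rather than to $g$ via Lemma~\ref{lem:gbound}. Before anything else I would record two elementary facts that make the hypothesis sensible: the polynomial $\mu^3+2\mu^2-10\mu+10$ is strictly positive on $(0,1)$ (it is decreasing there, with value $3$ at $\mu=1$), and $\frac{10-\mu}{\mu^3+2\mu^2-10\mu+10}>1$ for $\mu<\sqrt{10}-1$ and hence for $\mu<1$; together with $\frac1\rho<1$ this shows the prescribed window for $\lambda$ is a genuine two-sided neighborhood of $1$.

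First I would fix $\varepsilon=\frac{\mu}{1280}$. The hypothesis gives $|\lambda-1|<\beta\bigl(\frac{\mu}{1280}\bigr)=\beta(\varepsilon)$, so by Corollary~\ref{cor:conclusion} (equivalently, by the opening remark of this subsection) every dangling instance $\Omega^e$ with $SP(\Omega^e)$ large enough satisfies $R(\Omega^e)\in[R_1,R_2]$ with $[R_1,R_2]\subset(\rho-\varepsilon,\rho+\varepsilon)$. Since by the setup of the section it suffices to bound $\alpha_1(x)=\bigl|\deriv{h}{x}\bigr|$ for $x\in[R_1,R_2]$, and since the hypothesis also supplies $\max\{\tfrac12,\tfrac1\rho\}<\lambda<2$ and $\varepsilon<\tfrac14$, Lemma~\ref{lem:dh} is available on this range; bounding its conclusion by $\max\{\tfrac12,\frac{\lambda\mu+\lambda+\lambda\mu^2}{1+2\lambda\mu+\lambda\mu^2}\}$, exactly as in the final inequality of its proof, yields
\[
\Bigl|\deriv{h}{x}\Bigr|\le\max\Bigl\{\tfrac12,\ \tfrac{\lambda\mu+\lambda+\lambda\mu^2}{1+2\lambda\mu+\lambda\mu^2}\Bigr\}+64\varepsilon .
\]

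Next I would control the maximum. Since $\mu<5$ we have $\tfrac12\le 1-\tfrac{\mu}{10}$. For the other term put $\varphi(\lambda)=\frac{\lambda(1+\mu+\mu^2)}{1+\lambda\mu(2+\mu)}$; its $\lambda$-derivative equals $\frac{1+\mu+\mu^2}{(1+\lambda\mu(2+\mu))^2}>0$, so $\varphi$ is increasing, and a direct substitution (the numerator becoming $(10-\mu)(1+\mu+\mu^2)=10+9\mu+9\mu^2-\mu^3$ and the denominator becoming $\mu^3+2\mu^2-10\mu+10+(10-\mu)\mu(2+\mu)=10(1+\mu+\mu^2)$) gives $\varphi\bigl(\frac{10-\mu}{\mu^3+2\mu^2-10\mu+10}\bigr)=1-\tfrac{\mu}{10}$. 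The hypothesis $\lambda<\frac{10-\mu}{\mu^3+2\mu^2-10\mu+10}$ then forces $\varphi(\lambda)<1-\tfrac{\mu}{10}$. Using $64\varepsilon=64\cdot\frac{\mu}{1280}=\frac{\mu}{20}$ I would conclude
\[
\Bigl|\deriv{h}{x}\Bigr|\le\Bigl(1-\tfrac{\mu}{10}\Bigr)+\tfrac{\mu}{20}=1-\tfrac{\mu}{20}<1 ,
\]
which is the assertion.

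No step is genuinely hard. The only things needing care are (i) calibrating $\varepsilon$ so that the error term $64\varepsilon$ consumes exactly half of the slack $\tfrac{\mu}{10}$ left by the bound on $\lambda$, and (ii) the algebraic identity $\varphi\bigl(\frac{10-\mu}{\mu^3+2\mu^2-10\mu+10}\bigr)=1-\tfrac{\mu}{10}$ and the sign checks that keep the admissible $\lambda$-interval nonempty; all else is a direct appeal to Lemma~\ref{lem:dh} and Corollary~\ref{cor:conclusion}.
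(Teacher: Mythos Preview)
Your proposal is correct and follows essentially the same approach as the paper: invoke Lemma~\ref{lem:dh} with $\varepsilon$ small enough that $64\varepsilon\le\mu/20$ (justified via $\beta(\cdot)$ and Corollary~\ref{cor:conclusion}), then use monotonicity of $\varphi(\lambda)=\frac{\lambda(1+\mu+\mu^2)}{1+\lambda\mu(2+\mu)}$ and the identity $\varphi\bigl(\tfrac{10-\mu}{\mu^3+2\mu^2-10\mu+10}\bigr)=1-\tfrac{\mu}{10}$ to bound the main term. Your write-up is in fact more explicit than the paper's, spelling out the monotonicity check, the algebraic identity, and the nonemptiness of the $\lambda$-window, all of which the paper leaves implicit.
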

\begin{proof}
  Choose $\varepsilon$ in Lemma~\ref{lem:dh} such that $64\varepsilon<\frac{\mu}{20}$, according to Lemma~\ref{cor:conclusion}, this can be done by setting $|\lambda-1|<\beta\left(\frac{\mu}{1280}\right)$.
    \begin{align*}
        \left|\deriv{h}{x}\right| &\leq \max\left\{\frac{1}{2},\frac{\lambda\mu+\lambda+\lambda\mu^2}{1+2\lambda\mu+\lambda\mu^2}\right\}+\frac{\mu}{20} \\
            &\leq \max\left\{\frac{1}{2},\frac{\frac{10-\mu}{\mu^3+2\mu^2-10\mu+10}\left(\mu+1+\mu^2\right)}{1+2\frac{10-\mu}{\mu^3+2\mu^2-10\mu+10}\mu+\frac{10-\mu}{\mu^3+2\mu^2-10\mu+10}\mu^2}\right\}+\frac{\mu}{20} \\
            &= 1-\frac{\mu}{20}.
    \end{align*}
\end{proof}

\begin{lemma}\label{lem:dht2}
Assume $\mu\ge 1$.
If $\max\left\{\frac{1}{2},\frac{1}{\rho},1-\beta\left(\frac{\rho-1}{256\rho}\right)\right\}<\lambda<\min\left\{2,1+\beta\left(\frac{\rho-1}{512\rho}\right)\right\}$, then $\left|\deriv{h}{x}\right|\leq \frac{3\rho+1}{4\rho}<1$.
\end{lemma}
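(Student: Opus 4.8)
The statement is the $\mu\ge 1$ counterpart of Lemma~\ref{lem:dht1}, so I would follow the same overall route: reduce, via (the proof of) Lemma~\ref{lem:dh}, to a purely numeric inequality about $\left|\deriv{h}{x}\right|$ for $x$ near $\rho$, and then push it through. The one thing that must change is \emph{which} estimate from the proof of Lemma~\ref{lem:dh} I quote. The bound stated in Lemma~\ref{lem:dh}, $\max\{\tfrac12,\tfrac{\lambda\mu+\lambda+\lambda\mu^2}{1+2\lambda\mu+\lambda\mu^2}\}+64\varepsilon$, tends to $1$ as $\mu\to\infty$ and is therefore useless here; the culprit is the very last step of that proof, which replaces the factors $\rho$ in the numerator and denominator by $1$. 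So I would stop one step earlier in that proof and record the sharper intermediate estimate
\[
\left|\deriv{h}{x}\right|\le\frac{\lambda\mu\rho+\lambda+\lambda\mu^2}{1+2\lambda\mu\rho+\lambda\mu^2\rho}+64\varepsilon ,
\]
which is valid under exactly the hypotheses already in force ($\tfrac12<\lambda<2$, $\lambda\ge\tfrac1\rho$, $\varepsilon<\tfrac14$, $\rho-\varepsilon<x<\rho+\varepsilon$), and which behaves like $\tfrac1\rho$ rather than like $1$ for large $\mu$. Here $\varepsilon=\Delta=\max\{|\rho-R_1|,|\rho-R_2|\}$ is the half-width supplied by Corollary~\ref{cor:conclusion}.

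Next I would dispose of the main term by a one–line polynomial computation: for all $\mu\ge 1$ and $\rho\ge 1$ (and any $\lambda>0$),
\[
\frac{\lambda\mu\rho+\lambda+\lambda\mu^2}{1+2\lambda\mu\rho+\lambda\mu^2\rho}\le\frac{\rho+1}{2\rho},
\]
since clearing denominators turns this into $(\rho+1)+2\lambda\rho(\mu-1)+\lambda\mu^2\rho(\rho-1)\ge 0$, a sum of nonnegative terms when $\mu\ge 1$ and $\rho\ge 1$. Crucially, the slack here is independent of $\mu$, which is exactly why no case analysis on the size of $\mu$ is needed (in contrast with the treatment of small $\mu$ in Lemma~\ref{lem:dht1}).

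Finally I would absorb the error term. Since $\tfrac{3\rho+1}{4\rho}-\tfrac{\rho+1}{2\rho}=\tfrac{\rho-1}{4\rho}$, it suffices to arrange $64\varepsilon<\tfrac{\rho-1}{4\rho}$, i.e.\ $\varepsilon<\tfrac{\rho-1}{256\rho}$. By Corollary~\ref{cor:conclusion} this is achieved (by taking the depth $L$ large enough) once $|\lambda-1|<\beta\!\left(\tfrac{\rho-1}{256\rho}\right)$, and the hypothesised range of $\lambda$ — namely $1-\beta\!\left(\tfrac{\rho-1}{256\rho}\right)<\lambda<1+\beta\!\left(\tfrac{\rho-1}{512\rho}\right)$, with $\beta$ taken nondecreasing — does imply this. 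Combining the three displays gives $\left|\deriv{h}{x}\right|<\tfrac{\rho+1}{2\rho}+\tfrac{\rho-1}{4\rho}=\tfrac{3\rho+1}{4\rho}$, and $\tfrac{3\rho+1}{4\rho}<1$ precisely because $\rho>1$ (which holds whenever $c>0$, the regime in which the lemma is invoked); one also notes $\varepsilon<\tfrac1{256}<\tfrac14$ and $\rho-\varepsilon>0$, so all the manipulations above are legitimate.

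The only genuine subtlety — and the step I would be most careful about — is the first one: recognising that Lemma~\ref{lem:dh}'s \emph{stated} conclusion is too lossy for unbounded $\mu$ and that the remedy is simply to keep the ``$\rho$-version'' of the estimate that already appears inside its proof. After that, everything is a short polynomial inequality together with the routine bookkeeping of weighing $64\varepsilon$ against the slack $\tfrac{\rho-1}{4\rho}$; no new idea is required. As an alternative to quoting Lemma~\ref{lem:dh} one could bound $\left|\deriv{h}{x}\right|=\tfrac{\lambda|c\mu+1-\mu^2|}{(1+\lambda\mu x)^2}$ directly after splitting on $\mu\le\rho$ versus $\mu>\rho$ (using $|c\mu+1-\mu^2|\le c\mu+1\le\rho\mu+1$ in the first case and $|c\mu+1-\mu^2|<\mu^2$ in the second), but routing through Lemma~\ref{lem:dh} is cleaner because the delicate $\varepsilon$-error analysis has already been carried out there.
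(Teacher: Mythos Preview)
Your argument is correct and is essentially the paper's own proof: use the bound from Lemma~\ref{lem:dh}, check that the main term is at most $\alpha=\tfrac{\rho+1}{2\rho}$ when $\mu\ge 1$, and then choose $\varepsilon$ small enough so that the $64\varepsilon$ error fits under the remaining gap $\tfrac{1-\alpha}{2}=\tfrac{\rho-1}{4\rho}$. One small correction: the bound you call ``the sharper intermediate estimate'' \emph{is} the stated conclusion of Lemma~\ref{lem:dh}; the $\max\{\tfrac12,\ldots\}$ form you attribute to the lemma is only the last display in its proof, so no ``stopping one step earlier'' is needed --- you can quote Lemma~\ref{lem:dh} as stated.
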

\begin{proof}
  According to Lemma~\ref{lem:dh},
  \begin{align*}
    \left|\deriv{h}{x}\right|\leq \frac{\mu\rho+1+\mu^2}{2\mu\rho+\mu^2\rho}+64x\varepsilon.
  \end{align*}
  Set $\alpha=\frac{\rho+1}{2\rho}$. Since
  \begin{align*}
    \frac{\mu\rho+1+\mu^2}{2\mu\rho+\mu^2\rho}-\alpha &= \frac{(\mu\rho+1-2\alpha\mu\rho)+(\mu^2-\alpha\mu^2\rho)}{2\mu\rho+\mu^2\rho}<0,
  \end{align*}
  if we choose $\varepsilon$ in Lemma~\ref{lem:dh} such that $128\varepsilon<\frac{1-\alpha}{2}$, then
  \begin{align*}
    \left|\deriv{h}{x}\right|\leq \alpha+\frac{1-\alpha}{2}=\frac{3\rho+1}{4\rho}.
  \end{align*}
\end{proof}

\begin{lemma}\label{lem:dghx}
  Let $\max\left\{\frac{1}{2},\frac{1}{\rho}\right\}<\lambda<2$ and $\varepsilon<\frac{1}{4}$. If $\rho-\varepsilon<x<\rho+\varepsilon$, then $\left|\pderiv{\hat{g}}{x}\right|\leq\frac{\lambda\rho y+\lambda^2 y^2+1}{1+2\lambda\rho y+\lambda^2\rho^2 y^2}+256\varepsilon$.
\end{lemma}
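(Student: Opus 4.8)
The plan is to compute $\pderiv{\hat g}{x}$ in closed form, bound it by pushing $x$ to its extreme value $\rho-\varepsilon$ together with a crude bound on the numerator, and then peel off the claimed main term, checking that what is left is $O(\varepsilon)$ \emph{uniformly in the free variable $y$}. First, writing $\hat g(x,y)=\frac{x(\lambda c y+1)+\lambda y}{\lambda x y+1}$ and differentiating, the $\lambda x y$ terms in the numerator of the quotient rule cancel (using $(\lambda x y+1)-\lambda x y=1$), giving
\[
\pderiv{\hat g}{x}=\frac{\lambda c y+1-\lambda^2 y^2}{(\lambda x y+1)^2}.
\]
(Equivalently, $\hat g(x,y)=g(x,y,y)$, so this is the formula for $\pderiv{g}{x}$ from the proof of Lemma~\ref{lem:dx} with $z=y$.) Since $0\le c=\rho-\tfrac1\rho<\rho$, a short case check on the sign of the numerator yields $|\lambda c y+1-\lambda^2 y^2|\le \lambda\rho y+\lambda^2 y^2+1$; set $N=\lambda\rho y+\lambda^2 y^2+1$, so the right-hand side of the lemma is exactly $\frac{N}{(\lambda\rho y+1)^2}$ (because $1+2\lambda\rho y+\lambda^2\rho^2 y^2=(1+\lambda\rho y)^2$).

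Next, since $\rho\ge 1$ and $\varepsilon<\tfrac14$ we have $\lambda(\rho-\varepsilon)y+1>0$, so $(\lambda x y+1)^2\ge(\lambda(\rho-\varepsilon)y+1)^2$ for $x>\rho-\varepsilon$, and hence $|\pderiv{\hat g}{x}|\le \frac{N}{(\lambda(\rho-\varepsilon)y+1)^2}$. Subtracting the main term: with $P=\lambda\rho y+1$ and $Q=\lambda(\rho-\varepsilon)y+1$ we have $P-Q=\lambda\varepsilon y$ and $P+Q\le 2P$, so
\[
\frac{N}{Q^2}-\frac{N}{P^2}=N\,\frac{(P+Q)(P-Q)}{P^2Q^2}\le\frac{2\lambda\varepsilon y\,N}{P\,Q^2}.
\]
(This is just the squared-denominator version of the Fact.) It therefore remains to show $\frac{2\lambda y\,N}{P\,Q^2}\le 256$ for all $y>0$.

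For the uniform bound, substitute $s=\lambda y>0$, so $N=s^2+\rho s+1$, $P=\rho s+1$, $Q=(\rho-\varepsilon)s+1$, and the quantity becomes $\frac{2s(s^2+\rho s+1)}{(\rho s+1)((\rho-\varepsilon)s+1)^2}$. Using $\rho-\varepsilon>\tfrac34$ and $\rho<(\rho-\varepsilon)+\tfrac14<\tfrac43(\rho-\varepsilon)$, I would bound the three pieces of the numerator $2s^3+2\rho s^2+2s$ separately against $P\,Q^2\ge((\rho-\varepsilon)s+1)^3$, via the elementary facts $\max_{w>0}\frac{w^2}{(1+w)^3}=\max_{w>0}\frac{w}{(1+w)^3}=\frac{4}{27}$ (substituting $w=(\rho-\varepsilon)s$); each piece is then at most a small absolute constant, their sum is comfortably below $8$, hence below $256$, and the choice $t=O(\log\frac1\varepsilon)$ in the algorithm is unaffected.

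The main obstacle is exactly this last uniformity: unlike Lemmas~\ref{lem:dx}--\ref{lem:dz}, the variable $y$ is not confined near $\rho$, so the estimate is an honest function of $y$ and the error must be controlled over all of $(0,\infty)$. The point to verify is that the cubic growth of the numerator in $s=\lambda y$, and its extra factor of $\rho$, are both absorbed by the product of three linear-in-$s$ factors in the denominator (each $\gtrsim (\rho-\varepsilon)s$ for large $s$ and $\ge 1$ for small $s$); the constant $256$ is deliberately loose, so no delicate optimization is needed.
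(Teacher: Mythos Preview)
Your proposal is correct and follows essentially the same approach as the paper: compute $\pderiv{\hat g}{x}=\frac{\lambda c y+1-\lambda^2 y^2}{(\lambda xy+1)^2}$, bound the numerator by $N=\lambda\rho y+\lambda^2 y^2+1$, push $x\to\rho-\varepsilon$ in the denominator, split off the main term $N/(1+\lambda\rho y)^2$, and then show the remaining $\varepsilon$-coefficient is bounded uniformly in $y$. Your handling of the error term via the identity $\tfrac{1}{Q^2}-\tfrac{1}{P^2}=\tfrac{(P+Q)(P-Q)}{P^2Q^2}$ together with the substitution $s=\lambda y$ is in fact a bit cleaner than the paper's version, which instead expands $(\lambda(\rho-\varepsilon)y+1)^2$ linearly in $\varepsilon$, applies the Fact, and then bounds term-by-term using $\tfrac12<\lambda<2$ explicitly; but the two arguments are minor variants of the same idea, and both arrive well under the (deliberately loose) constant $256$.
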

\begin{proof}
    \begin{align*}
      \left|\pderiv{\hat{g}}{x}\right|
      &\le \frac{\lambda\rho y+\lambda^2 y^2+1}{\left(\lambda\left(\rho-\varepsilon\right)y+1\right)^2} \\
      &\le \frac{\lambda\rho y+\lambda^2 y^2+1}{\lambda^2\rho^2y^2+2\lambda\rho y+1-2\lambda y(\lambda\rho y+1)\varepsilon}\\
      &\leq \frac{\lambda^2\rho^2 y+\lambda^2 y^2+1}{\lambda^2\rho^2 y^2+2\lambda\rho y+1}
      +\frac{(\lambda\rho y+\lambda^2 y^2+1)2\lambda y}{\left(\lambda\rho y+1\right)\left(\lambda^2\rho^2 y^2+2\lambda\rho y+1-2\lambda y\left(\lambda\rho y+1\right)\varepsilon\right)}\varepsilon \\
      &\le \frac{\lambda\rho y+\lambda^2 y^2+1}{\lambda^2\rho^2 y^2+2\lambda\rho y+1}
      +\frac{(\lambda\rho y+\lambda^2 y^2+1)2\lambda y}{\left(\lambda\rho y+1\right)\left(\lambda^2\rho^2 y^2+2\lambda\rho y+1-\frac{1}{2}\lambda^2\rho y^2-\frac{1}{2}\lambda y\right)}\varepsilon \\
      &\le \frac{\lambda\rho y+\lambda^2 y^2+1}{\lambda^2\rho^2 y^2+2\lambda\rho y+1}
      +\frac{(\lambda\rho y+\lambda^2 y^2+1)2\lambda y}{\left(\lambda\rho y+1\right)\left(1+\lambda\rho y+\lambda^2 \rho y^2\left(\rho-\frac{1}{2}\right)\right)}\varepsilon \\
      &\le \frac{\lambda\rho y+\lambda^2 y^2+1}{\lambda^2\rho^2 y^2+2\lambda\rho y+1}
      +\frac{(\lambda\rho y+\lambda^2 y^2+1)2\lambda y}{\left(\lambda\rho y+1\right)\left(1+\frac{1}{2}\lambda^2\rho y^2\right)}\varepsilon \\
      &= \frac{\lambda\rho y+\lambda^2 y^2+1}{\lambda^2\rho^2 y^2+2\lambda\rho y+1}
      +\frac{2\lambda^2\rho y^2+2\lambda^3 y^3+2\lambda y}{\lambda\rho y+1+\frac{1}{2}\lambda^3\rho^2 y^3+\frac{1}{2}\lambda^2\rho y^2}\varepsilon \\
      &= \frac{\lambda\rho y+\lambda^2 y^2+1}{\lambda^2\rho^2 y^2+2\lambda\rho y+1}
      +\frac{8\rho y^2+16y^3+4y}{\frac{1}{8}\rho y^2+\frac{1}{16}y^3+\frac{1}{2}y}\varepsilon \\
      &= \frac{\lambda\rho y+\lambda^2 y^2+1}{\lambda^2\rho^2 y^2+2\lambda\rho y+1}+256\varepsilon.
    \end{align*}
\end{proof}

In the following, we fix $p$ as a nonnegative number, then we have:

\begin{lemma}\label{lem:thm1ghx}
  If $\max\left\{\frac{1}{2},\frac{1}{\rho},1-\beta\left(\frac{1}{512}\min\left\{\frac{\rho^2-1}{2\rho^2},\frac{p}{p+1}\right\}\right)\right\}<\lambda<\min\left\{2,1+\beta\left(\frac{1}{512}\min\left\{\frac{\rho^2-1}{2\rho^2},\frac{p}{p+1}\right\}\right)\right\}$ and $y\geq p$, then $\left|\pderiv{\hat{g}}{x}\right|<\max\left\{\frac{\rho^2+1}{2\rho^2},\frac{3p+4}{4p+4}\right\}<1$.
\end{lemma}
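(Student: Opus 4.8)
The plan is to derive the bound from Lemma~\ref{lem:dghx} together with the range estimate of Corollary~\ref{cor:conclusion}, and then to reduce everything to a one-variable calculus estimate.

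First I would put myself in a position to apply Lemma~\ref{lem:dghx}. Its hypotheses $\max\{\tfrac12,\tfrac1\rho\}<\lambda<2$ and $\varepsilon<\tfrac14$ are contained in (or implied by) the hypothesis of the present lemma, so the only thing left to supply is that the argument $x$ fed into $\hat g$ satisfies $\rho-\varepsilon<x<\rho+\varepsilon$. This is exactly what Corollary~\ref{cor:conclusion} gives: set $\eta=\tfrac{1}{512}\min\{\tfrac{\rho^2-1}{2\rho^2},\tfrac{p}{p+1}\}$; since the hypothesis forces every edge weight to lie within $\beta(\eta)$ of $1$, we have $k=\max\{|\lambda_1-1|,|\lambda_2-1|\}\le\beta(\eta)$, and choosing the depth $L$ large enough yields $|x-\rho|<\varepsilon$ with $256\varepsilon<256\eta=\tfrac12\min\{\tfrac{\rho^2-1}{2\rho^2},\tfrac{p}{p+1}\}$. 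Lemma~\ref{lem:dghx} then gives, writing $u=\lambda y$,
\[
\Bigl|\pderiv{\hat g}{x}\Bigr|\le\psi(u)+256\varepsilon,\qquad
\psi(u):=\frac{\lambda\rho y+\lambda^2 y^2+1}{1+2\lambda\rho y+\lambda^2\rho^2 y^2}=\frac{u^2+\rho u+1}{(\rho u+1)^2}.
\]

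Next I would analyze $\psi$. A direct computation gives $\psi'(u)=\frac{(2-\rho^2)u-\rho}{(\rho u+1)^3}$ and the identity $1-\psi(u)=\frac{u((\rho^2-1)u+\rho)}{(\rho u+1)^2}$. Hence $\psi$ is strictly decreasing on $(0,\infty)$ when $\rho^2\ge2$, while for $\rho^2<2$ it first decreases and then increases, with $\lim_{u\to\infty}\psi(u)=\rho^{-2}$; in every case $\sup_{u\ge u_0}\psi(u)\le\max\{\psi(u_0),\rho^{-2}\}$, and $\rho^{-2}<\tfrac{\rho^2+1}{2\rho^2}$ because $\rho>1$. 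Since $\lambda>\max\{\tfrac12,1-\beta(\eta)\}$ and $y\ge p$, the argument $u=\lambda y$ is bounded below by $u_0:=(1-\beta(\eta))p$; moreover $\beta(\eta)$ is forced to be small, since Corollary~\ref{cor:conclusion} can only be invoked when $\beta(\eta)\le\min\{\tfrac12,\tfrac{c^2}{2},\tfrac{cp^4}{2(1+p^2)},\tfrac{p(1+\rho^2)}{16\rho}\}$, so $u_0$ is close to $p$, and in particular $u_0$ is small only when $c$ (equivalently $\rho-1$) is small.

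It then remains to establish the numerical inequality $1-\psi(u_0)\ge\min\{\tfrac{\rho^2-1}{2\rho^2},\tfrac{p}{4(p+1)}\}+256\varepsilon$. Combined with the previous step and the identity $1-\max\{\tfrac{\rho^2+1}{2\rho^2},\tfrac{3p+4}{4p+4}\}=\min\{\tfrac{\rho^2-1}{2\rho^2},\tfrac{p}{4(p+1)}\}$, this gives $\psi(u)+256\varepsilon<\max\{\tfrac{\rho^2+1}{2\rho^2},\tfrac{3p+4}{4p+4}\}<1$, which is the assertion. To prove the inequality I would split on the size of $\rho p$: when $\rho p$ is bounded below, $1-\psi(u_0)$ is close to its limiting value $\tfrac{\rho^2-1}{\rho^2}$, comfortably above $\tfrac{\rho^2-1}{2\rho^2}$, which already absorbs the $256\varepsilon<\tfrac{\rho^2-1}{4\rho^2}$ slack; when $\rho p$ is small, $c$ and hence $\beta(\eta)$ are small so $u_0$ is essentially $p$, and $1-\psi(u_0)\ge\frac{\rho u_0}{(\rho u_0+1)^2}$, which is at least a positive constant multiple of $p$ and beats $\tfrac{p}{4(p+1)}$ even after adding $256\varepsilon<\tfrac{p}{2(p+1)}$; the intermediate regime is handled by interpolating these two estimates.

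The main obstacle is this last numerical step: organizing the case split on $(\rho,p)$ cleanly and verifying that the margins genuinely dominate the $256\varepsilon$ slack uniformly. The delicate corner is $\rho\to1$ (equivalently $c\to0$), where $\tfrac{\rho^2+1}{2\rho^2}\to1$ and $\beta(\eta)\to0$ at the same time, so one must track how fast $\beta(\eta)$ shrinks against how tightly $u=\lambda y$ is pinned to $p$.
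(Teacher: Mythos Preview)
Your setup is correct and matches the paper: invoke Lemma~\ref{lem:dghx} to obtain $\bigl|\pderiv{\hat g}{x}\bigr|\le\psi(u)+256\varepsilon$ with $u=\lambda y$, and use the $\beta(\cdot)$ hypothesis (via Corollary~\ref{cor:conclusion}) to make $256\varepsilon$ small. But the case analysis on $\rho p$ that you flag as the ``main obstacle'' is not needed, and the paper avoids it entirely.

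Instead of studying $\psi'$ and bounding $\psi(u_0)$ at the endpoint $u_0=(1-\beta)p$, the paper sets $\alpha=\max\bigl\{\tfrac{1}{\rho^2},\tfrac{p+2}{2p+2}\bigr\}$ (so that the target bound is exactly $\tfrac{1+\alpha}{2}$) and shows $\psi(u)<\alpha$ \emph{uniformly} for all admissible $u$ by a one-line algebraic decomposition:
\[
\psi(u)-\alpha=\frac{\bigl(\lambda\rho y+1-2\alpha\lambda\rho y-\alpha\bigr)+\lambda^2y^2\bigl(1-\alpha\rho^2\bigr)}{(1+\lambda\rho y)^2}.
\]
The second summand is $\le 0$ because $\alpha\ge 1/\rho^2$. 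The first is $<0$ because $\alpha\ge\tfrac{p+2}{2(p+1)}$ gives $2\alpha-1\ge\tfrac{1}{p+1}$ and $1-\alpha\le\tfrac{p}{2(p+1)}$, while $\lambda>1/\rho$ and $y\ge p$ give $\lambda\rho y>p$, so $(2\alpha-1)\lambda\rho y>\tfrac{p}{p+1}>1-\alpha$. With $\psi<\alpha$ in hand, choosing $\varepsilon$ so that $256\varepsilon<\tfrac{1-\alpha}{2}$ gives $\bigl|\pderiv{\hat g}{x}\bigr|<\alpha+\tfrac{1-\alpha}{2}=\tfrac{1+\alpha}{2}$, done.

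Your route via the extremum of $\psi$ could be pushed through, but it forces you to track how $u_0$ differs from $p$ and to split on the regime of $\rho p$, including the delicate $\rho\to 1$ corner you identify. The paper's decomposition removes all of that: by taking $\alpha$ to be the \emph{maximum} of two candidates, each summand of the numerator is handled by one inequality, with no endpoint analysis and no dependence on how close $\lambda$ is to $1$ beyond $\lambda>1/\rho$.
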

\begin{proof}
  Set $\alpha=\max\left\{\frac{1}{\rho^2},\frac{p+2}{2p+2}\right\}$. Since
  $$\frac{\lambda\rho y+\lambda^2 y^2+1}{\lambda^2\rho^2 y^2+2\lambda \rho y+1}-\alpha
  =\frac{(\lambda\rho y+1-2\alpha\lambda\rho y-\alpha)+(\lambda^2 y^2-\alpha\lambda^2\rho^2 y^2)}{\lambda\rho y^2+2\lambda\rho y+1}<0,$$
  if we choose $\varepsilon$ such that $256\varepsilon<\frac{1-\alpha}{2}$, then according to Lemma~\ref{lem:dghx}, we have
  $$\left|\pderiv{\hat{g}}{x}\right|\leq \alpha+\frac{1-\alpha}{2}=\frac{1+\alpha}{2}=\max\left\{\frac{\rho^2+1}{2\rho^2+1},\frac{3p+4}{4p+4}\right\}<1.$$
\end{proof}

It follows from Lemma~\ref{lem:dht1} and Lemma~\ref{lem:dht2} that

\begin{lemma}\label{lem:dh1}
Assume $p<1$.
If $\max\left\{\frac{1}{2},\frac{1}{\rho},1-\beta\left(\frac{p}{1280}\right)\right\}<\lambda<\min\left\{2,\frac{10-p}{p^3+2p^2-10p+10},1+\beta\left(\frac{p}{1280}\right)\right\}$, then $\left|\deriv{h}{x}\right|\leq \max\left\{\frac{3\rho+1}{4\rho},1-\frac{p}{20}\right\}<1$.
\end{lemma}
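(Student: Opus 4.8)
The plan is to reduce Lemma~\ref{lem:dh1} to the two preceding lemmas by a case split on the value of the parameter $\mu$ of $h$ (recall $\mu\ge p$): the range $\mu\in[p,1)$ is covered by Lemma~\ref{lem:dht1}, the range $\mu\in[1,+\infty)$ by Lemma~\ref{lem:dht2}, and the claimed bound $\max\{\tfrac{3\rho+1}{4\rho},1-\tfrac{p}{20}\}$ is exactly the larger of the two bounds these lemmas provide (once $1-\mu/20$ is relaxed to $1-p/20$ using $\mu\ge p$).

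For $\mu\in[p,1)$ I would check that the hypothesis of Lemma~\ref{lem:dh1} implies that of Lemma~\ref{lem:dht1} for that particular $\mu$. The only nontrivial point is the upper bound on $\lambda$: Lemma~\ref{lem:dht1} needs $\lambda<\frac{10-\mu}{\mu^3+2\mu^2-10\mu+10}$, whereas Lemma~\ref{lem:dh1} only assumes $\lambda<\frac{10-p}{p^3+2p^2-10p+10}$. So I would verify that $\mu\mapsto\frac{10-\mu}{\mu^3+2\mu^2-10\mu+10}$ is non-decreasing on $[0,1)$: the denominator is positive there and, since its derivative $3\mu^2+4\mu-10$ is negative on $[0,1]$, it is decreasing, and a short sign computation of the derivative of the quotient shows the quotient is non-decreasing; hence its minimum over $[p,1)$ is attained at $\mu=p$. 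The remaining conditions transfer verbatim ($\lambda\in(\max\{1/2,1/\rho\},2)$) or by monotonicity of $\beta(\cdot)$ ($|\lambda-1|<\beta(p/1280)\le\beta(\mu/1280)$ since $\mu\ge p$). Lemma~\ref{lem:dht1} then gives $|\deriv{h}{x}|\le 1-\mu/20\le 1-p/20$.

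For $\mu\ge 1$ I would invoke Lemma~\ref{lem:dht2} directly, obtaining $|\deriv{h}{x}|\le\frac{3\rho+1}{4\rho}$, after checking that $\lambda\in(\max\{1/2,1/\rho,1-\beta(p/1280)\},\min\{2,1+\beta(p/1280)\})$ is contained in the interval $(\max\{1/2,1/\rho,1-\beta(\tfrac{\rho-1}{256\rho})\},\min\{2,1+\beta(\tfrac{\rho-1}{512\rho})\})$ that Lemma~\ref{lem:dht2} requires. Combining the two cases over all $\mu\ge p$ yields $|\deriv{h}{x}|\le\max\{1-p/20,\ \frac{3\rho+1}{4\rho}\}$, and both numbers are $<1$ (the first since $p>0$, the second since $\rho>1$), which is the assertion.

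The one genuinely quantitative obstacle is the bookkeeping of the radii $\beta(\cdot)$: the single radius $\beta(p/1280)$ in the hypothesis of Lemma~\ref{lem:dh1} must be at most each of $\beta(\mu/1280)$, $\beta(\tfrac{\rho-1}{256\rho})$ and $\beta(\tfrac{\rho-1}{512\rho})$ demanded by the component lemmas. Monotonicity of $\beta$ settles the first automatically, and for the other two it reduces to $\tfrac{p}{1280}\le\tfrac{\rho-1}{512\rho}$, i.e.\ $p\le\tfrac52\bigl(1-\tfrac1\rho\bigr)$. Since Corollary~\ref{cor:conclusion} only asserts the \emph{existence} of a suitable $\beta$ (and any smaller positive radius also works), the clean way around is to replace $\beta$ throughout by the pointwise minimum of all the radii appearing in Lemmas~\ref{lem:dht1}--\ref{lem:dht2} evaluated at the relevant arguments, after which every interval inclusion above is immediate. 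Everything else is the elementary monotonicity check for the rational function and trivial interval comparisons.
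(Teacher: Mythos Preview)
Your approach is exactly what the paper intends: it states only ``It follows from Lemma~\ref{lem:dht1} and Lemma~\ref{lem:dht2}'' and leaves the case split on $\mu$ implicit. Your monotonicity check for $\mu\mapsto\frac{10-\mu}{\mu^3+2\mu^2-10\mu+10}$ and your observation about the $\beta$-radii bookkeeping (which the paper glosses over entirely) are the right details to fill in, and your fix of taking $\beta$ to be the minimum of the relevant radii is the standard way to make such statements rigorous.
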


\begin{lemma}\label{lem:dh2}
Assume $p\geq 1$.
If $\max\left\{\frac{1}{2},\frac{1}{\rho},1-\beta\left(\frac{\rho-1}{512\rho}\right)\right\}<\lambda<\min\left\{2,1+\beta\left(\frac{\rho-1}{512\rho}\right)\right\}$, then $\left|\deriv{h}{x}\right|\leq \frac{3\rho+1}{4\rho}<1$.
\end{lemma}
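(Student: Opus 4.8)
The plan is to derive Lemma~\ref{lem:dh2} directly from Lemma~\ref{lem:dht2}; the only work is to make the uniformity over the free parameter $\mu$ explicit and to check that the two hypotheses on $\lambda$ are compatible. Recall that the function $h$ from recursion~(\ref{eq:rec_single}) has the form $h=h^{c}_{\lambda,\mu}(x)=\frac{\mu+(c\mu+1)\lambda x}{1+\lambda\mu x}$, where $\mu=f_1/f_0$ is the local ratio of the Fibonacci function at the attaching vertex, and since the instance lies in ${\rm Holant}({\cal F}_{c,c}^{p},\Lambda_{\lambda_1,\lambda_2})$ we only know $\mu\ge p$; consequently the bound on $\alpha_1(x)=\left|\deriv{h}{x}\right|$ needed for Lemma~\ref{lem:algo} must hold for every $\mu\ge p$ simultaneously.

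First I would observe that when $p\ge 1$ every admissible $\mu$ satisfies $\mu\ge 1$, so Lemma~\ref{lem:dht2} applies to each individual function $h^{c}_{\lambda,\mu}$. Next I would note that both the bound $\frac{3\rho+1}{4\rho}$ and the $\lambda$-window produced by Lemma~\ref{lem:dht2} depend only on $\rho$ (equivalently $c$), not on $\mu$: in its proof the quantity $\frac{\mu\rho+1+\mu^2}{2\mu\rho+\mu^2\rho}$ is dominated by $\alpha=\frac{\rho+1}{2\rho}$ for \emph{all} $\mu\ge 1$, and the calibration of $\varepsilon$ against $|\lambda-1|$ through Corollary~\ref{cor:conclusion} (with the trivial potential $\Phi\equiv 1$) is likewise $\mu$-independent. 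Hence the estimate $\left|\deriv{h}{x}\right|\le\frac{3\rho+1}{4\rho}$ holds uniformly over all $\mu\ge p$. Finally, since $\beta(\cdot)$ may be taken non-decreasing (if $k\le\beta(\eta)$ forces $\Delta<\eta$, the same $k$ forces $\Delta<\eta'$ for every $\eta'\ge\eta$) and $\frac{\rho-1}{512\rho}\le\frac{\rho-1}{256\rho}$, the $\lambda$-interval assumed in Lemma~\ref{lem:dh2} is contained in the one assumed in Lemma~\ref{lem:dht2}, so the hypotheses transfer. The strict inequality $\frac{3\rho+1}{4\rho}<1$ is immediate from $\rho>1$, which holds for every $c>0$.

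I do not expect a genuine obstacle: all of the analytic effort --- the derivative estimate of Lemma~\ref{lem:dh}, the choice of potential, and the relation between the ``warm'' range of $x$ and $|\lambda-1|$ --- has already been done in Lemmas~\ref{lem:dh} and~\ref{lem:dht2}, and Corollary~\ref{cor:conclusion} is stated for ${\cal F}_{c,c}^{p}$ with arbitrary $p$, so the warm range is available for every $\mu\ge p$. The only point demanding a little care is confirming that the bound of Lemma~\ref{lem:dht2} is independent of $\mu$ (so that taking the supremum over $\mu\ge p\ge 1$ does not blow it up) and reconciling the minor mismatch in the $\beta$-arguments of the two statements; both are routine. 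Together with Lemma~\ref{lem:dh1} (the case $p<1$), this supplies the bound on $\alpha_1$ required to invoke Lemma~\ref{lem:algo}, which, combined with the bounds on $\alpha_2,\alpha_3,\alpha_4$ proved in the surrounding lemmas, completes the proof of Theorem~\ref{thm:1}.
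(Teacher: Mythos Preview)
Your proposal is correct and follows exactly the paper's route: the paper simply records that Lemma~\ref{lem:dh2} (together with Lemma~\ref{lem:dh1}) ``follows from Lemma~\ref{lem:dht1} and Lemma~\ref{lem:dht2}'', and you have spelled out precisely why---namely that for $p\ge 1$ every admissible $\mu$ satisfies $\mu\ge 1$, that the conclusion $\frac{3\rho+1}{4\rho}$ of Lemma~\ref{lem:dht2} is uniform in $\mu$, and that the $\lambda$-interval in Lemma~\ref{lem:dh2} sits inside that of Lemma~\ref{lem:dht2} by monotonicity of $\beta(\cdot)$.
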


If we define $h(\mu,x):=\frac{\mu+(c\mu+1)\lambda x}{1+\lambda\mu x}=h(x)$, then $\tilde g(x,y)=h(x,y)$. Thus $\pderiv{\tilde h(x,y)}{y}$. Then according to Lemma~\ref{lem:dh1} and Lemma~\ref{lem:dh2}, we have

\begin{lemma}\label{lem:thm1ghy1}
Assume $0<p<1$.%
If $\max\left\{\frac{1}{2},\frac{1}{\rho},1-\beta\left(\frac{p}{1280}\right)\right\}<\lambda<\min\left\{2,\frac{10-p}{p^3+2p^2-10p+10},1+\beta\left(\frac{p}{1280}\right)\right\}$ and $x\ge p$, then $\left|\deriv{\hat{g}}{y}\right|\leq \max\left\{\frac{3\rho+1}{4\rho},1-\frac{p}{20}\right\}<1$.
\end{lemma}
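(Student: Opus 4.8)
The plan is to deduce this bound from Lemma~\ref{lem:dh1} by noticing that, as a function of its \emph{second} argument, $\hat{g}$ has exactly the same shape as $h$, with the parameter $\mu$ of $h$ replaced by $x$. Write $h$ in two-variable form $h(\mu,t)=\frac{\mu+(c\mu+1)\lambda t}{1+\lambda\mu t}$, so that the function called $h$ in Lemma~\ref{lem:dh} and Lemma~\ref{lem:dh1} is $t\mapsto h(\mu,t)$ for a fixed $\mu\ge p$. A one-line rearrangement gives
\[
  \hat{g}(x,y)=\frac{\lambda c xy+\lambda y+x}{1+\lambda xy}=\frac{x+(cx+1)\lambda y}{1+\lambda xy}=h(x,y),
\]
and consequently
\[
  \pderiv{\hat{g}}{y}(x,y)=\frac{\lambda(cx+1-x^2)}{(1+\lambda xy)^2},
\]
which is precisely the expression for $\deriv{h}{x}$ analysed in Lemma~\ref{lem:dh}, with the variable renamed $y$ and the parameter renamed $x$.

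Next I would check that the hypotheses carry over. The assumption $x\ge p$ supplies the standing condition $\mu\ge p$ under which Lemma~\ref{lem:dh}, Lemma~\ref{lem:dht1}, and hence Lemma~\ref{lem:dh1} are proved, and $0<p<1$ is exactly the regime covered by Lemma~\ref{lem:dh1}. The other coordinate $y$ of $\hat{g}$ plays the role of the variable of $h$, and it is warm for the same reason it is throughout the proof of Theorem~\ref{thm:1}: by Corollary~\ref{cor:conclusion} it lies in $[R_1,R_2]\subseteq(\rho-\varepsilon,\rho+\varepsilon)$, and the constraint $|\lambda-1|<\beta(p/1280)$ shrinks $\varepsilon$ to the size required in Lemma~\ref{lem:dh} (namely $64\varepsilon<p/20$). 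Finally, the interval for $\lambda$ assumed in Lemma~\ref{lem:thm1ghy1} is word for word the interval assumed in Lemma~\ref{lem:dh1}. Therefore Lemma~\ref{lem:dh1}, applied with its parameter $\mu$ instantiated to $x$, gives
\[
  \left|\pderiv{\hat{g}}{y}\right|=\left|\deriv{h}{x}\right|\le\max\left\{\frac{3\rho+1}{4\rho},\,1-\frac{p}{20}\right\}<1,
\]
which is the assertion.

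There is no hard analytic step here; the argument is entirely a reduction. The one point that needs care is keeping straight which slot of $\hat{g}$ is the ``variable'' and which the ``parameter'': note that the $x$-derivative $\pderiv{\hat{g}}{x}$ does \emph{not} reduce to $\deriv{h}{x}$ (the two match only when $\lambda=1$), which is precisely why $\pderiv{\hat{g}}{x}$ had to be handled separately in Lemma~\ref{lem:dghx}. A fully self-contained alternative would be to bound $\frac{\lambda(cx+1-x^2)}{(1+\lambda xy)^2}$ directly, rerunning the estimates of Lemma~\ref{lem:dh} together with the case analysis of Lemma~\ref{lem:dht1}/Lemma~\ref{lem:dht2} with $\mu$ replaced throughout by $x$; but since that only repeats work already done, the reduction above is the cleaner route.
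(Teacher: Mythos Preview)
Your reduction is correct and is precisely the paper's own argument: the paragraph immediately preceding Lemma~\ref{lem:thm1ghy1} defines $h(\mu,x):=\frac{\mu+(c\mu+1)\lambda x}{1+\lambda\mu x}$, observes that $\hat g(x,y)=h(x,y)$, and then invokes Lemma~\ref{lem:dh1} (and Lemma~\ref{lem:dh2} for the companion case $p\ge1$). Your write-up is in fact more careful than the paper's rather terse and typo-laden justification, but the route is the same.
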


\begin{lemma}\label{lem:thm1ghy2}
Assume $p\geq 1$.%
If $\max\left\{\frac{1}{2},\frac{1}{\rho},1-\beta\left(\frac{\rho-1}{512\rho}\right)\right\}<\lambda<\min\left\{2,1+\beta\left(\frac{\rho-1}{512\rho}\right)\right\}$ and $\ge p$, then $\left|\pderiv{\hat{g}}{y}\right|\leq \frac{3\rho+1}{4\rho}<1$.
\end{lemma}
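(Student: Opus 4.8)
The plan is to reduce the statement directly to Lemma~\ref{lem:dh2}, in exact analogy with how Lemma~\ref{lem:thm1ghy1} is obtained from Lemma~\ref{lem:dh1}. The key observation is that $\hat{g}$, regarded as a function of its second argument with the first argument frozen, is literally the function $h$ with the parameter $\mu$ set equal to that frozen first argument. Concretely,
\[
\hat{g}(x,y)=\frac{\lambda c xy+\lambda y+x}{\lambda xy+1}=\frac{x+(cx+1)\lambda y}{1+\lambda xy},
\]
which is precisely $h(y)$ once one substitutes $\mu:=x$ into $h(t)=\frac{\mu+(c\mu+1)\lambda t}{1+\lambda\mu t}$. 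Hence, for every fixed $x$, $\pderiv{\hat{g}}{y}(x,y)=h'(y)$ where $h$ carries parameter $\mu=x$; this formalizes the telegraphic remark that precedes the lemma.

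First I would record this algebraic identity — a one-line check, either from the rewriting above or by differentiating and comparing the numerators $\lambda(cx+1-x^2)$ and $\lambda(c\mu+1-\mu^2)\big|_{\mu=x}$ — and note that the hypothesis ``$\ge p$'' is a typo for ``$x\ge p$''. With the identity in hand, the hypothesis $p\ge 1$ together with $x\ge p$ forces the parameter $\mu=x$ to satisfy $\mu\ge 1$, which is exactly the regime in which Lemma~\ref{lem:dh2} bounds $\left|\deriv{h}{x}\right|$ (that lemma is quantified over all $\mu\ge p$ with $p\ge 1$). The range of $\lambda$ assumed here coincides verbatim with the range required in Lemma~\ref{lem:dh2}, and the argument $y$ of $\hat{g}$ lies in the window $(\rho-\varepsilon,\rho+\varepsilon)$ supplied by Corollary~\ref{cor:conclusion} for a sufficiently deep truncation, so every hypothesis of Lemma~\ref{lem:dh2} is met.

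Invoking Lemma~\ref{lem:dh2} with $h$'s free variable taken to be $y$ and its parameter taken to be $\mu=x$ then yields
\[
\left|\pderiv{\hat{g}}{y}\right|=\left|h'(y)\right|\le\frac{3\rho+1}{4\rho}<1,
\]
which is the claim. There is no genuine obstacle: the entire content is the parameter substitution $\mu\mapsto x$, and the only point that deserves a moment's care is verifying that the quantifier ``$\mu\ge p$'' in Lemma~\ref{lem:dh2} is wide enough to absorb the value $\mu=x$ thrown up by the recursion — which it is, precisely because $x\ge p$.
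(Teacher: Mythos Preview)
Your proposal is correct and follows exactly the paper's approach: the paper's own argument is the one-sentence remark preceding the lemma, which identifies $\hat g(x,y)$ with $h(y)$ under the substitution $\mu:=x$ and then appeals to Lemma~\ref{lem:dh2}. Your write-up simply makes that identification explicit and checks that the hypothesis $x\ge p\ge 1$ places $\mu$ in the range required by Lemma~\ref{lem:dh2} (via Lemma~\ref{lem:dht2}), so there is nothing to add.
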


\bigskip

\begin{proof}[Proof of Theorem~\ref{thm:1}]
According to Lemma~\ref{lem:algo} and Lemma~\ref{lem:thm1g1},~\ref{lem:thm1g2},~\ref{lem:dh1},~\ref{lem:dh2},~\ref{lem:thm1ghx},~\ref{lem:thm1ghy1},~\ref{lem:thm1ghy2}, we have the following results.
\begin{enumerate}[C{a}se I.]
    \item If $0<p<1$ and $1<\rho<2$, then
        \begin{align*}
            \lambda_1=\max\left\{\frac{1}{2},\frac{1}{2\rho-\rho^2+2},1-\beta\left(\frac{\rho-1}{208}\right),
                \frac{1}{\rho},1-\beta\left(\frac{p}{1280}\right), \right.\\
                \left. 1-\beta\left(\frac{1}{512}\min\left\{\frac{\rho^2-1}{2\rho^2},\frac{p}{p+1}\right\}\right)\right\},
        \end{align*}
        \begin{align*}
            \lambda_2=\min\left\{\frac{5-\rho}{6+\rho^3-3\rho^2},1+\beta\left(\frac{\rho-1}{208}\right),
                \frac{10-p}{p^3+2p^2-10p+10}, \right. \\ \left. 1+\beta\left(\frac{p}{1280}\right),
                1+\beta\left(\frac{1}{512}\min\left\{\frac{\rho^2-1}{2\rho^2},\frac{p}{p+1}\right\}\right)\right\},
        \end{align*}
        \begin{align*}
            \alpha=\max\left\{\frac{5-\rho}{4},
                \frac{3\rho+1}{4\rho},1-\frac{p}{20},
                \frac{\rho^2+1}{2\rho^2},\frac{3p+4}{4p+4}\right\}.
        \end{align*}
    \item If $0<p<1$ and $\rho \geq 2$, then
        \begin{align*}
            \lambda_1=\max\left\{\frac{1}{2},1-\beta\left(\frac{1}{416}\right),
                \frac{1}{\rho},1-\beta\left(\frac{p}{1280}\right),
                1-\beta\left(\frac{1}{512}\min\left\{\frac{\rho^2-1}{2\rho^2},\frac{p}{p+1}\right\}\right)\right\},
        \end{align*}
        \begin{align*}
            \lambda_2=\min\left\{2,1+\beta\left(\frac{1}{416}\right),
                \frac{10-p}{p^3+2p^2-10p+10}, 1+\beta\left(\frac{p}{1280}\right), \right. \\ \left.
                1+\beta\left(\frac{1}{512}\min\left\{\frac{\rho^2-1}{2\rho^2},\frac{p}{p+1}\right\}\right)\right\},
        \end{align*}
        \begin{align*}
            \alpha=\max\left\{\frac{7}{8},
                \frac{3\rho+1}{4\rho},1-\frac{p}{20},
                \frac{\rho^2+1}{2\rho^2},\frac{3p+4}{4p+4}\right\}.
        \end{align*}
    \item If $p \geq 1$ and $1<\rho<2$, then
        \begin{align*}
            \lambda_1=\max\left\{\frac{1}{2},\frac{1}{2\rho-\rho^2+2},1-\beta\left(\frac{\rho-1}{208}\right),
                \frac{1}{\rho},1-\beta\left(\frac{\rho-1}{512\rho}\right), \right. \\
                \left. 1-\beta\left(\frac{1}{512}\min\left\{\frac{\rho^2-1}{2\rho^2},\frac{p}{p+1}\right\}\right)\right\},
        \end{align*}
        \begin{align*}
            \lambda_2=\min\left\{\frac{5-\rho}{6+\rho^3-3\rho^2},1+\beta\left(\frac{\rho-1}{208}\right),
                1+\beta\left(\frac{\rho-1}{512\rho}\right), \right. \\ \left.
                1+\beta\left(\frac{1}{512}\min\left\{\frac{\rho^2-1}{2\rho^2},\frac{p}{p+1}\right\}\right)\right\},
        \end{align*}
        \begin{align*}
            \alpha=\max\left\{\frac{5-\rho}{4},
                \frac{3\rho+1}{4\rho},
                \frac{\rho^2+1}{2\rho^2},\frac{3p+4}{4p+4}\right\}.
        \end{align*}
    \item If $p \geq 1$ and $\rho \geq 2$, then
        \begin{align*}
            \lambda_1=\max\left\{\frac{1}{2},1-\beta\left(\frac{1}{416}\right),
                \frac{1}{\rho},1-\beta\left(\frac{\rho-1}{512\rho}\right), \right. \\
                \left. 1-\beta\left(\frac{1}{512}\min\left\{\frac{\rho^2-1}{2\rho^2},\frac{p}{p+1}\right\}\right)\right\},
        \end{align*}
        \begin{align*}
            \lambda_2=\min\left\{2,1+\beta\left(\frac{1}{416}\right),
                1+\beta\left(\frac{\rho-1}{512\rho}\right), \right. \\ \left.
                1+\beta\left(\frac{1}{512}\min\left\{\frac{\rho^2-1}{2\rho^2},\frac{p}{p+1}\right\}\right)\right\},
        \end{align*}
        \begin{align*}
            \alpha=\max\left\{\frac{7}{8},
                \frac{3\rho+1}{4\rho},
                \frac{\rho^2+1}{2\rho^2},\frac{3p+4}{4p+4}\right\}.
        \end{align*}
\end{enumerate}
Here the $\beta\left(\cdot\right)$ is defined in the beginning of the section.
\end{proof}

  \subsection{Proof of Theorem~\ref{thm:2}}

In order to use Lemma~\ref{lem:algo}, we need to establish four inequalities of the form $\frac{A}{B}\leq \alpha<1$. It turns out that each multivariable polynomial $A-\alpha B$ enjoys the property that the highest degrees of variables $x,y,z,\lambda$ are no greater than two. Therefore it is possible to determine the monotonicity of each variable within the given range. 
We shall show that it is decreasing with respect to variables $x,y,z,\lambda$ respectively and verify the fact that $A-\alpha B |_{x=y=z=c,\lambda=1} < 0$.

In the following proof, we use $\Phi(x)=x$ as potential function and a different set of recursions from those used in the proof of Theorem~\ref{thm:1}.
These recursions can be obtained by the same methods proposed in Section~\ref{sec:recursion} except for one step:
when converting an instance with two dangling edges to one with single dangling edge, we define three different sub-instances.
Let $\Omega^{e'}$ denote the sub-instance of $\Omega^{e',e_1}$ achieved by leaving $e_1$ free (which is equivalent to attaching a vertex with signature $[1,1]$ on the dangling end of $e_1$), and 
set $\Omega^{e_1}=\Pin{\Omega^{e', e_1}}{e'}{0}$ and $\widetilde{\Omega}^{e_1}=\Pin{\Omega^{e', e_1}}{e'}{1}$.
Now we have
$$\mgn_{\Omega^{e'}}(\sigma(e')=0)=\mgn_{\Omega^{e', e_1}}(\sigma(e')=0),$$
$$\mgn_{\Omega^{e_1}}(\sigma(e_1)=0)=\mgn_{\Omega^{e', e_1}}(\sigma(e_1)=0|\sigma(e')=0),$$
$$\mgn_{\widetilde{\Omega}^{e_1}}(\sigma(e_1)=0)=\mgn_{\Omega^{e', e_1}}(\sigma(e_1)=0|\sigma(e')=1).$$
Applying the remaining steps in Section~\ref{sec:recursion} gives the following recursions:
\begin{align*}
  g(x,y,z)&=\frac{x(1+y)+\lambda y(1+z)+\lambda c x(1+y)z}{1+z+\lambda x(1+y)z} \\
  \frac{\partial g}{\partial x}&=-\frac{(y+1)(z+1)(-c\lambda z+\lambda^2 yz-1)}{(1+z+\lambda x(1+y)z)^2} \\
  \frac{\partial g}{\partial y}&=\frac{(z+1)(\lambda(cxz+z+1)+\lambda^2 xz+x)}{(1+z+\lambda x(1+y)z)^2} \\
  \frac{\partial g}{\partial z}&=-\frac{x(y+1)(\lambda(x-c+y(\lambda+x))+1)}{(1+z+\lambda x(1+y)z)^2}
\end{align*}
$h(x)$ and $\hat{g}(x,y)$ are the same as they were in the last part.

In the following, let $c_0=1.17$ be a constant.

Note that $\frac{\partial g}{\partial y}(x,y,z) \geq 0$ and $\frac{\partial g}{\partial z}(x,y,z) \leq 0$ for $x,y,z\geq c$ and $\lambda \geq 1$. Let
$$g_1(x,y,z)=\frac{\frac{\partial g}{\partial x}(x,y,z)x+\frac{\partial g}{\partial y}(x,y,z)y-\frac{\partial g}{\partial z}(x,y,z)z}{g(x,y,z)},$$
$$g_2(x,y,z)=\frac{-\frac{\partial g}{\partial x}(x,y,z)x+\frac{\partial g}{\partial y}(x,y,z)y-\frac{\partial g}{\partial z}(x,y,z)z}{g(x,y,z)},$$
it is clear that
$$\frac{\left|\frac{\partial g}{\partial x}(x,y,z)\right|\Phi(x)+\left|\frac{\partial g}{\partial y}(x,y,z)\right|\Phi(y)+\left|\frac{\partial g}{\partial z}(x,y,z)\right|\Phi(z)}{\Phi(g(x,y,z))}=\max\left\{g_1(x,y,z),g_2(x,y,z)\right\}.$$
We shall bound $g_1$ and $g_2$ separately. 
\begin{lemma}
\label{lem:22}
$2 (22 c^3+c^2+c+1)+22(-94 c^3-54 c^2+c) \leq 0$ for $c \geq c_0$.
\end{lemma}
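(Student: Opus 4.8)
The plan is to reduce this purely algebraic claim to a single cubic inequality in $c$ and then dispatch it by elementary term-by-term estimates on the interval $[c_0,\infty)$ (in fact on all of $[1,\infty)$).

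First I would expand and collect coefficients. We have $2(22c^3+c^2+c+1)=44c^3+2c^2+2c+2$ and $22(-94c^3-54c^2+c)=-2068c^3-1188c^2+22c$, so the left-hand side equals $-2024c^3-1186c^2+24c+2$. Thus the assertion is equivalent to $P(c):=2024c^3+1186c^2-24c-2\ge 0$ for all $c\ge c_0=1.17$.

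Next I would prove $P(c)\ge 0$ already on the larger range $c\ge 1$, which suffices. Since $c\ge 1$ we have $2024c^3\ge 2024c\ge 24c+2$ (the last step because $2000c\ge 2000\ge 2$), and $1186c^2\ge 0$; adding these gives $P(c)=2024c^3+1186c^2-24c-2\ge 0$. Equivalently, one can observe that $P'(c)=6072c^2+2372c-24>0$ for $c\ge 1$, so $P$ is increasing on $[1,\infty)$ while $P(1)=3184>0$, which again yields the claim.

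I do not expect any genuine obstacle here: the statement is a degree-three polynomial inequality with a large positive leading coefficient, so the only point requiring care is the bookkeeping when collecting coefficients, which I would recompute once to be safe.
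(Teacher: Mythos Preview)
Your proof is correct: the expansion to $-2024c^3-1186c^2+24c+2$ is accurate, and either of your two arguments (the term-by-term estimate or the monotonicity plus $P(1)=3184>0$) cleanly shows the inequality for all $c\ge 1$, hence for $c\ge c_0$. The paper itself gives no proof for this lemma at all; it is stated bare (in the same spirit as several nearby lemmas that are dismissed with ``This is a numerical result''), so your explicit verification is more than the paper provides.
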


\begin{lemma}
\label{lem:17}
$-36 c^2 \lambda^2+2 c (-94 c^2 \lambda^2+22 c^2 \lambda-36 c \lambda^2)+2 c^2 \lambda+2 c (\lambda^2+\lambda+21)+2 \lambda \leq 0$ for $\lambda \geq 1$ and $c \geq c_0$.
\end{lemma}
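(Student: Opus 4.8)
The plan is to view the left-hand side as a polynomial $P(c,\lambda)$ and reduce the two-variable inequality to a one-variable one by fixing $\lambda$ at its worst value. First I would expand,
\[
P(c,\lambda)=(-188c^3-108c^2+2c)\,\lambda^2+(44c^3+2c^2+2c+2)\,\lambda+42c ,
\]
and observe that the coefficient of $\lambda^2$ equals $-2c(94c^2+54c-1)$, which is negative for every $c\ge 1$ (in particular for $c\ge c_0$), since $94c^2+54c-1$ is increasing and positive there. Hence $P(c,\cdot)$ is concave on $\mathbb{R}$.

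Next I would show that $P(c,\cdot)$ is non-increasing on $[1,\infty)$, so its maximum over $\lambda\ge 1$ is attained at $\lambda=1$. By concavity in $\lambda$ it suffices to check that $\partial P/\partial\lambda\le 0$ at $\lambda=1$. A direct computation gives $\left.\partial P/\partial\lambda\right|_{\lambda=1}=-332c^3-214c^2+6c+2$, and using $c\ge 1$ to bound $6c\le 6c^2$ and $2\le 2c^2$ yields $\left.\partial P/\partial\lambda\right|_{\lambda=1}\le -332c^3-206c^2<0$. Therefore $P(c,\lambda)\le P(c,1)$ for all $\lambda\ge 1$.

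It then remains to bound $P(c,1)$. Substituting $\lambda=1$ collapses the expression to the cubic $Q(c):=P(c,1)=-144c^3-106c^2+46c+2$, and again using $c\ge 1$ (so $46c\le 46c^2$ and $2\le 2c^2$) gives $Q(c)\le -144c^3-58c^2=-c^2(144c+58)<0$. Combining the three steps, for every $c\ge c_0>1$ and $\lambda\ge 1$ we get $P(c,\lambda)\le P(c,1)=Q(c)<0$, which is the claim.

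I do not expect a genuine obstacle here; the statement is an elementary polynomial inequality. The only care needed is to pin down the correct signs of the $\lambda$-quadratic's leading coefficient and of $\partial_\lambda P$ at $\lambda=1$, so that the reduction ``the worst case is $\lambda=1$'' is legitimate; once that is in place the remaining univariate cubic is dominated termwise. (One could instead avoid all slack by verifying $Q'(c)=-432c^2-212c+46<0$ for $c\ge 1$ together with the numeric bound $Q(c_0)<0$, but the termwise estimate is shorter and uses only $c\ge 1$.)
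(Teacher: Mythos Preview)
Your proof is correct and follows essentially the same approach as the paper: rewrite the expression as a downward-opening quadratic in $\lambda$, show it is decreasing on $[1,\infty)$, reduce to $\lambda=1$, and verify the resulting cubic $-144c^3-106c^2+46c+2$ is negative. The only cosmetic difference is that the paper delegates the ``vertex $\le 1$'' step to its Lemma~\ref{lem:22}, whereas you verify the derivative sign at $\lambda=1$ directly and also give an explicit termwise bound (using only $c\ge 1$) for the final cubic rather than leaving it as a numerical check at $c\ge c_0$.
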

\begin{proof}
\begin{align*}
    & -36 c^2 \lambda^2+2 c (-94 c^2 \lambda^2+22 c^2 \lambda-36 c \lambda^2)+2 c^2 \lambda+2 c (\lambda^2+\lambda+21)+2 \lambda \\
    =& 2 (-94 c^3-54 c^2+c) \lambda^2+2 (22 c^3+c^2+c+1) \lambda+42 c
\end{align*}
which is a parabola of $\lambda$ and according to Lemma~\ref{lem:22}, its center is to the left of $1$.
Therefore it is decreasing with $\lambda$, and if we set $\lambda=1$ we have
\begin{align*}
    & -36 c^2 \lambda^2+2 c (-94 c^2 \lambda^2+22 c^2 \lambda-36 c \lambda^2)+2 c^2 \lambda+2 c (\lambda^2+\lambda+21)+2 \lambda \\
    \leq& 2 (22 c^3+c^2+c+1)+2 (-94 c^3-54 c^2+c)+42 c \\
    =& -144 c^3-106 c^2+46 c+2 \leq 0.
\end{align*}
The last less-than clause derives from the condition that $c \geq c_0$.
\end{proof}

\begin{lemma}
\label{lem:36}
$2 c (11 c^3+c^2+c+1)+4 c (-47 c^3-36 c^2+c) \leq 0$ for $c \geq c_0$.
\end{lemma}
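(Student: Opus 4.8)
The plan is to reduce the claimed inequality to a single univariate polynomial inequality and then settle it by a monotonicity argument together with one endpoint evaluation. First I would expand the left-hand side and collect terms:
\[
2 c (11 c^3+c^2+c+1)+4 c (-47 c^3-36 c^2+c) = -166c^4 - 142c^3 + 6c^2 + 2c = 2c\left(-83c^3 - 71c^2 + 3c + 1\right).
\]
Since $c \geq c_0 = 1.17 > 0$, the factor $2c$ is strictly positive, so it suffices to show that the cubic $g(c) := 83c^3 + 71c^2 - 3c - 1$ satisfies $g(c) \geq 0$ for all $c \geq c_0$.

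Next I would establish that $g$ is increasing on $[1,\infty)$, hence in particular on $[c_0,\infty)$. Its derivative is $g'(c) = 249c^2 + 142c - 3$, a quadratic with positive leading coefficient; from $g'(0) = -3 < 0$ and $g'(1) = 388 > 0$ one sees that its larger root lies in $(0,1)$, so $g'(c) > 0$ for every $c \geq 1$. Consequently $g(c) \geq g(c_0)$ for $c \geq c_0$, and it remains only to check $g(c_0) = 83(1.17)^3 + 71(1.17)^2 - 3(1.17) - 1 > 0$, which is a direct numerical evaluation. This yields $g(c) \geq 0$ on $[c_0,\infty)$ and hence the lemma.

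There is no genuine obstacle here: the statement is a routine one-variable polynomial inequality, and the only points requiring care are the sign of the leading coefficient after expansion (it is negative, which is precisely why the stated expression is $\leq 0$) and the arithmetic of collecting the quartic and cubic coefficients. This argument is structurally the same template used for Lemma~\ref{lem:22}: after clearing a positive factor one arrives at a cubic with positive leading coefficient that is increasing on the relevant interval and positive at $c = c_0$, so monotonicity plus a single endpoint check finishes it.
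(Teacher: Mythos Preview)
Your proof is correct. The paper itself states this lemma without proof (as it does for Lemma~\ref{lem:22} and several others, which are simply labeled ``numerical results''), so your argument is a complete write-up of the routine polynomial check the paper takes for granted; the expansion, the reduction to the cubic $83c^3+71c^2-3c-1\ge 0$, and the monotonicity-plus-endpoint verification are all fine (indeed $g(1)=150>0$ already suffices).
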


\begin{lemma}
\label{lem:15}
$2 c (z (c \lambda+\lambda^2+21)+\lambda z^2 (11 c-9 \lambda)+11)+2 c (-2 c \lambda^2 z (19 z+9)-\lambda z (9 c \lambda z-1))-2 \lambda z (9 c \lambda z-1) \leq 0$ for $z \geq c$, $\lambda \geq 1$, $c \geq c_0$.
\end{lemma}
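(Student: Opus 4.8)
The plan is to follow the template announced just before Lemma~\ref{lem:17}. Write the left-hand side as a polynomial $P(c,\lambda,z)$; a direct expansion shows it has degree at most $2$ in each of $z$, $\lambda$ and $c$ separately, so on the box $\{\,z\ge c,\ \lambda\ge 1,\ c\ge c_0\,\}$ the variables can be peeled off one at a time by concavity, reducing the claim to a single univariate polynomial inequality in $c$.

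First I would collect $P$ as a quadratic in $z$, say $P = a_2(c,\lambda)\,z^2 + a_1(c,\lambda)\,z + a_0(c,\lambda)$, where $a_0 = 22c$ and $a_2 = 2c\lambda\bigl(11c - 47c\lambda - 18\lambda\bigr)$. For $c\ge c_0$ and $\lambda\ge 1$ one has $11c - 47c\lambda - 18\lambda \le -36c - 18 < 0$, so $a_2 < 0$ on the whole box and $P$ is concave in $z$. Since the $z$-derivative $2a_2 z + a_1$ is then decreasing in $z$, to obtain $P(c,\lambda,z)\le P(c,\lambda,c)$ for all $z\ge c$ it is enough to check that $2a_2 c + a_1 \le 0$. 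The point is that, after collecting terms, $2a_2 c + a_1$ is \emph{exactly} the left-hand side of Lemma~\ref{lem:17}: it equals $2(-94c^3-54c^2+c)\lambda^2 + 2(22c^3+c^2+c+1)\lambda + 42c$. So this reduction is supplied verbatim by Lemma~\ref{lem:17} (which itself rests on Lemma~\ref{lem:22} for its vertex location and on the elementary cubic bound $-144c^3-106c^2+46c+2\le 0$ at $\lambda=1$).

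Next I would substitute $z=c$ and treat $P(c,\lambda,c)$ as a quadratic in $\lambda$, $P(c,\lambda,c) = A_\lambda(c)\,\lambda^2 + B_\lambda(c)\,\lambda + C_\lambda(c)$ with $A_\lambda = 2c^2(-47c^2-36c+1)$, which is negative for $c\ge c_0$, so $P(c,\lambda,c)$ is concave in $\lambda$. As before it suffices to check that the $\lambda$-derivative is non-positive at $\lambda=1$, i.e.\ $2A_\lambda + B_\lambda \le 0$; and again, after collecting, $2A_\lambda + B_\lambda$ is \emph{exactly} the left-hand side of Lemma~\ref{lem:36}, since it equals $2c(-83c^3-71c^2+3c+1)$. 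Thus this step is handed to us by Lemma~\ref{lem:36}, giving $P(c,\lambda,c) \le P(c,1,c)$ for $\lambda\ge 1$. Finally $P(c,1,c) = -72c^4 - 70c^3 + 46c^2 + 24c = -2c\,(36c^3+35c^2-23c-12)$, and since $36c^3+35c^2-23c-12$ is increasing on $[1,\infty)$ and already positive at $c=c_0=1.17$, it stays positive there, so $P(c,1,c)\le 0$ for $c\ge c_0$. This closes the chain $P(c,\lambda,z)\le P(c,\lambda,c)\le P(c,1,c)\le 0$.

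Most of this is routine: the expansion of $P$, the two sign computations for $a_2$ and $A_\lambda$, and the one univariate check at the end. The only place that needs care is arranging the reductions in the right order and with the right endpoints — one must eliminate $z$ before $\lambda$, evaluating each endpoint derivative first at $z=c$ and then at $\lambda=1$, so that the two resulting conditions land exactly on the already-proved Lemmas~\ref{lem:17} and~\ref{lem:36}; a different order (or reducing $\lambda$ first) would produce genuinely new multivariate inequalities rather than reusing what is available.
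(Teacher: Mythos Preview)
Your proof is correct and is essentially the same as the paper's: both write the expression as a quadratic in $z$ with negative leading coefficient, use Lemma~\ref{lem:17} to show the vertex lies at or below $z=c$ (your condition $2a_2c+a_1\le 0$ is exactly the vertex-location condition the paper phrases as ``center to the left of $c$''), substitute $z=c$, repeat the argument in $\lambda$ via Lemma~\ref{lem:36}, and finish with the univariate check $2c(-36c^3-35c^2+23c+12)\le 0$ for $c\ge c_0$.
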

\begin{proof}
\begin{align*}
    & 2 c (z (c \lambda+\lambda^2+21)+\lambda z^2 (11 c-9 \lambda)+11)+2 c (-2 c \lambda^2 z (19 z+9) \\
            & \quad -\lambda z (9 c \lambda z-1))-2 \lambda z (9 c \lambda z-1) \\
    =& z^2 (-94 c^2 \lambda^2+22 c^2 \lambda-36 c \lambda^2)+z (-36 c^2 \lambda^2+2 c^2 \lambda+2 c (\lambda^2+\lambda+21)+2 \lambda)+22 c.
\end{align*}
This is a parabola of $z$ and according to Lemma~\ref{lem:17}, its center is to the left of $c$ so it is decreasing with $z$.
If we set $z=c$ we have
\begin{align*}
    & 2 c (z (c \lambda+\lambda^2+21)+\lambda z^2 (11 c-9 \lambda)+11)+2 c (-2 c \lambda^2 z (19 z+9) \\
            & \quad -\lambda z (9 c \lambda z-1))-2 \lambda z (9 c \lambda z-1) \\
    \leq & c^2 (-94 c^2 \lambda^2+22 c^2 \lambda-36 c \lambda^2)+c (-36 c^2 \lambda^2+2 c^2 \lambda+2 c (\lambda^2+\lambda+21)+2 \lambda)+22 c \\
    =& 2 c (-47 c^3-36 c^2+c) \lambda^2+2 c (11 c^3+c^2+c+1) \lambda+2 c (21 c+11).
\end{align*}
This is a parabola of $\lambda$ and according to Lemma~\ref{lem:36}, its center is to the left of $1$ so it is decreasing with $\lambda$.
If we set $\lambda=1$ we have
\begin{align*}
    & 2 c (z (c \lambda+\lambda^2+21)+\lambda z^2 (11 c-9 \lambda)+11)+2 c (-2 c \lambda^2 z (19 z+9) \\
            & \quad -\lambda z (9 c \lambda z-1))-2 \lambda z (9 c \lambda z-1) \\
    \leq & 2 c (11 c^3+c^2+c+1)+2 c (-47 c^3-36 c^2+c)+2 c (21 c+11) \\
    =& 2 c (-36 c^3-35 c^2+23 c+12) \leq 0.
\end{align*}
The last less-than derives from the condition that $c \geq c_0$.
\end{proof}

\begin{lemma}
\label{lem:46}
$44 c^4+6 c^3-17 c^2+2 (-94 c^4-90 c^3-16 c^2)+2 c+1 \leq 0$ for $c \geq c_0$.
\end{lemma}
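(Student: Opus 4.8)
The plan is to reduce this to a one‑variable numerical check, just as the earlier lemmas in this block (cf. Lemma~\ref{lem:22}, Lemma~\ref{lem:36}) bottom out. First I would expand and collect terms, being careful to distribute the factor of $2$ over all three negative cubic/quartic/quadratic contributions:
\[
44c^4 + 6c^3 - 17c^2 + 2\bigl(-94c^4 - 90c^3 - 16c^2\bigr) + 2c + 1 = -144c^4 - 174c^3 - 49c^2 + 2c + 1.
\]
Hence the asserted inequality is equivalent to $P(c) := 144c^4 + 174c^3 + 49c^2 - 2c - 1 \ge 0$ for all $c \ge c_0 = 1.17$.

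Next I would prove positivity of $P$ by a crude termwise estimate that in fact holds on the whole ray $c \ge 1$ (which certainly contains $c \ge c_0$). For $c \ge 1$ we have $144c^4 - 2c = c\,(144c^3 - 2) \ge 142c > 0$, and separately $174c^3 + 49c^2 - 1 \ge 174 + 49 - 1 = 222 > 0$; adding these gives $P(c) > 0$. As an alternative one can argue by monotonicity: $P'(c) = 576c^3 + 522c^2 + 98c - 2 > 0$ on $[1,\infty)$ (it is increasing and positive at $c=1$), so $P$ is increasing there, and then $P(c_0) > 0$ follows from a single direct evaluation at $c = 1.17$.

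I do not expect any genuine obstacle here; the statement is a base case rather than an inductive reduction, so unlike Lemma~\ref{lem:15} or Lemma~\ref{lem:17} it need not invoke a previous lemma and involves no case split over $\lambda$, $x$, $y$, or $z$. The only points requiring care are the bookkeeping in the expansion above (correctly propagating the factor $2$ onto each of the three terms $-94c^4$, $-90c^3$, $-16c^2$) and, if one prefers the monotonicity route over the termwise bound, the lone arithmetic check $P(1.17) > 0$. This inequality then feeds, together with the analogous single‑variable reductions, into the bounds on $g_1$ and $g_2$ used in the proof of Theorem~\ref{thm:2}.
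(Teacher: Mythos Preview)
Your proposal is correct and matches the paper's approach: the paper's own proof reads in its entirety ``This is a numerical result,'' and you have simply carried out that numerical check explicitly, with a correct expansion to $-144c^4 - 174c^3 - 49c^2 + 2c + 1$ and a valid termwise bound on $[1,\infty)\supset[c_0,\infty)$.
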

\begin{proof}
This is a numerical result.
\end{proof}

\begin{lemma}
\label{lem:44}
$-18 c^3 \lambda^2+2 c^3 \lambda+c^2 (2 \lambda^2-17 \lambda+42)+2 c (-47 c^3 \lambda^2+22 c^3 \lambda+2 c^2 \lambda (1-18 \lambda)-9 c \lambda^2)+2 c (\lambda+11)+\lambda \leq 0$ for $\lambda \geq 1$ and $c \geq c_0$.
\end{lemma}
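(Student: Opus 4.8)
The plan is to treat the left-hand side as a quadratic polynomial in $\lambda$ whose coefficients are polynomials in $c$, exactly mirroring the structure of the proofs of Lemma~\ref{lem:17} and Lemma~\ref{lem:15}. First I would expand all the products and collect by powers of $\lambda$, arriving at an expression of the form $A(c)\lambda^2 + B(c)\lambda + C(c)$ with
$$A(c) = -94c^4 - 90c^3 - 16c^2, \qquad B(c) = 44c^4 + 6c^3 - 17c^2 + 2c + 1, \qquad C(c) = 42c^2 + 22c.$$

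Next, since $A(c) < 0$ for every $c > 0$, this is a concave (downward) parabola in $\lambda$, hence decreasing on any interval lying to the right of its vertex $\lambda^\ast = -B(c)/(2A(c))$. The vertex satisfies $\lambda^\ast \le 1$ exactly when $B(c) + 2A(c) \le 0$, and $B(c) + 2A(c) = 44c^4 + 6c^3 - 17c^2 + 2(-94c^4 - 90c^3 - 16c^2) + 2c + 1$ is precisely the quantity shown to be nonpositive for $c \ge c_0$ in Lemma~\ref{lem:46}. Therefore on the range $\lambda \ge 1$ the expression attains its maximum at $\lambda = 1$.

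Finally I would substitute $\lambda = 1$, obtaining $A(c) + B(c) + C(c) = -50c^4 - 84c^3 + 9c^2 + 24c + 1$, and note that this one-variable polynomial is negative for all $c \ge c_0 = 1.17$ — a routine numerical check, just as in the closing line of Lemma~\ref{lem:15}. That yields the claimed inequality for all $\lambda \ge 1$ and $c \ge c_0$.

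I do not expect a genuine obstacle here: the argument is elementary algebra plus a single-variable numerical verification. The only points demanding care are the bookkeeping in the expansion (so that the coefficient of $\lambda^2$ really comes out negative, which is what licenses both the concavity argument and the reduction to $\lambda = 1$) and confirming that the expression in Lemma~\ref{lem:46} is literally $B(c) + 2A(c)$, so that lemma can be invoked verbatim rather than re-derived.
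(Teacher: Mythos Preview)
Your proposal is correct and matches the paper's proof essentially line for line: the paper likewise collects the expression as $42c^2 + 22c + (44c^4+6c^3-17c^2+2c+1)\lambda + (-94c^4-90c^3-16c^2)\lambda^2$, invokes Lemma~\ref{lem:46} to place the vertex of this downward parabola at or left of $\lambda=1$, substitutes $\lambda=1$ to obtain $-50c^4 - 84c^3 + 9c^2 + 24c + 1$, and finishes with the numerical check for $c\ge c_0$.
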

\begin{proof}
\begin{align*}
    & -18 c^3 \lambda^2+ 2 c^3 \lambda+c^2 (2 \lambda^2-17 \lambda+42)+ 2 c (-47 c^3 \lambda^2+22 c^3 \lambda +2 c^2 \lambda (1-18 \lambda)-9 c \lambda^2) \\
        & \quad +2 c (\lambda+11)+\lambda \\
    =& 42 c^2+(-94 c^4-90 c^3-16 c^2) \lambda^2+(44 c^4+6 c^3-17 c^2+2 c+1) \lambda+22 c.
\end{align*}
This is a parabola of $\lambda$ and according to Lemma~\ref{lem:46} its center is to the left of $1$, so it is decreasing with $\lambda$.
If we set $\lambda=1$ we have
\begin{align*}
    & -18 c^3 \lambda^2+2 c^3 \lambda+c^2 (2 \lambda^2-17 \lambda+42)+2 c (-47 c^3 \lambda^2+22 c^3 \lambda+2 c^2 \lambda (1-18 \lambda)-9 c \lambda^2) \\
        & \quad +2 c (\lambda+11)+\lambda \\
    \leq & -50 c^4-84 c^3+9 c^2+24 c+1 \leq 0.
\end{align*}
The last less-than derives from the condition that $c \geq c_0$.
\end{proof}

\begin{lemma}
\label{lem:53}
$22 c^5+4 c^4-17 c^3+2 c^2+2 (-47 c^5-54 c^4-7 c^3)+c \leq 0$ for $c \geq c_0$.
\end{lemma}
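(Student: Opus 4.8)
The plan is to treat this exactly as the other purely single–variable polynomial inequalities in this sequence (Lemma~\ref{lem:22}, Lemma~\ref{lem:36}, Lemma~\ref{lem:46}): collect the left-hand side into one polynomial in $c$ and then verify its sign on the range $c\ge c_0$ by elementary estimates. First I would expand and combine like terms:
\[
22c^5+4c^4-17c^3+2c^2+2(-47c^5-54c^4-7c^3)+c \;=\; -72c^5-104c^4-31c^3+2c^2+c,
\]
so the claim is equivalent to $-72c^5-104c^4-31c^3+2c^2+c\le 0$ for all $c\ge c_0$.

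Next I would factor out $c$. Since $c_0>0$ we have $c>0$ throughout the range, so it suffices to show the quartic factor is nonpositive, i.e.\ $72c^4+104c^3+31c^2-2c-1\ge 0$ for $c\ge c_0$. I would then bound this quartic term by term using $c\ge c_0>1$: from $c^4\ge c$ we get $72c^4-2c\ge 70c>0$; moreover $31c^2-1\ge 31c_0^2-1>0$ and $104c^3>0$. Adding these gives $72c^4+104c^3+31c^2-2c-1>0$, which yields the desired inequality (in fact strictly). Equivalently, one may simply note that the polynomial $-72c^5-104c^4-31c^3+2c^2+c$ is negative at $c=c_0$ and, being dominated by its leading term $-72c^5$, stays negative for all larger $c$ — a routine numerical check in the spirit of Lemma~\ref{lem:46}.

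There is essentially no genuine obstacle here: the only care needed is correct sign bookkeeping when merging the two blocks of terms, and remembering that after factoring out the positive quantity $c$ the inequality direction is preserved, so that proving the quartic is nonnegative settles the original $\le 0$ claim. Once that reduction is in place, the term-by-term estimate for $c\ge c_0>1$ closes the proof immediately.
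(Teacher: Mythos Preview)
Your proposal is correct. The paper's own proof of this lemma consists only of the sentence ``This is a numerical result,'' so your argument is simply a fleshed-out version of the same one-variable polynomial check: your expansion to $-72c^5-104c^4-31c^3+2c^2+c$ is accurate, and the term-by-term estimate for $c\ge c_0>1$ is valid.
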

\begin{proof}
This is a numerical result.
\end{proof}

\begin{lemma}
\label{lem:13}
$2 c (y (z (c \lambda+\lambda^2+21)+\lambda z^2 (11 c-9 \lambda)+11)+c \lambda z^2-9 c \lambda z-\lambda^2 y^2 z (19 z+9)+11 z+1)-\lambda (y+1)^2 z (9 c \lambda z-1) \leq 0$
for $y,z \geq c$, $\lambda \geq 1$, $c \geq c_0$.
\end{lemma}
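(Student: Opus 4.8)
The plan is to prove Lemma~\ref{lem:13} by the same cascade-of-downward-parabolas argument already used for Lemma~\ref{lem:15}: view the left-hand side as a quadratic in one variable at a time, check that it opens downward and that its derivative at the \emph{lower} endpoint of that variable's range is non-positive (so that it is non-increasing over the whole range), substitute that lower endpoint, and move on to the next variable; after the last substitution only a univariate polynomial in $c$ remains, which is non-positive for $c\ge c_0$ by a direct numerical estimate.

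Concretely, I first expand the left-hand side, call it $P(y,z,\lambda,c)$, and regard it as a quadratic in $y$. Its $y^2$-coefficient is $-2c\lambda^2 z(19z+9)-\lambda z(9c\lambda z-1)$, which is negative on the range $y,z\ge c$, $\lambda\ge1$, $c\ge c_0$ because $9c\lambda z\ge 9c_0^2>1$; hence $P$ opens downward in $y$. Differentiating in $y$ and evaluating at $y=c$ reproduces, after collecting terms, exactly the left-hand side of Lemma~\ref{lem:15}, which is $\le0$; consequently the vertex of this $y$-parabola lies at some $y^*\le c$, so $P$ is non-increasing in $y$ on $[c,+\infty)$ and it suffices to bound $P(c,z,\lambda,c)$.

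Substituting $y=c$ yields a quadratic in $z$ whose $z^2$-coefficient is $c\lambda\bigl(22c^2+2c-\lambda(47c^2+36c+9)\bigr)$, again negative for $\lambda\ge1$, $c\ge c_0$; its $z$-derivative at $z=c$ collapses, after expansion, to the left-hand side of Lemma~\ref{lem:44}, hence is $\le0$, so the expression is non-increasing in $z$ on $[c,+\infty)$ and we may further set $z=c$. This leaves a quadratic in $\lambda$ with $\lambda^2$-coefficient $-47c^5-54c^4-7c^3<0$, whose $\lambda$-derivative at $\lambda=1$ is precisely the left-hand side of Lemma~\ref{lem:53} and hence $\le0$; so the expression is non-increasing in $\lambda$ on $[1,+\infty)$ and we may set $\lambda=1$. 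After this last substitution the left-hand side equals $-25c^5-50c^4+18c^3+46c^2+3c=c\bigl(-25c^4-50c^3+18c^2+46c+3\bigr)$, and the quartic factor is negative at $c=c_0=1.17$ and dominated by its $-25c^4$ term for larger $c$, so $P\le 0$ throughout; this last step is a routine numerical check.

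The only real difficulty is bookkeeping: the three ``partial derivative at the lower endpoint'' identities must be expanded carefully enough to see that they coincide with the left-hand sides of Lemmas~\ref{lem:15},~\ref{lem:44} and~\ref{lem:53} \emph{on the nose}, and the three leading coefficients must be confirmed negative; with $c^5$ already appearing, these computations are error-prone though conceptually trivial. If one prefers not to rely on the exact matching, each leading coefficient and each endpoint derivative can instead be bounded directly on the range, but reusing the auxiliary lemmas keeps the proof short and exactly parallel to that of Lemma~\ref{lem:15}.
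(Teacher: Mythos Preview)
Your proposal is correct and is essentially the paper's own proof: both view the expression successively as a downward-opening parabola in $y$, then $z$, then $\lambda$, invoke Lemmas~\ref{lem:15}, \ref{lem:44}, \ref{lem:53} to conclude monotonicity past the lower endpoint (your ``derivative at the endpoint is $\le 0$'' and the paper's ``center is to the left of $c$'' are exactly the same condition $b+2ac\le 0$), substitute, and finish with the same quintic $-25c^5-50c^4+18c^3+46c^2+3c\le 0$.
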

\begin{proof}
\begin{align*}
    & 2 c (y (z (c \lambda+\lambda^2+21)+\lambda z^2 (11 c-9 \lambda)+11)+c \lambda z^2-9 c \lambda z-\lambda^2 y^2 z (19 z+9)+11 z+1) \\
        & \quad -\lambda (y+1)^2 z (9 c \lambda z-1) \\
    =& 2 c^2 \lambda z^2-18 c^2 \lambda z+y^2 (-2 c \lambda^2 z (19 z+9)-\lambda z (9 c \lambda z-1)) \\
        & \quad +y (2 c (z (c \lambda+\lambda^2+21)+\lambda z^2 (11 c-9 \lambda)+11)-2 \lambda z (9 c \lambda z-1)) \\
        & \quad -\lambda z (9 c \lambda z-1)+22 c z+2 c.
\end{align*}
This is a parabola of $y$ and according to Lemma~\ref{lem:15} its center is to the left of $c$, so it is decreasing with $y$.
If we set $y=c$ we have
\begin{align*}
    & 2 c (y (z (c \lambda+\lambda^2+21)+\lambda z^2 (11 c-9 \lambda)+11)+c \lambda z^2-9 c \lambda z-\lambda^2 y^2 z (19 z+9)+11 z+1) \\
        & \quad -\lambda (y+1)^2 z (9 c \lambda z-1) \\
    \leq & c^2 (-2 c \lambda^2 z (19 z+9)-\lambda z (9 c \lambda z-1))+2 c^2 \lambda z^2-18 c^2 \lambda z \\
        & \quad +c (2 c (z (c \lambda+\lambda^2+21)+\lambda z^2 (11 c-9 \lambda)+11)-2 \lambda z (9 c \lambda z-1)) \\
        & \quad -\lambda z (9 c \lambda z-1)+22 c z+2 c \\
    =& 22 c^2+z^2 (-47 c^3 \lambda^2+22 c^3 \lambda+2 c^2 \lambda (1-18 \lambda)-9 c \lambda^2) \\
        & \quad +z (-18 c^3 \lambda^2+2 c^3 \lambda+c^2 (2 \lambda^2-17 \lambda+42)+2 c (\lambda+11)+\lambda)+2 c.
\end{align*}
This is a parabola of $z$ and according to Lemma~\ref{lem:44} its center is to the left of $c$, so it is decreasing with $z$.
If we set $z=c$ we have
\begin{align*}
    & 2 c (y (z (c \lambda+\lambda^2+21)+\lambda z^2 (11 c-9 \lambda)+11)+c \lambda z^2-9 c \lambda z-\lambda^2 y^2 z (19 z+9)+11 z+1) \\
        & \quad -\lambda (y+1)^2 z (9 c \lambda z-1) \\
    \leq & 22 c^2+c^2 (-47 c^3 \lambda^2+22 c^3 \lambda+2 c^2 \lambda (1-18 \lambda)-9 c \lambda^2) \\
        & \quad +c (-18 c^3 \lambda^2+2 c^3 \lambda+c^2 (2 \lambda^2-17 \lambda+42)+2 c (\lambda+11)+\lambda)+2 c \\
    =& 42 c^3+44 c^2+(-47 c^5-54 c^4-7 c^3) \lambda^2+(22 c^5+4 c^4-17 c^3+2 c^2+c) \lambda+2 c.
\end{align*}
This is a parabola of $\lambda$ and according to Lemma~\ref{lem:53} its center is to the left of $1$, so it is decreasing with $\lambda$.
If we set $\lambda=1$ we have
\begin{align*}
    & 2 c (y (z (c \lambda+\lambda^2+21)+\lambda z^2 (11 c-9 \lambda)+11)+c \lambda z^2-9 c \lambda z-\lambda^2 y^2 z (19 z+9)+11 z+1) \\
        & \quad -\lambda (y+1)^2 z (9 c \lambda z-1) \\
    \leq & -25 c^5-50 c^4+18 c^3+46 c^2+3 c \leq 0.
\end{align*}
The last less-than derives from the condition that $c \geq c_0$.
\end{proof}

\begin{lemma}
\label{lem:66}
$24 c^3+3 c^2+2 (-36 c^5-36 c^4-76 c^3-36 c^2+c)+2 c+2 \leq 0$ for $c \geq c_0$.
\end{lemma}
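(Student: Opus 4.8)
The statement is an elementary single-variable polynomial inequality, so the plan is short. First I would expand and collect the left-hand side; it equals
\[
q(c) := -72c^5 - 72c^4 - 128c^3 - 69c^2 + 4c + 2,
\]
after accounting for the factor $2$ in front of the parenthesized quintic and for the two linear contributions $2c$ (one coming from inside the parentheses, one from the trailing $+2c$). The claim is that $q(c)\le 0$ for all $c\ge c_0=1.17$, and I expect to prove the slightly stronger fact that $q(c)<0$ already for all $c\ge 1$, which certainly suffices since $c_0>1$.

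The key point is that the only positive terms in $q(c)$ are the low-order ones $4c+2$, and these are dominated by a single quadratic term once $c\ge 1$: indeed $6c^2-(4c+2)=2(3c+1)(c-1)\ge 0$ for $c\ge 1$, so $4c+2\le 6c^2$ there. Substituting this bound gives
\[
q(c)\le -72c^5-72c^4-128c^3-69c^2+6c^2 = -\bigl(72c^5+72c^4+128c^3+63c^2\bigr) < 0
\]
for every $c\ge 1$, and in particular for $c\ge c_0$, which proves the lemma.

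There is no genuine obstacle here; the only place to be careful is the arithmetic of the expansion. As with Lemmas~\ref{lem:46} and~\ref{lem:53}, one could alternatively certify the inequality numerically, but the one-line estimate $4c+2\le 6c^2$ turns it into a fully rigorous hand computation.
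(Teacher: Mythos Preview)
Your expansion and argument are correct: $q(c)=-72c^5-72c^4-128c^3-69c^2+4c+2$, and the estimate $4c+2\le 6c^2$ for $c\ge 1$ cleanly kills the positive terms. The paper's own proof of this lemma is simply the one-line assertion ``This is a numerical result,'' so your approach is not different in spirit but is strictly more informative, replacing a numerical check with a short rigorous hand computation.
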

\begin{proof}
This is a numerical result.
\end{proof}

\begin{lemma}
\label{lem:64}
$2 c^3 \lambda-18 c^2 \lambda^2+3 c^2 \lambda+2 c (-18 c^4 \lambda^2-18 c^3 \lambda^2-38 c^2 \lambda^2+11 c^2 \lambda-9 c \lambda^2+\lambda)+c (\lambda^2+21)+2 \lambda \leq 0$ for $\lambda \geq 1$ and $c \geq c_0$.
\end{lemma}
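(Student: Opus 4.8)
The plan is to follow exactly the template used for Lemmas~\ref{lem:22}--\ref{lem:53}: regard the left-hand side as a univariate quadratic in $\lambda$, use the preceding lemma to locate its vertex, reduce to the boundary case $\lambda=1$, and then finish with a one-variable numerical estimate in $c$.

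First I would expand and collect the expression as $A(c)\lambda^2+B(c)\lambda+C(c)$. Gathering the $\lambda^2$ terms (the $-18c^2$ out front, the $-36c^5-36c^4-76c^3-18c^2$ coming from $2c\,(-18c^4\lambda^2-18c^3\lambda^2-38c^2\lambda^2-9c\lambda^2)$, and the $c$ from $c(\lambda^2+21)$) gives
\[
A(c)=-36c^5-36c^4-76c^3-36c^2+c .
\]
Collecting the $\lambda^1$ terms ($2c^3$, $3c^2$, the $22c^3+2c$ from $2c(11c^2\lambda+\lambda)$, and $2$) gives $B(c)=24c^3+3c^2+2c+2$, and the constant term is $C(c)=21c$.

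Next I would note that for $c\ge c_0=1.17$ (in fact for all $c\ge 1$) we have $36c^5\ge c$, hence $A(c)\le -36c^4-76c^3-36c^2<0$, so the quadratic in $\lambda$ is concave. Its derivative in $\lambda$ is $2A(c)\lambda+B(c)$, which at $\lambda=1$ equals $2A(c)+B(c)=24c^3+3c^2+2\bigl(-36c^5-36c^4-76c^3-36c^2+c\bigr)+2c+2$, i.e. precisely the quantity shown to be $\le 0$ in Lemma~\ref{lem:66}. Since $A(c)<0$, the affine function $2A(c)\lambda+B(c)$ is decreasing in $\lambda$, so it remains $\le 0$ for every $\lambda\ge 1$; equivalently, the vertex of the parabola lies to the left of $1$, so the left-hand side is non-increasing in $\lambda$ on $[1,\infty)$. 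It therefore suffices to verify the inequality at $\lambda=1$.

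Finally, substituting $\lambda=1$ gives $A(c)+B(c)+C(c)=-36c^5-36c^4-52c^3-33c^2+24c+2$, and I would check that this is $\le 0$ for all $c\ge c_0$; since the $-36c^5$ term dominates once $c\ge 1.17$, this is a routine numerical estimate of the same flavor as Lemmas~\ref{lem:46} and~\ref{lem:53}. The only point that needs a moment of care is the sign of the leading coefficient $A(c)$, which is what licenses the ``center to the left of $1$ $\Rightarrow$ decreasing in $\lambda$'' step; the rest is bookkeeping of the polynomial expansion together with the final one-variable sign check.
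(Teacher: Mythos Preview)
Your proof is correct and follows essentially the same approach as the paper: both rewrite the expression as a quadratic in $\lambda$ with the same coefficients $(-36c^5-36c^4-76c^3-36c^2+c)\lambda^2+(24c^3+3c^2+2c+2)\lambda+21c$, invoke Lemma~\ref{lem:66} to conclude the vertex lies at or left of $\lambda=1$, and then verify the value $-36c^5-36c^4-52c^3-33c^2+24c+2\le 0$ at $\lambda=1$. Your additional remark that $A(c)<0$ (so the parabola opens downward, which is what makes ``vertex left of $1$'' imply ``decreasing on $[1,\infty)$'') is a helpful point of rigor that the paper leaves implicit.
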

\begin{proof}
\begin{align*}
    & 2 c^3 \lambda-18 c^2 \lambda^2+3 c^2 \lambda+2 c (-18 c^4 \lambda^2-18 c^3 \lambda^2-38 c^2 \lambda^2+11 c^2 \lambda-9 c \lambda^2+\lambda)+c (\lambda^2+21)+2 \lambda \\
    =& (24 c^3+3 c^2+2 c+2) \lambda+(-36 c^5-36 c^4-76 c^3-36 c^2+c) \lambda^2+21 c.
\end{align*}
This is a parabola of $\lambda$ and according to Lemma~\ref{lem:66} its center is to the left of $1$, so it is decreasing with $\lambda$.
If we set $\lambda=1$ we have
\begin{align*}
    & 2 c^3 \lambda-18 c^2 \lambda^2+3 c^2 \lambda+2 c (-18 c^4 \lambda^2-18 c^3 \lambda^2-38 c^2 \lambda^2+11 c^2 \lambda-9 c \lambda^2+\lambda)+c (\lambda^2+21)+2 \lambda \\
    \leq & -36 c^5-36 c^4-52 c^3-33 c^2+24 c+2 \leq 0.
\end{align*}
The last less-than derives from the condition that $c \geq c_0$.
\end{proof}

\begin{lemma}
\label{lem:74}
$13 c^4+3 c^3+c^2+2 (-18 c^6-18 c^5-38 c^4-27 c^3+c^2)+2 c+1 \leq 0$ for $c \geq c_0$.
\end{lemma}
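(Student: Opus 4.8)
The plan is to reduce the claimed inequality to a single univariate polynomial inequality in $c$ and then dispatch it by an elementary term-by-term estimate, exactly in the spirit of the earlier ``numerical'' lemmas (Lemma~\ref{lem:22}, Lemma~\ref{lem:36}, Lemma~\ref{lem:46}, Lemma~\ref{lem:53}, Lemma~\ref{lem:66}).

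First I would expand and collect terms. Setting
\[
P(c) = 13 c^4+3 c^3+c^2+2(-18 c^6-18 c^5-38 c^4-27 c^3+c^2)+2 c+1,
\]
a direct computation (the only place care is needed is the doubling $2\cdot 38 = 76$ and $2\cdot 27 = 54$) gives
\[
P(c) = -36 c^6 - 36 c^5 - 63 c^4 - 51 c^3 + 3 c^2 + 2 c + 1,
\]
so the assertion is equivalent to $P(c)\le 0$ for all $c\ge c_0 = 1.17$.

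Next I would bound the only positive contribution. For $c\ge 1$ we have $c^2\le c^3$, $c\le c^3$, and $1\le c^3$, hence $3c^2 + 2c + 1 \le 6 c^3$. Since $c_0 > 1$, for every $c\ge c_0$ this yields
\[
P(c) \le -36 c^6 - 36 c^5 - 63 c^4 - 51 c^3 + 6 c^3 = -36 c^6 - 36 c^5 - 63 c^4 - 45 c^3 < 0,
\]
where the final strict inequality is immediate since each term is negative for $c > 0$. This establishes the lemma; in fact the weaker hypothesis $c\ge 1$ already suffices, so the constant $c_0 = 1.17$ leaves ample room.

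There is essentially no obstacle here: the content is purely computational, and the only thing to watch is the bookkeeping in the expansion. If one preferred to avoid even the crude comparison above, an alternative is to check $P(c_0) < 0$ together with $P'(c) < 0$ on $[c_0,\infty)$ (again because $P'$ is dominated by its leading negative terms there), but the direct term-by-term estimate is the shortest route and mirrors how Lemma~\ref{lem:46}, Lemma~\ref{lem:53}, and Lemma~\ref{lem:66} are treated.
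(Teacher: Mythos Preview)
Your proof is correct and matches the paper's approach: the paper simply records this lemma as ``a numerical result'' without further detail, and your expansion and term-by-term estimate provide a clean self-contained verification of exactly that numerical claim.
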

\begin{proof}
This is a numerical result.
\end{proof}

\begin{lemma}
\label{lem:60}
$-18 c^3 \lambda^2 z^2+c^2 \lambda z (11 z+3)+2 c (-9 c^3 \lambda^2 z^2+c^2 \lambda z-c \lambda^2 z (19 z+9))+c (-9 \lambda^2 z^2+(\lambda^2+21) z+11)+\lambda (z+1)^2 \leq 0$ for $z \geq c$, $\lambda \geq 1$, $c \geq c_0$.
\end{lemma}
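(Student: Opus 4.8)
The plan is to reuse verbatim the elimination scheme of Lemmas~\ref{lem:15} and~\ref{lem:13}: view the left-hand side as a quadratic in one variable at a time, each time with negative leading coefficient, so that on the relevant half-line it is monotonically decreasing and hence is bounded above by its value at the left endpoint. First I would multiply everything out and regroup by powers of $z$. The coefficient of $z^2$ works out to
\[
a=-18 c^4\lambda^2-18 c^3\lambda^2-38 c^2\lambda^2+11 c^2\lambda-9 c\lambda^2+\lambda,
\]
and since $c^2\lambda(11-38\lambda)+\lambda(1-9c\lambda)<0$ for $c\ge c_0$ and $\lambda\ge 1$, we have $a<0$, so the expression is a downward parabola in $z$.

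The key computation is to check that its derivative at $z=c$, namely $2ac+b$ where $b$ is the collected coefficient of $z$, equals \emph{exactly} the left-hand side of Lemma~\ref{lem:64}. Granting this, Lemma~\ref{lem:64} gives $2ac+b\le 0$, so the vertex lies to the left of $c$ and the expression is decreasing in $z$ on $[c,+\infty)$. Evaluating at $z=c$ then yields the upper bound
\[
\bigl(-18 c^6-18 c^5-38 c^4-27 c^3+c^2\bigr)\lambda^2+\bigl(13 c^4+3 c^3+c^2+2c+1\bigr)\lambda+\bigl(21 c^2+11 c\bigr).
\]
This is again a downward parabola, now in $\lambda$ (its leading coefficient is negative because $c^2-27c^3<0$ for $c\ge c_0$), and one checks that its derivative at $\lambda=1$ is precisely the left-hand side of Lemma~\ref{lem:74}, hence $\le 0$; so the vertex lies to the left of $1$ and the expression is decreasing in $\lambda$ on $[1,+\infty)$. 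Setting $\lambda=1$ collapses everything to the single-variable polynomial
\[
-18 c^6-18 c^5-25 c^4-24 c^3+23 c^2+13 c+1,
\]
which is negative for every $c\ge c_0=1.17$ (at $c=1.17$ it is about $-123$, and the $-18c^6$ term dominates for larger $c$); this last step is the ``numerical result'' invoked in the analogous lemmas.

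The only delicate part is pure bookkeeping: after expanding, one must verify that the collected coefficient $2ac+b$ in $z$, and then the $\lambda$-derivative of the reduced quadratic, coincide on the nose with the left-hand sides of Lemmas~\ref{lem:64} and~\ref{lem:74}, since that exact match is what legitimizes the two ``center lies to the left'' claims rather than leaving them merely plausible. I expect this algebra to be the main obstacle — long and easy to slip in — whereas the sign checks on the two leading coefficients and the final one-variable inequality are immediate.
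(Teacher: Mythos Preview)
Your proposal is correct and follows essentially the same approach as the paper: regroup as a downward quadratic in $z$, invoke Lemma~\ref{lem:64} (which is exactly your $2ac+b\le 0$) to get monotonicity on $[c,\infty)$, evaluate at $z=c$, then repeat in $\lambda$ using Lemma~\ref{lem:74} and finish with the single-variable check at $c\ge c_0$. If anything you are slightly more careful than the paper, since you explicitly verify the negativity of the leading coefficients rather than leaving it implicit.
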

\begin{proof}
\begin{align*}
    & -18 c^3 \lambda^2 z^2+c^2 \lambda z (11 z+3)+2 c (-9 c^3 \lambda^2 z^2+c^2 \lambda z-c \lambda^2 z (19 z+9)) \\
        & \quad +c (-9 \lambda^2 z^2+(\lambda^2+21) z+11)+\lambda (z+1)^2 \\
    =& z (2 c^3 \lambda-18 c^2 \lambda^2+3 c^2 \lambda+c (\lambda^2+21)+2 \lambda) \\
        & \quad +z^2 (-18 c^4 \lambda^2-18 c^3 \lambda^2-38 c^2 \lambda^2+11 c^2 \lambda-9 c \lambda^2+\lambda)+11 c+\lambda.
\end{align*}
This is a parabola of $z$ and according to Lemma~\ref{lem:64} its center is to the left of $c$, so it is decreasing with $z$.
If we set $z=c$ we have
\begin{align*}
    & -18 c^3 \lambda^2 z^2+c^2 \lambda z (11 z+3)+2 c (-9 c^3 \lambda^2 z^2+c^2 \lambda z-c \lambda^2 z (19 z+9)) \\
        & \quad +c (-9 \lambda^2 z^2+(\lambda^2+21) z+11)+\lambda (z+1)^2 \\
    \leq & c (2 c^3 \lambda-18 c^2 \lambda^2+3 c^2 \lambda+c (\lambda^2+21)+2 \lambda) \\
        & \quad +c^2 (-18 c^4 \lambda^2-18 c^3 \lambda^2-38 c^2 \lambda^2+11 c^2 \lambda-9 c \lambda^2+\lambda)+11 c+\lambda \\
    =& 21 c^2+(13 c^4+3 c^3+c^2+2 c+1) \lambda+(-18 c^6-18 c^5-38 c^4-27 c^3+c^2) \lambda^2+11 c.
\end{align*}
This is a parabola of $\lambda$ and according to Lemma~\ref{lem:74} its center is to the left of $1$, so it is decreasing with $\lambda$.
If we set $\lambda=1$ we have
\begin{align*}
    & -18 c^3 \lambda^2 z^2+c^2 \lambda z (11 z+3)+2 c (-9 c^3 \lambda^2 z^2+c^2 \lambda z-c \lambda^2 z (19 z+9)) \\
        & \quad +c (-9 \lambda^2 z^2+(\lambda^2+21) z+11)+\lambda (z+1)^2 \\
    \leq & -18 c^6-18 c^5-25 c^4-24 c^3+23 c^2+13 c+1 \leq 0.
\end{align*}
The last less-than derives from the condition that $c \geq c_0$.
\end{proof}

\begin{lemma}
\label{lem:85}
$c (23 c^3+5 c^2-6 c+2)+2 c (-18 c^5-36 c^4-56 c^3-27 c^2+c) \leq 0$ for $c \geq c_0$.
\end{lemma}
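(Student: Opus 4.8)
The plan is to reduce Lemma~\ref{lem:85} to an elementary one-variable polynomial inequality and then dispatch it with a crude bound, in the same spirit as the numerical Lemmas~\ref{lem:46},~\ref{lem:53},~\ref{lem:66}, and~\ref{lem:74}. First I would expand the left-hand side into a single polynomial in $c$: multiplying out the two products gives $c(23c^3+5c^2-6c+2)=23c^4+5c^3-6c^2+2c$ and $2c(-18c^5-36c^4-56c^3-27c^2+c)=-36c^6-72c^5-112c^4-54c^3+2c^2$, so the left-hand side is
\[
-36c^6-72c^5-89c^4-49c^3-4c^2+2c \;=\; c\left(-36c^5-72c^4-89c^3-49c^2-4c+2\right).
\]

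Since $c\ge c_0=1.17>0$, it then suffices to show the quintic factor $q(c):=-36c^5-72c^4-89c^3-49c^2-4c+2$ is nonpositive on $[c_0,\infty)$. The key observation is that every coefficient of $q$ other than the constant term is negative; hence for $c\ge 1$ we have $36c^5\ge 36$, and dropping the (negative) lower-order terms yields $q(c)\le -36c^5+2\le -34<0$. Multiplying back by $c>0$ gives the claimed inequality for all $c\ge 1$, a fortiori for $c\ge 1.17$. Alternatively, one can note $q'(c)=-180c^4-288c^3-267c^2-98c-4<0$ for $c\ge 0$, so $q$ is strictly decreasing, and then simply evaluate $q(c_0)<0$ numerically.

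I do not expect any genuine obstacle here: unlike the bounds on $g_1$ and $g_2$ that this lemma ultimately supports, the statement is purely arithmetic with no free parameters besides $c$. The only point requiring a bit of care is performing the expansion of the two products correctly and confirming the sign pattern $(-,-,-,-,-,+)$ of the resulting coefficients of $q$; once that is in hand the inequality holds comfortably throughout $[1,\infty)$, leaving ample slack at the threshold $c_0=1.17$.
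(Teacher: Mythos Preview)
Your proof is correct and follows essentially the same approach as the paper, which simply declares ``This is a numerical result.'' Your expansion and the crude bound $q(c)\le -36c^5+2\le -34$ for $c\ge 1$ are a clean way to make that numerical claim explicit.
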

\begin{proof}
This is a numerical result.
\end{proof}

\begin{lemma}
\label{lem:82}
$c (c^3 \lambda-9 c^2 \lambda^2+3 c^2 \lambda+c (\lambda^2-8 \lambda+21)+2 \lambda+11)+2 c^2 (-9 c^4 \lambda^2-18 c^3 \lambda^2-28 c^2 \lambda^2+11 c^2 \lambda+c (\lambda-9 \lambda^2)+\lambda) \leq 0$ for $\lambda \geq 1$ and $c \geq c_0$.
\end{lemma}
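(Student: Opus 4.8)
The plan is to follow exactly the ``monotonicity cascade'' template already used for Lemma~\ref{lem:64} and Lemma~\ref{lem:60} (and foreshadowed at the start of this subsection): since the left-hand side involves only the two variables $c$ and $\lambda$, and is quadratic in $\lambda$, it suffices to reduce to a single one-variable numerical inequality by showing the quadratic in $\lambda$ opens downward and is non-increasing on $[1,\infty)$, then evaluating at $\lambda=1$ and at $c=c_0$.

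First I would expand $c(\cdots)+2c^2(\cdots)$ and collect the result as $a_2(c)\lambda^2+a_1(c)\lambda+a_0(c)$, where
\[
a_2(c)=-18c^6-36c^5-56c^4-27c^3+c^2,\qquad
a_1(c)=23c^4+5c^3-6c^2+2c,\qquad
a_0(c)=21c^2+11c.
\]
For $c\ge c_0=1.17$ the term $-18c^6$ dominates, so $a_2(c)<0$ and the parabola in $\lambda$ opens downward. Next I would invoke Lemma~\ref{lem:85}, whose statement is precisely $a_1(c)+2a_2(c)\le 0$ for $c\ge c_0$; since $\frac{d}{d\lambda}\bigl(a_2\lambda^2+a_1\lambda+a_0\bigr)\big|_{\lambda=1}=2a_2(c)+a_1(c)$, this says the vertex of the downward parabola lies at $\lambda\le 1$, hence the expression is non-increasing in $\lambda$ on $[1,\infty)$. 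It therefore suffices to check the inequality at $\lambda=1$.

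Substituting $\lambda=1$ collapses the left-hand side to the univariate polynomial
\[
a_2(c)+a_1(c)+a_0(c)=c\bigl(-18c^5-36c^4-33c^3-22c^2+16c+13\bigr),
\]
and I would close by verifying this is $\le 0$ for all $c\ge c_0$, which is a routine numerical check: the bracketed quintic is already negative at $c=1.17$ and its derivative is negative there, so it stays negative for larger $c$.

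I expect no conceptual obstacle here; the only real care is in the bookkeeping of the expansion, in particular making sure the collected coefficients $a_2,a_1,a_0$ are exactly right so that Lemma~\ref{lem:85} genuinely matches $a_1+2a_2\le 0$ and the $\lambda=1$ substitution produces the quintic above. As in the preceding lemmas of this chain, once those sign/degree patterns line up, the argument is purely mechanical.
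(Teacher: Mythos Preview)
Your proposal is correct and follows essentially the same approach as the paper: collect the expression as a downward-opening quadratic in $\lambda$, invoke Lemma~\ref{lem:85} to show it is non-increasing for $\lambda\ge 1$, then substitute $\lambda=1$ to obtain $-c(18c^5+36c^4+33c^3+22c^2-16c-13)\le 0$ for $c\ge c_0$. Your coefficients $a_2,a_1,a_0$ and the final quintic all match the paper's computation exactly.
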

\begin{proof}
\begin{align*}
    & c (c^3 \lambda-9 c^2 \lambda^2+3 c^2 \lambda+c (\lambda^2-8 \lambda+21)+2 \lambda+11) \\
        & \quad +2 c^2 (-9 c^4 \lambda^2-18 c^3 \lambda^2-28 c^2 \lambda^2+11 c^2 \lambda+c (\lambda-9 \lambda^2)+\lambda) \\
    =& c (23 c^3+5 c^2-6 c+2) \lambda+c (-18 c^5-36 c^4-56 c^3-27 c^2+c) \lambda^2+c (21 c+11).
\end{align*}
This is a parabola of $\lambda$ and according to Lemma~\ref{lem:85} its center is to the left of $1$, so it is decreasing with $\lambda$.
If we set $\lambda=1$ we have
\begin{align*}
    & c (c^3 \lambda-9 c^2 \lambda^2+3 c^2 \lambda+c (\lambda^2-8 \lambda+21)+2 \lambda+11) \\
        & \quad +2 c^2 (-9 c^4 \lambda^2-18 c^3 \lambda^2-28 c^2 \lambda^2+11 c^2 \lambda+c (\lambda-9 \lambda^2)+\lambda) \\
    \leq & c (23 c^3+5 c^2-6 c+2)+c (-18 c^5-36 c^4-56 c^3-27 c^2+c)+c (21 c+11) \\
    =& -c (18 c^5+36 c^4+33 c^3+22 c^2-16 c-13) \leq 0.
\end{align*}
The last less-than derives from the condition that $c \geq c_0$.
\end{proof}

\begin{lemma}
\label{lem:93}
$12 c^5+4 c^4-7 c^3+2 c^2+2 (-9 c^7-18 c^6-28 c^5-18 c^4+c^3)+c \leq 0$ for $c \geq c_0$.
\end{lemma}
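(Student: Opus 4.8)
The plan is to reduce Lemma~\ref{lem:93} to a single-variable polynomial inequality and dispatch it by elementary estimates, exactly as was done for the earlier ``numerical result'' lemmas (Lemma~\ref{lem:85}, Lemma~\ref{lem:74}, etc.). First I would expand the left-hand side and collect powers of $c$. The term $2(-9c^7-18c^6-28c^5-18c^4+c^3)$ contributes $-18c^7-36c^6-56c^5-36c^4+2c^3$, and combining with $12c^5+4c^4-7c^3+2c^2+c$ gives
\[
P(c) := -18c^7-36c^6-44c^5-32c^4-5c^3+2c^2+c,
\]
where the only coefficients that require care are that of $c^5$ (namely $12-56=-44$) and that of $c^3$ (namely $-7+2=-5$). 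So the claim is $P(c)\le 0$ for all $c\ge c_0=1.17$.

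Next I would factor out $c>0$, so that it suffices to prove
\[
Q(c) := -18c^6-36c^5-44c^4-32c^3-5c^2+2c+1 \le 0 \qquad (c\ge c_0).
\]
I would then split $Q(c) = -\bigl(18c^6+36c^5+44c^4+32c^3\bigr) - \bigl(5c^2-2c-1\bigr)$. The first parenthesized group is a sum of positive monomials, hence strictly positive for every $c>0$. The second group, $5c^2-2c-1$, is an upward-opening parabola whose larger root is $\frac{2+\sqrt{24}}{10} < 0.7 < c_0$, so $5c^2-2c-1 > 0$ for all $c\ge c_0$. Consequently $Q(c)<0$ on $[c_0,\infty)$, which yields $P(c)<0$ there and proves the lemma.

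I do not anticipate a real obstacle: the entire content is the bookkeeping of the expansion together with the observation that the potentially positive low-order part $2c^2+c$ is already dominated by the $-5c^3$ term once $c\ge c_0$ (equivalently, $5c^2>2c+1$ at $c_0$). This keeps the argument purely algebraic and avoids any genuine numerical search, matching the style of the surrounding lemmas.
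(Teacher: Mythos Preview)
Your expansion and argument are correct: $P(c)=-18c^7-36c^6-44c^5-32c^4-5c^3+2c^2+c=c\,Q(c)$, and your splitting of $Q(c)$ into a manifestly negative high-degree block and the quadratic $-(5c^2-2c-1)$, whose larger root $\tfrac{1+\sqrt{6}}{5}<0.7<c_0$, cleanly shows $Q(c)<0$ for all $c\ge c_0$.

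The paper itself offers no argument here---the entire proof reads ``This is a numerical result.''---so your write-up is strictly more informative. Your elementary algebraic decomposition avoids any actual numerical evaluation and in fact establishes the inequality on the larger range $c>\tfrac{1+\sqrt{6}}{5}$; this is a modest but genuine improvement in presentation over the paper's one-line appeal to numerics.
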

\begin{proof}
This is a numerical result.
\end{proof}

\begin{lemma}
\label{lem:6}
$g_1(x,y,z) \leq \frac{9}{10}$ for $x,y,z \geq c$, $\lambda \geq 1$, $c \geq c_0$.
\end{lemma}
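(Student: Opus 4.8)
The plan is to clear denominators, turning the bound $g_1\le\tfrac{9}{10}$ into a single polynomial inequality, and then to peel off the variables one at a time, exploiting that every polynomial that arises has degree at most two in each of $x,y,z,\lambda$. First I would write $g=N_g/D$ with $D=1+z+\lambda x(1+y)z$ and $N_g=x(1+y)+\lambda y(1+z)+\lambda c x(1+y)z$; substituting the three displayed formulas for the partials gives $\partial_x g\cdot x+\partial_y g\cdot y-\partial_z g\cdot z=\widetilde N/D^2$ for an explicit polynomial $\widetilde N$, so $g_1=\widetilde N/(D\,N_g)$. On the region $x,y,z\ge c$, $\lambda\ge 1$, $c\ge c_0$ both $D$ and $N_g$ are positive, hence $g_1\le\tfrac{9}{10}$ is equivalent there to $P:=10\widetilde N-9\,D\,N_g\le 0$. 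A direct expansion shows $P$ is quadratic in each of $x,y,z,\lambda$; in particular its $x^2$-coefficient equals $\lambda z(1+y)^2\bigl(1-9\lambda c z\bigr)$, which is negative on the region, so $P$ is a downward-opening parabola in $x$.

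Next I would peel: being a downward parabola in $x$ whose derivative at the left endpoint satisfies $\partial_x P\big|_{x=c}\le 0$ on the region, $P$ is non-increasing on $x\ge c$, so $P\le P\big|_{x=c}$; the inequality $\partial_x P\big|_{x=c}\le 0$ is exactly Lemma~\ref{lem:13}. The same move applies to $P\big|_{x=c}$ as a parabola in $y$ (its $y$-derivative at $y=c$ being $\le 0$ by Lemma~\ref{lem:60}), then to $P\big|_{x=y=c}$ as a parabola in $z$ (its $z$-derivative at $z=c$ being $\le 0$ by Lemma~\ref{lem:82}), and then as a parabola in $\lambda$ (its $\lambda$-derivative at $\lambda=1$ being $\le 0$ by Lemma~\ref{lem:93}); each of Lemmas~\ref{lem:13},~\ref{lem:60},~\ref{lem:82},~\ref{lem:93} is in turn proved by running the identical recursion on the lower-degree derivative polynomial, which is what the supporting Lemmas~\ref{lem:15},~\ref{lem:17},~\ref{lem:22},~\ref{lem:36},~\ref{lem:44},~\ref{lem:46},~\ref{lem:53},~\ref{lem:64},~\ref{lem:66},~\ref{lem:74},~\ref{lem:85} accomplish, the recursion bottoming out in a short list of univariate polynomial inequalities in $c$ that hold for $c\ge c_0=1.17$ by direct numerical check. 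At the end $P\big|_{x=y=z=c,\,\lambda=1}$ is itself a univariate polynomial in $c$ which is $\le 0$ for $c\ge c_0$, and chaining all the inequalities yields $P\le 0$, i.e.\ $g_1\le\tfrac{9}{10}<1$.

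The hard part will not be any single computation but keeping the recursion honest: at every node one must verify both that the polynomial genuinely opens downward in the active variable and that its vertex lies on the correct side of the relevant endpoint (equivalently, that the one-sided derivative at the endpoint has the right sign), and that this derivative is again of degree $\le 2$ in each variable so that the recursion terminates. The structural observation that all of these polynomials stay quadratic per variable --- so that ``differentiate in the active variable, check its sign at the endpoint, substitute the endpoint, recurse'' is a finite procedure --- is the real content; beneath it everything is bounded-degree algebra ending in numerical checks, which is precisely what the preceding chain of lemmas discharges.
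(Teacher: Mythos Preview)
Your proposal is correct and follows essentially the same approach as the paper: clear denominators to reduce $g_1\le\tfrac{9}{10}$ to a polynomial inequality that is quadratic in each of $x,y,z,\lambda$, then peel off the variables in that order by checking that the parabola is downward-opening and that the vertex lies at or left of the relevant endpoint (the paper phrases this as ``its center is to the left of $c$'', which is equivalent to your derivative-at-the-endpoint condition), invoking Lemmas~\ref{lem:13}, \ref{lem:60}, \ref{lem:82}, \ref{lem:93} at each stage, and finish with the univariate check at $x=y=z=c$, $\lambda=1$. Your identification of the $x^2$-coefficient as $\lambda z(1+y)^2(1-9\lambda c z)$ and your description of how the supporting lemmas recurse down to numerical facts in $c$ are also accurate.
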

\begin{proof}
$g_1(x,y,z) \leq \frac{9}{10}$ is equivalent to
\begin{align*}
x (y (z (c \lambda+\lambda^2+21)+\lambda z^2 (11 c-9 \lambda)+11)+c \lambda z^2-9 c \lambda z-\lambda^2 y^2 z (19 z+9)+11 z+1) \\
        -\lambda x^2 (y+1)^2 z (9 c \lambda z-1)+\lambda y (z+1)^2 \leq 0.
\end{align*}
This is a parabola of $x$ and according to Lemma~\ref{lem:13} its center is to the left of $c$, so it is decreasing with $x$.
If we set $x=c$ we have
\begin{align*}
    & x (y (z (c \lambda+\lambda^2+21)+\lambda z^2 (11 c-9 \lambda)+11)+c \lambda z^2-9 c \lambda z-\lambda^2 y^2 z (19 z+9)+11 z+1) \\
        & \quad -\lambda x^2 (y+1)^2 z (9 c \lambda z-1)+\lambda y (z+1)^2 \\
    \leq & -c^2 \lambda (y+1)^2 z (9 c \lambda z-1)+c (y (z (c \lambda+\lambda^2+21)+\lambda z^2 (11 c-9 \lambda)+11)+c \lambda z^2-9 c \lambda z \\
        & \quad -\lambda^2 y^2 z (19 z+9)+11 z+1)+\lambda y (z+1)^2 \\
    =& -9 c^3 \lambda^2 z^2+c^2 \lambda z^2-8 c^2 \lambda z+y^2 (-9 c^3 \lambda^2 z^2+c^2 \lambda z-c \lambda^2 z (19 z+9)) \\
        & \quad +y (-18 c^3 \lambda^2 z^2+c^2 \lambda z (11 z+3)+c (-9 \lambda^2 z^2+(\lambda^2+21) z+11)+\lambda (z+1)^2)+11 c z+c.
\end{align*}
This is a parabola of $y$ and according to Lemma~\ref{lem:60} its center is to the left of $c$, so it is decreasing with $y$.
If we set $y=c$ we have
\begin{align*}
    & x (y (z (c \lambda+\lambda^2+21)+\lambda z^2 (11 c-9 \lambda)+11)+c \lambda z^2-9 c \lambda z \\
            & \quad -\lambda^2 y^2 z (19 z+9)+11 z+1)-\lambda x^2 (y+1)^2 z (9 c \lambda z-1)+\lambda y (z+1)^2 \\
    \leq & -9 c^3 \lambda^2 z^2+c^2 \lambda z^2-8 c^2 \lambda z+c^2 (-9 c^3 \lambda^2 z^2+c^2 \lambda z-c \lambda^2 z (19 z+9)) \\
        & \quad +c (-18 c^3 \lambda^2 z^2+c^2 \lambda z (11 z+3)+c (-9 \lambda^2 z^2+(\lambda^2+21) z+11)+\lambda (z+1)^2)+11 c z+c \\
    =& c z (c^3 \lambda-9 c^2 \lambda^2+3 c^2 \lambda+c (\lambda^2-8 \lambda+21)+2 \lambda+11) \\
        & \quad +c z^2 (-9 c^4 \lambda^2-18 c^3 \lambda^2-28 c^2 \lambda^2+11 c^2 \lambda+c (\lambda-9 \lambda^2)+\lambda)+c (11 c+\lambda+1).
\end{align*}
This is a parabola of $z$ and according to Lemma~\ref{lem:82} its center is to the left of $c$, so it is decreasing with $z$.
If we set $z=c$ we have
\begin{align*}
    & x (y (z (c \lambda+\lambda^2+21)+\lambda z^2 (11 c-9 \lambda)+11)+c \lambda z^2-9 c \lambda z-\lambda^2 y^2 z (19 z+9)+11 z+1) \\
        & \quad -\lambda x^2 (y+1)^2 z (9 c \lambda z-1)+\lambda y (z+1)^2 \\
    \leq & c^2 (c^3 \lambda-9 c^2 \lambda^2+3 c^2 \lambda+c (\lambda^2-8 \lambda+21)+2 \lambda+11) \\
        & \quad +c^3 (-9 c^4 \lambda^2-18 c^3 \lambda^2-28 c^2 \lambda^2+11 c^2 \lambda+c (\lambda-9 \lambda^2)+\lambda)+c (11 c+\lambda+1) \\
    =& 21 c^3+22 c^2+(12 c^5+4 c^4-7 c^3+2 c^2+c) \lambda+(-9 c^7-18 c^6-28 c^5-18 c^4+c^3) \lambda^2+c.
\end{align*}
This is a parabola of $\lambda$ and according to Lemma~\ref{lem:93} its center is to the left of $1$, so it is decreasing with $\lambda$.
If we set $\lambda=1$ we have
\begin{align*}
    & x (y (z (c \lambda+\lambda^2+21)+\lambda z^2 (11 c-9 \lambda)+11)+c \lambda z^2-9 c \lambda z-\lambda^2 y^2 z (19 z+9)+11 z+1) \\
        & \quad -\lambda x^2 (y+1)^2 z (9 c \lambda z-1)+\lambda y (z+1)^2 \\
    \leq & -9 c^7-18 c^6-16 c^5-14 c^4+15 c^3+24 c^2+2 c \leq 0.
\end{align*}
The last less-than derives from the condition that $c \geq c_0$.
\end{proof}

  \begin{lemma}
\label{lem:2:19}
$1+4 c^2 \lambda+21 \lambda^2+c \lambda (-15+22 \lambda)+2 c \lambda (11 \lambda-36 c^2 \lambda-36 c^3 \lambda+c (-9+2 \lambda)) \leq 0$ for $\lambda \geq 1$ and $c \geq c_0$.
\end{lemma}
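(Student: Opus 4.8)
The plan is to follow the template already used for the companion inequalities in the proof of Theorem~\ref{thm:2}, most directly Lemma~\ref{lem:17}: expand the left-hand side into an explicit polynomial, recognize it as a quadratic in $\lambda$, establish monotonicity in $\lambda$ on $[1,+\infty)$, reduce to the endpoint $\lambda=1$, and close with a one-variable numerical check in $c$.

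First I would multiply out all the products and collect by powers of $\lambda$. The left-hand side becomes
\[
A(c)\,\lambda^2 + B(c)\,\lambda + 1,\qquad A(c)=-72c^4-72c^3+4c^2+44c+21,\quad B(c)=-14c^2-15c.
\]
The structural point, exactly as in the surrounding lemmas, is that no power of $\lambda$ above the second survives, so this is a genuine parabola in $\lambda$ and the ``one variable at a time'' reduction applies.

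Next I would record the relevant signs for $c\ge c_0=1.17$: the leading coefficient $A(c)$ is negative (for $c\ge 1$ one has $4c^2+44c+21\le 69c^4<72c^4$, hence $A(c)<-72c^3<0$), and $B(c)<0$ since $c>0$. Thus the parabola opens downward and its vertex lies at $\lambda^\ast=-B(c)/\bigl(2A(c)\bigr)<0<1$, so the left-hand side is strictly decreasing in $\lambda$ on $[1,+\infty)$. It therefore suffices to verify the inequality at $\lambda=1$, where it reads
\[
A(c)+B(c)+1=-72c^4-72c^3-10c^2+29c+22\le 0,
\]
a numerical fact for $c\ge 1.17$ (the quartic and cubic terms dominate; at $c=1.17$ the value is about $-208$ and it only decreases for larger $c$).

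I do not expect a real obstacle; the work is routine polynomial algebra. The only mild care needed is confirming $A(c)<0$ throughout $[c_0,+\infty)$ so that ``decreasing in $\lambda$'' is legitimate, and matching the bookkeeping conventions of the neighbouring lemmas, where such a monotonicity (``center to the left of $1$'') claim is sometimes isolated as a separate numerical lemma rather than read off directly from $B(c)<0$ as above.
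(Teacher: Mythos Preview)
Your proposal is correct and follows essentially the same approach as the paper: expand and collect into the quadratic $1+(-15c-14c^2)\lambda+(21+44c+4c^2-72c^3-72c^4)\lambda^2$, argue it is decreasing in $\lambda$ for $\lambda\ge 1$, and then verify the one-variable inequality $-72c^4-72c^3-10c^2+29c+22\le 0$ at $\lambda=1$. The only difference is cosmetic: you justify the monotonicity via the explicit signs of $A(c)$ and $B(c)$ and the vertex location, whereas the paper simply asserts the parabola is decreasing on the range.
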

\begin{proof}
\begin{align*}
    & 1+4 c^2 \lambda+21 \lambda^2+c \lambda (-15+22 \lambda)+2 c \lambda (11 \lambda-36 c^2 \lambda-36 c^3 \lambda+c (-9+2 \lambda)) \\
    =& 1+(-15 c-14 c^2) \lambda+(21+44 c+4 c^2-72 c^3-72 c^4) \lambda^2.
\end{align*}
This is a parabola of $\lambda$ and it is decreasing when $\lambda \geq 1$ and $c \geq c_0$.
Therefore if we set $\lambda=1$ we have
\begin{align*}
    & 1+4 c^2 \lambda+21 \lambda^2+c \lambda (-15+22 \lambda)+2 c \lambda (11 \lambda-36 c^2 \lambda-36 c^3 \lambda+c (-9+2 \lambda)) \\
    \leq & 22+29 c-10 c^2-72 c^3-72 c^4 \leq 0.
\end{align*}
\end{proof}

\begin{lemma}
\label{lem:2:17}
$-9+(1-15 c \lambda+21 \lambda^2) z+\lambda (-9 c+11 \lambda-36 c^2 \lambda) z^2+2 c \lambda z (2 c-18 c^2 \lambda z+\lambda (11+z)) \leq 0$
for $\lambda \geq 1$, $z \geq c \geq c_0$.
\end{lemma}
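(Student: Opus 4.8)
The plan is to follow the same stagewise monotonicity reduction used for the $g_1$ bounds (for instance in the proof of Lemma~\ref{lem:60}): peel off one variable at a time, first $z$, then $\lambda$, and finally $c$. First I would expand the left-hand side and collect it as a quadratic in $z$, writing it as $A z^2 + B z - 9$ where $A = \lambda(-9c + (11 + 2c - 36c^2 - 36c^3)\lambda)$ and $B = 1 - 15c\lambda + 21\lambda^2 + 4c^2\lambda + 22c\lambda^2$. A direct sign check shows that for $c \ge c_0$ and $\lambda \ge 1$ the quantity $11 + 2c - 36c^2 - 36c^3$ is negative, so $A < 0$ and the parabola in $z$ opens downward; hence on the range $z \ge c$ the expression is non-increasing in $z$ provided its vertex $-B/(2A)$ lies at or to the left of $c$, which is equivalent (since $A<0$) to $B + 2Ac \le 0$.

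The key observation is that, after expansion, $B + 2Ac$ is \emph{identical} to the polynomial in Lemma~\ref{lem:2:19}: both equal $1 - 15c\lambda + 21\lambda^2 - 14c^2\lambda + 44c\lambda^2 + 4c^2\lambda^2 - 72c^3\lambda^2 - 72c^4\lambda^2$. So Lemma~\ref{lem:2:19} supplies exactly $B + 2Ac \le 0$ for all $\lambda \ge 1$, $c \ge c_0$, the vertex condition holds, and the left-hand side is non-increasing in $z$ on $[c,\infty)$. Substituting $z = c$ then reduces the claim to
\[
(c - 9) + (-15c^2 - 5c^3)\lambda + (21c + 33c^2 + 2c^3 - 36c^4 - 36c^5)\lambda^2 \le 0
\]
for $\lambda \ge 1$ and $c \ge c_0$.

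This is again a quadratic, now in $\lambda$. Its linear coefficient $-15c^2 - 5c^3$ is manifestly negative, and a numerical check gives that its leading coefficient $21c + 33c^2 + 2c^3 - 36c^4 - 36c^5$ is negative for $c \ge c_0$; hence the parabola in $\lambda$ opens downward with vertex at a negative value of $\lambda$, so it is decreasing on $[1,\infty)$ and is maximized over that range at $\lambda = 1$. Setting $\lambda = 1$ collapses the inequality to the single-variable statement $-9 + 22c + 18c^2 - 3c^3 - 36c^4 - 36c^5 \le 0$ for $c \ge c_0$, which holds: the left side is already negative at $c = c_0 = 1.17$ and is strictly decreasing thereafter, being dominated by $-36c^5$. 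This final numerical step closes the proof.

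I do not expect a conceptual obstacle here; the only real work is the polynomial bookkeeping. Concretely, the delicate points are: carrying out the full expansion of the left-hand side (which has degree at most two in each of $z$, $\lambda$, $c$) without slips, verifying the exact coincidence of $B + 2Ac$ with the expression in Lemma~\ref{lem:2:19}, and confirming the handful of leading-coefficient sign checks together with the closing univariate inequality. As with the analogous lemmas in this section, I would carry out and double-check these manipulations with a computer algebra system.
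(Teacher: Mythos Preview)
Your proposal is correct and matches the paper's proof essentially line for line: collect as a quadratic in $z$, invoke Lemma~\ref{lem:2:19} (which is exactly your $B+2Ac\le 0$) to conclude monotonicity for $z\ge c$, substitute $z=c$ to obtain $(c-9)+(-15c^2-5c^3)\lambda+(21c+33c^2+2c^3-36c^4-36c^5)\lambda^2$, argue this is decreasing in $\lambda$ on $[1,\infty)$, and finish with the univariate check at $\lambda=1$. Your observation that both the linear and quadratic coefficients in $\lambda$ are negative (forcing the vertex to lie at $\lambda<0$) is a slightly cleaner justification of that step than the paper's bare assertion, but otherwise the arguments are identical.
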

\begin{proof}
\begin{align*}
    & -9+(1-15 c \lambda+21 \lambda^2) z+\lambda (-9 c+11 \lambda-36 c^2 \lambda) z^2+2 c \lambda z (2 c-18 c^2 \lambda z+\lambda (11+z)) \\
    =& -9+(1+4 c^2 \lambda+21 \lambda^2+c \lambda (-15+22 \lambda)) z+\lambda (11 \lambda-36 c^2 \lambda-36 c^3 \lambda+c (-9+2 \lambda)) z^2.
\end{align*}
This is a parabola of $z$ and according to Lemma~\ref{lem:2:19} it is decreasing when $z \geq c$.
Therefore if we set $z=c$ we have
\begin{align*}
    & -9+(1-15 c \lambda+21 \lambda^2) z+\lambda (-9 c+11 \lambda-36 c^2 \lambda) z^2+2 c \lambda z (2 c-18 c^2 \lambda z+\lambda (11+z)) \\
    \leq & -9+c^2 \lambda (11 \lambda-36 c^2 \lambda-36 c^3 \lambda+c (-9+2 \lambda))+c (1+4 c^2 \lambda+21 \lambda^2+c \lambda (-15+22 \lambda)) \\
    =& -9+c+(-15 c^2-5 c^3) \lambda+(21 c+33 c^2+2 c^3-36 c^4-36 c^5) \lambda^2.
\end{align*}
This is a parabola of $\lambda$ and it is decreasing when $\lambda \geq 1$ and $c \geq c_0$.
Therefore if we set $\lambda=1$ we have
\begin{align*}
    & -9+(1-15 c \lambda+21 \lambda^2) z+\lambda (-9 c+11 \lambda-36 c^2 \lambda) z^2+2 c \lambda z (2 c-18 c^2 \lambda z+\lambda (11+z)) \\
    \leq & -9+22 c+18 c^2-3 c^3-36 c^4-36 c^5 \leq 0.
\end{align*}
\end{proof}

\begin{lemma}
\label{lem:2:32}
$-9-15 c^2 \lambda+2 c^3 \lambda+11 c^2 \lambda^2+c (1-27 \lambda+21 \lambda^2)+2 c (-9 c^2 \lambda-17 c^2 \lambda^2-36 c^3 \lambda^2-18 c^4 \lambda^2+c \lambda (-19+11 \lambda)) \leq 0$
for $\lambda \geq 1$ and $c \geq c_0$.
\end{lemma}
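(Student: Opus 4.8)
The plan is to follow exactly the template of the preceding lemmas (Lemma~\ref{lem:2:19} and Lemma~\ref{lem:2:17}): regroup the left-hand side as a quadratic polynomial in $\lambda$ with $c$-dependent coefficients, show that this quadratic is decreasing in $\lambda$ on $[1,\infty)$ whenever $c\ge c_0$, and thereby reduce the claim to the univariate inequality obtained by setting $\lambda=1$.

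First I would expand the expression and collect powers of $\lambda$. The computation produces the form
\[
\bigl(-36c^5-72c^4-34c^3+33c^2+21c\bigr)\lambda^2+\bigl(-16c^3-53c^2-27c\bigr)\lambda+(c-9).
\]
The coefficient of $\lambda$ equals $-c(16c^2+53c+27)$, which is negative for $c>0$. The coefficient of $\lambda^2$ equals $c\bigl(-36c^4-72c^3-34c^2+33c+21\bigr)$; the bracketed quartic is already negative at $c=c_0=1.17$ and, since the $-36c^4$ term dominates, it is strictly decreasing on $[c_0,\infty)$, so this coefficient is negative as well. With both the leading and the linear coefficients negative, the derivative in $\lambda$ is negative throughout $[0,\infty)$, so in particular the expression is decreasing for $\lambda\ge 1$; hence it suffices to bound it at $\lambda=1$.

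Substituting $\lambda=1$ and simplifying gives
\[
-36c^5-72c^4-50c^3-20c^2-5c-9,
\]
all of whose coefficients are negative, so the quantity is $\le 0$ for every $c\ge 0$, a fortiori for $c\ge c_0$. This closes the argument. There is no genuine obstacle here: as with the companion lemmas the only step demanding care is the bookkeeping of the expansion, and in particular confirming that the $\lambda^2$-coefficient's bracket $-36c^4-72c^3-34c^2+33c+21$ is indeed negative on $[c_0,\infty)$ — a routine (ultimately numerical) check of the kind already invoked in Lemmas~\ref{lem:85}, \ref{lem:93}, and their neighbours.
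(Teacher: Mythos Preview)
Your proposal is correct and essentially identical to the paper's own proof: the paper likewise regroups the expression as $-9+c+(-27c-53c^2-16c^3)\lambda+(21c+33c^2-34c^3-72c^4-36c^5)\lambda^2$, observes this is a decreasing parabola in $\lambda$ for $\lambda\ge 1$ and $c\ge c_0$, and then evaluates at $\lambda=1$ to obtain $-9-5c-20c^2-50c^3-72c^4-36c^5\le 0$. If anything, your version is slightly more explicit in justifying the monotonicity (by noting both the $\lambda$ and $\lambda^2$ coefficients are negative), whereas the paper simply asserts it.
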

\begin{proof}
\begin{align*}
    & -9-15 c^2 \lambda+2 c^3 \lambda+11 c^2 \lambda^2+c (1-27 \lambda+21 \lambda^2) \\
        & \quad +2 c (-9 c^2 \lambda-17 c^2 \lambda^2-36 c^3 \lambda^2-18 c^4 \lambda^2+c \lambda (-19+11 \lambda)) \\
    =& -9+c+(-27 c-53 c^2-16 c^3) \lambda+(21 c+33 c^2-34 c^3-72 c^4-36 c^5) \lambda^2.
\end{align*}
This is a parabola of $\lambda$ and it is decreasing when $\lambda \geq 1$ and $c \geq c_0$.
Therefore if we set $\lambda=1$ we have
\begin{align*}
    & -9-15 c^2 \lambda+2 c^3 \lambda+11 c^2 \lambda^2+c (1-27 \lambda+21 \lambda^2) \\
        & \quad +2 c (-9 c^2 \lambda-17 c^2 \lambda^2-36 c^3 \lambda^2-18 c^4 \lambda^2+c \lambda (-19+11 \lambda)) \\
    \leq & -9-5 c-20 c^2-50 c^3-72 c^4-36 c^5 \leq 0.
\end{align*}
\end{proof}

\begin{lemma}
\label{lem:2:14}
$-19-9 z-29 c \lambda z-19 c \lambda z^2+\lambda^2 y^2 z (11+z)-2 c \lambda (1+y)^2 z (-1+9 c \lambda z)+y (-9+(1-19 c \lambda+21 \lambda^2) z+\lambda (-9 c+11 \lambda) z^2) \leq 0$
for $\lambda \geq 1$, $y,z \geq c \geq c_0$.
\end{lemma}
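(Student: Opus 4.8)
The plan is to follow verbatim the monotone-substitution pipeline already used for, e.g., Lemma~\ref{lem:13}: view the left-hand side as a quadratic in one variable at a time, show in each case that the parabola opens downward and that its vertex lies to the left of the left endpoint of the relevant range (which is exactly the statement that the partial derivative at that endpoint is nonpositive), conclude that the expression is non-increasing there, substitute the endpoint, and descend to a polynomial in fewer variables. The variables to peel off, in order, are $y$, then $z$, then $\lambda$, leaving a univariate polynomial in $c$ that can be checked numerically.

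Concretely, write $P(y,z,\lambda)$ for the left-hand side. First I would treat $P$ as a quadratic in $y$. Its leading coefficient is $\lambda z\bigl(11\lambda+2c+\lambda z(1-18c^2)\bigr)$, which is negative for all $\lambda\ge 1$, $z\ge c$, $c\ge c_0$, since $18c^2>1$ already forces $11+z(1-18c^2)<0$; so the parabola opens downward. Differentiating in $y$ and setting $y=c$ produces exactly the left-hand side of Lemma~\ref{lem:2:17}, which is $\le 0$; hence the vertex is at or to the left of $c$ and $P(y,z,\lambda)\le P(c,z,\lambda)$ for $y\ge c$. Next, $P(c,z,\lambda)$ is a quadratic in $z$; its $z$-derivative at $z=c$ is precisely the left-hand side of Lemma~\ref{lem:2:32} (again $\le 0$), and its $z^2$-coefficient is negative on the range, so $P(c,z,\lambda)\le P(c,c,\lambda)$. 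Finally $P(c,c,\lambda)$ is a quadratic in $\lambda$ with $\lambda^2$-coefficient $-18c^6-36c^5-17c^4+22c^3+21c^2<0$ for $c\ge c_0$, and a direct check shows its $\lambda$-derivative at $\lambda=1$ is nonpositive for $c\ge c_0$, so $P(c,c,\lambda)\le P(c,c,1)$. It then remains to observe that $P(c,c,1)$ is an explicit polynomial in $c$ which is $\le 0$ for $c\ge c_0=1.17$, a one-variable numerical estimate.

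The main obstacle is purely bookkeeping: one must verify that $\partial_y P|_{y=c}$, $\partial_z P(c,z,\lambda)|_{z=c}$, and $\partial_\lambda P(c,c,\lambda)|_{\lambda=1}$ really coincide (up to manifestly nonnegative factors) with the already-proved inequalities and the simple numerical facts above, and that every leading coefficient encountered along the way has the negative sign that makes ``vertex to the left of the endpoint'' equivalent to ``non-increasing on the range.'' These are routine polynomial expansions, best carried out with a computer algebra system, but they are where all the arithmetic risk lies.
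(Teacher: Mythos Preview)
Your proposal is correct and follows essentially the same approach as the paper: peel off $y$, then $z$, then $\lambda$ via the downward-parabola/vertex argument, invoking Lemmas~\ref{lem:2:17} and~\ref{lem:2:32} at the $y$- and $z$-steps respectively, and finishing with the univariate check in $c$. The paper's computation of $P(c,c,\lambda)$ in fact factors as $-(1+c)$ times a cubic-in-$\lambda$ expression, and $P(c,c,1)=-19-18c-5c^2-12c^3-24c^4-36c^5-18c^6$, which is manifestly negative for all $c>0$ (so no numerical estimate at $c_0$ is even needed at the last step); otherwise your outline matches the paper's proof line for line.
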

\begin{proof}
\begin{align*}
    & -19-9 z-29 c \lambda z-19 c \lambda z^2+\lambda^2 y^2 z (11+z)-2 c \lambda (1+y)^2 z (-1+9 c \lambda z) \\
        & \quad +y (-9+(1-19 c \lambda+21 \lambda^2) z+\lambda (-9 c+11 \lambda) z^2) \\
    =& -19-9 (1+3 c \lambda) z-c \lambda (19+18 c \lambda) z^2 \\
        & \quad +y (-9+(1-15 c \lambda+21 \lambda^2) z+\lambda (-9 c+11 \lambda-36 c^2 \lambda) z^2)+\lambda y^2 z (2 c-18 c^2 \lambda z+\lambda (11+z)).
\end{align*}
This is a parabola of $y$ and according to Lemma~\ref{lem:2:17} it is decreasing when $y \geq c$.
Therefore if we set $y=c$ we have
\begin{align*}
    & -19-9 z-29 c \lambda z-19 c \lambda z^2+\lambda^2 y^2 z (11+z)-2 c \lambda (1+y)^2 z (-1+9 c \lambda z) \\
        & \quad +y (-9+(1-19 c \lambda+21 \lambda^2) z+\lambda (-9 c+11 \lambda) z^2) \\
    \leq & -19-9 (1+3 c \lambda) z-c \lambda (19+18 c \lambda) z^2 \\
        & \quad +c (-9+(1-15 c \lambda+21 \lambda^2) z+\lambda (-9 c+11 \lambda-36 c^2 \lambda) z^2)+c^2 \lambda z (2 c-18 c^2 \lambda z+\lambda (11+z)) \\
    =& -19-9 c+(-9-15 c^2 \lambda+2 c^3 \lambda+11 c^2 \lambda^2+c (1-27 \lambda+21 \lambda^2)) z \\
        & \quad +(-9 c^2 \lambda-17 c^2 \lambda^2-36 c^3 \lambda^2-18 c^4 \lambda^2+c \lambda (-19+11 \lambda)) z^2.
\end{align*}
This is a parabola of $z$ and according to Lemma~\ref{lem:2:32} it is decreasing when $z \geq c$.
Therefore if we set $z=c$ we have
\begin{align*}
    & -19-9 z-29 c \lambda z-19 c \lambda z^2+\lambda^2 y^2 z (11+z)-2 c \lambda (1+y)^2 z (-1+9 c \lambda z) \\
        & \quad +y (-9+(1-19 c \lambda+21 \lambda^2) z+\lambda (-9 c+11 \lambda) z^2) \\
    \leq & -19-9 c+c^2 (-9 c^2 \lambda-17 c^2 \lambda^2-36 c^3 \lambda^2-18 c^4 \lambda^2+c \lambda (-19+11 \lambda)) \\
        & \quad +c (-9-15 c^2 \lambda+2 c^3 \lambda+11 c^2 \lambda^2+c (1-27 \lambda+21 \lambda^2)) \\
    =& -(19-c) (1+c)-(1+c) (27 c^2+7 c^3) \lambda-(1+c) (-21 c^2-c^3+18 c^4+18 c^5) \lambda^2.
\end{align*}
This is a parabola of $\lambda$ and it is decreasing when $\lambda \geq 1$ and $c \geq c_0$.
Therefore if we set $\lambda=1$ we have
\begin{align*}
    & -19-9 z-29 c \lambda z-19 c \lambda z^2+\lambda^2 y^2 z (11+z)-2 c \lambda (1+y)^2 z (-1+9 c \lambda z) \\
        & \quad +y (-9+(1-19 c \lambda+21 \lambda^2) z+\lambda (-9 c+11 \lambda) z^2) \\
    \leq & -19-18 c-5 c^2-12 c^3-24 c^4-36 c^5-18 c^6 \leq 0.
\end{align*}
\end{proof}

\begin{lemma}
\label{lem:2:64}
$c+2 \lambda-17 c^2 \lambda+2 c^3 \lambda+21 c \lambda^2+22 c^2 \lambda^2+2 c (\lambda-9 c^2 \lambda+11 c \lambda^2+2 c^2 \lambda^2-18 c^3 \lambda^2-18 c^4 \lambda^2) \leq 0$
for $\lambda \geq 1$ and $c \geq c_0$.
\end{lemma}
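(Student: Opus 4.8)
The plan is to follow exactly the template used in Lemmas~\ref{lem:2:19}, \ref{lem:2:17}, \ref{lem:2:32} and \ref{lem:2:14}: reduce the stated bivariate inequality to a one–variable polynomial inequality in $c$ by exploiting monotonicity in $\lambda$, and then finish with a numerical check at $c=c_0$.

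First I would expand the product $2c(\lambda-9c^2\lambda+11c\lambda^2+2c^2\lambda^2-18c^3\lambda^2-18c^4\lambda^2)$ and collect all terms of the left–hand side according to powers of $\lambda$, obtaining
\[
  d(c)+b(c)\,\lambda+a(c)\,\lambda^{2},
\]
where $d(c)=c$, $b(c)=2+2c-17c^{2}-16c^{3}$ and $a(c)=21c+44c^{2}+4c^{3}-36c^{4}-36c^{5}$. For $c\ge c_0=1.17$ one checks directly that $a(c)<0$ and $b(c)<0$; hence the expression is a concave parabola in $\lambda$ whose vertex $-b(c)/(2a(c))$ is negative, so it is strictly decreasing on $[1,+\infty)$. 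Consequently it is enough to bound its value at $\lambda=1$.

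Setting $\lambda=1$ collapses the expression to the univariate polynomial
\[
  2+24c+27c^{2}-12c^{3}-36c^{4}-36c^{5},
\]
and it remains to verify that this is nonpositive for all $c\ge c_0$. I would do this exactly as in the numerical lemmas above (e.g.\ Lemma~\ref{lem:2:19}): the leading coefficient is $-36<0$, the polynomial is already negative at $c=c_0$, and it has no root in $[c_0,+\infty)$ (indeed its derivative $24+54c-36c^{2}-144c^{3}-180c^{4}$ is already negative at $c_0$), so it stays nonpositive there, which gives the claim.

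I do not expect any genuine obstacle. The only two points that need a little care are (i) confirming that the vertex of the parabola in $\lambda$ really lies to the left of $\lambda=1$ for every $c\ge c_0$ — which follows at once from $a(c)<0$ and $b(c)<0$ — and (ii) the final numerical verification that the quintic in $c$ remains nonpositive on $[c_0,+\infty)$; both are routine arithmetic of the same flavour as the preceding lemmas.
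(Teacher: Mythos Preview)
Your proposal is correct and follows essentially the same approach as the paper: collect the expression as $c+(2+2c-17c^{2}-16c^{3})\lambda+(21c+44c^{2}+4c^{3}-36c^{4}-36c^{5})\lambda^{2}$, observe it is a parabola in $\lambda$ that is decreasing for $\lambda\ge 1$ and $c\ge c_0$, and reduce to the numerical check $2+24c+27c^{2}-12c^{3}-36c^{4}-36c^{5}\le 0$ at $\lambda=1$. Your justification for the monotonicity (via $a(c)<0$, $b(c)<0$) is in fact slightly more detailed than the paper's, which simply asserts the parabola is decreasing.
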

\begin{proof}
\begin{align*}
    & c+2 \lambda-17 c^2 \lambda+2 c^3 \lambda+21 c \lambda^2+22 c^2 \lambda^2+2 c (\lambda-9 c^2 \lambda+11 c \lambda^2+2 c^2 \lambda^2-18 c^3 \lambda^2-18 c^4 \lambda^2) \\
    =& c+(2+2 c-17 c^2-16 c^3) \lambda+(21 c+44 c^2+4 c^3-36 c^4-36 c^5) \lambda^2.
\end{align*}
This is a parabola of $\lambda$ and it is decreasing when $\lambda \geq 1$ and $c \geq c_0$.
Therefore if we set $\lambda=1$ we have
\begin{align*}
    & c+2 \lambda-17 c^2 \lambda+2 c^3 \lambda+21 c \lambda^2+22 c^2 \lambda^2+2 c (\lambda-9 c^2 \lambda+11 c \lambda^2+2 c^2 \lambda^2-18 c^3 \lambda^2-18 c^4 \lambda^2) \\
    \leq & 2+24 c+27 c^2-12 c^3-36 c^4-36 c^5 \leq 0.
\end{align*}
\end{proof}

\begin{lemma}
\label{lem:2:62}
$-18 c^3 \lambda^2 z^2+\lambda (1+z)^2-c^2 \lambda z (17+9 z)+c (-9+z+21 \lambda^2 z+11 \lambda^2 z^2)+2 c (c^2 \lambda z-9 c^3 \lambda^2 z^2+c \lambda^2 z (11+z)) \leq 0$
for $\lambda \geq 1$, $z \geq c \geq c_0$.
\end{lemma}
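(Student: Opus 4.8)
The plan is to establish Lemma~\ref{lem:2:62} by exactly the monotonicity cascade used to prove Lemmas~\ref{lem:2:17}, \ref{lem:2:32} and \ref{lem:2:14}: treat the left-hand side as a polynomial that is quadratic in each of $z$ and $\lambda$ separately, show it is nonincreasing in each such variable over the relevant box, push each variable to its extreme value, and arrive at a univariate polynomial in $c$ whose sign on $[c_0,\infty)$ is a numerical fact.

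First I would expand and collect the left-hand side as $A(c,\lambda)\,z^2+B(c,\lambda)\,z+C(c,\lambda)$. Checking the expansion one sees that the leading coefficient $A$ is dominated by the terms $-18c^3\lambda^2$ and $-18c^4\lambda^2$ (coming from the first and the last summand), so $A<0$ throughout $\lambda\ge 1$, $c\ge c_0$; the quadratic in $z$ therefore opens downward. Since $\partial_z(Az^2+Bz+C)=2Az+B$ is itself decreasing in $z$, it suffices to check $2Ac+B\le 0$ to conclude the quadratic is nonincreasing on $z\ge c$. A direct computation shows that $2Ac+B$ is precisely the expression whose nonpositivity is asserted by Lemma~\ref{lem:2:64}; invoking that lemma lets us set $z=c$.

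After the substitution $z=c$, the left-hand side becomes $\big(-18c^6-18c^5+2c^4+33c^3+21c^2\big)\lambda^2 + \big(1+2c+c^2-17c^3-7c^4\big)\lambda + \big(c^2-9c\big)$. For $c\ge c_0$ the $\lambda^2$-coefficient is negative and the $\lambda$-coefficient is negative as well, so the vertex of this downward-opening parabola in $\lambda$ sits at a negative value of $\lambda$; hence it is nonincreasing for $\lambda\ge 1$ and we may set $\lambda=1$. This collapses everything to the single-variable polynomial $-18c^6-18c^5-5c^4+16c^3+23c^2-7c+1$, which I would verify is $\le 0$ for all $c\ge c_0=1.17$; this last step is a numerical check of the same flavor as Lemmas~\ref{lem:46}, \ref{lem:53}, \ref{lem:66}, \ref{lem:74}, \ref{lem:85}, \ref{lem:93}.

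There is no conceptual obstacle here. The main thing to get right is the bookkeeping: confirming that the left-hand side really has degree at most two in $z$ and in $\lambda$, that each quadratic in question opens downward (leading coefficient strictly negative) on the whole sub-box so that ``vertex left of the endpoint'' truly forces monotonicity, and that the matching between $2Ac+B$ and the statement of Lemma~\ref{lem:2:64} is exact. The only residual content is the degree-six univariate inequality at the end, which is elementary to certify (for instance, by noting that $-18c^6$ dominates and that the derivative of the polynomial is already negative at $c=c_0$, so it is decreasing on $[c_0,\infty)$ with $p(c_0)<0$).
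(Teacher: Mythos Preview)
Your proposal is correct and follows essentially the same approach as the paper: rewrite as a downward-opening quadratic in $z$, invoke Lemma~\ref{lem:2:64} (which is exactly the condition $2Ac+B\le 0$) to push $z\to c$, then treat the result as a downward-opening quadratic in $\lambda$ with vertex left of $1$ to push $\lambda\to 1$, arriving at the same degree-six polynomial $1-7c+23c^2+16c^3-5c^4-18c^5-18c^6$ whose nonpositivity for $c\ge c_0$ is checked numerically. The only slight imprecision is your final remark that $p'(c_0)<0$ alone forces $p$ to be decreasing on all of $[c_0,\infty)$; that needs a small additional argument (e.g.\ that $p'$ itself stays negative), but this is the same level of ``numerical fact'' the paper invokes without further justification.
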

\begin{proof}
\begin{align*}
    & -18 c^3 \lambda^2 z^2+\lambda (1+z)^2-c^2 \lambda z (17+9 z)+c (-9+z+21 \lambda^2 z+11 \lambda^2 z^2) \\
        & \quad +2 c (c^2 \lambda z-9 c^3 \lambda^2 z^2+c \lambda^2 z (11+z)) \\
    =& -9 c+\lambda+(c+2 \lambda-17 c^2 \lambda+2 c^3 \lambda+21 c \lambda^2+22 c^2 \lambda^2) z \\
        & \quad +(\lambda-9 c^2 \lambda+11 c \lambda^2+2 c^2 \lambda^2-18 c^3 \lambda^2-18 c^4 \lambda^2) z^2.
\end{align*}
This is a parabola of $z$ and according to Lemma~\ref{lem:2:64} it is decreasing when $z \geq c$.
Therefore if we set $z=c$ we have
\begin{align*}
    & -18 c^3 \lambda^2 z^2+\lambda (1+z)^2-c^2 \lambda z (17+9 z)+c (-9+z+21 \lambda^2 z+11 \lambda^2 z^2) \\
        & \quad +2 c (c^2 \lambda z-9 c^3 \lambda^2 z^2+c \lambda^2 z (11+z)) \\
    \leq & -9 c+\lambda+c (c+2 \lambda-17 c^2 \lambda+2 c^3 \lambda+21 c \lambda^2+22 c^2 \lambda^2) \\
        & \quad +c^2 (\lambda-9 c^2 \lambda+11 c \lambda^2+2 c^2 \lambda^2-18 c^3 \lambda^2-18 c^4 \lambda^2) \\
    =& -9 c+c^2+(1+2 c+c^2-17 c^3-7 c^4) \lambda+(21 c^2+33 c^3+2 c^4-18 c^5-18 c^6) \lambda^2.
\end{align*}
This is a parabola of $\lambda$ and it is decreasing when $\lambda \geq 1$ and $c \geq c_0$.
Therefore if we set $\lambda=1$ we have
\begin{align*}
    & -18 c^3 \lambda^2 z^2+\lambda (1+z)^2-c^2 \lambda z (17+9 z)+c (-9+z+21 \lambda^2 z+11 \lambda^2 z^2) \\
        & \quad +2 c (c^2 \lambda z-9 c^3 \lambda^2 z^2+c \lambda^2 z (11+z)) \\
    \leq & 1-7 c+23 c^2+16 c^3-5 c^4-18 c^5-18 c^6 \leq 0.
\end{align*}
\end{proof}

\begin{lemma}
\label{lem:2:79}
$2 c^2 (\lambda-9 c^2 \lambda-8 c^2 \lambda^2-18 c^3 \lambda^2-9 c^4 \lambda^2+c \lambda (-19+11 \lambda))+c (-9+2 \lambda-17 c^2 \lambda+c^3 \lambda+11 c^2 \lambda^2+c (1-28 \lambda+21 \lambda^2)) \leq 0$
for $\lambda \geq 1$ and $c \geq c_0$.
\end{lemma}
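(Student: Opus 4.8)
The plan is to follow exactly the template used in Lemmas~\ref{lem:2:19}, \ref{lem:2:32} and \ref{lem:2:64}: treat the left-hand side as a quadratic polynomial in $\lambda$ whose coefficients are polynomials in $c$, show that on $\lambda\ge 1$ it is decreasing (for every $c\ge c_0$), and then reduce to the single worst case $\lambda=1$. Concretely, I would first expand the expression and collect powers of $\lambda$, writing it as $A(c)\lambda^2+B(c)\lambda+C(c)$; inspection of the monomials shows $A(c)$ is a polynomial of degree about six in $c$ with negative leading coefficient (the $\lambda^2$ part is dominated by the $-9c^2\cdot 9c^4\lambda^2$ and $-9c^2\cdot 18c^3\lambda^2$ type terms), so the parabola opens downwards, and $C(c)=-9c$, which is already negative.

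Next I would verify that the vertex $-B(c)/(2A(c))$ lies at or below $1$ for $c\ge c_0$, equivalently that $2A(c)+B(c)\le 0$ on $[c_0,\infty)$; this is a one-variable polynomial inequality of the same ``numerical result'' flavour as Lemmas~\ref{lem:2:62} and \ref{lem:2:64}, checked by locating the largest real root of $2A+B$ and confirming it is below $1.17$ (or by bounding the positive low-degree terms against the negative high-degree ones on $[c_0,\infty)$). With downward concavity and the vertex to the left of $1$, the quadratic is monotonically decreasing on $[1,\infty)$, so its supremum over $\lambda\ge 1$ is attained at $\lambda=1$. Substituting $\lambda=1$ collapses the bound to the univariate polynomial
\[
P(c)=A(c)+B(c)+C(c),
\]
and it then remains to show $P(c)\le 0$ for $c\ge c_0=1.17$, which is again a direct numerical verification: since $P$ has negative leading coefficient, one checks that its largest real root is strictly below $1.17$.

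The content of the lemma is therefore entirely routine once the reduction is in place; the main obstacle is purely bookkeeping. The expression contains a substantial number of monomials in $c$ and $\lambda$, so the real risk is an arithmetic slip when collecting the $\lambda^2$, $\lambda^1$ and $\lambda^0$ coefficients of $A(c)$, $B(c)$, $C(c)$, and again when forming the final polynomial $P(c)$. Conceptually nothing new is needed: the ``quadratic in $\lambda$, decreasing, then set $\lambda=1$, then a numerical polynomial inequality in $c$'' pattern is precisely the one applied repeatedly in the preceding lemmas, and each residual single-variable inequality is decided by evaluating on $[1.17,\infty)$.
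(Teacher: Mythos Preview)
Your proposal is correct and follows essentially the same route as the paper: collect the expression as a quadratic in $\lambda$, argue it is decreasing on $[1,\infty)$ for $c\ge c_0$, and evaluate at $\lambda=1$. Two minor remarks: the constant term is $C(c)=c(c-9)$, not $-9c$ (the $c\cdot c\cdot 1$ monomial from $c(1-28\lambda+21\lambda^2)$ contributes); and the final step is easier than you anticipate, since at $\lambda=1$ the expression collapses to $-c(7+4c+22c^2+33c^3+36c^4+18c^5)$, which is manifestly nonpositive for $c>0$ with no root-finding needed.
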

\begin{proof}
\begin{align*}
    & 2 c^2 (\lambda-9 c^2 \lambda-8 c^2 \lambda^2-18 c^3 \lambda^2-9 c^4 \lambda^2 +c \lambda (-19+11 \lambda)) \\
        & \quad +c (-9+2 \lambda-17 c^2 \lambda+c^3 \lambda+11 c^2 \lambda^2+c (1-28 \lambda+21 \lambda^2)) \\
    =& (-9+c) c+c (2-26 c-55 c^2-17 c^3) \lambda+c (21 c+33 c^2-16 c^3-36 c^4-18 c^5) \lambda^2.
\end{align*}
This is a parabola of $\lambda$ and it is decreasing when $\lambda \geq 1$ and $c \geq c_0$.
Therefore if we set $\lambda=1$ we have
\begin{align*}
    & 2 c^2 (\lambda-9 c^2 \lambda-8 c^2 \lambda^2-18 c^3 \lambda^2-9 c^4 \lambda^2+c \lambda (-19+11 \lambda)) \\
        & \quad +c (-9+2 \lambda-17 c^2 \lambda+c^3 \lambda+11 c^2 \lambda^2+c (1-28 \lambda+21 \lambda^2)) \\
    \leq & -c (7+4 c+22 c^2+33 c^3+36 c^4+18 c^5) \leq 0.
\end{align*}
\end{proof}

\begin{lemma}
\label{lem:2:9}
$g_2(x,y,z) \leq \frac{9}{10}$ for $\lambda \geq 1$ and $x,y,z \geq c \geq c_0$.
\end{lemma}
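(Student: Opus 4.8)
The plan is to run exactly the ``peel off one variable at a time'' argument already used for $g_1$ in Lemma~\ref{lem:6}. Write $g(x,y,z)=N/D$ with $D=1+z+\lambda x(1+y)z>0$ and $N=x(1+y)+\lambda y(1+z)+\lambda cx(1+y)z>0$, and recall from Section~\ref{sec:recursion} that $\pderiv{g}{x}$, $\pderiv{g}{y}$, $\pderiv{g}{z}$ all carry the common denominator $D^2$. Then, after multiplying through by the positive quantity $D^2$, the claim $g_2(x,y,z)\le\frac{9}{10}$ is equivalent to a single polynomial inequality $Q(x,y,z,\lambda,c)\le 0$. The structural feature that makes the whole scheme work — already pointed out at the start of this subsection — is that $Q$ has degree at most two in each of $x$, $y$, $z$, $\lambda$; moreover one checks directly that its leading coefficient as a quadratic in $x$ is $-\lambda z(1+y)^2(9c\lambda z-1)$, which is negative on the region $x,y,z\ge c\ge c_0$, $\lambda\ge 1$, since $9c\lambda z\ge 9c_0^2>1$.

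Given this, I would eliminate $x$ first: $Q$ is a downward-opening parabola in $x$, so it suffices to know that its vertex lies to the left of $c$, which is precisely the content of Lemma~\ref{lem:2:14} (itself proved by the same cascade, using Lemmas~\ref{lem:2:17} and~\ref{lem:2:32} for the $y$- and $z$-parabolas occurring inside it). Hence $Q(x,y,z,\lambda,c)\le Q(c,y,z,\lambda,c)$. Substituting $x=c$ and expanding, $Q(c,y,z,\lambda,c)$ is again a downward parabola in $y$ whose vertex-to-the-left-of-$c$ condition is Lemma~\ref{lem:2:62} (which in turn invokes Lemma~\ref{lem:2:64}), so we may set $y=c$; the resulting polynomial is a downward parabola in $z$ whose vertex-to-the-left-of-$c$ condition is Lemma~\ref{lem:2:79}, so we may set $z=c$; and the remaining polynomial is a downward parabola in $\lambda$ that is decreasing for $\lambda\ge 1$ by a direct sign check. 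Setting $\lambda=1$ leaves a univariate polynomial in $c$, and one verifies numerically that it is $\le 0$ for all $c\ge c_0=1.17$.

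The only real obstacle is the bookkeeping: one must actually carry out the successive symbolic expansions $Q$, $Q|_{x=c}$, $Q|_{x=y=c}$, $Q|_{x=y=z=c}$, confirm at each stage that the leading coefficient in the variable being eliminated has the sign that makes ``decreasing beyond the vertex'' the correct conclusion on the box in question, and match each stage against the corresponding helper lemma. Because every intermediate polynomial is quadratic in the variable of interest, each reduction collapses to a sign check together with a univariate polynomial inequality in $c$ of exactly the type already disposed of numerically in Lemmas~\ref{lem:2:19}, \ref{lem:2:32}, \ref{lem:2:64}, and their companions; no new idea beyond careful algebra is required, since the potential function $\Phi(x)=x$ and the recursions $g$, $h$, $\hat g$ are fixed and the whole statement is a finite computation.
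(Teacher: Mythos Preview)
Your proposal is correct and follows the paper's proof essentially verbatim: the same polynomial reformulation, the same elimination order $x\to y\to z\to\lambda$, the same invocation of Lemmas~\ref{lem:2:14}, \ref{lem:2:62}, \ref{lem:2:79} for the vertex conditions, and the same final numerical check at $x=y=z=c$, $\lambda=1$. The only cosmetic difference is that the paper writes out each intermediate polynomial explicitly rather than describing the procedure abstractly.
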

\begin{proof}
$g_2(x,y,z) \leq \frac{9}{10}$ is equivalent to
\begin{align*}
\lambda y (1+z)^2-\lambda x^2 (1+y)^2 z (-1+9 c \lambda z)+x (-19-9 z-29 c \lambda z-19 c \lambda z^2+\lambda^2 y^2 z (11+z) \\
        +y (-9+(1-19 c \lambda+21 \lambda^2) z+\lambda (-9 c+11 \lambda) z^2)) \leq 0.
\end{align*}
Denote the left hand-side of the above inequality by $A$.
$A$ is a parabola of $x$ and according to Lemma~\ref{lem:2:14} it is decreasing when $x \geq c$.
Therefore if we set $x=c$ we have
\begin{align*}
    A \leq & \lambda y (1+z)^2-c^2 \lambda (1+y)^2 z (-1+9 c \lambda z) \\
        & \quad +c (-19-9 z-29 c \lambda z-19 c \lambda z^2+\lambda^2 y^2 z (11+z) \\
                & \quad \quad +y (-9+(1-19 c \lambda+21 \lambda^2) z+\lambda (-9 c+11 \lambda) z^2)) \\
    =& -19 c-9 c z-28 c^2 \lambda z-19 c^2 \lambda z^2-9 c^3 \lambda^2 z^2 \\
        & \quad +y^2 (c^2 \lambda z-9 c^3 \lambda^2 z^2+c \lambda^2 z (11+z)) \\
        & \quad +y (-18 c^3 \lambda^2 z^2+\lambda (1+z)^2-c^2 \lambda z (17+9 z)+c (-9+z+21 \lambda^2 z+11 \lambda^2 z^2)).
\end{align*}
This is a parabola of $y$ and according to Lemma~\ref{lem:2:62} it is decreasing when $y \geq c$.
Therefore if we set $y=c$ we have
\begin{align*}
    A \leq & -19 c-9 c z-28 c^2 \lambda z-19 c^2 \lambda z^2-9 c^3 \lambda^2 z^2+c^2 (c^2 \lambda z-9 c^3 \lambda^2 z^2+c \lambda^2 z (11+z)) \\
        & \quad +c (-18 c^3 \lambda^2 z^2+\lambda (1+z)^2-c^2 \lambda z (17+9 z)+c (-9+z+21 \lambda^2 z+11 \lambda^2 z^2)) \\
    =& c (-19-9 c+\lambda)+c (-9+2 \lambda-17 c^2 \lambda+c^3 \lambda+11 c^2 \lambda^2+c (1-28 \lambda+21 \lambda^2)) z \\
        & \quad +c (\lambda-9 c^2 \lambda-8 c^2 \lambda^2-18 c^3 \lambda^2-9 c^4 \lambda^2+c \lambda (-19+11 \lambda)) z^2.
\end{align*}
This is a parabola of $z$ and according to Lemma~\ref{lem:2:79} it is decreasing when $z \geq c$.
Therefore if we set $z=c$ we have
\begin{align*}
    A \leq & c (-19-9 c+\lambda)+c^3 (\lambda-9 c^2 \lambda-8 c^2 \lambda^2-18 c^3 \lambda^2-9 c^4 \lambda^2+c \lambda (-19+11 \lambda)) \\
        & \quad +c^2 (-9+2 \lambda-17 c^2 \lambda+c^3 \lambda+11 c^2 \lambda^2+c (1-28 \lambda+21 \lambda^2)) \\
    =& -(19-c) c (1+c)-c (1+c) (-1-c+28 c^2+8 c^3) \lambda-c (1+c) (-21 c^2-c^3+9 c^4+9 c^5) \lambda^2.
\end{align*}
This is a parabola of $\lambda$ and it is decreasing when $\lambda \geq 1$ and $c \geq c_0$.
Therefore if we set $\lambda=1$ we have
\begin{align*}
    A \leq & -c (18+16 c+5 c^2+14 c^3+16 c^4+18 c^5+9 c^6) \leq 0.
\end{align*}
\end{proof}

We now combine Lemma~\ref{lem:6} and Lemma~\ref{lem:2:9} to give a bound for 
$\frac{\left|\frac{\partial g}{\partial x}(x,y,z)\right|\Phi(x)+\left|\frac{\partial g}{\partial y}(x,y,z)\right|\Phi(y)+\left|\frac{\partial g}{\partial z}(x,y,z)\right|\Phi(z)}{\Phi(g(x,y,z))}$.
\begin{lemma}
\label{lem:thm2g}
$\frac{\left|\frac{\partial g}{\partial x}(x,y,z)\right|\Phi(x)+\left|\frac{\partial g}{\partial y}(x,y,z)\right|\Phi(y)+\left|\frac{\partial g}{\partial z}(x,y,z)\right|\Phi(z)}{\Phi(g(x,y,z))} \leq \frac{9}{10}$ for $x,y,z \geq c$, $\lambda \geq 1$, $c \geq c_0$.
\end{lemma}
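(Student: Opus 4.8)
The plan is to reduce the claimed bound directly to the two preceding lemmas, Lemma~\ref{lem:6} and Lemma~\ref{lem:2:9}. Recall that with the potential function $\Phi(x)=x$ fixed throughout the proof of Theorem~\ref{thm:2}, the amortized decay rate associated to the recursion $g$ in Lemma~\ref{lem:algo} is exactly
\[
\alpha_2(x,y,z)=\frac{\left|\pderiv{g}{x}\right|x+\left|\pderiv{g}{y}\right|y+\left|\pderiv{g}{z}\right|z}{g(x,y,z)}.
\]
First I would record the sign information on the domain $x,y,z\ge c$, $\lambda\ge 1$: from the explicit formulas for the partial derivatives displayed at the start of this subsection one reads off that $\pderiv{g}{y}\ge 0$ and $\pderiv{g}{z}\le 0$ there (each is a ratio whose numerator is a polynomial with nonnegative, resp.\ nonpositive, coefficients on this range), while $\pderiv{g}{x}$ may take either sign. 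Hence the two possible resolutions of the absolute value attached to $\pderiv{g}{x}$ give precisely the quantities $g_1$ (when $\pderiv{g}{x}\ge 0$) and $g_2$ (when $\pderiv{g}{x}\le 0$) defined just above, so that $\alpha_2(x,y,z)=\max\{g_1(x,y,z),g_2(x,y,z)\}$.

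Given this identity the lemma is immediate: Lemma~\ref{lem:6} gives $g_1(x,y,z)\le \tfrac{9}{10}$ and Lemma~\ref{lem:2:9} gives $g_2(x,y,z)\le \tfrac{9}{10}$ on the range $x,y,z\ge c$, $\lambda\ge 1$, $c\ge c_0$, so the maximum is at most $\tfrac{9}{10}$. I would write the proof in one or two lines.

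The genuine work is of course hidden in Lemmas~\ref{lem:6} and~\ref{lem:2:9}, and it is worth flagging where the difficulty lies. After clearing denominators, each inequality $g_i\le\tfrac{9}{10}$ becomes the assertion that a multivariate polynomial $A-\tfrac{9}{10}B$ in $x,y,z,\lambda$ (with $c$ a parameter) is nonpositive on $x,y,z\ge c$, $\lambda\ge 1$, $c\ge c_0=1.17$. The structural fact being exploited is that this polynomial has degree at most two in each of $x,y,z,\lambda$ individually, so the strategy is to peel off the variables one at a time: show the polynomial is decreasing in $x$ on $x\ge c$ (the vertex of the relevant parabola lies to the left of $c$), substitute $x=c$; then argue the same way in $y$, substitute $y=c$; then in $z$, substitute $z=c$; and finally, with $x=y=z=c$, the polynomial in $\lambda$ alone is decreasing on $\lambda\ge 1$, so setting $\lambda=1$ leaves a univariate polynomial in $c$ which one checks is $\le 0$ for $c\ge 1.17$. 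Each of these monotonicity steps itself amounts to checking the sign of the vertex of a parabola, i.e.\ nonpositivity of a lower-degree polynomial, which is handled by an inductive cascade of exactly the same shape; the base cases are the numerical inequalities (Lemmas~\ref{lem:22},~\ref{lem:36},~\ref{lem:46},~\ref{lem:53},~\ref{lem:66},~\ref{lem:74},~\ref{lem:85},~\ref{lem:93}, and their analogues), verified by direct computation. The main obstacle is therefore organizational rather than conceptual: one must choose the order of variable elimination so that each parabola's vertex always falls on the correct side of the substitution point, and one must check that the numerical endpoint evaluations at $c=c_0$ really do clear $0$ with the constant $9/10$ — the slack is small, which is precisely why $c_0$ has to be pushed up to $1.17$.
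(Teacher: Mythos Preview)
Your proposal is correct and matches the paper's approach exactly: the paper states Lemma~\ref{lem:thm2g} with no standalone proof, introducing it with the sentence ``We now combine Lemma~\ref{lem:6} and Lemma~\ref{lem:2:9},'' having already recorded the sign information $\pderiv{g}{y}\ge 0$, $\pderiv{g}{z}\le 0$ and the identity $\alpha_2=\max\{g_1,g_2\}$ earlier in the subsection. Your description of the cascade of parabola-vertex arguments underlying Lemmas~\ref{lem:6} and~\ref{lem:2:9} is also an accurate summary of what the paper does.
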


  \begin{lemma}
\label{lem:thm2:h}
$\frac{\left|\deriv{h}{x}\left(x\right)\right|\Phi\left(x\right)}{\Phi\left(h\left(x\right)\right)}
\leq \max\left\{\frac{1}{2},\frac{1+p+p^2}{(1+p)(1+2p)}\right\}<1$ when $x \geq c \geq 1$, $\lambda \geq 1$, $\mu \geq p>0$.
\end{lemma}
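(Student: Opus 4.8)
The plan is to make the quantity $\alpha_1(x)=\frac{\Phi(x)\,|\deriv{h}{x}|}{\Phi(h(x))}$ explicit (here $\Phi(x)=x$ and $h(x)=\frac{\mu+(c\mu+1)\lambda x}{1+\lambda\mu x}$ with $\mu\ge p$, $\lambda\ge1$), then shrink the parameter set down to $(c,\mu)$ by monotonicity, bound the resulting two-variable function by $\max\{\tfrac12,\Theta(\mu)\}$ where $\Theta(t):=\frac{1+t+t^2}{(1+t)(1+2t)}$, and finally replace $\mu$ by $p$.

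First I would differentiate: a one-line computation gives $\deriv{h}{x}=\frac{\lambda(c\mu+1-\mu^2)}{(1+\lambda\mu x)^2}$, hence
\[
\alpha_1(x)=\frac{\lambda x\,|c\mu+1-\mu^2|}{(1+\lambda\mu x)\,(\mu+(c\mu+1)\lambda x)} .
\]
If $c\mu+1-\mu^2=0$ this is $\equiv 0$ and we are done, so assume it is nonzero. Dividing top and bottom by $x$ shows the denominator is $\frac{\mu}{x}+(c\mu+1)\lambda^2\mu x$ plus a term constant in $x$, and its derivative $-\frac{\mu}{x^2}+(c\mu+1)\lambda^2\mu\ge\mu\big((c\mu+1)-1\big)=c\mu^2\ge0$ for $x\ge1$, $\lambda\ge1$; so $\alpha_1$ is non-increasing in $x$ and we may set $x=c$. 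Repeating the argument in $\lambda$ with $x=c$ fixed (the relevant derivative is $-\frac{\mu}{\lambda^2}+(c\mu+1)\mu c^2\ge0$) lets us set $\lambda=1$. It remains to bound $R(c,\mu):=\frac{c\,|c\mu+1-\mu^2|}{(1+c\mu)(\mu+c^2\mu+c)}$ for $c\ge1$, $\mu\ge p$.

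The last step splits on the sign of $c\mu+1-\mu^2$. If $\mu^2\ge c\mu+1$ (which forces $\mu>1$), lower-bounding the denominator by $c\mu^2+c^2\mu(c\mu+1)$ reduces $R\le\tfrac12$ to $(1-c^2)\mu^2\le 3c\mu+2$, trivially true for $c\ge1$. If $\mu^2<c\mu+1$, the numerator $N:=c(c\mu+1-\mu^2)$ is positive and the denominator equals $N+X$ with $X:=\mu+c^2\mu+2c\mu^2+c^3\mu^2>0$, so $R=\frac{N}{N+X}$; then $R\le\tfrac12\iff N\le X\iff 1\le\frac{\mu}{c}+3\mu^2+c^2\mu^2$, which holds whenever $\mu\ge1$ (since $3\mu^2\ge3$), while $R\le\Theta(\mu)\iff(2+\mu)\,c(c\mu+1-\mu^2)\le(1+\mu+\mu^2)(1+c^2+2c\mu+c^3\mu)$. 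Collecting the last inequality on one side gives $P(c)\ge0$ for a cubic $P$ in $c$ whose leading coefficient $\mu(1+\mu+\mu^2)$ and $c^2$-coefficient $1-\mu$ are positive for $0<\mu<1$; hence $P$ is convex on $[1,\infty)$, and since $P(1)=2\mu+6\mu^2+4\mu^3>0$ and $P'(1)=2\mu+7\mu^2+6\mu^3>0$, we get $P\ge0$ throughout. Combining the cases, $R(c,\mu)\le\max\{\tfrac12,\Theta(\mu)\}$.

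Finally I would pass to $p$ using that $\Theta(t)-\tfrac12=\frac{1-t}{2(1+t)(1+2t)}$ (so $\Theta\ge\tfrac12$ on $[0,1]$ and $\Theta<\tfrac12$ for $t>1$), that $\Theta$ is strictly decreasing on $[0,1]$ (its derivative's numerator is $t^2-2t-2<0$), and that $\Theta(t)<1$ for $t>0$: if $\mu\ge1$ then $R\le\tfrac12\le\max\{\tfrac12,\Theta(p)\}$, and if $\mu<1$ then $p\le\mu<1$ gives $R\le\Theta(\mu)\le\Theta(p)=\max\{\tfrac12,\Theta(p)\}$, with the bound $<1$ either way. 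The only genuinely non-mechanical step is the cubic inequality for $R\le\Theta(\mu)$ on $0<\mu<1$: there the coefficient of $c$ is negative, so a term-by-term comparison fails and one must instead use convexity in $c$ together with the sign of $P$ and $P'$ at $c=1$; the two monotonicity reductions and the $\mu\ge1$ cases are routine.
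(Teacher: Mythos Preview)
Your proof is correct and shares the paper's overall skeleton: both reduce $x\mapsto c$ and $\lambda\mapsto 1$ by the same monotonicity-of-the-denominator argument, land on a function of $(c,\mu)$ only, bound it by $\Theta(\mu)=\frac{1+\mu+\mu^2}{(1+\mu)(1+2\mu)}$ or $\tfrac12$, and finally pass from $\mu$ to $p$ using that $\Theta$ is decreasing on $(0,1]$ and $\le\tfrac12$ for $\mu\ge 1$.

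The one genuine difference is in how you handle the $(c,\mu)$ stage. You keep the absolute value $|c\mu+1-\mu^2|$, split on its sign, and in the hard sub-case $0<\mu<1$ verify a cubic-in-$c$ inequality via convexity and the values of $P,P'$ at $c=1$. The paper instead applies the triangle inequality $|c\mu+1-\mu^2|\le c\mu+1+\mu^2$ \emph{before} doing anything else; after the same $x\to c$, $\lambda\to 1$ reductions this yields the single clean expression
\[
\frac{c(1+c\mu+\mu^2)}{(1+c\mu)(c+\mu+c^2\mu)},
\]
which they show is decreasing in $c$ by one derivative computation and then set $c=1$ to obtain $\Theta(\mu)$ directly. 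So the paper trades your case split and cubic analysis for one crude but harmless upper bound plus one monotonicity check; your route preserves more information (no triangle-inequality loss) but pays for it with the extra casework.
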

\begin{proof}
\begin{align*}
    \frac{\left|\deriv{h}{x}\left(x\right)\right|\Phi\left(x\right)}{\Phi\left(h\left(x\right)\right)} &=
        \frac{\left|\lambda(c\mu+1)-\lambda\mu^2\right|x}{(1+\lambda\mu x)(\mu+(c\mu+1)\lambda x)} \\
    &\leq \frac{(\lambda(c\mu+1)+\lambda\mu^2) x}{(1+\lambda\mu x)(\mu+(c\mu+1)\lambda x)} \\
    &= \frac{\lambda(c\mu+1)+\lambda\mu^2}{\lambda^2(c\mu+1)\mu x+\frac{\mu}{x}+\lambda(c\mu+1)+\lambda\mu^2} \\
    &\leq \frac{\lambda(c\mu+1)+\lambda\mu^2}{\lambda^2(c\mu+1)\mu c +\frac{\mu}{c}+\lambda(c\mu+1)+\lambda\mu^2} \\
    &= \frac{(c\mu+1)+\mu^2}{\lambda(c\mu+1)\mu c +\frac{\mu}{c \lambda}+(c\mu+1)+\mu^2} \\
    &\leq \frac{(c\mu+1)+\mu^2}{(c\mu+1)\mu c +\frac{\mu}{c}+(c\mu+1)+\mu^2} \\
    &= \frac{c(1+c \mu+\mu^2)}{(1+c \mu)(c+\mu+c^2 \mu)}.
\end{align*}
Since the derivative of $\frac{c(1+c \mu+\mu^2)}{(1+c \mu)(c+\mu+c^2 \mu)}$ with respect to $c$ is
$$-\frac{\mu((c^4-1)\mu^2+(c^2-1)+2c(c^2-1)\mu+c^2 \mu^2+2c^3 \mu^3)}{(1+c\mu)^2(c+\mu+c^2\mu)^2}\leq 0,$$
it is decreasing when $c \geq 1$.
Therefore we have
\begin{align*}
    \frac{\left|\deriv{h}{x}\left(x\right)\right|\Phi\left(x\right)}{\Phi\left(h\left(x\right)\right)} &\leq
        \frac{1+\mu+\mu^2}{(1+\mu)(1+2\mu)} \\
    &\leq \max\left\{\frac{1}{2},\frac{1+p+p^2}{(1+p)(1+2p)}\right\}.
\end{align*}
\end{proof}

\begin{lemma}
\label{lem:thm2:ghx}
$\frac{\left|\pderiv{\hat{g}}{x}\right|\Phi(x)}{\Phi\left(\hat{g}\left(x,y\right)\right)}
\leq \max\left\{\frac{1}{2},\frac{1+p+p^2}{(1+p)(1+2p)}\right\}<1$ when $x \geq c \geq 1$, $\lambda \geq 1$, $y \geq p>0$.
\end{lemma}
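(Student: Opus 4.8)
The plan is to reduce Lemma~\ref{lem:thm2:ghx} directly to Lemma~\ref{lem:thm2:h}, in the same spirit as the reduction of $\pderiv{\hat{g}}{y}$ to $\deriv{h}{x}$ that was used in the proof of Theorem~\ref{thm:1}. Recall that for Theorem~\ref{thm:2} the potential is $\Phi(x)=x$ and $\hat{g}(x,y)=\frac{\lambda cxy+\lambda y+x}{\lambda xy+1}$, so the quantity to be bounded is $\alpha_3(x,y)=\frac{\left|\pderiv{\hat{g}}{x}\right|\,x}{\hat{g}(x,y)}$.

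The key observation is that, viewed as a function of $x$ alone (with $y$ and $\lambda$ fixed), $\hat{g}$ has the exact shape of an $h$-recursion. Group numerator and denominator as $\hat{g}(x,y)=\frac{(1+c\lambda y)x+\lambda y}{1+(\lambda y)x}$ and put $\mu':=\lambda y$, $\lambda':=1$; then $\hat{g}(x,y)=\frac{\mu'+(c\mu'+1)\lambda' x}{1+\lambda'\mu' x}$, which is precisely $h(x)$ for the Fibonacci parameter $\mu'$ and edge weight $\lambda'$. Since $\mu'$ and $\lambda'$ do not depend on $x$, differentiation gives $\pderiv{\hat{g}}{x}(x,y)=\deriv{h}{x}(x)$ for this same $h$, whence $\alpha_3(x,y)=\frac{\left|\deriv{h}{x}(x)\right|\Phi(x)}{\Phi(h(x))}$. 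Explicitly, $\deriv{h}{x}=\frac{\lambda'(c\mu'+1-(\mu')^2)}{(1+\lambda'\mu' x)^2}=\frac{1+c\lambda y-\lambda^2 y^2}{(1+\lambda xy)^2}=\pderiv{\hat{g}}{x}$, so this is a genuine identity of functions.

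It then suffices to check that Lemma~\ref{lem:thm2:h} applies to this $h$, with its roles $x$, $\lambda$, $\mu$ played here by $x$, $\lambda'=1$, $\mu'=\lambda y$. Its hypotheses become: the evaluation point is $x\ge c\ge 1$ (given); the edge weight is $\lambda'=1\ge 1$; and the parameter is $\mu'=\lambda y\ge p$, which holds because $\lambda\ge 1$ and $y\ge p>0$. Lemma~\ref{lem:thm2:h} therefore yields $\alpha_3(x,y)\le\max\left\{\frac12,\frac{1+p+p^2}{(1+p)(1+2p)}\right\}<1$, which is exactly the claimed bound.

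The only point needing care is the exactness of the substitution $(\mu,\lambda)\mapsto(\lambda y,1)$, which is the one-line algebraic check displayed above; no new estimates are required. For completeness I note that one can also avoid the reduction and estimate $\pderiv{\hat{g}}{x}=\frac{1+c\lambda y-\lambda^2 y^2}{(1+\lambda xy)^2}$ directly: when $\lambda^2 y^2>c\lambda y+1$, bounding the numerator by $(\lambda y)^2$ and using $x\ge c\ge 1$ already gives $\alpha_3\le\frac12$; when $\lambda^2 y^2\le c\lambda y+1$, one repeats the monotonicity argument in the proof of Lemma~\ref{lem:thm2:h} (decreasing first in the evaluation point, then in $c$, finally controlled by $\mu'=\lambda y\ge p$). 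The reduction above simply packages this second case.
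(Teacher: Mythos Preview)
Your reduction is correct. The identity $\hat g(x,y)=\dfrac{\mu'+(c\mu'+1)\lambda' x}{1+\lambda'\mu' x}$ with $\mu'=\lambda y$, $\lambda'=1$ is an exact rewriting of $\hat g$, and the hypotheses of Lemma~\ref{lem:thm2:h} are satisfied since $x\ge c\ge 1$, $\lambda'=1\ge 1$, and $\mu'=\lambda y\ge y\ge p$. The claimed bound follows immediately.

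The paper takes a slightly different route: it repeats the full computation from the proof of Lemma~\ref{lem:thm2:h} line by line with $\lambda y$ in place of $\mu$ (bounding the absolute value, minimizing the denominator in $x$ at $x=c$, then showing the resulting expression is decreasing in $c$ and setting $c=1$). Interestingly, the paper \emph{does} use exactly your kind of reduction for the companion Lemma~\ref{lem:thm2:ghy}, observing $\hat g(x,y)=h(y)\mid_{\mu=x}$; it simply did not exploit the analogous identification in the $x$-variable here. Your observation that $\hat g$ is also an $h$-function in $x$ (with $\mu'=\lambda y$, $\lambda'=1$) is the natural counterpart and gives a cleaner one-line proof that avoids duplicating the estimates.
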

\begin{proof}
\begin{align*}
    \frac{\left|\pderiv{\hat{g}}{x}\right|\Phi(x)}{\Phi\left(\hat{g}\left(x,y\right)\right)} &=
        \frac{\left|1+c\lambda y-\lambda^2 y^2\right|x}{(1+\lambda xy)(x+\lambda y+\lambda cxy)} \\
    &\leq \frac{(1+c\lambda y+\lambda^2 y^2)x}{(1+\lambda xy)(x+\lambda y+\lambda cxy)} \\
    &= \frac{1+c\lambda y+\lambda^2 y^2}{(\lambda y+c\lambda^2 y^2)x+\frac{\lambda y}{x}+1+c\lambda y+\lambda^2 y^2} \\
    &\leq \frac{1+c\lambda y+\lambda^2 y^2}{(\lambda y+c\lambda^2 y^2)c+\frac{\lambda y}{c}+1+c\lambda y+\lambda^2 y^2} \\
    &= \frac{c(1+c\lambda y+\lambda^2 y^2)}{(1+c\lambda y)(c+\lambda y+c^2\lambda y)}.
\end{align*}
Since the derivative of $\frac{c(1+c\lambda y+\lambda^2 y^2)}{(1+c\lambda y)(c+\lambda y+c^2\lambda y)}$ with respect to $c$ is
$$-\frac{\lambda y((c^2-1)+2c\lambda y(c^2-1)+\lambda^2 y^2(c^4-1)+c^2\lambda^2 y^2+2c^3\lambda^3 y^3)}{(1+c\lambda y)^2(c+\lambda y+c^2\lambda y)^2}\leq 0,$$
it is decreasing when $c \geq 1$.
Therefore we have
\begin{align}
    \frac{\left|\pderiv{\hat{g}}{x}\right|\Phi(x)}{\Phi\left(\hat{g}\left(x,y\right)\right)} &\leq
        \frac{1+\lambda y+\lambda^2 y^2}{(1+\lambda y)(1+2\lambda y)} \\
    &\leq \max\left\{\frac{1}{2},\frac{1+p+p^2}{(1+p)(1+2p)}\right\}.
\end{align}
\end{proof}

\begin{lemma}
\label{lem:thm2:ghy}
$\frac{\left|\pderiv{\hat{g}}{y}\right|\Phi(y)}{\Phi\left(\hat{g}\left(x,y\right)\right)}
\leq \max\left\{\frac{1}{2},\frac{1+p+p^2}{(1+p)(1+2p)}\right\}<1$ when $y \geq c \geq 1$, $\lambda \geq 1$, $x \geq p>0$.
\end{lemma}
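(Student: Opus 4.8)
The plan is to follow the same template as the proofs of Lemma~\ref{lem:thm2:ghx} and Lemma~\ref{lem:thm2:h}: compute the partial derivative, bound the ratio by a fraction whose numerator is free of one variable, then eliminate the remaining variables one at a time by monotonicity. First I would compute
\[
\pderiv{\hat g}{y}=\frac{\lambda\left(1+cx-x^2\right)}{(1+\lambda xy)^2},
\]
so that, with $\Phi(x)=x$ and using the triangle inequality $\left|1+cx-x^2\right|\le 1+cx+x^2$ exactly as in the proof of Lemma~\ref{lem:thm2:ghx},
\[
\frac{\left|\pderiv{\hat g}{y}\right|\Phi(y)}{\Phi(\hat g(x,y))}=\frac{\lambda\left|1+cx-x^2\right|y}{(1+\lambda xy)\left(\lambda cxy+\lambda y+x\right)}\le\frac{\lambda\left(1+cx+x^2\right)y}{(1+\lambda xy)\left(x+\lambda y(1+cx)\right)}.
\]
The key observation is that $\lambda$ and $y$ enter the right-hand side only through the product $u:=\lambda y$, and the hypotheses $\lambda\ge 1,\ y\ge c$ give $u\ge c\ge 1$. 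Substituting $u$ and dividing numerator and denominator by $u$ turns the bound into $\dfrac{1+cx+x^2}{\frac{x}{u}+(1+cx+x^2)+ux(1+cx)}$.

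Next I would reduce to extreme values of the remaining variables. The $u$-dependent part $\frac{x}{u}+ux(1+cx)$ of the denominator is convex in $u>0$ with minimizer $1/\sqrt{1+cx}<1\le c$, hence increasing for $u\ge c$; so the fraction is largest at $u=c$, giving $\dfrac{1+cx+x^2}{1+2cx+x^2+\frac{x}{c}+c^2x^2}$. It then remains to show this last expression, as a function of $c\ge 1$ for fixed $x>0$, is maximized at $c=1$, where it equals $\dfrac{1+x+x^2}{(1+x)(1+2x)}$. I expect to verify this either by showing the derivative in $c$ is nonpositive — after clearing the $1/c$, its sign reduces to $x(1-c^2)+2cx^2(1-c^2)+x^3(1-c^2-c^4)-2c^3x^4\le 0$, manifest for $c\ge 1$ — or, equivalently, by cross-multiplication: the difference of the two sides factors as the sum of the nonnegative terms $(c-1)^2x$, $(c-1)^2(c+1)x^2$, $(c-1)(c^2+c-1)x^3$, and $c(c-1)(c+1)x^4$. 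Note that the cruder bound obtained by simply dropping the $\frac{x}{u}$ term already fails at $c=1$, so the $\frac{x}{c}$ contribution must be retained — this is the one point requiring care.

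Finally, with $x\ge p$ I would analyze $\psi(x):=\dfrac{1+x+x^2}{(1+x)(1+2x)}$. A short computation shows $\psi'(x)$ has the sign of $x^2-2x-2$, so $\psi$ decreases on $(0,1+\sqrt3)$ and increases on $(1+\sqrt3,\infty)$ with $\psi(x)\to\tfrac12$; therefore $\sup_{x\ge p}\psi(x)=\max\{\tfrac12,\psi(p)\}=\max\left\{\tfrac12,\tfrac{1+p+p^2}{(1+p)(1+2p)}\right\}$, which is $<1$ since $2p+p^2>0$. Chaining the three bounds establishes the lemma. The monotonicity-in-$u$ step and the final single-variable estimate are routine and parallel the companion lemmas; the genuine (though elementary) obstacle is the $c\mapsto 1$ reduction in the middle paragraph, where one must handle an honest polynomial inequality and avoid discarding a term that looks lower order but is needed for tightness.
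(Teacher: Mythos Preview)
Your proof is correct, but it takes a longer route than the paper. The paper's proof is a single observation: since $\hat g(x,y)=h(y)\big|_{\mu=x}$ (both equal $\frac{x+(1+cx)\lambda y}{1+\lambda xy}$), the quantity $\frac{|\partial\hat g/\partial y|\,\Phi(y)}{\Phi(\hat g(x,y))}$ is literally $\frac{|h'(y)|\,\Phi(y)}{\Phi(h(y))}$ with $\mu$ replaced by $x$, so the bound follows immediately from Lemma~\ref{lem:thm2:h} under the matching hypotheses $y\ge c\ge 1$, $\lambda\ge 1$, $x\ge p>0$.

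What you do instead is essentially re-run the proof of Lemma~\ref{lem:thm2:h} with the variables relabeled. Your substitution $u=\lambda y$ collapses the paper's two successive reductions (first $y\mapsto c$, then $\lambda\mapsto 1$) into a single step $u\mapsto c$, after which your expression $\frac{1+cx+x^2}{1+2cx+x^2+x/c+c^2x^2}$ coincides with the paper's $\frac{c(1+c\mu+\mu^2)}{(1+c\mu)(c+\mu+c^2\mu)}$ (divide numerator and denominator by $c$ and rename $\mu\to x$). Your cross-multiplication for the $c\mapsto 1$ step is equivalent to the paper's derivative computation in Lemma~\ref{lem:thm2:h}, and your final analysis of $\psi(x)=\frac{1+x+x^2}{(1+x)(1+2x)}$ is the same endgame. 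So nothing is wrong---you have simply duplicated work that the structural identity $\hat g(x,\cdot)=h(\cdot)|_{\mu=x}$ lets you avoid.
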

\begin{proof}
Observe that $\hat{g}(x,y)=h(y)\mid_{\mu=x}$, and thus $\pderiv{\hat{g}}{y}=\deriv{h}{x}(y)\mid_{\mu=x}$.
From Lemma~\ref{lem:thm2:h} we know that this bound also holds for
$\frac{\left|\pderiv{\hat{g}}{y}\right|\Phi(y)}{\Phi\left(\hat{g}\left(x,y\right)\right)}$.
\end{proof}

\begin{proof}[Proof of Theorem~\ref{thm:2}]
The proof follows immeadiately from Lemma~\ref{lem:algo} and Lemma~\ref{lem:thm2g},~\ref{lem:thm2:h},~\ref{lem:thm2:ghx},~\ref{lem:thm2:ghy}, with $\alpha=\max\left\{\frac{9}{10},\frac{1+p+p^2}{(1+p)(1+2p)}\right\}$.
\end{proof}

  \subsection{Proof of Theorem~\ref{thm:spin}}

The recursion we use in this proof is the same as those in proving Theorem~\ref{thm:1}.
The difference is that this time we use potential function $\Phi(x)=x$.

\begin{lemma}
\label{lem:3:dg}
If $c/2 \leq x,y,z \leq 2c$, $\lambda>0$, and $c \geq 4\sqrt{\sqrt{2}-1}=2.57$, then there is some constant $\alpha(\lambda,c)<1$ such that
$\frac{\left|\pderiv{g}{x}\right|\Phi\left(x\right)+\left|\pderiv{g}{y}\right|\Phi\left(y\right)+\left|\pderiv{g}{z}\right|\Phi\left(z\right)}{\Phi\left(g\left(x,y,z\right)\right)} \leq \alpha$.
\end{lemma}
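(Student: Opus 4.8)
The plan is to follow the template used for Theorem~\ref{thm:2}: fix the potential $\Phi(x)=x$, and reduce the bound
$\frac{|\partial_x g|\,x+|\partial_y g|\,y+|\partial_z g|\,z}{g(x,y,z)}\le\alpha$
to a small collection of polynomial inequalities in $x,y,z,\lambda$, each dispatched by monotonicity in one variable at a time. First I would compute, writing $N=\lambda cxz+\lambda y+x$ and $D=\lambda xz+1$ so that $g=N/D$,
\[
\partial_x g=\frac{1+\lambda z(c-\lambda y)}{D^2},\qquad
\partial_y g=\frac{\lambda}{D},\qquad
\partial_z g=\frac{\lambda x(c-\lambda y-x)}{D^2}.
\]
Since $\partial_y g>0$ throughout the domain, only the signs of $\partial_x g$ and $\partial_z g$ vary, so the quantity to be bounded equals $\max_{\varepsilon_1,\varepsilon_3\in\{\pm1\}}\frac{\varepsilon_1 x\,\partial_x g+y\,\partial_y g+\varepsilon_3 z\,\partial_z g}{g}$, which gives (at most) four cases. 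In each case, multiplying through by $DN>0$ and using $x\,\partial_x g/g=x(1+\lambda z(c-\lambda y))/(DN)$, $y\,\partial_y g/g=\lambda yD/(DN)$, and $z\,\partial_z g/g=\lambda xz(c-\lambda y-x)/(DN)$, the target inequality becomes $P_{\varepsilon_1,\varepsilon_3}(x,y,z,\lambda)\le\alpha\cdot DN$ for an explicit polynomial numerator $P_{\varepsilon_1,\varepsilon_3}$.

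Next I would exploit, exactly as in the Theorem~\ref{thm:2} argument, the structural fact that after moving everything to one side each polynomial $P_{\varepsilon_1,\varepsilon_3}-\alpha\,DN$ has degree at most two in each of $x,y,z$ and also in $\lambda$; hence it is a parabola (or linear) in each variable separately, so its extremum over a one-dimensional interval is attained at an endpoint or at the vertex. I would therefore peel the variables off one at a time: read the sign of the leading coefficient (itself a polynomial in $c$ and the surviving variables), use the hypothesis $c\ge 4\sqrt{\sqrt{2}-1}$ — equivalently $c^4+32c^2\ge256$ — to locate the vertex relative to the interval $[c/2,2c]$, push the variable to the endpoint that maximizes the expression, and iterate; after $x,y,z$ and $\lambda$ are eliminated the problem collapses to a univariate polynomial inequality in $c$ which holds for $c\ge2.57$ by a numerical check. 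The constant $\alpha$ produced this way is $<1$; because it is allowed to depend on $\lambda$ and $c$ and because $[c/2,2c]^3$ is compact, it suffices to get the strict pointwise inequality and then take $\alpha(\lambda,c)$ to be the attained maximum.

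The hard part will be that, unlike in Theorem~\ref{thm:2}, here $x,y,z$ live in the bounded interval $[c/2,2c]$ rather than a one-sided half-line, and $\lambda$ ranges over all of $(0,+\infty)$, so the convenient rule "monotone in $\lambda$, evaluate at $\lambda=1$" and "push every variable to its left endpoint" no longer applies: the worst endpoint for each of $x,y,z$ depends on the current sign pattern $(\varepsilon_1,\varepsilon_3)$ and on the other variables, and for some sign patterns the worst $\lambda$ is an interior critical point or a limiting regime rather than a fixed value. Careful bookkeeping of these endpoint choices, together with the two delicate limits $\lambda\to0^+$ (where $g\to x$ and the naive rate degenerates toward $1$) and $\lambda\to+\infty$, is precisely what forces the threshold $c\ge 4\sqrt{\sqrt{2}-1}$, and I expect the bulk of the work to be this case analysis plus the accompanying degree-two monotonicity lemmas — comparable in number to those in the Theorem~\ref{thm:2} proof. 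Once this bound for $g$ and the analogous (easier) bounds for $h$ and $\hat g$ are established, Theorem~\ref{thm:spin} follows from Lemma~\ref{lem:algo} together with the range bounds $R(\Omega^e)\in[c/2,2c]$ for deep instances, obtained by the recursive bounding method of the Bounds section in the manner of Corollary~\ref{cor:conclusion}.
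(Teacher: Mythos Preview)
Your plan is essentially the paper's approach: potential $\Phi(x)=x$, split on the signs of the partials, and peel variables by monotonicity. Two execution differences are worth knowing. First, the paper does not carry four sign cases: it immediately bounds $|\partial_x g|\le\frac{yz\lambda^2+cz\lambda+1}{(1+\lambda xz)^2}$ by the triangle inequality, leaving only the two sign choices for $\partial_z g$; this halves the case analysis. Second, rather than fixing an $\alpha$ and studying the polynomial $P-\alpha\,DN$, the paper bounds the rational functions $g_1,g_2$ directly by a chain of monotone substitutions ($x\to c/2$, $y\to 2c$, $z\to c/2$, then $c\to 4\sqrt{\sqrt2-1}$ in the $g_2$ branch), ending with explicit expressions in $\lambda$ alone, e.g.\ $g_1\le\frac{8c^2\lambda^2+(6c^2+32)\lambda+8}{c^2(c^2+8)\lambda^2+(6c^2+32)\lambda+8}$ and $g_2\le\frac{192\lambda^2+20\lambda+1}{576\lambda^2+52\lambda+1}$, each strictly below $1$ for every fixed $\lambda>0$. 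This sidesteps the awkwardness you anticipated of having to guess a $\lambda$-dependent $\alpha$ before running the polynomial argument, and it is precisely in the monotonicity step for $z$ (requiring $\tfrac{c^4}{4}+8c^2-64\ge0$) that the threshold $c\ge 4\sqrt{\sqrt2-1}$ enters---your instinct about where the constant comes from was right.
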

\begin{proof}
\begin{align*}
& \frac{\left|\pderiv{g}{x}\right|\Phi\left(x\right)+\left|\pderiv{g}{y}\right|\Phi\left(y\right)+\left|\pderiv{g}{z}\right|\Phi\left(z\right)}{\Phi\left(g\left(x,y,z\right)\right)} \\
=& \frac{(\lambda x z+1) \left(x \left| \frac{-y z \lambda^2+c z \lambda+1}{(\lambda x z+1)^2}\right| +y \left| \frac{\lambda}{\lambda x z+1}\right| +z \left| \frac{\lambda x (-c+x+\lambda y)}{(\lambda x z+1)^2}\right| \right)}{c \lambda x z+\lambda y+x} \\
\leq & \frac{(\lambda x z+1) \left(x \frac{y z \lambda^2+c z \lambda+1}{(\lambda x z+1)^2} +y \frac{\lambda}{\lambda x z+1} +z \left| \frac{\lambda x (-c+x+\lambda y)}{(\lambda x z+1)^2}\right| \right)}{c \lambda x z+\lambda y+x}.
\end{align*}
Let
\begin{align}
g_1(x,y,z) &= \frac{(\lambda x z+1) \left(x \frac{y z \lambda^2+c z \lambda+1}{(\lambda x z+1)^2} +y \frac{\lambda}{\lambda x z+1} +z \frac{\lambda x (-c+x+\lambda y)}{(\lambda x z+1)^2} \right)}{c \lambda x z+\lambda y+x} \\
&= \frac{2 c \lambda x z+\lambda^2 x y z-\lambda x^2 z+\lambda y+x}{(\lambda x z+1) (c \lambda x z+\lambda y+x)} \label{eq:thm3:2:g1}
\end{align}
and
\begin{align}
g_2(x,y,z) &= \frac{(\lambda x z+1) \left(x \frac{y z \lambda^2+c z \lambda+1}{(\lambda x z+1)^2} +y \frac{\lambda}{\lambda x z+1} -z \frac{\lambda x (-c+x+\lambda y)}{(\lambda x z+1)^2} \right)}{c \lambda x z+\lambda y+x} \\
&= \frac{3 \lambda^2 x y z+\lambda x^2 z+\lambda y+x}{(\lambda x z+1) (c \lambda x z+\lambda y+x)} \label{eq:thm3:2:g2},
\end{align}
then
$$\frac{\left|\pderiv{g}{x}\right|\Phi\left(x\right)+\left|\pderiv{g}{y}\right|\Phi\left(y\right)+\left|\pderiv{g}{z}\right|\Phi\left(z\right)}{\Phi\left(g\left(x,y,z\right)\right)}
\leq \max\left\{g_1(x,y,z),g_2(x,y,z)\right\}.$$
The result follows immediately after Lemma~\ref{lem:thm3:2:g1} and Lemma~\ref{lem:thm3:2:g2}.
\end{proof}

\begin{lemma}
\label{lem:thm3:2:g1}
If $c/2 \leq x,y,z \leq 2c$, $c>0$, and $\lambda \geq \lambda_0> 0$, then $g_1$ (defined in \eqref{eq:thm3:2:g1}) satisfies
$g_1(x,y,z) \leq \frac{8 c^2 \lambda^2+\left(6 c^2+32\right) \lambda+8}{c^2 \left(c^2+8\right) \lambda^2+\left(6 c^2+32\right) \lambda+8} < 1$.
\end{lemma}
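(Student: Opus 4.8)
The plan is to establish the two assertions separately: the strict inequality ``$<1$'' is immediate, and the bound on $g_1$ follows by reducing, via coordinatewise monotonicity, to a single corner of the cube $[c/2,2c]^3$, at which the claimed fraction is attained with equality.

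For the ``$<1$'' part, subtracting the numerator of the claimed bound from its denominator gives $\big(c^2(c^2+8)-8c^2\big)\lambda^2=c^4\lambda^2>0$, while the numerator $8c^2\lambda^2+(6c^2+32)\lambda+8$ is positive; hence the fraction is strictly below $1$ for all $c,\lambda>0$.

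For the upper bound, recall $g_1(x,y,z)=\dfrac{2c\lambda xz+\lambda^2xyz-\lambda x^2z+\lambda y+x}{(\lambda xz+1)(c\lambda xz+\lambda y+x)}$, and I would show it is maximized over $[c/2,2c]^3$ at $(x,y,z)=(c/2,2c,c/2)$ in three steps. First, for fixed $x,z$ the map $y\mapsto g_1$ has the form $y\mapsto\frac{Ay+B}{Ay+E}$ with $A=\lambda(\lambda xz+1)>0$ and $E-B=\lambda xz(c\lambda xz+2x-c)\ge 0$ (using $x\ge c/2$), so $g_1$ is nondecreasing in $y$ and we set $y=2c$. Second, with $y=2c$ the map $z\mapsto g_1$ takes the form $\frac{az+b}{\alpha z^2+\beta z+b}$ with $b=2c\lambda+x$, $\alpha=c\lambda^2x^2$, $a=\lambda x(2c+2c\lambda-x)>0$ (using $x\le 2c$), and $a-\beta=\lambda x(c-2x)\le 0$ (using $x\ge c/2$); the numerator of $\partial g_1/\partial z$ equals $-\alpha a z^2-2\alpha b z+b(a-\beta)\le 0$, so $g_1$ is nonincreasing in $z$ and we set $z=c/2$. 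Third, with $y=2c,z=c/2$ substituted, $g_1$ becomes a ratio $\tfrac{N(x)}{D(x)}$ of quadratics in $x$; the quantity $N(x)D(c/2)-N(c/2)D(x)$ (the numerator of $g_1(x)-g_1(c/2)$) is a quadratic in $x$ with negative leading coefficient that vanishes at $x=c/2$, and one checks its vertex lies at $x\le c/2$, whence, being a downward parabola vanishing at $c/2$, it is $\le 0$ for all $x\ge c/2$, in particular on $[c/2,2c]$. A direct substitution then gives $g_1(c/2,2c,c/2)=\dfrac{8c^2\lambda^2+(6c^2+32)\lambda+8}{c^2(c^2+8)\lambda^2+(6c^2+32)\lambda+8}$, which is the asserted bound.

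The main obstacle is the third step: in the variable $x$ both numerator and denominator of $g_1$ are genuinely quadratic, so the ``vertex at $x\le c/2$'' condition unwinds to a nontrivial polynomial inequality in $c$ and $\lambda$. I expect to dispatch it exactly as elsewhere in this section, namely by viewing the relevant polynomial as a parabola in $\lambda$ (and then in $c$), determining the location of its vertex, and reducing to a one-variable numerical verification. Note that, because the bound is achieved at the corner $(c/2,2c,c/2)$, it is tight, so the estimates in that step cannot be loosened.
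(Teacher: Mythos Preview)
Your argument is correct in outline and reaches the same corner $(x,y,z)=(c/2,2c,c/2)$ as the paper, but you eliminate the variables in a different order, and that is precisely why you are left with an ``obstacle.''

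The paper eliminates $x$ \emph{first}, then $y$, then $z$. The point is that monotonicity of $g_1$ in $x$ holds for \emph{all} $y,z>0$ (not just for $y=2c$, $z=c/2$): the derivative $\partial g_1/\partial x$ has numerator
\[
-\lambda z\Big(x^2\big(2c^2\lambda^2z^2+c\lambda z(\lambda^2yz+4)+2\lambda^2yz+2\big)+2\lambda xy(c\lambda z+2)-c\lambda y\Big),
\]
and since $4\lambda xy-c\lambda y=\lambda y(4x-c)\ge 0$ for $x\ge c/4$, the inner bracket is nonnegative and $g_1$ is decreasing in $x$ for $x\ge c/4$. One then sets $x=c/2$; after that, the remaining function is increasing in $y$ (set $y=2c$) and decreasing in $z$ (set $z=c/2$), each by a one--line derivative computation. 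No polynomial case analysis in $(c,\lambda)$ is ever needed.

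Your order $y\to z\to x$ forces you to analyze a genuine quadratic-over-quadratic in $x$ at the end. That step \emph{is} true---indeed it is just the specialization $y=2c$, $z=c/2$ of the general monotonicity in $x$ above---but your proposed route (downward parabola, locate the vertex, reduce to a parabola in $\lambda$, etc.) is considerably more work than necessary. If you keep your ordering, the cleanest way to close the gap is to prove the $x$--monotonicity directly from $\partial g_1/\partial x$ as above, which needs only $x\ge c/4$ and nothing about $y,z$; better still, do $x$ first and the obstacle disappears.
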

\begin{proof}
\begin{align}
g_1(x,y,z) &= \frac{2 c \lambda x z+\lambda^2 x y z-\lambda x^2 z+\lambda y+x}{(\lambda x z+1) (c \lambda x z+\lambda y+x)} \label{eq:thm3:2:8} \\
&\leq \frac{3 c^2 \lambda z+2 c \left(\lambda^2 y z+1\right)+4 \lambda y}{(c \lambda z+2) \left(c^2 \lambda z+c+2 \lambda y\right)} \label{eq:thm3:2:13} && \text{by Lemma~\ref{lem:thm3:2:8}} \\
&\leq \frac{4 c \lambda^2 z+\lambda (3 c z+8)+2}{(c \lambda z+2) (\lambda (c z+4)+1)} \label{eq:thm3:2:17} &&\text{by Lemma~\ref{lem:thm3:2:13}} \\
&\leq \frac{8 c^2 \lambda^2+\left(6 c^2+32\right) \lambda+8}{\left(c^2 \lambda+4\right) \left(\left(c^2+8\right) \lambda+2\right)} \label{eq:thm3:2:g1a} &&\text{by Lemma~\ref{lem:thm3:2:17}} \\
&\leq \frac{8 c^2 \lambda_0^2+\left(6 c^2+32\right) \lambda_0+8}{\left(c^2 \lambda_0+4\right) \left(\left(c^2+8\right) \lambda_0+2\right)} &&\text{by Lemma~\ref{lem:thm3:2:g1a}} \\
&= \frac{8 c^2 \lambda_0^2+\left(6 c^2+32\right) \lambda_0+8}{c^2 \left(c^2+8\right) \lambda_0^2+\left(6 c^2+32\right) \lambda_0+8}.
\end{align}
\end{proof}

\begin{lemma}
\label{lem:thm3:2:8}
\eqref{eq:thm3:2:8} is decreasing monotonically with respect to $x$ when $x \geq c/4$.
\end{lemma}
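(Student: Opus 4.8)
The plan is to show that the numerator of $\partial_x g_1$ stays non-positive on the half-line $x\ge c/4$. Fixing $y,z,\lambda,c$ as positive parameters and writing $a=\lambda z$, $b=\lambda y$, I would first rewrite \eqref{eq:thm3:2:8} as $g_1(x,y,z)=N(x)/D(x)$ with
\[
N(x)=-ax^2+(2ac+ab+1)x+b,\qquad D(x)=(ax+1)\bigl((ac+1)x+b\bigr)\,;
\]
since $D(x)>0$ for $x>0$, the sign of $\partial_x g_1$ agrees with that of $P(x):=N'(x)D(x)-N(x)D'(x)$, so the whole claim reduces to showing $P(x)\le 0$ for $x\ge c/4$.

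Next I would expand $P$. Putting $p=2ac+ab+1$, $q=a(ac+1)$, $r=ab+ac+1$, the cubic terms $-2aq\,x^3$ of $N'D$ and of $ND'$ cancel, leaving a quadratic, and collecting coefficients gives
\[
P(x)=-(pq+ar)\,x^2-2ab(ac+2)\,x+abc .
\]
Because $a,b,c,p,q,r>0$, both the leading coefficient $-(pq+ar)$ and the linear coefficient $-2ab(ac+2)$ of $P$ are negative, so $P$ is a downward parabola whose vertex lies at a negative value of $x$; hence $P$ is strictly decreasing on $(0,\infty)$, and it suffices to check the single value $P(c/4)$.

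Finally I would substitute $x=c/4$: expanding $\tfrac{c}{2}ab(ac+2)=abc+\tfrac12 a^2bc^2$ makes the two $abc$ contributions cancel, so
\[
P(c/4)=-\frac{c^2}{16}(pq+ar)-\frac{a^2bc^2}{2}<0 .
\]
Thus $P(x)\le P(c/4)<0$ for every $x\ge c/4$, which gives $\partial_x g_1<0$ there and proves the asserted monotonicity.

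I do not expect a serious obstacle here: the only delicate point is the algebraic bookkeeping in expanding $P$, and the relevant cancellations (the cubic terms, and the $abc$ terms at $x=c/4$) are exact, so — unlike the neighbouring lemmas of this subsection — no inequality in $c$ and no case analysis on $\lambda$ is required; the constant $c/4$ is simply the threshold past which the (globally decreasing) quadratic $P$ has already turned negative.
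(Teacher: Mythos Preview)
Your proof is correct and follows essentially the same approach as the paper: both compute the $x$-derivative of $g_1$ and show its numerator has the right sign. The paper simply displays the derivative and asserts it is $\le 0$, leaving the reader to verify that the bracketed quadratic in $x$ is non-negative for $x\ge c/4$; you carry out precisely that verification (via the equivalent quadratic $P(x)$), noting that $P$ is decreasing on $(0,\infty)$ and checking $P(c/4)<0$ explicitly, so your argument is in fact more complete than the paper's terse one.
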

\begin{proof}
The derivative of \eqref{eq:thm3:2:8} with respect to $x$ is
$$-\frac{\lambda z \left(x^2 \left(2 c^2 \lambda^2 z^2+c \lambda z \left(\lambda^2 y z+4\right)+2 \lambda^2 y z+2\right)+2 \lambda x y (c \lambda z+2)-c \lambda y\right)}{(\lambda x z+1)^2 (c \lambda x z+\lambda y+x)^2}
\leq 0.$$
\end{proof}

\begin{lemma}
\label{lem:thm3:2:13}
\eqref{eq:thm3:2:13} is increasing monotonically with respect to $y$.
\end{lemma}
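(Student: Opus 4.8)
The plan is to prove Lemma~\ref{lem:thm3:2:13} by a direct derivative computation, exploiting the fact that the expression in \eqref{eq:thm3:2:13} is a \emph{Möbius transformation} in the variable $y$. Indeed, once $c,\lambda,z$ are fixed, the numerator $3c^2\lambda z + 2c(\lambda^2 yz+1) + 4\lambda y$ and the denominator $(c\lambda z+2)(c^2\lambda z+c+2\lambda y)$ are both affine in $y$; consequently $\partial_y$ of their quotient has a sign that is constant in $y$, and it suffices to pin down that sign.

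Concretely, I would write the numerator as $N(y) = 2\lambda(c\lambda z+2)\,y + (3c^2\lambda z+2c)$ and the denominator as $D(y) = (c\lambda z+2)\bigl(c^2\lambda z+c+2\lambda y\bigr) = 2\lambda(c\lambda z+2)\,y + (c\lambda z+2)(c^2\lambda z+c)$, noting that $N$ and $D$ share the leading coefficient $2\lambda(c\lambda z+2)$. For a ratio $(ay+b)/(py+q)$ the derivative equals $(aq-pb)/(py+q)^2$, so here the relevant cross term is $2\lambda(c\lambda z+2)\bigl[(c\lambda z+2)(c^2\lambda z+c)-(3c^2\lambda z+2c)\bigr]$. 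Expanding $(c\lambda z+2)(c^2\lambda z+c)=c^3\lambda^2z^2+3c^2\lambda z+2c$, the bracket collapses exactly to $c^3\lambda^2z^2$, giving
\[
\pderiv{}{y}\left(\frac{3 c^2 \lambda z+2 c(\lambda^2 y z+1)+4 \lambda y}{(c \lambda z+2)(c^2 \lambda z+c+2 \lambda y)}\right)
= \frac{2 c^3 \lambda^3 z^2 (c\lambda z+2)}{\bigl((c \lambda z+2)(c^2 \lambda z+c+2 \lambda y)\bigr)^2},
\]
which is manifestly nonnegative for all $c,\lambda,z>0$, in particular on the ranges $c/2\le z\le 2c$, $\lambda>0$. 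This gives monotone increase in $y$.

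There is essentially no obstacle: the only care needed is the bookkeeping in the cross-product expansion, and it is worth remarking that no constraint on $y$ itself is actually used — positivity of $c,\lambda,z$ alone suffices, so the hypothesis $c/2\le y\le 2c$ is not needed for this particular step. When Lemma~\ref{lem:thm3:2:13} is invoked to pass from \eqref{eq:thm3:2:13} to \eqref{eq:thm3:2:17} in the proof of Lemma~\ref{lem:thm3:2:g1}, the established monotonicity licenses substituting the upper endpoint $y=2c$, after which a common factor of $c$ cancels and one recovers exactly the bound in \eqref{eq:thm3:2:17}.
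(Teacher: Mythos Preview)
Your proof is correct and matches the paper's approach: both compute the partial derivative with respect to $y$ and observe it is nonnegative. Your displayed derivative simplifies (cancelling one factor of $c\lambda z+2$) to exactly the paper's expression $\dfrac{2c^3\lambda^3 z^2}{(c\lambda z+2)(c^2\lambda z+c+2\lambda y)^2}$.
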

\begin{proof}
The derivative of \eqref{eq:thm3:2:13} with respect to $y$ is
$$\frac{2 c^3 \lambda^3 z^2}{(c \lambda z+2) \left(c^2 \lambda z+c+2 \lambda y\right)^2} \geq 0.$$
\end{proof}

\begin{lemma}
\label{lem:thm3:2:17}
\eqref{eq:thm3:2:17} is decreasing monotonically with respect to $z$.
\end{lemma}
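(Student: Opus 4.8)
The plan is to prove Lemma~\ref{lem:thm3:2:17} by the same differentiate-and-check-sign method already used for Lemma~\ref{lem:thm3:2:8} and Lemma~\ref{lem:thm3:2:13}. I would write the expression in \eqref{eq:thm3:2:17} as a ratio $N(z)/D(z)$ with $N(z)=4c\lambda^2 z+\lambda(3cz+8)+2$ and $D(z)=(c\lambda z+2)(\lambda(cz+4)+1)$. Since $D(z)>0$ for every $z\ge 0$ when $c,\lambda>0$, the sign of $\pderiv{}{z}\frac{N(z)}{D(z)}$ agrees with the sign of $N'(z)D(z)-N(z)D'(z)$, so it suffices to show this quantity is $\le 0$ on the relevant range.

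To make the computation transparent I would first expand the denominator: $D(z)=c^2\lambda^2 z^2+c\lambda(4\lambda+3)z+(8\lambda+2)$, whereas $N(z)=c\lambda(4\lambda+3)z+(8\lambda+2)$, so that $D(z)=N(z)+c^2\lambda^2 z^2$ --- the numerator and denominator share the same linear and constant coefficients. Then with $N'(z)=c\lambda(4\lambda+3)$ and $D'(z)=2c^2\lambda^2 z+c\lambda(4\lambda+3)$ the terms $c\lambda(4\lambda+3)N(z)$ cancel and one is left with
\[
N'(z)D(z)-N(z)D'(z)=c^2\lambda^2 z\bigl(c\lambda(4\lambda+3)z-2N(z)\bigr)=-c^2\lambda^2 z\bigl(c\lambda(4\lambda+3)z+16\lambda+4\bigr),
\]
which is manifestly $\le 0$ for $z\ge 0$, $c>0$, $\lambda>0$. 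Hence \eqref{eq:thm3:2:17} is monotonically decreasing in $z$, in particular on $[c/2,2c]$; this is exactly what is needed so that substituting the smallest value $z=c/2$ in Lemma~\ref{lem:thm3:2:g1} yields the bound \eqref{eq:thm3:2:g1a}.

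I expect the only obstacle to be clerical: one must expand $(c\lambda z+2)(\lambda(cz+4)+1)$ without dropping the cross term $2c\lambda z$, since it is precisely that step that produces the identity $D=N+c^2\lambda^2 z^2$ and triggers the cancellation in $N'D-ND'$. Note that no part of the argument uses the range hypotheses $c/2\le x,y,z\le 2c$ or $\lambda\ge\lambda_0$ beyond positivity of $z$, $c$, $\lambda$; those assumptions come into play only later, when the endpoint $z=c/2$ is substituted.
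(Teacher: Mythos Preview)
Your proposal is correct and takes essentially the same approach as the paper: compute the $z$-derivative and observe it is nonpositive. The paper simply records the derivative as $-\dfrac{c^2\lambda^2 z\left(4c\lambda^2 z+\lambda(3cz+16)+4\right)}{(c\lambda z+2)^2(\lambda(cz+4)+1)^2}\le 0$, which is exactly your expression $-c^2\lambda^2 z\bigl(c\lambda(4\lambda+3)z+16\lambda+4\bigr)$ over $D(z)^2$; your structural observation $D=N+c^2\lambda^2 z^2$ is a nice way to make the cancellation in $N'D-ND'$ transparent, but the underlying argument is the same.
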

\begin{proof}
The derivative of \eqref{eq:thm3:2:17} with respect to $z$ is
$$-\frac{c^2 \lambda^2 z \left(4 c \lambda^2 z+\lambda (3 c z+16)+4\right)}{(c \lambda z+2)^2 (\lambda (c z+4)+1)^2} \leq 0.$$
\end{proof}

\begin{lemma}
\label{lem:thm3:2:g1a}
\eqref{eq:thm3:2:g1a} is decreasing monotonically with respect to $\lambda$ when $\lambda \geq 0$.
\end{lemma}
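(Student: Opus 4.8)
The plan is to reduce the monotonicity claim to an elementary observation about the quotient, avoiding any brute-force computation of the derivative of the full rational function. First I would expand the denominator of the expression in \eqref{eq:thm3:2:g1a} as
\[
\left(c^2 \lambda+4\right)\left(\left(c^2+8\right) \lambda+2\right) = c^2(c^2+8)\lambda^2 + (6c^2+32)\lambda + 8 =: D(\lambda),
\]
and write the numerator as $N(\lambda) := 8c^2\lambda^2 + (6c^2+32)\lambda + 8$, so the expression equals $N(\lambda)/D(\lambda)$. The key algebraic identity, which I would verify by direct subtraction, is
\[
D(\lambda) - N(\lambda) = \bigl(c^2(c^2+8) - 8c^2\bigr)\lambda^2 = c^4\lambda^2,
\]
since the linear and constant terms of $N$ and $D$ coincide. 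Hence
\[
\frac{N(\lambda)}{D(\lambda)} = 1 - \frac{c^4\lambda^2}{D(\lambda)},
\]
and it suffices to show that $\lambda^2/D(\lambda)$ is nondecreasing on $[0,\infty)$.

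For that, I would dividing numerator and denominator by $\lambda^2$ (valid for $\lambda>0$) to get
\[
\frac{\lambda^2}{D(\lambda)} = \frac{1}{c^2(c^2+8) + \dfrac{6c^2+32}{\lambda} + \dfrac{8}{\lambda^2}}.
\]
Each of the terms $\tfrac{6c^2+32}{\lambda}$ and $\tfrac{8}{\lambda^2}$ is positive and strictly decreasing in $\lambda$, so the denominator on the right is decreasing in $\lambda$, whence $\lambda^2/D(\lambda)$ is increasing on $(0,\infty)$; at $\lambda=0$ it equals $0$, which is below its values for $\lambda>0$, so it is nondecreasing on all of $[0,\infty)$. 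Combining with the displayed identity, $N(\lambda)/D(\lambda) = 1 - c^4\cdot\lambda^2/D(\lambda)$ is (strictly) decreasing on $[0,\infty)$, which is the claim; one also reads off for free that the ratio is $\le 1$ with equality only at $\lambda=0$, consistent with the $<1$ statement used in Lemma~\ref{lem:thm3:2:g1}.

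There is essentially no obstacle here beyond bookkeeping: the only point requiring a little care is the passage through $\lambda=0$ when dividing by $\lambda^2$, which I handle by treating $\lambda=0$ separately as above. The ``hard part'', if any, is simply spotting the identity $D-N=c^4\lambda^2$, which is what makes the whole argument collapse to a one-line monotonicity observation rather than an expansion of $N'D-ND'$.
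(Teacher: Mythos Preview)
Your proof is correct and takes a genuinely different route from the paper. The paper simply writes down the derivative of the rational function and declares it negative; you instead notice the identity $D(\lambda)-N(\lambda)=c^4\lambda^2$, rewrite the expression as $1-c^4\lambda^2/D(\lambda)$, and reduce everything to the transparent observation that $\lambda\mapsto\lambda^2/D(\lambda)$ is increasing (since $D(\lambda)/\lambda^2$ is a sum of a constant and positive decreasing terms). Your argument is shorter, more conceptual, and yields the bound $N/D\le 1$ for free. It is also more robust: the explicit derivative printed in the paper actually appears to be in error (a direct computation gives $N'D-ND'=-c^4\lambda\bigl((6c^2+32)\lambda+16\bigr)$, not the degree-four expression stated there), though the sign conclusion is unaffected. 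One tiny nitpick: at $\lambda=0$ the derivative vanishes, so ``decreasing'' here should be read as nonincreasing on $[0,\infty)$ and strictly decreasing on $(0,\infty)$; you handle this correctly by treating $\lambda=0$ separately.
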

\begin{proof}
The derivative of \eqref{eq:thm3:2:g1a} with respect to $\lambda$ is $-\frac{4c^2\lambda^2\left(8+\left(32+3c^2\right)\lambda+4c^2\lambda^2\right)}{\left(4+c^2\lambda\right)^2\left(2+\left(8+c^2\right)\lambda\right)^2}<0$.
\end{proof}

\begin{lemma}
\label{lem:thm3:2:g2}
If $c/2 \leq x,y,z \leq 2c$, $\lambda\geq\lambda_0>0$, and $c \geq 4\sqrt{\sqrt{2}-1}=2.57$, then $g_2$ (defined in \eqref{eq:thm3:2:g2}) satisfies
$g_2(x,y,z) \leq \frac{192 \lambda_0^2+20 \lambda_0+1}{576 \lambda_0^2+52 \lambda_0+1} <1$.
\end{lemma}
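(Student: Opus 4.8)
The plan is to follow verbatim the strategy used for $g_1$ in Lemma~\ref{lem:thm3:2:g1}: starting from the polynomial form $g_2(x,y,z)=\dfrac{3\lambda^2xyz+\lambda x^2z+\lambda y+x}{(\lambda xz+1)(c\lambda xz+\lambda y+x)}$ given in \eqref{eq:thm3:2:g2}, I would reduce the four free parameters $x,y,z,\lambda$ one at a time, each time establishing a sign for the relevant partial derivative on the region $c/2\le x,y,z\le 2c$, $\lambda\ge\lambda_0$, and then substituting the endpoint of the interval that makes $g_2$ largest. As in the $g_1$ argument, each of these monotonicity statements would be split off as an auxiliary lemma of its own, and each would be proved by the same device: the numerator of every partial derivative of $g_2$ is a polynomial of degree at most two in each variable, so once the sign of its leading coefficient is known it suffices to evaluate the numerator at the appropriate box endpoints, repeatedly using the algebraic Fact $\frac{a+Ax}{b-Bx}=\frac a b+\frac{Ab+aB}{b(b-Bx)}x$ to keep the intermediate expressions linear.

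After the three substitutions in $x,y,z$ one is left with a rational function of $\lambda$ (with $c$ entering only through $c^2$) that tends to $1$ as $\lambda\to 0^+$ and is decreasing in $\lambda$ on $[0,\infty)$; plugging in $\lambda=\lambda_0$ gives a bound of the shape $\dfrac{12\lambda_0^2c^2+(c^2+4)\lambda_0+1}{2\lambda_0^2c^4+4\lambda_0^2c^2+3\lambda_0c^2+4\lambda_0+1}$. The step peculiar to $g_2$, and the reason the hypothesis $c\ge 4\sqrt{\sqrt2-1}$ appears, is the final comparison of this $c$-dependent expression with the target $\dfrac{192\lambda_0^2+20\lambda_0+1}{576\lambda_0^2+52\lambda_0+1}$, which is exactly its value at $c^2=16$: clearing denominators reduces the inequality to a polynomial inequality in $c^2$ and $\lambda_0$ that is controlled by the quartic threshold $(c^2+16)^2\ge 512$, i.e. $c^2\ge 16(\sqrt2-1)$, i.e. $c\ge 4\sqrt{\sqrt2-1}\approx 2.57$. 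Strictness, $\dfrac{192\lambda_0^2+20\lambda_0+1}{576\lambda_0^2+52\lambda_0+1}<1$, is immediate since $576>192$ and $52>20$.

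The main obstacle I expect is the monotonicity bookkeeping rather than the final numeric comparison. The numerator $3\lambda^2xyz+\lambda x^2z+\lambda y+x$ is rougher than the $g_1$ numerator, so the sign of $\partial g_2/\partial y$ (and perhaps of $\partial g_2/\partial x$) need not be constant over all of $c/2\le x,y,z\le 2c$, $\lambda\ge\lambda_0$; getting a clean chain may require choosing the elimination order carefully, or inserting a further case distinction refining the $\pm$ split already used in Lemma~\ref{lem:3:dg} to reduce $\frac{|\partial_xg|x+|\partial_yg|y+|\partial_zg|z}{g}$ to $\max\{g_1,g_2\}$, so that each step is genuinely monotone before an endpoint is substituted. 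Arranging matters so that the chain of substitutions lands precisely on the stated rational function is the delicate part of the proof.
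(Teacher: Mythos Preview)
Your overall plan---eliminate $x,y,z$ one at a time by monotonicity, then handle $c$ and $\lambda$---is exactly the paper's approach. Two specifics are off, though.

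First, the role of $c\ge 4\sqrt{\sqrt2-1}$. In the paper this threshold (equivalently $c^4+32c^2-256\ge 0$, i.e.\ $(c^2+16)^2\ge 512$) is \emph{not} used in a final algebraic comparison; it is needed already to establish monotonicity in $z$ (after $x\to c/2$ and $y\to 2c$ the $z$-derivative has a factor $c^2z^2+16cz-64$ whose sign at $z=c/2$ is controlled precisely by $c^4+32c^2-256$) and again to establish monotonicity in $c$ of the resulting one-variable expression. So your identification of the quartic threshold is correct, but the place you expect it to appear is wrong.

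Second, the paper's chain is $x\to c/2$ (decreasing; this already needs the box constraint $y\le 2c$, confirming your worry that the monotonicity is not global), then $y\to 2c$ (increasing), then $z\to c/2$ (decreasing, needing $c\ge 2.57$), arriving at
\[
\frac{24c^2\lambda^2+2(c^2+16)\lambda+8}{(c^2\lambda+4)((c^2+8)\lambda+2)},
\]
which differs from the intermediate expression you wrote down, and the target $\frac{192\lambda^2+20\lambda+1}{576\lambda^2+52\lambda+1}$ is not its value at $c^2=16$. The paper then argues this last expression is decreasing in $c$ (again via $c^4+32c^2-256\ge 0$) and in $\lambda$, and substitutes to obtain the stated bound. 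Your anticipated ``final comparison by clearing denominators'' is therefore replaced in the paper by two further monotonicity steps.
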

\begin{proof}
\begin{align}
g_2(x,y,z) &= \frac{3 \lambda^2 x y z+\lambda x^2 z+\lambda y+x}{(\lambda x z+1) (c \lambda x z+\lambda y+x)} \label{eq:thm3:2:25} \\
&\leq \frac{c^2 \lambda z+c \left(6 \lambda^2 y z+2\right)+4 \lambda y}{(c \lambda z+2) \left(c^2 \lambda z+c+2 \lambda y\right)} \label{eq:thm3:2:27} &&\text{by Lemma~\ref{lem:thm3:2:25}} \\
&\leq \frac{12 c \lambda^2 z+\lambda (c z+8)+2}{(c \lambda z+2) (\lambda (c z+4)+1)} \label{eq:thm3:2:29} &&\text{by Lemma~\ref{lem:thm3:2:27}} \\
&\leq \frac{24 c^2 \lambda^2+2 \left(c^2+16\right) \lambda+8}{\left(c^2 \lambda+4\right) \left(\left(c^2+8\right) \lambda+2\right)} \label{eq:thm3:2:34} &&\text{by Lemma~\ref{lem:thm3:2:29}} \\
&\leq \frac{192 \lambda^2+20 \lambda+1}{576 \lambda^2+52 \lambda+1} \label{eq:thm3:2:g2a} &&\text{by Lemma~\ref{lem:thm3:2:34}} \\
&\leq \frac{192 \lambda_0^2+20 \lambda_0+1}{576 \lambda_0^2+52 \lambda_0+1}. &&\text{by Lemma~\ref{lem:thm3:2:g2a}}
\end{align}
\end{proof}

\begin{lemma}
\label{lem:thm3:2:25}
\eqref{eq:thm3:2:25} is decreasing monotonically with respect to $x$ when $c \geq 2$, $x,z \geq c/2$, and $y \leq 2c$.
\end{lemma}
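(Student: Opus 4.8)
The plan is to follow the template already used for the sibling monotonicity claims (Lemma~\ref{lem:thm3:2:8} and Lemma~\ref{lem:thm3:2:13}): differentiate the rational function in \eqref{eq:thm3:2:25} with respect to $x$ and check that the numerator of the resulting quotient is non-positive on the stated region. Write $N(x) = 3\lambda^2 xyz + \lambda x^2 z + \lambda y + x$ and $D(x) = (\lambda xz + 1)(c\lambda xz + \lambda y + x)$. Both factors of $D$ are strictly positive for $c,\lambda,x,y,z>0$, so $\partial g_2/\partial x$ has the same sign as $P := N'(x)D(x) - N(x)D'(x)$, a polynomial in $x,y,z,\lambda,c$ with denominator $(\lambda xz+1)^2(c\lambda xz+\lambda y+x)^2>0$. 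It therefore suffices to show $P \le 0$ whenever $c\ge 2$, $x,z\ge c/2$, $y\le 2c$, and $\lambda>0$.

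Once $P$ is expanded I expect it to take the form $P = -\lambda z\,Q$, i.e.\ a manifestly positive common factor $\lambda z$ can be pulled out, so the goal becomes $Q \ge 0$. By analogy with Lemma~\ref{lem:thm3:2:8}, I anticipate $Q$ to split into a collection of monomials with positive coefficient polynomials in $c,\lambda$ together with one (or at most two) subtracted monomials — the dangerous term being the one carrying a factor like $c\lambda y$. The strategy is then term-by-term domination: use the lower bounds $x\ge c/2$ and $z\ge c/2$ inside a suitable positive monomial (one containing $x^2 z$ or $c\lambda x^2 z^2$) so that it alone outweighs the subtracted term, invoking $y\le 2c$ to cap the magnitude of the negative contribution and $c\ge 2$ (hence $c/2\ge 1$) to make the resulting powers of $c/2$ work in our favour.

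The main obstacle is the bookkeeping in this last step: pinning down which positive monomial of $Q$ to "spend" on the subtracted term and then verifying that after the boundary substitutions $x\mapsto c/2$, $z\mapsto c/2$, $y\mapsto 2c$ the remaining inequality in $c$ and $\lambda$ holds for all $c\ge 2$, $\lambda>0$. Since $Q$ is of degree at most two in $\lambda$ (as in the neighbouring lemmas), if the direct domination turns out awkward I would instead regard $Q$ as a quadratic in $\lambda$ and establish non-negativity by checking its coefficients together with the sign of its discriminant; either route is elementary and requires no new idea beyond those used for Lemma~\ref{lem:thm3:2:8}. Accordingly I would keep the write-up to the displayed derivative followed by the short sign argument, exactly in the style of the earlier monotonicity lemmas.
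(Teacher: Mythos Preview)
Your proposal is correct and follows essentially the same route as the paper: compute the $x$-derivative, pull out the manifestly positive prefactor (the paper extracts $\lambda^2 yz$ rather than just $\lambda z$, leaving the bracket $3c\lambda^2 x^2 z^2 + 2c\lambda xz + c + 2\lambda x^2 z - 2\lambda y$), and then bound this bracket from below using $x,z\ge c/2$ and $y\le 2c$ so that the sole negative contribution $-2\lambda y$ is dominated. Your anticipated term-by-term domination is exactly the paper's argument, and the quadratic-in-$\lambda$ fallback you mention is unnecessary but harmless.
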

\begin{proof}
The derivative of \eqref{eq:thm3:2:25} with respect to $x$ is
\begin{align}
& -\frac{\lambda^2 y z \left(3 c \lambda^2 x^2 z^2+2 c \lambda x z+c+2 \lambda x^2 z-2 \lambda y\right)}{(\lambda x z+1)^2 (c \lambda x z+\lambda y+x)^2} \\
\leq & -\frac{\lambda^2 y z \left(3 c \lambda^2 x^2 z^2+ \frac{2c^3}{4}+c+\frac{2c^3}{8}-4c\right)}{(\lambda x z+1)^2 (c \lambda x z+\lambda y+x)^2} \\
\leq & -\frac{\lambda^2 y z \left(3 c \lambda^2 x^2 z^2+ \frac{3}{4}c\left(c^2-4\right)\right)}{(\lambda x z+1)^2 (c \lambda x z+\lambda y+x)^2} \\
\leq & 0.
\end{align}
\end{proof}

\begin{lemma}
\label{lem:thm3:2:27}
\eqref{eq:thm3:2:27} is increasing monotonically with respect to $y$.
\end{lemma}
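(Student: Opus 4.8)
The plan is to verify the claimed monotonicity directly by differentiating the rational function in \eqref{eq:thm3:2:27} with respect to $y$, treating $c,\lambda,z$ as fixed positive parameters (they are positive since the hypothesis $c/2\le x,y,z\le 2c$ forces $c,z>0$, and $\lambda>0$ is assumed). I would write the expression as $N/D$, where $N=c^2\lambda z+6c\lambda^2yz+2c+4\lambda y$ is the numerator and $D=(c\lambda z+2)(c^2\lambda z+c+2\lambda y)$ is the denominator. Since $D>0$ throughout the region, the sign of the $y$-derivative of $N/D$ coincides with the sign of $N_yD-ND_y$, so it suffices to prove $N_yD-ND_y\ge 0$.

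Next I would compute $N_y=6c\lambda^2 z+4\lambda$ and $D_y=2\lambda(c\lambda z+2)$, and then factor the common factor $(c\lambda z+2)$ out of $N_yD-ND_y$. Expanding the remaining bracket $(6c\lambda^2 z+4\lambda)(c^2\lambda z+c+2\lambda y)-2\lambda(c^2\lambda z+6c\lambda^2yz+2c+4\lambda y)$, the cross terms of orders $c\lambda^3 yz$, $c\lambda$, and $\lambda^2 y$ cancel exactly, leaving $6c^3\lambda^3z^2+8c^2\lambda^2z=2c^2\lambda^2 z(3c\lambda z+4)$. Therefore $N_yD-ND_y=2c^2\lambda^2 z(c\lambda z+2)(3c\lambda z+4)\ge 0$, which is manifestly a product of nonnegative factors, and the lemma follows.

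This step is completely routine and there is no genuine obstacle: the only care needed is the bookkeeping in the expansion so that one correctly observes the cancellation of the three cross terms and hence the clean nonnegative product form of $N_yD-ND_y$. This mirrors exactly the pattern used for the companion monotonicity facts (e.g.\ Lemma~\ref{lem:thm3:2:13} and Lemma~\ref{lem:thm3:2:25}): rewrite the bound as a ratio $N/D$, reduce monotonicity in the chosen variable to nonnegativity of $N_yD-ND_y$, factor, and read off a product of positive terms.
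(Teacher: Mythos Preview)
Your proof is correct and takes essentially the same approach as the paper: both compute the $y$-derivative of \eqref{eq:thm3:2:27} and observe it is nonnegative. Your quotient-rule computation yields $N_yD-ND_y=2c^2\lambda^2 z(c\lambda z+2)(3c\lambda z+4)$, which after dividing by $D^2$ gives exactly the paper's derivative $\dfrac{2c^2\lambda^2 z(3c\lambda z+4)}{(c\lambda z+2)(c^2\lambda z+c+2\lambda y)^2}\ge 0$.
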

\begin{proof}
The derivative of \eqref{eq:thm3:2:27} with respect to $y$ is
$$\frac{2 c^2 \lambda^2 z (3 c \lambda z+4)}{(c \lambda z+2) \left(c^2 \lambda z+c+2 \lambda y\right)^2}.$$
\end{proof}

\begin{lemma}
\label{lem:thm3:2:29}
\eqref{eq:thm3:2:29} is decreasing monotonically with respect to $z$ when $z \geq c/2$ and $c \geq 4 \sqrt{\sqrt{2}-1} = 2.57$.
\end{lemma}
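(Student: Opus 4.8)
The plan is to follow the same differentiate-and-check-the-sign recipe used for the companion monotonicity lemmas (Lemma~\ref{lem:thm3:2:17}, Lemma~\ref{lem:thm3:2:25}, etc.): compute $\partial_z$ of \eqref{eq:thm3:2:29} and show its numerator is nonpositive on the stated region. Write \eqref{eq:thm3:2:29} as $N(z)/D(z)$ with
\[
N(z)=c\lambda(12\lambda+1)\,z+(8\lambda+2),\qquad D(z)=(c\lambda z+2)(c\lambda z+4\lambda+1),
\]
so $N$ is affine and $D$ is quadratic in $z$. By the quotient rule the sign of $\partial_z\eqref{eq:thm3:2:29}$ equals the sign of $N'D-ND'$; expanding and using $c\lambda(12\lambda+1)-c\lambda(4\lambda+3)=2c\lambda(4\lambda-1)$ to collapse the constant term, one obtains the clean factorization
\[
N'D-ND'=c\lambda\,P(z),\qquad P(z):=-c^2\lambda^2(12\lambda+1)\,z^2-4(4\lambda+1)c\lambda\,z+4(16\lambda^2-1).
\]
Hence it suffices to prove $P(z)\le 0$ for every $z\ge c/2$, every $\lambda>0$, and every $c\ge 4\sqrt{\sqrt2-1}$.

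The next step is the observation that both the $z^2$- and the $z$-coefficients of $P$ are negative, so $P$ is \emph{strictly decreasing} on $[0,\infty)$; in particular its maximum over $z\ge c/2$ is $P(c/2)$, and no further case analysis in $z$ (or in $\lambda$) is required. Substituting $z=c/2$ and sorting by powers of $\lambda$ gives
\[
P(c/2)=-3c^4\lambda^3+\Bigl(64-8c^2-\tfrac{c^4}{4}\Bigr)\lambda^2-2c^2\lambda-4 .
\]
Here the role of the stated constant becomes visible: $c_0:=4\sqrt{\sqrt2-1}$ is exactly the positive root of $c^4+32c^2-256=0$, equivalently of $64-8c^2-\tfrac{c^4}{4}=0$ (indeed $c_0^2=16(\sqrt2-1)$ and $c_0^4=256(3-2\sqrt2)$), so at $c=c_0$ the $\lambda^2$-term vanishes identically and $P(c_0/2)=-3c_0^4\lambda^3-2c_0^2\lambda-4<0$ for all $\lambda\ge 0$. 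For $c\ge c_0$ I would then note $\partial_cP(c/2)=-12c^3\lambda^3-(c^3+16c)\lambda^2-4c\lambda<0$, so $P(c/2)\le P(c_0/2)<0$. Combining, $\partial_z\eqref{eq:thm3:2:29}=c\lambda\,P(z)/D(z)^2<0$ on the whole region, which is the claim.

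Since all the algebra is elementary, the only genuinely delicate part is bookkeeping: getting the simplification $N'D-ND'=c\lambda\,P(z)$ exactly right, because a sign slip in the linear coefficient of $P$ would place the vertex of the parabola at a positive $z$ and force a (spurious) case split on whether $c/2$ lies left or right of that vertex, making the argument look far more fragile than it is. The one conceptual point worth highlighting in the write-up is that $4\sqrt{\sqrt2-1}$ is precisely the value that annihilates the $\lambda^2$-coefficient of $P(c/2)$, which is why the threshold in the statement takes this particular form.
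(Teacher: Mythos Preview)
Your proposal is correct and takes essentially the same approach as the paper: compute the $z$-derivative, factor out $c\lambda$, and show the remaining polynomial is nonpositive on $z\ge c/2$ using that the threshold $c_0=4\sqrt{\sqrt2-1}$ is the positive root of $c^4+32c^2-256=0$. The paper organizes the final step slightly differently---it keeps the terms $12c^2\lambda^3z^2$ and $4c\lambda z$ intact and only replaces the group $c^2z^2+16cz-64$ by its value at $z=c/2$, whereas you observe that $P$ is decreasing on $[0,\infty)$ and evaluate all of $P$ at $z=c/2$---but the substance and the key numerical identity are identical.
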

\begin{proof}
The derivative of \eqref{eq:thm3:2:29} with respect to $z$ is
\begin{align*}
& -\frac{c \lambda \left(12 c^2 \lambda^3 z^2+\lambda^2 \left(c^2 z^2+16 c z-64\right)+4 c \lambda z+4\right)}{(c \lambda z+2)^2 (\lambda (c z+4)+1)^2} \\
\leq & -\frac{c \lambda \left(12 c^2 \lambda^3 z^2+\lambda^2 \left(\frac{c^4}{4} +8 c^2-64\right)+4 c \lambda z+4\right)}{(c \lambda z+2)^2 (\lambda (c z+4)+1)^2} \\
\leq & 0.
\end{align*}
\end{proof}

\begin{lemma}
\label{lem:thm3:2:34}
\eqref{eq:thm3:2:34} is decreasing monotonically with respect to $c$ when $c \geq 4 \sqrt{\sqrt{2}-1} = 2.57$.
\end{lemma}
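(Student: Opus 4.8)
The plan is a direct monotonicity computation, set up so that the constant $4\sqrt{\sqrt{2}-1}$ emerges naturally. Since the expression in~\eqref{eq:thm3:2:34} depends on $c$ only through $c^2$, I would first substitute $u=c^2$ and write it as $f(u)=N(u)/D(u)$ with $N(u)=(24\lambda^2+2\lambda)u+(32\lambda+8)$ affine in $u$ and $D(u)=\lambda^2u^2+(8\lambda^2+6\lambda)u+(32\lambda+8)$ quadratic in $u$. Because $\frac{d}{dc}f(c^2)=\frac{N'(u)D(u)-N(u)D'(u)}{D(u)^2}\,2c$ with $c>0$ and $D(u)^2>0$, the lemma reduces to showing $N'(u)D(u)-N(u)D'(u)\le 0$ for every $\lambda>0$ and every $u\ge u_0:=16(\sqrt{2}-1)$, where $u_0=(4\sqrt{\sqrt{2}-1})^2$.

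Second, I would expand $N'D-ND'$. Since $N$ is affine, the $u^2$-terms partially cancel and one obtains $N'(u)D(u)-N(u)D'(u)=-2\lambda^3(12\lambda+1)u^2-16\lambda^2(4\lambda+1)u+32\lambda(16\lambda^2-1)$. Dividing by $-2\lambda<0$, the target becomes the polynomial inequality $\Psi(u,\lambda):=\lambda^2(12\lambda+1)u^2+8\lambda(4\lambda+1)u-16(16\lambda^2-1)\ge 0$ for all $\lambda>0$ and $u\ge u_0$.

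Third, I would exploit monotonicity in $u$: $\partial_u\Psi=2\lambda^2(12\lambda+1)u+8\lambda(4\lambda+1)>0$ for all $\lambda>0$, $u>0$, so $\Psi(u,\lambda)\ge\Psi(u_0,\lambda)$, and it suffices to verify $\Psi(u_0,\lambda)\ge 0$. Substituting $u_0=16(\sqrt{2}-1)$ and $u_0^2=256(3-2\sqrt{2})$ and grouping by powers of $\lambda$, the coefficient of $\lambda^2$ is $u_0^2+32u_0-256=256(3-2\sqrt{2})+512(\sqrt{2}-1)-256=0$; this vanishing is precisely the reason the threshold $4\sqrt{\sqrt{2}-1}$ is chosen. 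What remains is $\Psi(u_0,\lambda)=3072(3-2\sqrt{2})\lambda^3+128(\sqrt{2}-1)\lambda+16$, whose coefficients are all strictly positive because $3>2\sqrt{2}$ and $\sqrt{2}>1$. Hence $\Psi(u_0,\lambda)>0$ for every $\lambda>0$, which proves Lemma~\ref{lem:thm3:2:34}; in particular the conclusion covers the $\lambda\ge\lambda_0>0$ used in Lemma~\ref{lem:thm3:2:g2}.

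The only genuine effort is the symbolic bookkeeping in the second and third steps: one must carry the $\lambda$-dependence correctly through a polynomial of degree two in $u$ and degree three in $\lambda$, and, above all, confirm the cancellation of the $\lambda^2$-coefficient at $u=u_0$. A sign slip there would either render the bound false for some range of $c$ or force an unnecessary case split on the sign of $16\lambda^2-1$; with that cancellation verified, the conclusion is immediate and uniform in $\lambda>0$.
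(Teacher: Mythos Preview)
Your argument is correct and is essentially the same as the paper's: both compute the derivative with respect to $c$ and verify that the bracketed polynomial $12c^4\lambda^3+(c^4+32c^2-256)\lambda^2+8c^2\lambda+16$ is nonnegative, which follows once $c^4+32c^2-256\ge 0$, i.e.\ $c^2\ge 16(\sqrt{2}-1)$. The paper simply states the derivative and the inequality, while you substitute $u=c^2$ and spell out the vanishing of the $\lambda^2$-coefficient at the threshold; this is exactly the same computation written in more detail.
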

\begin{proof}
The derivative of \eqref{eq:thm3:2:34} with respect to $c$ is
$$-\frac{4 c \lambda \left(12 c^4 \lambda^3+8 c^2 \lambda+\left(c^4+32 c^2-256\right) \lambda^2+16\right)}{\left(c^2 \lambda+4\right)^2 \left(\left(c^2+8\right) \lambda+2\right)^2} \leq 0.$$
\end{proof}

\begin{lemma}
\label{lem:thm3:2:g2a}
\eqref{eq:thm3:2:g2a} is decreasing monotonically when $\lambda\geq0$.
\end{lemma}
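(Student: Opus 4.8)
\textbf{Proof proposal for Lemma~\ref{lem:thm3:2:g2a}.}

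The expression in \eqref{eq:thm3:2:g2a} is the single-variable rational function $f(\lambda)=\dfrac{192\lambda^2+20\lambda+1}{576\lambda^2+52\lambda+1}$; unlike the earlier bounds it no longer involves $x,y,z$ or $c$, so the claim is purely a one-variable monotonicity statement and the only tool needed is the quotient rule, exactly as in the proofs of Lemma~\ref{lem:thm3:2:g1a}, Lemma~\ref{lem:thm3:2:8}, etc. The plan is: (1) differentiate $f$ with respect to $\lambda$; (2) expand and simplify the numerator of $f'$; (3) observe that the cubic terms cancel, leaving a manifestly negative quadratic; (4) conclude that $f'(\lambda)<0$ for all $\lambda\ge 0$, hence $f$ is strictly decreasing there.

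Carrying this out: by the quotient rule the numerator of $f'(\lambda)$ is
\[
(384\lambda+20)(576\lambda^2+52\lambda+1)-(192\lambda^2+20\lambda+1)(1152\lambda+52).
\]
The crucial simplification is that the two $\lambda^3$ contributions are both $384\cdot 576\,\lambda^3=192\cdot 1152\,\lambda^3=221184\,\lambda^3$ and cancel; collecting the remaining terms yields $-1536\lambda^2-768\lambda-32=-32(48\lambda^2+24\lambda+1)$. Since the denominator $(576\lambda^2+52\lambda+1)^2$ is positive and $48\lambda^2+24\lambda+1\ge 1$ for $\lambda\ge 0$, we get
\[
\deriv{f}{\lambda}=\frac{-32(48\lambda^2+24\lambda+1)}{(576\lambda^2+52\lambda+1)^2}<0,
\]
which proves the lemma. (The bound $g_2(x,y,z)\le\frac{192\lambda_0^2+20\lambda_0+1}{576\lambda_0^2+52\lambda_0+1}$ in the last line of Lemma~\ref{lem:thm3:2:g2} then follows by substituting the smallest admissible value $\lambda=\lambda_0$.)

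I do not expect any real obstacle here: the computation is routine and the only step requiring the slightest care is verifying the exact cancellation of the leading $\lambda^3$ terms, which is what makes the numerator of $f'$ a quadratic rather than a cubic. An alternative, derivative-free route — writing $f(\lambda)=\tfrac13\cdot\frac{576\lambda^2+60\lambda+3}{576\lambda^2+52\lambda+1}=\tfrac13\bigl(1+\frac{8\lambda+2}{576\lambda^2+52\lambda+1}\bigr)$ and arguing that $\frac{8\lambda+2}{576\lambda^2+52\lambda+1}$ decreases because its reciprocal $\frac{576\lambda^2+52\lambda+1}{8\lambda+2}$ increases for $\lambda\ge 0$ — would also work, but the direct differentiation is shorter and matches the style used throughout this section.
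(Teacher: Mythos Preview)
Your proof is correct and takes exactly the same approach as the paper: both compute the derivative via the quotient rule and obtain $\deriv{}{\lambda}\frac{192\lambda^2+20\lambda+1}{576\lambda^2+52\lambda+1}=-\frac{32(48\lambda^2+24\lambda+1)}{(576\lambda^2+52\lambda+1)^2}<0$. The paper simply states this derivative without the intermediate expansion, while you spell out the cancellation of the $\lambda^3$ terms, but the argument is identical.
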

\begin{proof}
The derivative of \eqref{eq:thm3:2:g2a} with respect to $\lambda$ is $-\frac{32(48\lambda^2+24\lambda+1)}{(576\lambda^2+52\lambda+1)^2}<0$.
\end{proof}

\begin{lemma}
\label{lem:3:dghx}
If $x,y \geq c/2 \geq 1$, $c \geq 1.60$, and $\lambda\geq\lambda_0>0$, then there is some constant $\alpha(\lambda_0)<1$ such that
$\frac{\left|\pderiv{\hat{g}}{x}\right|\Phi\left(x\right)}{\Phi\left(\hat{g}\left(x,y\right)\right)} \leq
\alpha(\lambda_0)<1$.
\end{lemma}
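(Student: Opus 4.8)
The plan is to handle this the same way the $\hat g$-recursion was treated in the proof of Theorem~\ref{thm:2} (cf. Lemma~\ref{lem:thm2:ghx}), the only new features being that the variable range is $x,y\ge c/2$ (with $c\ge 2$ forced by $c/2\ge 1$) and that $\lambda$ is bounded only from below by $\lambda_0$, so the rate we produce will necessarily depend on $\lambda_0$. First I would record that, with $\Phi(x)=x$ and $\hat{g}(x,y)=\frac{\lambda c xy+\lambda y+x}{\lambda xy+1}$, the quotient rule gives $\pderiv{\hat{g}}{x}=\frac{1+c\lambda y-\lambda^2 y^2}{(\lambda xy+1)^2}$, hence
\[
\frac{\left|\pderiv{\hat{g}}{x}\right|\Phi(x)}{\Phi(\hat{g}(x,y))}=\frac{\left|1+c\lambda y-\lambda^2 y^2\right|x}{(\lambda xy+1)(\lambda c xy+\lambda y+x)}\le\frac{(1+c\lambda y+\lambda^2 y^2)x}{(\lambda xy+1)(\lambda c xy+\lambda y+x)}=:Q(x,y).
\]

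The second step is to cut down the free variables. I would show $Q$ is nonincreasing in $x$ on $x\ge 1$: writing $Q(x,y)=\frac{1+c\lambda y+\lambda^2 y^2}{D(x)/x}$ with $D(x)=(\lambda xy+1)(\lambda c xy+\lambda y+x)$, a direct expansion gives $D(x)/x=\lambda^2 c xy^2+\lambda xy+\lambda^2 y^2+\lambda c y+\lambda y/x+1$, whose derivative in $x$ equals $\lambda^2 c y^2+\lambda y-\lambda y/x^2=\lambda y\,(\lambda c y+1-1/x^2)\ge\lambda^2 c y^2>0$ for $x\ge 1$; thus $D(x)/x$ is increasing and $Q$ decreasing. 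Since $x\ge c/2\ge 1$ this gives $Q(x,y)\le Q(c/2,y)$, and dividing numerator and denominator of $Q(c/2,y)$ by $c/2$ yields
\[
Q(c/2,y)=\frac{\lambda^2 y^2+c\lambda y+1}{\left(\frac{c^2}{2}+1\right)\lambda^2 y^2+\left(\frac{3c}{2}+\frac{2}{c}\right)\lambda y+1}.
\]

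The last step is to produce a gap that is uniform over the range. Putting $t=\lambda y$, the hypotheses $\lambda\ge\lambda_0$ and $y\ge c/2\ge 1$ give $t\ge\lambda_0\cdot\frac{c}{2}\ge\lambda_0$. Writing $Q(c/2,y)=\frac{N}{N+E}$ with $N=t^2+ct+1$ and $E=\frac{c^2}{2}t^2+\left(\frac{c}{2}+\frac{2}{c}\right)t$, I would bound
\[
\frac{E}{N}\ge\frac{\frac{c^2}{2}t^2}{t^2+ct+1}\ge\frac{\frac{c^2}{2}t^2}{t^2+(c+\frac{1}{\lambda_0})t}=\frac{\frac{c^2}{2}t}{t+c+\frac{1}{\lambda_0}}\ge\frac{\frac{c^2}{2}\lambda_0}{\lambda_0+c+\frac{1}{\lambda_0}}\ge\frac{2\lambda_0}{\lambda_0+2+\frac{1}{\lambda_0}}=:\kappa(\lambda_0)>0,
\]
using $1\le t/\lambda_0$ in the second inequality, the fact that $t\mapsto\frac{at}{t+b}$ ($a,b>0$) is increasing in the fourth, and the fact that $c\mapsto\frac{(c^2/2)\lambda_0}{\lambda_0+c+1/\lambda_0}$ is increasing on $c\ge 2$ (its $c$-derivative has the sign of $c/2+\lambda_0+1/\lambda_0>0$) in the last. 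Therefore $Q(x,y)\le Q(c/2,y)=\frac{1}{1+E/N}\le\frac{1}{1+\kappa(\lambda_0)}=\frac{(\lambda_0+1)^2}{3\lambda_0^2+2\lambda_0+1}=:\alpha(\lambda_0)<1$, which is exactly the asserted bound.

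The only real obstacle here is organizational rather than mathematical: the estimate must be kept uniform over all admissible $c$ while exposing the dependence on $\lambda_0$, and this dependence cannot be removed — since $Q(c/2,y)\to 1$ as $\lambda y\to 0$, and it is precisely the hypothesis $\lambda\ge\lambda_0$ that keeps $t=\lambda y$ bounded away from $0$. Every individual estimate above is a one-variable monotonicity check of the same type as in the preceding lemmas, so no genuinely new difficulty arises.
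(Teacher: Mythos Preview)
Your argument is correct. After the shared first two moves---the triangle-inequality bound $|1+c\lambda y-\lambda^2y^2|\le 1+c\lambda y+\lambda^2y^2$ and the monotonicity in $x$ to plug in $x=c/2$---you and the paper diverge. The paper continues with a chain of separate monotonicity lemmas: it shows the resulting expression is decreasing in $y$ (this is where the hypothesis $c\ge 1.60$ is actually used), plugs in $y=c/2$, then shows the result is decreasing in $c$, plugs in $c=2$, and ends with the explicit single-variable bound $\max\bigl\{\tfrac{\lambda_0^2+2\lambda_0+2}{2\lambda_0^2+5\lambda_0+2},\tfrac12\bigr\}$. You instead set $t=\lambda y$, write $Q=N/(N+E)$, and bound $E/N$ from below by dropping the linear part of $E$ and using $1\le t/\lambda_0$, ending with $\alpha(\lambda_0)=\tfrac{(\lambda_0+1)^2}{3\lambda_0^2+2\lambda_0+1}$. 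Your route is a little shorter and never invokes the $c\ge 1.60$ hypothesis (you only use $c\ge 2$, which is already forced by $c/2\ge 1$); the paper's route yields a sharper numerical rate (e.g.\ $5/9$ versus your $2/3$ at $\lambda_0=1$), which is irrelevant for the qualitative statement but would matter if one cared about optimizing constants.
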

\begin{proof}
\begin{align}
& \frac{\left|\pderiv{\hat{g}}{x}\right|\Phi\left(x\right)}{\Phi\left(\hat{g}\left(x,y\right)\right)} \\
=& \frac{x \left|c \lambda y-\lambda^2 y^2+1\right|}{(\lambda x y+1) (c \lambda x y+\lambda y+x)} \\
\leq & \frac{x \left(c \lambda y+\lambda^2 y^2+1\right)}{(\lambda x y+1) (c \lambda x y+\lambda y+x)} \label{eq:thm3:14} \\
\leq & \frac{2 c \left(c \lambda y+\lambda^2 y^2+1\right)}{(c \lambda y+2) \left(c^2 \lambda y+c+2 \lambda y\right)} \label{eq:thm3:16} && \text{by Lemma~\ref{lem:thm3:14}} \\
\leq & \frac{2 c^2 \lambda (\lambda+2)+8}{\left(c^2 \lambda+4\right) \left(\left(c^2+2\right) \lambda+2\right)} \label{eq:thm3:22} && \text{by Lemma~\ref{lem:thm3:16}} \\
\leq & \frac{\lambda^2+2 \lambda+2}{2 \lambda^2+5 \lambda+2} && \text{by Lemma~\ref{lem:thm3:22}} \\
\leq & \max\left\{\frac{\lambda_0^2+2\lambda_0+2}{2\lambda_0^2+5\lambda_0+2},\frac{1}{2}\right\}. && \text{by Lemma~\ref{lem:thm3:dghxa}}.
\end{align}
\end{proof}

\begin{lemma}
\label{lem:thm3:14}
\eqref{eq:thm3:14} is decreasing monotonically with respect to $x$ when $x \geq c/2 \geq 1$.
\end{lemma}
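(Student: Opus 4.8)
The plan is to prove the stated monotonicity by differentiating the function
\[
\phi(x)=\frac{x\left(c\lambda y+\lambda^2 y^2+1\right)}{(\lambda x y+1)\left(c\lambda xy+\lambda y+x\right)}
\]
in the variable $x$ alone (with $c,\lambda,y$ regarded as fixed positive parameters satisfying $y\ge c/2\ge 1$) and showing $\phi'(x)\le 0$ for $x\ge c/2$. I would first pass to the reciprocal, since that exposes the structure most cleanly. Put $u=\lambda y>0$ and $K=c\lambda y+\lambda^2 y^2+1=u^2+cu+1>0$ (note $K$ does not depend on $x$). Expanding the denominator gives
\[
(\lambda x y+1)\left(c\lambda xy+\lambda y+x\right)=u(cu+1)x^2+\left(u^2+cu+1\right)x+u=u(cu+1)x^2+Kx+u,
\]
and the key observation is the coincidence that the coefficient of the linear term equals exactly the numerator factor $K$. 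Hence
\[
\frac{1}{\phi(x)}=\frac{u(cu+1)}{K}\,x+1+\frac{u}{K}\cdot\frac{1}{x},
\]
a constant plus an increasing linear term plus a decreasing $1/x$ term, with derivative $\dfrac{u}{K}\bigl((cu+1)-x^{-2}\bigr)$.

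Then it only remains to settle the sign. The reciprocal $1/\phi$ is increasing --- equivalently $\phi$ is decreasing --- precisely on the range $x^{2}\ge \frac{1}{cu+1}$. Since $c/2\ge 1$ forces $c\ge 2$, and $\lambda,y>0$, we have $cu+1=c\lambda y+1>1$, so $\frac{1}{cu+1}<1\le (c/2)^2\le x^2$ for every $x\ge c/2$. Therefore $\phi'(x)\le 0$ (in fact strictly) on $x\ge c/2$, which is the claim. An equivalent and equally short route is the direct quotient rule: with $D(x)$ the denominator above, $\phi'(x)=\dfrac{K\bigl(D(x)-xD'(x)\bigr)}{D(x)^2}$ and $D(x)-xD'(x)=u\bigl(1-(cu+1)x^2\bigr)$, so the same elementary inequality $x^2\,(c\lambda y+1)\ge 1$ finishes it.

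I do not foresee a genuine obstacle here: once the denominator is expanded and the linear-coefficient coincidence is noticed, the monotonicity is immediate, and the only point that warrants an explicit sentence in the writeup is the terminal estimate $x^{2}(c\lambda y+1)\ge 1$, which uses all of the hypotheses $x\ge c/2$, $c\ge 2$, and $\lambda y>0$. The purpose of the lemma within the paper is simply to license replacing $x$ by its lower bound $c/2$ in the chain of upper bounds \eqref{eq:thm3:14}--\eqref{eq:thm3:22} that feeds into Lemma~\ref{lem:3:dghx}.
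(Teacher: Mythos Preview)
Your proof is correct and is essentially the same as the paper's: the paper simply writes down the derivative
\[
-\frac{\lambda y\left(c\lambda y+\lambda^2 y^2+1\right)\left(x^2(c\lambda y+1)-1\right)}{(\lambda xy+1)^2(c\lambda xy+\lambda y+x)^2},
\]
which in your notation is $\dfrac{Ku\bigl(1-(cu+1)x^2\bigr)}{D(x)^2}$, and the sign reduces to exactly your inequality $x^2(c\lambda y+1)\ge 1$. Your reciprocal presentation is a mild cosmetic variation, not a different argument. One small remark: you state the hypothesis as $y\ge c/2\ge 1$, but the lemma only assumes $x\ge c/2\ge 1$; since you only ever use $\lambda y>0$ (to get $cu+1>1$) together with $x\ge 1$, this slip is harmless.
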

\begin{proof}
The derivative of \eqref{eq:thm3:14} with respect to $x$ is
$$-\frac{\lambda y \left(c \lambda y+\lambda^2 y^2+1\right) \left(x^2 (c \lambda y+1)-1\right)}{(\lambda x y+1)^2 (c \lambda x y+\lambda y+x)^2} \leq 0.$$
\end{proof}

\begin{lemma}
\label{lem:thm3:16}
\eqref{eq:thm3:16} is decreasing monotonically with respect to $y$ when $c \geq \sqrt{\frac{1}{2} \left(1+\sqrt{17}\right)} = 1.60$.
\end{lemma}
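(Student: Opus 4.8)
The plan is to prove this exactly as its companion monotonicity lemmas (e.g.\ Lemma~\ref{lem:thm3:14}): by direct differentiation, with the numerical threshold on $c$ read off from a single coefficient. First I would write \eqref{eq:thm3:16} as $N(y)/D(y)$ with
\[
N(y)=2c\bigl(c\lambda y+\lambda^2 y^2+1\bigr),\qquad
D(y)=(c\lambda y+2)\bigl((c^2+2)\lambda y+c\bigr),
\]
and note $D(y)>0$ for all $y,\lambda,c>0$, so the sign of $\frac{d}{dy}\!\left(\frac{N}{D}\right)=\frac{N'D-ND'}{D^2}$ equals the sign of $N'D-ND'$.

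Next I would expand $N'D-ND'$. Writing $N'=2c\lambda(c+2\lambda y)$ and $D'=\lambda\bigl(2c(c^2+2)\lambda y+(3c^2+4)\bigr)$ and collecting terms, the cubic‑in‑$y$ contributions cancel and one is left with
\[
N'D-ND'=2c\lambda\Bigl((4+c^2-c^4)\lambda^2 y^2-2c^3\lambda y-(c^2+4)\Bigr).
\]
This expansion is the only algebraic content of the lemma and is routine; the point to verify carefully is that the coefficient of $\lambda^2 y^2$ comes out exactly as $4+c^2-c^4$.

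The key observation is then that $4+c^2-c^4\le 0$ precisely when $c^4-c^2-4\ge 0$, i.e.\ (solving the quadratic in $c^2$) when $c^2\ge \tfrac12(1+\sqrt{17})$, which is exactly the hypothesis $c\ge\sqrt{\tfrac12(1+\sqrt{17})}=1.60$. Under this hypothesis the leading term is $\le 0$, while $-2c^3\lambda y$ and $-(c^2+4)$ are strictly negative for $y,\lambda,c>0$; hence $N'D-ND'\le 0$, so \eqref{eq:thm3:16} is monotonically decreasing in $y$, as claimed. The only ``obstacle'' is bookkeeping — ensuring the cubic terms cancel and the $\lambda^2 y^2$ coefficient is correct — since the entire numerical constant $1.60$ is dictated by forcing that coefficient to be nonpositive; no further idea is required.
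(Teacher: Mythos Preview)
Your proposal is correct and follows essentially the same approach as the paper: differentiate \eqref{eq:thm3:16} directly and show the numerator of the derivative has the required sign exactly when $c^4-c^2-4\ge 0$, which is the stated threshold $c\ge\sqrt{\tfrac12(1+\sqrt{17})}$. The paper's writeup groups the numerator so that the first three terms form the perfect square $(c^2\lambda y+c)^2$ with the remaining $(c^4-c^2-4)\lambda^2 y^2$ term carrying the condition on $c$; your term-by-term sign check is an equivalent (and equally short) way to reach the same conclusion.
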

\begin{proof}
The derivative of \eqref{eq:thm3:16} with respect to $y$ is
$$-\frac{2 c \lambda \left(c^4 \lambda^2 y^2+2 c^3 \lambda y + c^2 + \left(c^4 - c^2 - 4\right) \lambda^2 y^2\right)}{(c \lambda y+2)^2 \left(c^2 \lambda y+c+2 \lambda y\right)^2} \leq 0.$$
\end{proof}

\begin{lemma}
\label{lem:thm3:22}
\eqref{eq:thm3:22} is decreasing monotonically with respect to $c$ when $c \geq \sqrt{2}$.
\end{lemma}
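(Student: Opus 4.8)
The plan is to follow the same template used for the other monotonicity lemmas in this section: differentiate, factor the numerator, and reduce the sign check to the left endpoint of the range. First I would observe that $c$ enters \eqref{eq:thm3:22} only through $c^2$, so it is convenient to set $v=c^2$ and write the expression as $f(v)=N(v)/D(v)$, where $N(v)=2\lambda(\lambda+2)v+8$ is linear in $v$ and $D(v)=(\lambda v+4)\bigl(\lambda(v+2)+2\bigr)=\lambda^2 v^2+(2\lambda^2+6\lambda)v+8\lambda+8$ is quadratic in $v$. Since $\tfrac{dv}{dc}=2c>0$, it suffices to prove $f'(v)\le 0$ for $v\ge 2$; here $\lambda>0$ is inherited from the hypothesis $\lambda\ge\lambda_0>0$ of Lemma~\ref{lem:3:dghx}.

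Next I would compute $f'(v)=\bigl(N'(v)D(v)-N(v)D'(v)\bigr)/D(v)^2$. The denominator is positive, so the sign of $f'$ is the sign of the numerator, which I expect to simplify to
\[
N'(v)D(v)-N(v)D'(v)=-2\lambda\,\Bigl(\lambda^2(\lambda+2)\,v^2+8\lambda\,v-8(\lambda^2+2\lambda-1)\Bigr).
\]
The key structural point is that the quadratic $Q(v)=\lambda^2(\lambda+2)v^2+8\lambda v-8(\lambda^2+2\lambda-1)$ has strictly positive leading coefficient and strictly positive coefficient of $v$, hence $Q'(v)=2\lambda^2(\lambda+2)v+8\lambda>0$ on $v\ge 0$; so $Q$ is strictly increasing there and it is enough to check $Q(2)\ge 0$. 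A direct evaluation gives $Q(2)=4\lambda^2(\lambda+2)+16\lambda-8(\lambda^2+2\lambda-1)=4\lambda^3+8=4(\lambda^3+2)>0$ for $\lambda>0$. Therefore $Q(v)>0$ for all $v\ge 2$, so the numerator $-2\lambda\,Q(v)<0$, whence $f'(v)<0$ and \eqref{eq:thm3:22} is strictly decreasing in $c$ on $c\ge\sqrt 2$.

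Equivalently, and in the style of the surrounding proofs, the whole argument can be packaged as the single observation that the derivative of \eqref{eq:thm3:22} with respect to $c$ equals
\[
-\frac{4c\lambda\bigl(\lambda^2(\lambda+2)c^4+8\lambda c^2-8(\lambda^2+2\lambda-1)\bigr)}{\bigl(c^2\lambda+4\bigr)^2\bigl((c^2+2)\lambda+2\bigr)^2},
\]
which is $\le 0$ for $c\ge\sqrt 2$ and $\lambda>0$, the inequality being justified exactly by the monotonicity-plus-endpoint argument above. I do not anticipate a genuine obstacle: the degrees in every variable are small, so expanding $N'D-ND'$ is routine bookkeeping, and the only things to get right are the sign pattern of the coefficients of $Q$ and the value $Q(2)=4(\lambda^3+2)$. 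The one mild subtlety worth double-checking is that the reduction ``$Q$ increasing in $v$, so check $v=2$'' uses only $v\ge 0$ and $\lambda>0$, matching the stated hypotheses — no lower bound on $c$ beyond $\sqrt 2$ is needed for this particular step, and the constant $\sqrt 2$ enters precisely as the value of $v$ at which $Q$ is evaluated.
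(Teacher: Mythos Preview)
Your proof is correct and follows the same overall strategy as the paper: compute the derivative of \eqref{eq:thm3:22} with respect to $c$, obtain the same rational function, and verify that the bracketed polynomial in the numerator is nonnegative so that the whole expression is $\le 0$. (The paper's proof contains a typo, saying ``with respect to $\lambda$'' when it means $c$, but the displayed formula is indeed the $c$-derivative.)

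The only difference is in how the bracketed polynomial is organized. The paper writes it as a polynomial in $\lambda$:
\[
c^4\lambda^3+2(c^4-4)\lambda^2+8(c^2-2)\lambda+8,
\]
and since $c\ge\sqrt{2}$ gives $c^4\ge 4$ and $c^2\ge 2$, every coefficient is visibly nonnegative and the conclusion is immediate. You instead write it as a quadratic $Q(v)$ in $v=c^2$ and use the monotonicity-plus-endpoint argument $Q(2)=4(\lambda^3+2)>0$. Both work; the paper's grouping is a touch more direct because the hypothesis $c\ge\sqrt 2$ plugs straight into the coefficients, while yours trades that for a short extra step. Either presentation is fine.
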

\begin{proof}
The derivative of \eqref{eq:thm3:22} with respect to $\lambda$ is
$$-\frac{4 c \lambda \left(c^4 \lambda^3+2 \left(c^4-4\right) \lambda^2+8 \left(c^2-2\right) \lambda+8\right)}{\left(c^2 \lambda+4\right)^2 \left(\left(c^2+2\right) \lambda+2\right)^2} \leq 0.$$
\end{proof}

\begin{lemma}
\label{lem:thm3:dghxa}
If $\lambda \geq \lambda_0>1$, then
$\frac{\lambda^2+2 \lambda+2}{2 \lambda^2+5 \lambda+2} \leq
\max\left\{\frac{\lambda_0^2+2 \lambda_0+2}{2 \lambda_0^2+5 \lambda_0+2},\frac{1}{2}\right\}<1$.
\end{lemma}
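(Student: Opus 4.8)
The plan is to treat this as an elementary single-variable calculus fact about the rational function $f(\lambda)=\frac{\lambda^2+2\lambda+2}{2\lambda^2+5\lambda+2}$. First I would record two basic observations: the denominator $2\lambda^2+5\lambda+2$ has both roots negative (namely $-\tfrac12$ and $-2$), so $f$ is smooth and strictly positive on $(0,+\infty)$; and $\lim_{\lambda\to+\infty}f(\lambda)=\tfrac12$.

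Next I would compute $f'(\lambda)=\frac{(2\lambda+2)(2\lambda^2+5\lambda+2)-(\lambda^2+2\lambda+2)(4\lambda+5)}{(2\lambda^2+5\lambda+2)^2}$ and simplify the numerator; the cubic terms cancel and it reduces to $\lambda^2-4\lambda-6$, whose only positive root is $\lambda^\ast=2+\sqrt{10}\approx 5.16$. Since $\lambda^2-4\lambda-6<0$ on $(0,\lambda^\ast)$ and $>0$ on $(\lambda^\ast,+\infty)$, the function $f$ is strictly decreasing on $(0,\lambda^\ast)$ and strictly increasing on $(\lambda^\ast,+\infty)$, with its unique minimum at $\lambda^\ast$. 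Because $f$ increases on $(\lambda^\ast,+\infty)$ towards the limit $\tfrac12$, it follows that $f(\lambda)<\tfrac12$ for every $\lambda\ge\lambda^\ast$.

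Then, for a given $\lambda\ge\lambda_0$, I would split into two cases. If $\lambda\le\lambda^\ast$, then $f$ is decreasing on the interval $[\lambda_0,\lambda^\ast]$ that contains $\lambda$, so $f(\lambda)\le f(\lambda_0)$. If $\lambda\ge\lambda^\ast$, then $f(\lambda)\le\tfrac12$ by the previous paragraph. In either case $f(\lambda)\le\max\{f(\lambda_0),\tfrac12\}$, which is the asserted inequality. Finally, for the strict upper bound $<1$: clearly $\tfrac12<1$, and $f(\lambda_0)<1$ is equivalent to $\lambda_0^2+2\lambda_0+2<2\lambda_0^2+5\lambda_0+2$, i.e.\ $0<\lambda_0^2+3\lambda_0$, which holds for $\lambda_0>1$; hence $\max\{f(\lambda_0),\tfrac12\}<1$.

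I do not anticipate a genuine obstacle here, since the whole argument is routine calculus; the only point requiring a little care is the sign analysis of $f'$ — i.e.\ verifying that the numerator simplifies exactly to $\lambda^2-4\lambda-6$ and that this quadratic is negative on all of $(1,\lambda^\ast)$ and positive beyond it — which I would confirm by direct expansion and by evaluating the quadratic at a couple of points of the relevant intervals.
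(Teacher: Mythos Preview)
Your proposal is correct and follows essentially the same approach as the paper: both compute $f'(\lambda)=\dfrac{\lambda^2-4\lambda-6}{(2\lambda^2+5\lambda+2)^2}$, observe that $f$ is decreasing then increasing on $(1,\infty)$, and conclude the supremum on $[\lambda_0,\infty)$ is $\max\{f(\lambda_0),\tfrac12\}$. Your write-up is in fact a bit more careful than the paper's, since you explicitly identify the right ``boundary value'' as the limit $\lim_{\lambda\to\infty}f(\lambda)=\tfrac12$ and verify the strict inequality $\max\{f(\lambda_0),\tfrac12\}<1$, whereas the paper just says ``its maximum is achieved on either boundary.''
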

\begin{proof}
Since
$\deriv{}{\lambda}\frac{\lambda^2+2 \lambda+2}{2 \lambda^2+5 \lambda+2}
= \frac{\lambda^2-4\lambda-6}{(2\lambda^2+5\lambda+2)^2}$, it is decreasing and then increasing when $\lambda>1$.
Hence its maximum is achieved on either boundary.
\end{proof}

\begin{lemma}
\label{lem:3:dghy}
If $c/2 \leq x,y \leq c+2/c$ and $c>0$, then there exists a constant $\alpha(c)<1$ such that
$\frac{\left|\pderiv{\hat{g}}{y}\right|\Phi\left(y\right)}{\Phi\left(\hat{g}\left(x,y\right)\right)} \leq \alpha(c) <1$.
\end{lemma}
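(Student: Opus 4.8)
The plan is to reuse the observation from Lemma~\ref{lem:thm2:ghy} that $\hat{g}(x,y)=h(y)\mid_{\mu=x}$, so that $\pderiv{\hat{g}}{y}=\deriv{h}{x}(y)\mid_{\mu=x}$. Either through this identity (using the formula for $\deriv{h}{x}$ from the proof of Lemma~\ref{lem:dh}) or by differentiating $\hat{g}(x,y)=\frac{\lambda cxy+\lambda y+x}{\lambda xy+1}$ directly, one finds after cancellation of the $\lambda cx^2y$ and $\lambda xy$ terms that
\[
\pderiv{\hat{g}}{y}=\frac{\lambda(cx+1-x^2)}{(\lambda xy+1)^2}.
\]
Since here $\Phi(x)=x$, the amortized decay rate to be bounded is
\[
\frac{\left|\pderiv{\hat{g}}{y}\right|\Phi(y)}{\Phi(\hat{g}(x,y))}
=\frac{\lambda y\,|cx+1-x^2|}{(\lambda xy+1)(\lambda cxy+\lambda y+x)}.
\]

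The key step is to expand the denominator as a quadratic in $\lambda$:
\[
(\lambda xy+1)(\lambda cxy+\lambda y+x)=\lambda^2 xy^2(cx+1)+\lambda y(x^2+cx+1)+x.
\]
Because $c>0$ and $x,y\in[c/2,\,c+2/c]$ force $c,x,y,\lambda>0$ and $cx+1>0$, every coefficient is positive, so the denominator is at least $\lambda y(x^2+cx+1)$. Dividing, all dependence on $\lambda$ (and on $y$) disappears, and the rate is bounded by $\dfrac{|cx+1-x^2|}{x^2+cx+1}$. This is strictly below $1$ for every $x,c>0$: dropping the absolute value, $cx+1-x^2<x^2+cx+1$ reduces to $0<2x^2$, and $-(cx+1-x^2)<x^2+cx+1$ reduces to $0<2(cx+1)$, both trivially true.

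It remains to promote this pointwise strict inequality to a uniform constant on the box $c/2\le x,y\le c+2/c$. After the previous step the bound depends only on $x$, and $x\mapsto\frac{|cx+1-x^2|}{x^2+cx+1}$ is continuous on the compact interval $[c/2,c+2/c]$, hence attains a maximum $\alpha(c)<1$. For an explicit version: on this interval $cx+1-x^2$ has derivative $c-2x\le0$, so it is decreasing, giving $|cx+1-x^2|\le\max\{c^2/4+1,\;1+4/c^2\}$ (its values at the two endpoints), while $x^2+cx+1\ge 3c^2/4+1$. Thus one may take
\[
\alpha(c)=\frac{\max\{c^2/4+1,\;1+4/c^2\}}{3c^2/4+1},
\]
which is comfortably below $1$ for all $c$ in the range $c\ge 2.57$ relevant to Theorem~\ref{thm:spin}.

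There is no serious obstacle in this lemma; the only point worth highlighting is that expanding the denominator in $\lambda$ and keeping only the linear coefficient $\lambda y(x^2+cx+1)$ makes the bound completely independent of $\lambda$. This is what allows a single constant $\alpha(c)$ to work over the entire unbounded weight range $\Lambda_{\lambda,+\infty}$, sidestepping the monotonicity-in-$\lambda$ analysis used for the other recursions ($g$, $h$) in this proof.
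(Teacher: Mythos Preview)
Your proof is correct and genuinely simpler than the paper's. The paper computes the same amortized rate
\[
\frac{\lambda y\,|cx+1-x^2|}{(\lambda xy+1)(c\lambda xy+\lambda y+x)},
\]
but then bounds the numerator crudely by $\lambda y(cx+x^2+1)$ and handles $\lambda$ by a two-case split: for large $\lambda$ it uses that the expression tends to $0$ as $\lambda\to\infty$, and for bounded $\lambda$ it rewrites the rate as $1-\frac{x}{(\lambda xy+1)(c\lambda xy+\lambda y+x)}$ and appeals to compactness of the box $[c/2,c+2/c]^2\times[\lambda,\lambda_1(c)]$ to get a constant below $1$.

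Your key observation---expanding the denominator as $\lambda^2 xy^2(cx+1)+\lambda y(x^2+cx+1)+x$ and dropping all but the middle term---eliminates $\lambda$ and $y$ in one stroke, reducing the problem to the single-variable inequality $|cx+1-x^2|<x^2+cx+1$. This is both cleaner and yields an explicit $\alpha(c)$, whereas the paper's $\alpha(c)$ depends on an unspecified threshold $\lambda_1(c)$. One small caveat: your explicit formula $\alpha(c)=\frac{\max\{c^2/4+1,\,1+4/c^2\}}{3c^2/4+1}$ is only guaranteed to be below $1$ for $c>(16/3)^{1/4}\approx 1.52$ (from the second branch of the max), but as you note the compactness argument already covers all $c>0$, and the relevant range in Theorem~\ref{thm:spin} is $c\ge 2.57$ anyway.
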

\begin{proof}
\begin{align*}
& \frac{\left|\pderiv{\hat{g}}{y}\right|\Phi\left(y\right)}{\Phi\left(\hat{g}\left(x,y\right)\right)} \\
=& \frac{\lambda y \left|c x-x^2+1\right|}{(\lambda x y+1) (c \lambda x y+\lambda y+x)} \\
\leq & \frac{\lambda y \left(c x+x^2+1\right)}{(\lambda x y+1) (c \lambda x y+\lambda y+x)}.
\end{align*}
Since $\lim_{\lambda \to +\infty}\frac{\left|\pderiv{\hat{g}}{y}\right|\Phi\left(y\right)}{\Phi\left(\hat{g}\left(x,y\right)\right)}=0$, there is some constant $\lambda_0(x,y,c)$ such that if $\lambda > \lambda_0$, then
$\frac{\left|\pderiv{\hat{g}}{y}\right|\Phi\left(y\right)}{\Phi\left(\hat{g}\left(x,y\right)\right)}<\frac{1}{2}$.
Let $\lambda_1(c)=\max_{c/2 \leq x,y \leq c+2/c} \lambda_0(x,y,c)$.
Then if $\lambda>\lambda_1$, we have 
$\frac{\left|\pderiv{\hat{g}}{y}\right|\Phi\left(y\right)}{\Phi\left(\hat{g}\left(x,y\right)\right)}<\frac{1}{2}$.
Otherwise,
\begin{align*}
& \frac{\left|\pderiv{\hat{g}}{y}\right|\Phi\left(y\right)}{\Phi\left(\hat{g}\left(x,y\right)\right)} \\
\leq & 1- \frac{x}{(\lambda x y+1) (c \lambda x y+\lambda y+x)}\\
\leq & 1-\frac{c}{2(4c^2\lambda_1(c)+1)(4c^3\lambda_1(c)+2c\lambda_1(c)+2c)}.
\end{align*}
The proof is done by setting $\alpha(c)=\max\left\{\frac{1}{2},1-\frac{c}{2(4c^2\lambda_1(c)+1)(4c^3\lambda_1(c)+2c\lambda_1(c)+2c)}\right\}$.
\end{proof}

\begin{lemma}
\label{lem:3:dh}
If $c/2 \leq x,\mu \leq c+2/c$ and $c>0$, then there exists a constant $\alpha(c)<1$ such that
$\frac{\left|\deriv{h}{x}\right|\Phi\left(x\right)}{\Phi\left(h\left(x\right)\right)} \leq
\alpha(c)<1$.
\end{lemma}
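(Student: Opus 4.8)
The plan is, first, to compute $\deriv{h}{x}$ in closed form and reduce the target quantity to a single rational expression, and then to eliminate the parameter $\lambda$ by a one-variable substitution so that the desired bound becomes transparent. Recall that the recursion here is the one used for Theorem~\ref{thm:1}, so $h(x)=\frac{\mu+(c\mu+1)\lambda x}{1+\lambda\mu x}$, with $\Phi(x)=x$. A direct differentiation gives
\[
\deriv{h}{x}=\frac{\lambda\,(c\mu+1-\mu^2)}{(1+\lambda\mu x)^2},
\qquad\text{so}\qquad
\frac{\left|\deriv{h}{x}\right|\Phi(x)}{\Phi(h(x))}=\frac{\lambda x\,|c\mu+1-\mu^2|}{(1+\lambda\mu x)\bigl(\mu+(c\mu+1)\lambda x\bigr)}.
\]

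Next I would substitute $u=\lambda\mu x\ge 0$; as $\lambda$ ranges over $(0,\infty)$ this $u$ ranges over all of $(0,\infty)$, so no generality is lost, and using $\lambda x=u/\mu$ the right-hand side becomes
\[
\frac{u\,|c\mu+1-\mu^2|}{(1+u)\bigl(\mu^2+(c\mu+1)u\bigr)}.
\]
For fixed $\mu>0$ the factor $\frac{u}{(1+u)(\mu^2+(c\mu+1)u)}$ vanishes at $u=0$ and as $u\to\infty$, and the numerator of its derivative in $u$ is $\mu^2-(c\mu+1)u^2$; hence the maximum is attained at $u^\ast=\mu/\sqrt{c\mu+1}$, where a short computation gives value $\frac{1}{(\mu+\sqrt{c\mu+1})^2}$. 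Consequently the target quantity is at most $\frac{|c\mu+1-\mu^2|}{(\mu+\sqrt{c\mu+1})^2}$, uniformly over $\lambda$ and over $x$.

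It then remains to bound $\frac{|c\mu+1-\mu^2|}{(\mu+\sqrt{c\mu+1})^2}$ for $\mu\in[c/2,\,c+2/c]$. Expanding $(\mu+\sqrt{c\mu+1})^2=\mu^2+c\mu+1+2\mu\sqrt{c\mu+1}$ and checking the two sign cases of $c\mu+1-\mu^2$ separately, one has $|c\mu+1-\mu^2|\le\mu^2+c\mu+1<(\mu+\sqrt{c\mu+1})^2$, the last inequality being strict since $2\mu\sqrt{c\mu+1}>0$. Thus $\mu\mapsto\frac{|c\mu+1-\mu^2|}{(\mu+\sqrt{c\mu+1})^2}$ is a continuous function that is strictly below $1$ on the compact interval $[c/2,\,c+2/c]$, so it attains a maximum there; I would set $\alpha(c)$ to be that maximum, which satisfies $\alpha(c)<1$. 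Combined with Lemma~\ref{lem:algo} and the bounds on $g$ and $\hat g$ already established in Lemmas~\ref{lem:3:dg}, \ref{lem:3:dghx}, \ref{lem:3:dghy}, this completes the correlation-decay analysis underlying Theorem~\ref{thm:spin}. (As a consistency check, under the identity $\hat g(x,y)=h(y)\mid_{\mu=x}$ this statement is the same as Lemma~\ref{lem:3:dghy} with the roles of $x$ and $\mu$ interchanged.)

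There is no genuine obstacle in this argument: the only delicate point is to carry out the substitution and the $u$-optimization so that the $\lambda$-dependence is truly removed, after which $\alpha(c)<1$ follows from the elementary sign analysis above. If one prefers to avoid the explicit optimization and instead mirror the proof of Lemma~\ref{lem:3:dghy}, the alternative is: note that the ratio tends to $0$ as $\lambda\to\infty$, pick $\lambda_1(c)$ (finite by compactness of $[c/2,\,c+2/c]^2$) beyond which the ratio is below $1/2$, and for $\lambda\le\lambda_1(c)$ use the crude estimate $\frac{N}{D}=1-\frac{D-N}{D}\le 1-\frac{\mu}{D}\le 1-\frac{c/2}{D^\ast(c)}$, where $D^\ast(c)$ is the explicit maximum of the denominator over $\lambda\le\lambda_1(c)$ and $x,\mu\in[c/2,\,c+2/c]$.
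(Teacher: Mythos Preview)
Your proof is correct, but it takes a different route from the paper. The paper's proof is a single line: since $h(x)=\hat g(\mu,x)$, the statement is literally Lemma~\ref{lem:3:dghy} with the roles of the two variables swapped, so nothing more needs to be said. You note this identity yourself at the end as a consistency check, but your main argument instead carries out an explicit optimization: you substitute $u=\lambda\mu x$, maximize $u/[(1+u)(\mu^2+(c\mu+1)u)]$ over all $u>0$, and obtain the closed-form bound $|c\mu+1-\mu^2|/(\mu+\sqrt{c\mu+1})^2$, which is strictly below $1$ on the compact $\mu$-interval.

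What this buys you is a genuinely sharper and more informative estimate: your bound is uniform over all $\lambda>0$ and in fact over all $x>0$ (the hypothesis $x\in[c/2,c+2/c]$ is never used), whereas the paper's route through Lemma~\ref{lem:3:dghy} produces a non-explicit $\alpha(c)$ that depends on an auxiliary threshold $\lambda_1(c)$. The paper's approach is shorter because it recycles work already done; yours is self-contained and gives a concrete $\alpha(c)=\max_{\mu\in[c/2,\,c+2/c]}|c\mu+1-\mu^2|/(\mu+\sqrt{c\mu+1})^2$.
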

\begin{proof}
Since $h(x)=\hat{g}(\mu,x)$, the result follows immediately after Lemma~\ref{lem:3:dghy}.
\end{proof}

\begin{proof}[Proof of Theorem~\ref{thm:spin-holant}]
This theorem is an application of Lemma~\ref{lem:algo}.
Required conditions are verified in Lemma~\ref{lem:3:dg}, Lemma~\ref{lem:3:dghx}, Lemma~\ref{lem:3:dghy}, and Lemma~\ref{lem:3:dh}.
\end{proof}

}

\bibliography{refs}

\ifabs{
\newpage
\appendix
\ \newline
\ \newline
\ \newline
\ \newline
\ \newline
\ \newline
\ \newline
\ \newline
\ \newline
\ \newline
\ \newline
\ \newline
\ \newline
\ \newline
\ \newline

\center \textbf{\huge Appendix: Full Paper}
}{}
\end{document}